\DeclareMathOperator*{\argmax}{arg\,max}
\newcommand*{\bdiv}{%
  \nonscript\mskip-\medmuskip\mkern5mu%
  \mathbin{\operator@font div}\penalty900\mkern5mu%
  \nonscript\mskip-\medmuskip
}
\newtheorem{theorem}{Theorem}[section] 
\newtheorem{lemma}[theorem]{Lemma} 
\newtheorem{corollary}[theorem]{Corollary} 
\newtheorem{proposition}[theorem]{Proposition} 
\newtheorem{definition}[theorem]{Definition} 
\newcommand{\lowQuantumVolumeEst}{10^{21}}
\newcommand{\highQuantumVolumeEst}{10^{39}}
\newcommand{\lowTgateCountEst}{10^{15}}
\newcommand{\highTgateCountEst}{10^{18}}
\title{Detailed assessment of calculating drag force with quantum computers: Explicit time-evolution precludes exponential advantage for nonlinear differential equations}
\begin{document}
\author[1]{John Penuel} 
\author[2,3]{Amara Katabarwa} 
\author[3]{Peter D. Johnson} 
\author[4]{Parker Kuklinski} 
\author[4]{Benjamin Rempfer} 
\author[3]{Collin Farquhar} 
\author[3,6]{Yudong Cao\thanks{cao.yudong@bcg.com}} 
\author[1]{Michael C. Garrett\thanks{mike.garrett@l3harris.com}} 

\affil[1]{L3Harris Technologies, Inc., Palm Bay, FL, USA}
\affil[2]{Georgia Tech Research Institute, Atlanta, GA, USA}
\affil[3]{Zapata AI, Boston, MA, USA}
\affil[4]{MIT Lincoln Laboratory, Lexington, MA, USA}
\affil[6]{Boston Consulting Group, 200 Pier 4 Blvd, Boston, MA}

\maketitle

\begin{abstract}
\noindent This study examines the potential for fault-tolerant quantum computers to provide utility in fluid dynamics simulations, with a focus on drag force calculations for ship hull design. We assess whether quantum algorithms can surpass classical computational limits by generating detailed quantum resource estimates (QREs) in terms of logical qubits and $T$-gate counts. Our analysis is based on a quantum algorithm leveraging Carleman linearization of the lattice Boltzmann method (LBM), which has been suggested to offer exponential speedup. We develop efficient block encodings for LBM matrices and a method for amplitude-encoding drag force. We apply the method to the simple case of fluid flow past a sphere across a range of Reynolds numbers ($\mathrm{Re}$). We estimate the required (logical qubits)$\times$($T$-gates), finding them to be prohibitively large, ranging from $\lowQuantumVolumeEst$ to $\highQuantumVolumeEst$. While classical simulations scale as $O(\mathrm{Re}^3)$, our QREs exhibit a modest polynomial scaling of $O(\mathrm{Re}^{2.68})$, indicating no exponential quantum advantage. We attribute this limitation to an intrinsic power-law relationship between spatial grid resolution and time-stepping requirements that is a fundamental characteristic of explicit methods for evolving nonlinear differential equations. Thus, quantum computers are unlikely to provide utility in applications that require time-evolving fluids and other systems of nonlinear differential equations.
\end{abstract}

\pagebreak
\tableofcontents 

\pagebreak
\section{Introduction} \label{sec:intro}

Quantum computing has made significant progress in the past few years.  Exciting experimental developments in \cite{Hu2019, bluvstein2023, Silva2024, acharya2023, anderson2021} have implemented quantum error correction (QEC), enabling the use of logical qubits.  Theoretical progress has also kept pace.  For example, the resource cost of magic state distillation (an important subroutine in QEC) has been drastically reduced \cite{litinski2019,gidney2024a}. The discovery of quantum low-density parity-check codes is also significant due to their improved encoding rate over the dominant surface codes.  For quantum algorithms there has also been a grand unification with the discovery of the quantum singular value transformation \cite{Gilyen2019long}, whereby varied algorithms like quantum search, matrix inversion and phase estimation have been brought under the same umbrella \cite{Martyn2021}. All of these developments imply the era of fault-tolerant quantum computing is on the horizon.  

Despite these advances, the future utility of full-scale fault-tolerant quantum computers for real-world applications is not clear. As different parts of the quantum stack improve and fault-tolerant quantum computing is matured, a thorough assessment of the potential utility and the size and performance of the quantum platform required to unlock utility becomes an important guide for research and development. This drives the need for ``computational benchmark problems'' that evaluate the problem-solving capabilities of full-stack computing systems, enabling comparisons of utility between disparate architectures, quantum and classical alike \cite{proctor2024a}.

Given
\begin{enumerate}
    \item estimates for return on investment for a quantum processing unit (QPU) that is able to solve some commercially interesting application, \label{item:roi}
    \item quantum \textit{resource estimates} (e.g., estimated number of logical/physical qubits, non-Clifford gates, and other performance characteristics) that the QPU would need to solve the application, and \label{item:resource-est}
    \item estimated cost and schedule to develop said QPU, \label{item:cost}
\end{enumerate}
an investor would have all the information necessary to determine if funding the development of such a QPU is worthwhile.  The goal of this work is to follow this thread for isothermal applications in incompressible computational fluid dynamics (CFD) and provide quantified estimates for \ref{item:roi} and \ref{item:resource-est},  thus empowering the investor when a QPU developer proposes \ref{item:cost}.  

This paper is one in a series of studies from the US Defense Advanced Research Projects Agency (DARPA) Quantum Benchmarking (QB) program \cite{darpa_qb_website}. Rather than beginning with a specific quantum algorithm and contriving amenable problems, the QB program began with the bulk collection of classically hard, full-scale problems of significant utility to industry, defense, and/or academia. Other quantum-amenable, high-utility applications are being considered in separate QB program publications \cite{QBothers}. 
Each includes estimates of the utility to be gained by a faster and/or more accurate solver, and the quantum \textit{resource estimates} (QREs) that a QPU would need.

CFD is among the earliest applications of classical computing, and yet persists as one of its most challenging despite roughly a century of significant advancements. In particular, the direct numerical simulation (DNS) of turbulent flow (high Reynolds numbers, $\mathrm{Re}\sim 10^4$ and above) typically demands computational resources that exceed current supercomputing capabilities by many orders of magnitude. This is due to the need to resolve spatial features down to the Kolmogorov scale, at which the smallest eddies dissipate, and so the required number of spatial grid points grows as $\sim \mathrm{Re}^{9/4}$ \cite{Rogallo1984}. CFD practitioners must instead rely on coarse-grained methods such as Reynolds-averaged Navier-Stokes (RANS), which reduce the computational burden by relying on approximate turbulence models, albeit at the cost of reduced accuracy.

CFD applications can be categorized into compressible or incompressible regimes. This work focuses on the incompressible or weakly compressible regime, wherein solid obstructions are traveling at low speeds relative to the speed of sound in the medium (i.e., Mach $0.3$ or lower) and the density of the medium remains roughly constant throughout the fluid domain. Incompressible CFD can be used to study industrial mixing, blood flow, weather patterns, marine engineering tasks, and a wide variety of other problems.  As an example of an industrial-scale problem, we consider the simulation-driven design of ship hulls \cite{siemens_sim_design}.  We consider the problem of calculating the drag force due to steady-state flow on three specific hull designs in section \ref{sec:problem_instances}. This entails time-evolving a uniform initial state to a steady state, and subsequently calculating the net drag force over the surface of the hull.  As a presently accessible waypoint, we address the more geometrically simple problem of drag force on a sphere.  While the flow-past-a-sphere problem is not commercially interesting, it provides a parameterized verification instance to ensure a compute platform can adequately solve geometrically simple instances before proceeding to industrial-scale applications.

Potential quantum advantage in incompressible CFD stems from the ability to solve a linear system of equations with runtime scaling logarithmically in the number of variables \cite{Harrow2009}. This was used by \cite{Berry2014} to realize an exponential speedup in solving linear differential equations.  Later advancements have improved the dependence on parameters such as precision and condition number and extended this approach to weakly nonlinear differential equations \cite{Berry2017, Liu2021, Krovi2023}.  The requirement of sufficiently weak nonlinearity motivates our focus on the incompressible regime. There are presently three basic approaches to solving linear differential equations. The first -- and focus of this work -- is through solving a large system of linear equations based on the underlying differential equation \cite{jennings2023a, jennings2023b}.
The second method uses a time-marching approach \cite{Fang2023}.  The third method leverages methods of quantum simulation and using a linear combination of Hamiltonian simulations \cite{An2023, dong2023optimal}.

Our goal in this paper is to quantify the utility of being able to complete CFD simulations faster and more accurately on a fault-tolerant QPU, and to estimate the quantum hardware resources required. Inspired by \cite{Li2025, Sanavio_2024, bakker2024quantum, Kumar2024, wawrzyniak2024a}, we built upon the work of \cite{jennings2023b} investigating the use of quantum computers for solving the nonlinear differential equation within the lattice Boltzmann method (LBM) formulation for incompressible CFD.  This is the first attempt at fully costing out an end-to-end workflow including the costs of classical data input and readout.  We pay special attention to the following three caveats associated with this approach (e.g., see \cite{dalzell2023a} and references therein):
\begin{enumerate}
    \item Data loading: The cost of loading classical data into a QPU threatens any potential speedup. The problems considered in this work have uniform initial conditions and geometrically simple parametric boundary conditions.  Loading more sophisticated geometries and initial conditions will \textit{increase} quantum resource estimates.
    \item Carleman linearization: This procedure requires that the nonlinearity must be sufficiently weak \cite{Liu2021, Krovi2023}. In the LBM, this scales with Mach number. The problems considered in this work are therefore restricted to the incompressible regime below Mach 0.3.
    \item Data readout: The inefficiency of data readout from a QPU (e.g., quantum tomography) threatens any potential speedup. The problems considered in this work focus on extracting only scalar functions of the solution state, with drag force chosen as the example. 
\end{enumerate}

As a waypoint toward assessing the feasibility of using quantum computers to improve ship hull design, we estimate the quantum resources required for the simpler case of drag force on a sphere. We estimate the product of (logical qubits)$\times$($T$-gates) to range from $\lowQuantumVolumeEst$ to $\highQuantumVolumeEst$, for Reynolds numbers ranging from $10^1$ to $10^8$. These estimates already include a $\sim 10^5\times$ reduction attained from bespoke block-encodings of the LBM collision matrices. These lofty QREs suggest a pessimistic outlook on the potential of future quantum computers to provide utility in CFD applications. We note, however, that quantum algorithm development and resource estimation for differential equation applications is relatively nascent, as compared with more mature applications such as ground state energy estimation (GSEE) in quantum chemical systems:  Comparing early GSEE resource estimates from \cite{Wecker2014} to later improved resource estimates in \cite{Lee2021} and more recently \cite{caesura2025a}, we see the estimated number of $T$-gates reduced by as much as $\sim 10^9\times$.  Important resource estimation contributions for differential equation applications were recently made by \cite{jennings2023b}, which gives the number of calls to the block encoding of the matrix that defines the linear differential equation. This drew upon their work that does detailed resource estimation for the matrix inversion method \cite{jennings2023a}.  

Of even greater concern is how our lofty QREs scale with Reynolds number (the impatient reader should skip ahead to Sec.~\ref{subsec:qres}). We observed a $\sim O(\mathrm{Re}^{2.68})$ power-law scaling. Hence, we found no exponential advantage with respect to Reynolds number, but rather a weak (sub-quadratic) polynomial advantage relative to the $O(\mathrm{Re}^{3})$ scaling for classical DNS. We attribute the lack of exponential advantage to the intrinsic relationship between temporal and spatial resolution imposed by explicit methods for time-evolving systems of differential equations. The Courant-Freidrichs-Lewy (CFL) condition for numerical convergence requires that physical effects do not propagate across more than one spatial grid spacing within the duration of a single time step. Consequently, the number of time steps has an implicit polynomial dependence on the number of grid points. As pointed out in \cite{Li2025} (based upon \cite{Berry2017}), the theoretically predicted exponential advantage (complexity scaling polylogarithmically in the number of spatial grid points) is negated if the simulation time (effectively the number of time steps) has implicit polynomial dependence on the number of spatial grid points.

It is our hope that this work will serve as an example of how the prospective utility of quantum computers in applications of differential equations can be thoroughly assessed as a step toward application-driven benchmarking. Our detailed results show that current quantum algorithms will not provide significant utility in explicitly time-evolving fluids or other systems of nonlinear differential equations. Either fundamental algorithmic improvements are required to circumvent the linear dependence on the number of time steps, or the candidate applications must be restricted to those wherein the number of time steps does not depend polynomially (or worse) on the number of spatial grid points. For example, implicit methods commonly allow for a relaxation of the CFL condition. For calculations of steady-state properties, it may be worth exploring alternative methods to directly solve for the steady state rather than time-evolving from an initial condition. Utility may also be found in applications of differential equations that permit superquadratic, albeit subexponential, speedups, such as simulating coupled classical oscillators \cite{Babbush2023a}.

\subsection{Definitions and organization}

We define the following terms, which are common to the set of DARPA QB program publications.

\begin{definition}[Application Instance]\label{def:application_instance}
A general type of computational task to be performed.  For example: solve a system of linear equations.
\end{definition}

\begin{definition}[Problem Instance]\label{def:problem_instance}
A specific realization of an application instance.  For example: solve a system of linear equations $Ax=b$ where the data for $A$, $b$ and all other parameters and constraints are provided.
\end{definition}

\begin{definition}[Performance Metric]\label{def:performance_metric}
A measure that quantifies the ability of a compute device and algorithm to perform a computation.  For example: wall-clock run time in units of seconds.
\end{definition}

\begin{definition}[Utility Threshold]\label{def:utility_threshold}
Specific values of a performance metric that correspond to the capability required to solve a problem instance.  The units of a utility threshold are the same units as the associated performance metric.  For example: solve a system of linear equations $Ax=b$ (where $A$, $b$ and all other data is provided) within a wall-clock run time of 600 seconds.
\end{definition}

\begin{definition}[Utility Estimate]
The quantitative benefit to a user if a utility threshold is surpassed.  The default unit of utility is US dollars.  For example: if a compute device and algorithm can solve the system of linear equations $Ax=b$ (where all data is provided) within a wall-clock run time of 600 seconds, then the user may expect estimated benefit of \$Z US dollars.
\end{definition}

This paper is organized as follows. In section \ref{sec:classical_approach}, we summarize current classical approaches to our \textit{application instance} of simulating steady-state incompressible fluid flow to calculate drag force on ship hulls. We define the relevant \textit{performance metrics} and identify \textit{utility thresholds} tied to current performance limitations. In section \ref{sec:utility}, we attempt to quantify the economic benefit of improved computational performance, with a focus on \textit{utility estimates} associated with ship hull design. In section \ref{sec:problem_instances}, we provide detailed formulations of \textit{problem instances}: flow past a sphere at various Reynolds numbers and three utility-scale instances of flow past model-scale and full-scale ship hulls. In section \ref{sec:quantum_workflow}, we describe our quantum approach, based on Carleman linearization of the LBM, and present a method for extracting drag force from the solution state. In section \ref{sec:quantum-resource-estimates}, we describe our approach to quantum resource estimation and present our results, including arguments about the use of QPUs and quantum algorithms for solving CFD and other general systems of differential equations. Concluding remarks are in section \ref{sec:conclusion}.  

A significant amount of the LBM formulation and quantum algorithm implementation details are relegated to appendices.  Of particular note is appendix \ref{sec:quantum_compilation}, which describes the quantum circuit, the development of bespoke LBM-specific block encodings, and comparison of bespoke block encodings with unstructured block encodings.

\section{Classical approach}
\label{sec:classical_approach}

The goal for the CFD simulations is to ``solve'' the nonlinear Navier-Stokes equations that govern fluid flow to time-evolve a uniform initial state to the steady-state velocity vector field and scalar pressure field, and then compute some measurable against these fields.  The typical classical computing approach is to first discretize the spatial domain into a set of non-overlapping polytope cells known as a mesh. The velocity and pressure fields are then calculated at the cell centers from the resulting set of difference equations over the meshed space.  The finite difference method uses uniformly-sized polytopes, while the finite volume method (FVM) may use polytopes of variable size and shape that are tailored to the domain geometry.

The parameter that drives up computational cost is the Reynolds number defined in equation \eqref{eq:Re}.  The Reynolds number (Re) is the ratio of inertial to viscous forces in a fluid.  At low Re (below $\sim 10^3$) the flow is laminar, while at higher Re ($\sim 10^4$ and above)  the flow is turbulent with chaotic changes in flow velocity and pressure.  To properly resolve features of turbulence at higher Re, the polytopes of the mesh need to be exceedingly small, which results in a larger set of difference equations to solve.  The length scale necessary to resolve the smallest turbulent features is known as the Kolmogorov length scale. Direct numerical simulation (DNS) down to the Kolmogorov scale typically incurs prohibitive computational costs, which scale as $O(\mathrm{Re}^3)$ and limit current supercomputers to $\mathrm{Re}\sim 10^4$ \cite{hoyas2022a}. The lattice Boltzmann method (LBM) (section \ref{sec:lbm-formulation}) provides an alternative DNS approach with various advantages and disadvantages compared to Navier-Stokes solvers \cite{Krueger2017TheLB}, though classical LBM computational costs are generally no less exorbitant. 

A common concession is to utilize Reynolds-averaged Navier-Stokes (RANS) solvers, which use a mesh cell size much larger than the Kolmogorov scale, but employ various averaging models to account for turbulent velocity and pressure features.  This greatly reduces computational cost, albeit at reduced accuracy. To provide greater accuracy than RANS without resorting to full DNS, Large Eddy Simulation (LES) lessens the reliance on approximate turbulence models by using mesh sizes that resolve some of the larger-scale turbulent flow, though the commensurate increase in computational cost is still often prohibitive.

\subsection{Ship design workflow}

Traditional ship design workflows rely heavily on experimental testing. This primarily consists of constructing scale models and measuring key performance indicators such as hull resistance, trim, and sinkage, in a towing tank facility. Due to the reduced scale and inability to reproduce open-ocean conditions, towing tank data is limited in accuracy. A single design workflow may require the testing of multiple design variants, incurring significant time and resource costs (see section \ref{sec:utility}) and therefore restricting the number of candidate designs \cite{wheeler2021a}.

Decades of advances in classical CFD have led to the advent of simulation-driven ship design \cite{siemens_sim_design} as a means to both lessen the burden of experimental testing and improve the performance of resulting designs. This engineering process emphasizes extensive simulation of the vessel early in the design cycle.  This has the potential to (1) improve/optimize designs by thorough exploration of the parameter space, (2) give a high confidence that the design will meet requirements and regulations as built, and (3) perform requirements sensitivity analysis to identify requirements that are driving a design into a particular (possibly suboptimal) configuration.  The cost for extensive simulation is computing resources, which translates to wall-clock time, and may start to drive completion cost or schedule.  Given enough time, current software and hardware can complete the required simulations, albeit with accuracy concessions.

For this study we consider drag force due to obstruction geometry as the exemplar key performance indicator to be output by the solver. This is one of several key performance indicators relevant to ship hull design.  We propose additional outputs in section \ref{sec:conclusion} for future work.  After the solver has calculated the pressure field $p$ and velocity flow field $\mathbf{u}$, the magnitude of the drag force $\mathcal{F}$, assumed to be in the direction of the bulk free-stream velocity unit vector $\hat{\mathbf{u}}_\mathrm{bulk}$, is calculated by numerically integrating the expression:
\begin{align}
    \mathcal{F}
    &=
    \int_{\mathcal{S}} d{\mathcal{S}} (\hat{\mathbf{n}} \cdot \hat{\mathbf{u}}_\mathrm{bulk}) p
    +
    \int_{\mathcal{S}} d{\mathcal{S}} (\hat{\boldsymbol{\tau}} \cdot \hat{\mathbf{u}}_\mathrm{bulk}) \tau_\mathrm{w}.
\end{align}
The first term is the pressure/form drag and the second term is skin/friction drag. The integral/summation runs over the surface of the solid object ${\mathcal{S}}$. $\hat{\mathbf{n}}$ is the unit vector normal to surface $d{\mathcal{S}}$ and $\hat{\boldsymbol{\tau}}$ is the unit vector in the direction of shear stress parallel to the surface $d{\mathcal{S}}$.  The wall shear stress $\tau_\mathrm{w}$ may be calculated using finite-difference methods to approximate the expression
\begin{align}
    \tau_\mathrm{w} = \rho\nu \left. \frac{\partial u}{\partial h}\right|_{h=0},
\end{align}
where $h$ is distance from the surface along $\hat{\mathbf{n}}$, $\rho$ is the density of the fluid, $\nu$ is the kinematic viscosity of the fluid, and
\begin{align}
    \hat{\boldsymbol{\tau}} = \lim_{h\rightarrow 0} \frac{\mathbf{u}}{u},
\end{align}
noting that formally $u=0$ on the surface of the obstruction.

\subsection{Utility thresholds} \label{sec:thresholds}

In order to unlock some portion of the utility estimates in section \ref{sec:utility} below, a future CFD platform must exceed the utility thresholds associated with a particular problem instance. Utility thresholds are tied to the performance of current CFD platforms, which makes them moving targets.  As future algorithmic and hardware improvements result in increased performance, they likewise improve the utility thresholds. Therefore, to unlock a significant portion of the utility estimates, a future CFD platform must exceed utility thresholds by one or more orders of magnitude. Incremental performance gains yield only incremental utility.

We quantify performance according to two main performance metrics: runtime and accuracy. Runtime is highly problem-dependent, and is specified for each problem instance in section \ref{sec:problem_instances} for a RANS-based solver. Accuracy varies according to method, being highly dependent on the level of spatial resolution and reliance on turbulence models. Though accuracy is also problem-dependent, the typical uncertainty range for an output of interest calculated via RANS, LES, and DNS are \cite{xiao2019a,verma2023a}:

\begin{itemize}
\item RANS: Uncertainty between 5\% and 20\%, with 10\% being an optimistic estimate. Usually RANS models need to be tuned mercilessly for the specific problem being considered, and then a 10\% error would be considered a good outcome \cite{verma2023a}. There are many sources of uncertainty in RANS, and elaborate methods for quantifying them \cite{xiao2019a}.
\item LES: Uncertainty between 1\% and 5\%, depending on the problem at hand, and with extremely high computational cost. Uncertainty is non-trivial to estimate \cite{xiao2019a} and also highly dependent on the subgrid model being used. The most commonly used models (e.g., Smagorinsky) are not universally applicable to all problems, and may need to be modified appropriately, which is still a topic of basic research \cite{verma2023a}. 
\item DNS: Uncertainty less than 1\%; we use 0.1\% for the purpose of defining an accuracy threshold. DNS accuracy is limited by machine precision and error associated with numerical integration techniques (e.g. convergence threshold, time step size, etc.). DNS remains classically intractable for most utility-scale problems. 
\end{itemize}

Though the quantum approach we describe in section \ref{sec:quantum_workflow} is based on the LBM (a form of DNS), we do not compare against classical LBM simulations as these are rarely used in ship hull design. Indeed, DNS of any form is classically intractable or at least considered prohibitively expensive. We therefore base our utility thresholds on the more commonly used RANS solvers, still widely regarded as the standard workhorse for simulation-driven ship design and other utility-scale CFD applications. As LBM-based simulations will significantly exceed RANS accuracy, we define the following two utility thresholds on runtime:

\begin{enumerate}
    \item Incremental Utility Threshold: DNS-level accuracy at runtimes shorter than current classical DNS.
    \item Transformative Utility Threshold: DNS-level accuracy at runtimes comparable to current classical RANS.
\end{enumerate}

An LBM-based quantum solver running 10x faster than current classical DNS would only unlock incremental utility since this would still amount to impractically long runtimes for most applications. We do not provide runtime estimates for current classical DNS of our considered problem instances. In utility-scale problems such as the instances involving ship hulls, the Reynolds numbers of $10^8$--$10^9$ equate to a runtime increase of at least $\sim10^{12}\times$ over current world-record DNS of turbulent flows (e.g., $\mathrm{Re}\approx 10^4$ reported in \cite{hoyas2022a}). In this regard, completion of an LBM-based simulation with any measurable runtime (e.g. one year or less) for a problem instance with $\mathrm{Re}\ge 10^5$ would satisfy this utility threshold.

An LBM-based quantum solver running at speeds comparable to current classical RANS would unlock transformative utility, as described and estimated in section \ref{sec:utility}. We provide runtime estimates for classical RANS-based simulations of the flow-past-a-sphere problem instances in section \ref{sec:problem_instances}.  We implemented the problem instances on ANSYS\textregistered\; Fluent\textregistered\; CFD software on an engineering-grade laptop for a conservative estimate.  This establishes the modest utility threshold that a quantum platform would have to surpass in order to provide transformative economic benefit or return on investment. The true spirit of the DARPA QB program is to advance scientific computing via any algorithm or platform: quantum, classical, hybrid, or other.  Thus the units for the performance metrics of interest and utility thresholds are software and hardware agnostic.  

By quantifying the utility of future quantum solvers according to their performance relative to utility thresholds based on the limitations of classical solvers, we are attempting to compare classical solvers to quantum solvers. However, each paradigm is using a completely different computational workflow. There are alternative choices to be made for each paradigm and workflow.  We felt it was important to only compare the most suitable known methods for each compute platform.
\section{Utility estimation}
\label{sec:utility}

The overarching goal of the DARPA QB program is to clarify the future utility of full-scale fault-tolerant quantum computers. While incompressible CFD, and differential equation solving more generally, is relatively nascent in terms of quantum algorithm development, we chose to focus on this application instance because the associated utility is unquestionably large. CFD was among the earliest applications in classical computing, and is ubiquitous in modern science and engineering. The incompressible regime encompasses all liquid-phase and low-velocity (below Mach 0.3) gas-phase scenarios, and thus accounts for a significant fraction of CFD applications. In this section we attempt to estimate the utility, in US dollars of economic benefit, associated with improved computational performance in incompressible CFD and for ship hull design in particular. 

\subsection{Utility per user group}

The \textit{utility} or economic benefit of an improved compute platform will be realized in different ways to different user groups.  We categorize the utility estimate into the the following three user groups: compute platform developers, engineering services users, and derived product end users.  A compute platform developer produces the actual compute hardware or software that is used and stands to gain economic utility from the sale or lease of an improved platform.  The engineering services user purchases or subscribes to the compute platform and uses the platform to design derived products (e.g., ship hulls).  The engineering services user stands to gain economic utility from the sale of an improved product design or potentially a more economical design workflow.  The final end user purchases the improved product from the engineering services user and stands to gain economic utility because (1) the product may have an improved design, and/or (2) the product may have been produced at a reduced cost due to some efficiencies realized by the improved compute platform.

In Appendix \ref{apx:utility} we provide estimates (or bounds on estimates) for all the facets of the utility associated with meeting or exceeding our \emph{transformative} utility threshold: DNS-level accuracy at runtimes comparable to current classical RANS. We arrive at the following order-of-magnitude estimates for each user group:
\begin{itemize}
    \item Software/compute market utility of \$10M--\$100M for the application of simulation-driven ship design, and $\sim$\$1B as a weak upper bound for applications of incompressible CFD more broadly.
    \item Engineering services market utility of \$100M--\$1B for the application of simulation-driven ship design (upper bound assumes towing tank tests are rendered obsolete), and $\sim$\$100B as a weak upper bound for applications of incompressible CFD more broadly.
    \item Derived product end-user utility of $\sim$\$10B as a weak lower bound associated with savings on fuel and other operational costs associated with worldwide cargo shipping.
\end{itemize}

\subsection{Incremental utility}
\label{sec:incremental_utility}

The utility gained by exceeding the \emph{incremental} utility threshold (DNS-level accuracy at runtimes shorter than current classical DNS) stems from improving our fundamental scientific understanding of turbulence. In this regard, even simple verification problem instances such as flow past a sphere can offer some utility at Reynolds numbers beyond the reach of classical DNS.  This provides downstream utility by informing better turbulence models in approximate CFD methods such as RANS and LES. Improved accuracy in RANS and LES translates to utility gain for all user groups described above: CFD software developers, users of CFD software, and users of CFD-derived products.

Whereas meeting or exceeding the \emph{transformational} utility threshold (replace RANS with DNS) results in reducing uncertainty from $\sim$10\% to $\sim$0.1\%, let us suppose that meeting or exceeding the \emph{incremental} utility threshold ultimately leads to a reduction in RANS uncertainty from $\sim$10\% to $\sim$9\% via turbulence model improvement. We therefore suggest that the utility associated with the incremental threshold is $10\times$ to $100\times$ smaller than the utility associated with the transformational threshold. This is consistent with numerical estimates of $\sim$\$1B (total for all user groups) we calculated using a parametric utility estimation tool \cite{JobTBP} initially developed for estimating the utility of simulating the Fermi-Hubbard model with quantum computers \cite{agrawal2024a}.
\section{Problem instances}\label{sec:problem_instances}

In this work, we consider problems instances with various Reynolds numbers from laminar to turbulent regimes.  As we show later, problem instances with higher Reynolds numbers drive up the spatial and temporal discretization fidelity for both the classical workflow and the quantum workflow.

A classical CFD solver typically outputs the entire time-evolved fluid velocity vector field and other scalar fields. The practitioner then has full access to the fields and may call a variety of post-processing analytics to calculate key design performance indicators, produce interesting charts, images, and animations. A similar calculation on a QPU would prepare a quantum state proportional to the entire velocity vector field.  However, a full readout of the field would destroy any computational advantage due to the prohibitive cost of quantum tomography. Thus, it is not sufficient for a quantum platform to merely prepare a state representation of the field; the platform must then calculate some scalar measurable of interest against the quantum state and read out the measurable.  This is a significant downside of the quantum platform (though it has been suggested that quantum algorithms for data processing and machine learning may enable more efficient output analysis \cite{kiani2022a}).  Instead of calculating the velocity field once and running a variety of post processing analytics, a quantum platform must re-solve the velocity field for each noncommuting measurable scalar quantity of interest. In the problem instances formulated below, we choose the magnitude of the drag force exerted on the object as the exemplar scalar output.

Loading a large amount of input data into a QPU also threatens to destroy computational advantage.  Thus, a QPU cannot accommodate CFD problems with elaborate initial conditions and complicated flow geometry.  The QPU would prefer to handle CFD problems where geometry, boundary, and initial conditions are uniform or otherwise have a compact representation that is independent of problem size. We therefore choose a spherical geometry for simple verification instances (section \ref{sec:sphere_instances}), and note that the geometries in utility-scale ship hull instances (section \ref{sec:shiphulls}) may be represented by a discrete set of polytopes. We choose a uniform fluid velocity vector field as initial conditions.

Table \ref{tab:problem_instances_overview} provides a list of the problem instances analyzed.
\begin{table}[ht]
\centering
\small
\begin{tabular}{cll}
\toprule
\textbf{No.} & \textbf{Study/Design Description} & \textbf{Details} \\
\midrule
1 & Flow past a sphere - $\mathrm{Re}=10^1$ & Laminar flow (section \ref{sec:sphere_instances}) \\
2 & Flow past a sphere - $\mathrm{Re}=10^2$ & Laminar flow (section \ref{sec:sphere_instances}) \\
3 & Flow past a sphere - $\mathrm{Re}=10^3$ & Transition to turbulent (section \ref{sec:sphere_instances}) \\
4 & Flow past a sphere - $\mathrm{Re}=10^4$ & Transition to turbulent (section \ref{sec:sphere_instances}) \\
5 & Flow past a sphere - $\mathrm{Re}=10^5$ & Turbulent flow (section \ref{sec:sphere_instances}) \\
6 & Flow past a sphere - $\mathrm{Re}=10^6$ & Turbulent flow (section \ref{sec:sphere_instances}) \\
7 & Flow past a sphere - $\mathrm{Re}=10^7$ & Turbulent flow (section \ref{sec:sphere_instances}) \\
8 & Flow past a sphere - $\mathrm{Re}=10^8$ & Turbulent flow (section \ref{sec:sphere_instances}) \\
\midrule
9 & Model-scale ship hull design:  & Turbulent flow (section \ref{sec:jbc}) \\
  & \quad \quad Japan Bulk Carrier (JBC) - $\mathrm{Re}\sim 10^7$ & \\ 
\midrule
10 & Model-scale ship hull design: & Turbulent flow (section \ref{sec:kcs}) \\
   & \quad \quad KRISO container ship (KCS) - $\mathrm{Re}\sim 10^7$ & \\ 
\midrule
11 & Full-scale ship hull design: & Turbulent flow (section \ref{sec:mv-regal}) \\
  & \quad \quad MV Regal - $\mathrm{Re}\sim 10^9$ & \\
\bottomrule
\end{tabular}
\caption{Problem instances analyzed, where Re is the Reynolds number.}
\label{tab:problem_instances_overview}
\end{table}

\subsection{Flow past a sphere} \label{sec:sphere_instances}

We consider flow past a sphere with eight parameter regimes summarized in table \ref{tab:sphere_physical_parameters}. For our implementation, we set the computational domain (in meters) to 
\begin{equation}
    \{  (x,y,z)\in [-5, 5]\times[-4 , 4]\times[-4, 4] \}.
\end{equation}
The spherical obstruction is centered at $(0,0,0)$ and has a diameter equal to 1 meter. The fluid flow travels in the positive $x$-direction with ambient velocity $\mathbf{u} = (u_x, 0 , 0)$.  The Reynolds number (Re) is defined as
\begin{equation}\label{eq:Re} 
    \text{Re} = \frac{u_x L}{\nu},
\end{equation}
where $L$ is the characteristic length and $\nu$ is the kinematic viscosity.  In the flow-past-a-sphere cases, $L$ is the sphere diameter.  In the ship hull simulations, $L$ is the hull length.  The relationship can be rewritten to calculate the ambient velocity $u_x$ as a function of a desired Reynolds number as
\begin{equation}\label{eq:velocity_as_function_of_Re}
u_x  = \frac{\nu \text{Re}}{L}.
\end{equation}
For the desired set of Reynolds numbers $\{10^1, 10^2, ...,10^8\}$, the maximum velocity $u_x$ is about 100 \si{m/s} (corresponding to the $\mathrm{Re}=10^8$ case).   This velocity remains an order of magnitude below the speed of sound in water, ensuring all parameter sets are well within the incompressible regime (cavitation effects notwithstanding).

\begin{table}[ht]
\centering
\label{tab:fluid-properties}
\begin{tabular}{cl}
\toprule
Reynolds Number Re & Velocity $u_x$ (\si{m/s}) \\
\midrule
\(10^1\) & \(1.003 \times 10^{-5}\) \\
\(10^2\) & \(1.003 \times 10^{-4}\) \\
\(10^3\) & \(1.003 \times 10^{-3}\) \\
\(10^4\) & \(1.003 \times 10^{-2}\) \\
\(10^5\) & \(1.003 \times 10^{-1}\) \\
\(10^6\) & \(1.003\) \\
\(10^7\) & \(10.03\) \\
\(10^8\) & \(100.3\) \\
\bottomrule
\end{tabular}
\caption{Parameters for flow past a sphere instances, with kinematic viscosity $\nu = 1.003 \times 10^{-6}\;\mathrm{m}^2/\mathrm{s}$ and density $\rho = 998.21\;\mathrm{kg}/\mathrm{m}^3$.}
    \label{tab:sphere_physical_parameters}
\end{table}


The geometric description above is sufficient for any practitioner to implement the simulation; no additional files are necessary. The benchmark practitioner is asked to calculate the measurable:  drag force on the sphere in the $x$-direction. Since the benchmark practitioner is only required to report the measurable of interest, the benchmark practitioner may choose an alternative appropriate computational domain. In other words, the computational domain specified above and in all instances is not a required parameter of the problem instance. 

As a conservative comparison to modern classical CFD solvers, we implemented the simulations in ANSYS\textregistered\; Fluent\textregistered\; 2023R. The simulations were run on an engineering-grade laptop with an Intel\textregistered\; Xeon \textregistered\; W-11855M CPU @ 3.20GHz and 96GB.  ANSYS solves for steady state using Reynolds-averaged Navier Stokes (RANS) methods rather than time evolving, though this is not appropriate for high Reynolds number instances with transient vortex shedding.  The results in table \ref{table:sphere_runtimes} show that even for the simple flow-past-a-sphere problem, a significant amount of computational power is required and increases rapidly with the Reynolds number.
\begin{table}[h]
\centering
\begin{tabular}{ccc}
\toprule
\textbf{Reynolds Number} & \textbf{Mesh Generation Time} & \textbf{Run Time} \\
\midrule
$10^3$ & 6 minutes & 1 hour \\  
$10^4$ & 8 minutes & 1 hour \\
$10^5$ & 20 minutes & 3 hours \\
$10^6$ & 4 hours & 3 hours \\
$10^7$ & 5 hours & 5 hours \\
$10^8$ & Not enough memory. & \\
\bottomrule
\end{tabular}
\caption{Simulation run times for different Reynolds numbers.}
\label{table:sphere_runtimes}
\end{table}

The runtime results shown in table \ref{table:sphere_runtimes} define the utility thresholds on runtime that any other hardware/software platform would have to surpass in order to be considered a viable alternative.  

\subsection{Ship hull design} \label{sec:shiphulls}

In this section we present the parameters for three utility-scale problem instances using publicly available ship hull CAD files.

\subsubsection{Model-scale ship hull design:  Japan Bulk Carrier (JBC)}\label{sec:jbc}

The Tokyo 2015 Workshop on CFD in Ship Hydrodynamics \cite{tokyo2015}, provided three ``model scale'' ship CAD files, test case parameters, and empirical results on the models.  We have chosen two of the three model ships to analyze:  JBC (7 meters long) and KCS (approximately 7 meters long).  

We have also chosen to deviate from some of the test case parameters and measurables to be consistent across all of our test cases and consistent with our target measurable:  drag force.  The workshop requires calculation of total resistance,  as well as conditions of self-propulsion, and other cases that we are not considering.  We feel it is still worthwhile to include the problem instances with modified parameters because the CAD designs are well-studied by practitioners and existing simulations may be easily modified to act as competitive benchmark performers.

The problem instance parameters are:

\begin{tikzpicture}
    \node [draw, rounded corners, fill=blue!10, inner sep=10pt, outer sep=0, text width=15cm] (box) {
        \begin{minipage}{.95\textwidth}
            \begin{itemize}
                \item Fluid is water at 20 \(^\circ\)C with all parameters consistent with table \ref{tab:sphere_physical_parameters}.
                \item JBC CAD drawing and measurements available at \cite{tokyo2015} (specifically \url{https://www.t2015.nmri.go.jp/jbc_gc.html}).  The CAD drawing does not include rudder or propeller.
                \item Velocity \(u_x = 1.179 \, \si{m/s}\) per Case 1.1a. ($\text{Re}=8.23\times 10^6$)
                \item Draft of the (model) ship hull is 0.4125 \si{m}.
                \item Projected frontal area of the submerged ship hull is \(0.46319 \, \text{m}^2\).
                \item Computational domain (in meters) in our implementation is $\{(x, y, z) \in [-7, 7] \times [-2, 2] \times [-1, 0.4125]\}$
                with the center of the ship hull at $(0, 0, 0)$. The benchmark practitioner may choose an alternative
appropriate domain.
            \end{itemize}
        \end{minipage}
    };
\end{tikzpicture}

\subsubsection{Model-scale ship hull design:  KRISO Container Ship (KCS)}\label{sec:kcs}

The KRISO Container Ship (KCS) is the second model-scale ship test case published in \cite{tokyo2015}.  The problem instance parameters we choose are:

\begin{tikzpicture}
    \node [draw, rounded corners, fill=blue!10, inner sep=10pt, outer sep=0, text width=15cm] (box) {
        \begin{minipage}{.95\textwidth}
            \begin{itemize}
    \item Fluid is water at $20^{\circ}$ C with all parameters consistent with table \ref{tab:sphere_physical_parameters}.
    \item KCS CAD drawing and measurements available at \cite{tokyo2015} (specifically \url{https://www.t2015.nmri.go.jp/kcs_gc.html}).  The drawing should include the rudder, but not the propeller.
    \item Velocity $u_x =  0.915~\si{m/s}$ per \hyperlink{hhttps://www.t2015.nmri.go.jp/Instructions_KCS/Case_2.1/Case_2-1.html}{Case 2.1-1}. ($\text{Re}=6.39\times 10^6$) 
    \item Draft of the (model) ship hull is 0.34178 \si{m} (scaled down from the full-scale ship draft at 10.8 \si{m}).
    \item Projected frontal area of the submerged ship hull is 0.34301 $\si{m^2}$ (calculated by projecting drawing onto a plane and scaling to model size).
    \item Computational domain (in meters) in our implementation is $\{  (x,y,z)\in [-10, 4]\times[-2 , 2]\times[-1, 0.34178] \}$ with the base of the rudder of the ship at $(0,0,0)$.  The benchmark practitioner may choose an alternative appropriate domain.
            \end{itemize}
        \end{minipage}
    };
\end{tikzpicture}

\subsubsection{Full-scale ship hull design:  MV Regal}\label{sec:mv-regal}

For the final ship hull design problem instance, we use the test case studied in the 2016 Workshop on Ship Scale Hydrodynamic Computer Simulation published by Lloyd's Register \cite{lr-8428}.  The work describes the sea trials of the MV Regal ship, experimental measurements during sea trials and compares those measurements to various calculated measurements from CFD solvers.  They used a 3D laser scan geometry of the as-built vessel while in dry dock.  The CAD file of the geometry is available at \url{www.jores.net} \cite{www.jores.net}.  This is a full-scale problem instance with a ship hull length of 138 meters.  The problem instance parameters are:


\begin{tikzpicture}
    \node [draw, rounded corners, fill=blue!10, inner sep=10pt, outer sep=0, text width=15cm] (box) {
        \begin{minipage}{.95\textwidth}
            \begin{itemize}
    \item Fluid is water at $20^{\circ}$ C with all parameters consistent with table \ref{tab:sphere_physical_parameters}.
    \item MV Regal CAD drawing available at \cite{www.jores.net}.  The drawing should include the rudder, but not the propeller.
    \item Velocity $u_x =  7.2022 ~\si{m/s}$ (14 knots) per Case 1.4. ($\text{Re}=9.91\times 10^8$)
    \item Draft of the ship hull during sea trails varied forward to aft, but we use a consistent draft of 5.25 \si{m}.
    \item Projected frontal area of the submerged ship hull is 120.055 $\si{m^2}$ (calculated from design drawings).  
    \item Computational domain (in meters) in our implementation is $\{  (x,y,z)\in [-100, 200]\times[-35 , 35]\times[-10, 5.25] \}$ with the base of the rudder of the ship at $(0,0,0)$.  The benchmark practitioner may choose an alternative appropriate domain.
            \end{itemize}
        \end{minipage}
    };
\end{tikzpicture}

\section{Quantum approach}
\label{sec:quantum_workflow}

\subsection{Overview}

\begin{figure}[H]
    \centering
    \tikzstyle{startstop} = [rectangle, rounded corners, minimum width=3cm, minimum height=1cm,text centered, draw=black, fill=white!30]
    \tikzstyle{process} = [rectangle, rounded corners, minimum width=3cm, minimum height=1cm, text centered, draw=black, fill=white!30]
    \tikzstyle{decision} = [diamond, minimum width=3cm, minimum height=1cm, text centered, draw=black, fill=green!30]
    \tikzstyle{arrow} = [thick,->,>=stealth]

\begin{tikzpicture}[node distance=1cm]

\node (start) [startstop] {Problem Instance};
\node (proc1) [process, below left=of start] {Finite Volume Mesh};
\node (proc2) [process, below=of proc1] {Pressure-based Solver};
\node (proc3) [process, below=of proc2] {Calculate drag force};
\node (proc4) [process, below right=of start] {LBM Grid};
\node (proc5) [process, below=of proc4] {CL of LBM Dynamics};
\node (proc6) [process, below=of proc5] {Quantum Solver};
\node (proc7) [process, below=of proc6] {Drag force measurable};

\draw [arrow] (start) -- (proc1);
\draw [arrow] (proc1) -- (proc2);
\draw [arrow] (proc2) -- (proc3);
\draw [arrow] (start) -- (proc4);
\draw [arrow] (proc4) -- (proc5);
\draw [arrow] (proc5) -- (proc6);
\draw [arrow] (proc6) -- (proc7);
\end{tikzpicture}

    \caption{Top-level classical workflow (left path) vs. quantum workflow (right path).  The workflows are well-suited to each platform.}
    \label{fig:tikz_flow_chart_arrows}
\end{figure}

The QPU workflow we consider is contrasted with the classical workflow in figure \ref{fig:tikz_flow_chart_arrows}, and is based on Li \textit{et al.} \cite{Li2025} with enhancements including data input costs and measuring a scalar quantity of interest (drag force). Quantum mechanics is a linear theory, and leading quantum approaches to solving nonlinear differential equations rely on linearization procedures such as Carleman linearization (CL) to yield a larger set of linear differential equations \cite{Liu2021, Krovi2023}.  The need to reproduce the nonlinear dynamics with a linear unitary theory, coupled with the inefficiency of loading classical data (e.g., mesh geometry), creates a significantly different workflow for QPUs compared to classical computers.

Carleman linearization of nonlinear differential equations requires that the combined strength of any nonlinear terms be sufficiently small relative to the linear dissipative terms \cite{Liu2021, Krovi2023}, in order to ensure an acceptable truncation error.  This motivates the use of LBM over Navier-Stokes: The strength of nonlinearity in the Navier-Stokes equations scales with the Reynolds number and therefore precludes the simulation of turbulent flows (that have large Reynolds numbers).  In the LBM formulation, the nonlinearity instead scales with the Mach number.  Our focus is on the incompressible or weakly compressible CFD regime, which has a small Mach number, but may have a large Reynolds number (turbulent).  

Quantum algorithms for solving linear differential equations can be combined with the Carleman-linearized LBM to prepare a quantum state encoding the evolution of the phase space density of the fluid over time. The drag force on the obstruction can then be calculated as a linear combination of the encoded phase space density at some evolved time corresponding to steady-state flow \cite{Ladd1993NumericalSO, Krueger2017TheLB}. Given we have the drag force represented as an inner product of two vectors, if we are able to prepare quantum states with amplitudes proportional to these two vectors (with proportionality constants known), 
then a quantum algorithm for amplitude estimation such as iterative quantum amplitude estimation \cite{grinko2021iterative} may be used to estimate the drag force.

In the following subsections we will review the essential components of this quantum workflow for drag estimation, illustrated in figure \ref{fig:tikz_flow_chart_arrows}:
\begin{enumerate}
    \item Section \ref{sec:lbm-formulation}: Convert the CFD problem instance into a lattice Boltzmann method (LBM) formulation and approximate the LBM collision operator with a cubic polynomial per \cite{Li2025}.
    \item Section \ref{sec:cl-lbm}: Perform Carleman linearization of the continuous dynamics of the approximated LBM formulation per \cite{Li2025, bakker2024quantum, itani2023quantum}.
    \item Section \ref{sec:quantum-lin-DE-solver}: Use the quantum algorithms from \cite{Berry2017, jennings2023b} to solve the linear system of differential equations (i.e., evolve a quantum state that is proportional to the velocity vector field \emph{with history}).\label{step:quant-diff-eq-alg}
    \item Section \ref{sec:quantum_approach_to_estimating_drag_force}: Calculate drag force of the obstruction via amplitude estimation of an inner product.\label{step:drag-measureable}
\end{enumerate}

\subsection{The lattice Boltzmann method (LBM)}\label{sec:lbm-formulation}

\begin{table}
\centering
\begin{tabular}{|c|l|}
\hline
\bf{Term} & \bf{Description} \\ 
\hline
$\mathbb{R}_+$ & the set of positive real numbers.\\
\hline
$\mathbb{Z}_+$ & the set of positive integers.\\
\hline
$\mathbb{Z}_p$ & $\mathbb{Z}_p = \{0,1,2,...,p-1\}$ the set of nonnegative integers modulo $p$. \\
\hline
$D \in \mathbb{Z}_+$ & number of spatial dimension (we focus on $D=3$).\\
\hline
$Q \in \mathbb{Z}_+$ & number of lattice velocity vectors (we focus on $Q=27$). \\
\hline
D1Q3 & lattice topology in dimension $D=1$, with $Q=3$ lattice velocity vectors. \\
\hline
D2Q9 & lattice topology in dimension $D=2$, with $Q=9$ lattice velocity vectors. \\
\hline
D3Q27 & lattice topology in dimension $D=3$, with $Q=27$ lattice velocity vectors.  \\
      & See table \ref{tab:D3Q27-constants} for a complete list of constants.\\
\hline
$\Delta t \in \mathbb{R}_+$ & time step size. [units: s] \\ 
\hline
$\Delta x \in \mathbb{R}_+$ & spatial grid node size. [units: \si{m}]\\
\hline
$t^\star \in \mathbb{R}_+$ & physical time. [units: \si{s}]\\
\hline
$t \in \mathbb{Z}_+$ & LBM time in steps. $t=t^\star/\Delta t$\\
\hline
$n_x, n_y, n_z \in \mathbb{Z}_+$ & number of spatial grid nodes in the $x, y,$ and $z$ dimensions, respectively.\\
\hline 
$n \in \mathbb{Z}_+$ & total number of spatial grid nodes. $n = n_xn_yn_z$ \\
\hline
$ \mathbf{x} \in \mathbb{Z}_{n_x} \times \mathbb{Z}_{n_y} \times \mathbb{Z}_{n_z}$ & a spatial grid node position in 3 dimensional space. \\
& We use Greek letter subscripts for different nodes. E.g., $\mathbf{x}_\alpha, \mathbf{x}_\beta$, etc.\\
\hline
$\mathbf{X} \subset \mathbb{Z}^D_+$ & the set of grid nodes that form the domain under study. 
 $|\mathbf{X}|=n$.\\
\hline
$\mathbf{C} = \{ +1, -1, 0 \}^3$ & the set of D3Q27 lattice velocity vectors (see table \ref{tab:D3Q27-constants}). $|\mathbf{C}|=Q$ \\
\hline
$i \in \mathbb{Z}_Q$ & typically used as the index of the lattice velocity vector.\\
\hline
$\mathbf{c}_i \in \mathbf{C}$ & the $i^{\text{th}}$ lattice velocity vector. \\
\hline
$f_i(\mathbf{x},t) \in \mathbb{R}_+ \cup \{ 0 \}$ & distribution function of the number of fictive particles within volume $\Delta x^3$ \\ & at spatial grid node $\mathbf{x}$ and time step $t$, traveling with velocity vector $\mathbf{c}_i$. \\ 
& Our objective is to accurately and efficiently calculate $f_i(\mathbf{x},t)$, and \\ 
& relevant functions (or \emph{measurables}) thereof.\\
\hline
$f^{\text{eq}}_i(\mathbf{x},t) \in \mathbb{R}$ & the equilibrium distribution at position $\mathbf{x}$ with velocity vector $\mathbf{c}_i$ at time $t$.  \\
& The distribution functions $f_i(\mathbf{x},t)$ tend towards $f^{\text{eq}}_i(\mathbf{x},t)$ in time (see \eqref{eq:lbm-feq}). \\
\hline
$\tilde{f}^{\text{eq}}_i(\mathbf{x},t)  \in \mathbb{R}$ & a cubic polynomial approximation to $f^{\text{eq}}_i(\mathbf{x},t)$ derived in \eqref{eq:lbm-feq-approx}.  \\
\hline
$w_i \in \mathbb{R}_+$ & weight corresponding to velocity vector $\mathbf{c}_i$ in the equilibrium distribution. \\
& We note that $w_i \ge 0$ and $\sum_i w_i = 1$ (see table \ref{tab:D3Q27-constants}).\\
\hline
$\tau \in \mathbb{R}_+$ & a lattice topology parameter that sets the relaxation time towards the \\ 
& equilibrium distribution $f^{\text{eq}}_i(\mathbf{x},t)$. \\
& Equation \eqref{eq:lbm_param_consistency_7.14} establishes a consistency relationship with $\tau$, physical \\
& kinematic viscosity $\nu$, and other parameters.\\
\hline
$\nu \in \mathbb{R}_+$ & physical property of kinematic viscosity related to $\tau$ in \eqref{eq:lbm_param_consistency_7.14}. [units: \si{m^2/s}]\\
\hline
$\rho(\mathbf{x},t) \in \mathbb{R}_+$ & fluid density at position $\mathbf{x}$ at time $t$.  See \eqref{eq:lbm-rho}.\\
\hline
$\rho  \in \mathbb{R}_+$ & a shorthand notation for $\rho(\mathbf{x},t)$.\\
\hline
$\mathbf{u}(\mathbf{x},t) \in \mathbb{R}^D$ & the mean velocity of fictive particles at position $\mathbf{x}$ at time $t$. See \eqref{eq:lbm-u-1}.\\
\hline
$\mathbf{u} \in \mathbb{R}^D$ & a shorthand notation for $\mathbf{u}(\mathbf{x},t)$.\\
\hline
$c_s  \in \mathbb{R}_+$ & the lattice speed of sound (for the D3Q27 lattice, $c_s = 1/ \sqrt{3}$). \\
\hline
$\Omega ( f_i (\mathbf{x},t)  )$ & the BGK collision operator.  See \eqref{eq:lbm-bgk}.\\
\hline
\end{tabular}
\caption{Terms used in the LBM. Except for those quantities with units explicitly specified (e.g., ``physical'' quantities), all other quantities are in lattice units (i.e., combinations of $\Delta x$ and $\Delta t$) or otherwise dimensionless.}
\label{tab:lbm-params}
\end{table}

The lattice Boltzmann method (LBM) discretizes the computational domain into a regular grid of spatial nodes $\mathbf{x} \in \mathbf{X}$, where $|\mathbf{X}|=n$, and seeks to compute the distribution functions $f_i(\mathbf{x},t)$ of fluid particles within the volume $\Delta x^3$ centered at spatial grid node $\mathbf{x}$ at time step $t$, traveling on lattice velocity vector $\mathbf{c}_i \in \mathbf{C}$. The set of velocity vectors $\mathbf{C}$ is a small discrete set and $|\mathbf{C}|=Q$.  When using the LBM, lattice grid steps and time steps are scaled to 1.  Table \ref{tab:lbm-params} provides a quick reference for all LBM notation.

Fluid density at lattice node $\mathbf{x}$ and time step $t$ is $\rho(\mathbf{x},t)$ and defined as
\begin{equation}
\rho(\mathbf{x},t) = \sum_i f_i(\mathbf{x},t).
\label{eq:lbm-rho}
\end{equation}

The macroscopic fluid velocity $\mathbf{u}(\mathbf{x},t)$ is
\begin{equation}
\mathbf{u}(\mathbf{x},t) = \frac{1}{\rho}\sum_i f_i(\mathbf{x},t) \mathbf{c}_i,
\label{eq:lbm-u-1}
\end{equation}
or equivalently
\begin{equation}
\rho \mathbf{u}(\mathbf{x},t)=\sum_i f_i(\mathbf{x},t) \mathbf{c}_i,
\label{eq:lbm-u-2}
\end{equation}
where $\rho$ is shortened notation for $\rho(\mathbf{x},t)$.

When implementing the LBM, we scale physical fluid density $\rho$ to $1$ throughout the computational domain.  In the incompressible or weakly compressible case, $\rho$ minimally deviates from $\rho \approx 1$.  To handle the $\frac{1}{\rho}$ term in \eqref{eq:lbm-u-1} and later in \eqref{eq:lbm-feq-rearrange}, we will use a  first-order Taylor series expansion about the expansion point $\bar{\rho}=1$ \cite{Li2025}.  This approximation is not used for the classical execution of the LBM, but it will help us later for the Carleman Linearization procedure in section \ref{sec:cl-lbm}.
\begin{equation}
\frac{1}{\rho} \approx 2 - \rho, \mbox{ when } | 1- \rho | < \varepsilon_\rho \mbox{ for } 0 < \varepsilon_\rho \ll 1.
\label{eq:lbm-approx-inv-rho}
\end{equation}

The LBM repeats two operations: (1) collision and (2) streaming until a stopping condition is met. The collision operation describes the rate at which the distribution functions $f_i(\mathbf{x},t)$ locally converge to an equilibrium distribution $f^{\text{eq}}_i(\mathbf{x},t)$.  For the collision operation, we use the common Bhatnagar-Gross-Krook (BGK) \cite{bgk_1954} operator:
\begin{equation}
\Omega \left( f_i(\mathbf{x},t) \right) = -\frac{1}{\tau}\left(f_i(\mathbf{x},t) - f^{\text{eq}}_i(\mathbf{x},t)  \right),
\label{eq:lbm-bgk}
\end{equation}
where $\tau$ is a relaxation parameter that sets the rate that distribution functions $f_i(\mathbf{x},t)$ tend towards the equilibrium distribution  $f^{\text{eq}}_i(\mathbf{x},t)$.  The $\tau$ parameter is related to the physical fluid kinematic viscosity $\nu$ (see table \ref{tab:lbm-params} for a list of all parameters).  The relationship between $\tau$ and viscosity and other parameters is discussed in section \ref{sec:lbm-params-for-instances} and \eqref{eq:lbm_param_consistency_7.14}.  The equilibrium distribution $f^{\text{eq}}_i(\mathbf{x},t)$ is defined as
\begin{subequations}\label{eq:lbm-feq}
\begin{align}
f^{\text{eq}}_i(\mathbf{x},t) &= 
w_i \rho \left(1 + \frac{\mathbf{u} \cdot \mathbf{c}_i}{c_s^2} + \frac{\left(\mathbf{u} \cdot \mathbf{c}_i\right)^2}{2c_s^4} + \frac{\mathbf{u}\cdot\mathbf{u}}{2c_s^2} \right) 
\label{eq:lbm-feq-c_s}\\
&= 
w_i \rho \left( (1) + (3) \mathbf{u} \cdot \mathbf{c}_i + \left(\frac{9}{2}\right)\left(\mathbf{u} \cdot \mathbf{c}_i\right)^2 + \left( \frac{-3}{2} \right) \mathbf{u}\cdot\mathbf{u} \right)
\label{eq:lbm-feq-coeffs}\\
&= 
w_i \left( (1) \rho + (3)\rho \mathbf{u} \cdot \mathbf{c}_i  + \left(\frac{9}{2}\right)\left( \frac{1}{\rho} \right)\left(\rho \mathbf{u} \cdot \mathbf{c}_i\right)^2 + \left( \frac{-3}{2} \right) \left( \frac{1}{\rho}\right) \rho \mathbf{u}\cdot \rho \mathbf{u} \right) \label{eq:lbm-feq-rearrange}\\
&\approx
\underbrace{w_i \left( (1) \rho + (3)\rho \mathbf{u} \cdot \mathbf{c}_i  + \left(\frac{9}{2}\right)(2-\rho)\left(\rho \mathbf{u} \cdot \mathbf{c}_i\right)^2 + \left( \frac{-3}{2} \right)  (2 - \rho) \rho \mathbf{u}\cdot \rho \mathbf{u}\right)}_{\tilde{f}^{\text{eq}}_i(\mathbf{x},t)},
\label{eq:lbm-feq-approx}
\end{align}
\end{subequations}
where $\rho$ and $\mathbf{u}$ are shorthand notations for $\rho(\mathbf{x},t)$ and $\mathbf{u}(\mathbf{x},t)$, respectively.  The equilibrium distribution $f^{\text{eq}}_i(\mathbf{x},t)$ in \eqref{eq:lbm-feq-c_s} is the Taylor expansion of the Maxwell distribution to quadratic degree with appropriate weights $w_i$ chosen for specific lattice topologies \cite{Li2025}.  The expansion enables the LBM to reproduce the Navier-Stokes equations up to an error of $O(\mbox{Ma}^2)$ where Ma is the mach number \cite{Li2025}.  When the lattice speed of sound $c_s$ is $1/\sqrt{3}$, we obtain \eqref{eq:lbm-feq-coeffs}.  Equation \eqref{eq:lbm-feq-rearrange} isolates the $\frac{1}{\rho}$ terms and equation \eqref{eq:lbm-feq-approx} substitutes the approximation \eqref{eq:lbm-approx-inv-rho} for the $\frac{1}{\rho}$ terms.  When \eqref{eq:lbm-rho} is substituted for $\rho$ and \eqref{eq:lbm-u-2} is substituted for $\rho \mathbf{u}$, then \eqref{eq:lbm-feq-approx} is a cubic polynomial in terms of the distribution functions $f_i(\mathbf{x},t)$.  Finally, $\tilde{f}^{\text{eq}}_i(\mathbf{x},t)$ is a cubic polynomial approximation to $f^{\text{eq}}_i(\mathbf{x},t)$.  In section \ref{sec:cl-lbm}, we incorporate the polynomial approximation \eqref{eq:lbm-feq-approx} into $\frac{\partial}{\partial t} f_i(\mathbf{x},t)$ and describe the Carleman linearization process.

Circling back: the LBM repeats two steps, (1) collision and (2) streaming. The collision step is:
\begin{subequations}
\begin{align}
f^{\text{collision}}_i(\mathbf{x},t) &:= f_i(\mathbf{x}, t) - \Omega\left( f_i(\mathbf{x},t) \right)\\
&:= f_i(\mathbf{x}, t) -\frac{1}{\tau}\left(f_i(\mathbf{x},t) - f^{\text{eq}}_i(\mathbf{x},t) \right).
\end{align}
\label{eq:lbm-collision}
\end{subequations}
During the streaming operation, the post-collision population $f^{\text{collision}}_i(\mathbf{x},t)$ at node $\mathbf{x}$ is streamed to its neighboring node at $\mathbf{x}+\mathbf{c}_i$, where $\mathbf{c}_i$ is the $i^{\text{th}}$ lattice velocity vector (figure \ref{fig:streaming-picture-1}). The streaming step is
\begin{align}
f_i(\mathbf{x}+\mathbf{c}_i,t+1) &:= f^{\text{collision}}_i(\mathbf{x}, t).
\label{eq:lbm-simple-streaming}
\end{align}

The streaming step described in \eqref{eq:lbm-simple-streaming} is more complicated when boundary conditions are imposed and the particle populations are interacting with inlets, outlets, and hard surfaces.  For this work, we implement a ``halfway bounce-back'' method described in \cite{Krueger2017TheLB}.  In the spatial domain $\mathbf{X}$, some grid point $\mathbf{x}$ may be a solid node or a fluid node.  All nodes are centered inside of a cubic voxel with side length $\Delta x$.  Since the voxels are centered on the grid nodes, the walls of the voxels are exactly halfway between grid nodes while the boundary surface between fluid nodes and solid nodes is in general staircase-like.  Consider some population $f_j(\mathbf{x}+\mathbf{c}_i,t)$ at a fluid node $\mathbf{x}+\mathbf{c}_i$ that is streaming in the direction $\mathbf{c}_j = -\mathbf{c}_i$ towards its solid neighbor node at $\mathbf{x}$.  The population $f_j(\mathbf{x}+\mathbf{c}_i,t)$ streaming in the direction $\mathbf{c}_j$ is bounced back in the opposite direction $\mathbf{c}_i = -\mathbf{c}_j$ and the population at $f_i(\mathbf{x}+\mathbf{c}_i,t)$ is replaced by $f_j(\mathbf{x}+\mathbf{c}_i,t)$ in the following time step.  See the bounce back scenario in figure \ref{fig:bb-picture-1}.  


\begin{figure}[h]
    \centering
    \begin{tikzpicture}
        
        \draw (6,0) 
        node[
            circle, 
            draw, 
            fill=white, 
            inner sep=3pt, 
            minimum size=5mm, 
            label=below:{$\mathbf{x}$},
            label=above:{Fluid}
        ] 
        (F1) 
        {};
        
        \draw (12,0)
        node[
            circle, 
            draw, 
            fill=white, 
            inner sep=3pt, 
            minimum size=5mm, 
            label=below:{$\mathbf{x} + \mathbf{c}_{i}$},
            label=above:{Fluid}
        ]
        (F2)
        {};
        
        \draw[->] (F1) -- node[above] {$\mathbf{c}_{i}$} (9,0);
        \draw[->] (F2) -- node[above] {$\mathbf{c}_{i}$} (15,0);
    \end{tikzpicture}   
    \caption{Free streaming. The population $f_i(\mathbf{x},t)$ (left node) becomes the population $f_i(\mathbf{x}+\mathbf{c}_i,t+1)$ (right node).}
    \label{fig:streaming-picture-1}
\end{figure}

\begin{figure}[h]
\centering
\begin{tikzpicture}

\draw (6,0) 
node[
    circle, 
    draw, 
    fill=black, 
    inner sep=3pt, 
    minimum size=5mm, 
    label=below:{$\mathbf{x}$},
    label=above:{Solid}
] 
(S) 
{};

\draw (12,0)
node[
    circle, 
    draw, 
    fill=white, 
    inner sep=3pt, 
    minimum size=5mm, 
    label=below:{$\mathbf{x} + \mathbf{c}_{i}$},
    label=above:{Fluid}
]
(F)
{};

\draw[->] (F) -- node[above] {$\mathbf{c}_j = -\mathbf{c}_i$} (9,0);
\draw[->] (F) -- node[above] {$\mathbf{c}_i$} (15,0);

\end{tikzpicture}
\caption{Bounce back. The population $f_j(\mathbf{x} + \mathbf{c}_i,t)$ (flowing leftward from right node) becomes the population $f_i(\mathbf{x} +\mathbf{c}_i,t+1)$ (flowing rightward from right node). Note the change in velocity vector index $f_j \rightarrow f_i$, where $\mathbf{c}_i = -\mathbf{c}_j$.}
\label{fig:bb-picture-1}
\end{figure}

The populations are updated as
\begin{align}
f_i(\mathbf{x}+\mathbf{c}_i,t + 1) 
&:= 
\begin{cases}
f^{\text{collision}}_i(\mathbf{x}, t)
& \text{ if } \mathbf{x} \text{ is a fluid node}\\
& \text{ and } \mathbf{x}+\mathbf{c}_i \text{ is a fluid node.}\\
f^{\text{collision}}_j(\mathbf{x} + \mathbf{c}_i, t)
& \text{ if } \mathbf{x} \text{ is a solid node}\\
& \text{ and } \mathbf{x}+\mathbf{c}_i \text{ is a fluid node.}\\
& \text{ where } \mathbf{c}_j = -\mathbf{c}_i.\\
0 
& \text{ if } \mathbf{x}+\mathbf{c}_i \text{ is a solid node.}\\
& \text{(the populations are always zero at solid nodes.)}.\\
\end{cases}
\label{eq:lbm-bb-1}
\end{align}

We do not include inlet/outlet boundary conditions in our LBM formulation.  We start the system in a homogeneous state of local equilibrium (Eq.~\eqref{eq:lbm-feq}) corresponding to some uniform initial velocity,
\begin{align}
    \mathbf{u}(\mathbf{x},t=0) & = (u_x^\mathrm{init},0,0) \\
    f_i(\mathbf{x},t=0) & = f_i^\mathrm{eq}(\mathbf{x},t=0) \quad \mathrm{with} \quad \rho(\mathbf{x},t=0)=1,
\end{align}
and simulate the dynamics up to some final evolution time $T$ using periodic, wrap-around boundaries in the $x, y,$ and $z$ dimensions.  This choice was made for three reasons:
\begin{enumerate}
    \item naively imposing a fixed velocity for the bounced-back populations at the inlet/outlet can induce sound pressure wave artifacts in the domain,
    \item more sophisticated inlet/outlet conditions are still an active topic of research, and 
     \item the entering/exiting velocity populations need to be compatible with the bulk physics of the problem, which is typically unknown before simulation.
\end{enumerate}
The choice of periodic boundaries brings up two concerns.  The first is that our quantum resource estimate counts are lower since we are not encoding inlet/outlet conditions as an external forcing/driving term.  The second is that without an inlet driving the velocity field, the simulation will tend towards a zero velocity field.  The concern here is that the velocity field at the final evolution time $T$ will be lower than an inlet-driven velocity field and thus the drag force measurable will also be lower.  Future work may incorporate advanced non-reflecting inlet/outlet boundary conditions.

\subsection{Carleman linearization of the lattice Boltzmann method}\label{sec:cl-lbm}

The goal of this section is to reformulate the discrete space/time step dynamics of the LBM into a linear ODE of the form $\frac{\partial \phi}{\partial t} = A\phi + b$.

Consider the lattice Boltzmann equation as a differential equation where we can describe $\frac{ \partial }{ \partial t} f_i(\mathbf{x},t)$ as a function of the $f$-variables. There will be a contribution to $\frac{ \partial }{ \partial t} f_i(\mathbf{x},t)$ from the streaming operation and from the collision operation.  We now consider lattice time $t$ as continuous rather than discrete steps. Recall that we are working in lattice time rather than physical time $t^\star$.

The streaming operator for one LBM time step is described in \eqref{eq:lbm-bb-1}.  As one discrete LBM time step elapses, the current population $f_i(\mathbf{x})$ is reduced to zero and is completely replaced by some neighboring population $f_j(\mathbf{y})$ where the neighboring population depends on bounce back conditions described in \eqref{eq:lbm-bb-1}.  So the time derivative of the streaming operation is
\begin{align}
    \frac{ \partial }{ \partial t} f_i(\mathbf{x},t) 
    &=
    -f_i(\mathbf{x},t)  \underbrace{+f_j(\mathbf{y},t).}_{\text{defined by \eqref{eq:lbm-bb-1} for }f_i(\mathbf{x},t)}
        \label{eq:partial-derivative-LBM-streaming}
\end{align}

The collision operator for one LBM time step is \eqref{eq:lbm-collision}.  The time derivative of the collision operator is:
\begin{subequations}
\begin{align}
\frac{ \partial }{ \partial t} f_i(\mathbf{x},t) 
&= \lim_{\epsilon \rightarrow 0} \frac{f_i(\mathbf{x},t + \epsilon)  - f_i(\mathbf{x},t)}{\epsilon}\\
&= \lim_{\epsilon \rightarrow 0} \frac{ \cancel{f_i(\mathbf{x},t)} - \frac{ \cancel{\epsilon}}{\tau} \left( f_i(\mathbf{x},t) - f^{\text{eq}}_i(\mathbf{x},t) \right) - \cancel{f_i(\mathbf{x},t)} }{ \cancel{\epsilon}}\\
&= \frac{ -1}{\tau} \left( f_i(\mathbf{x},t) - f^{\text{eq}}_i(\mathbf{x},t) \right)\label{eq:cl-partial}\\
&\approx \frac{ -1}{\tau} \left( f_i(\mathbf{x},t) - \tilde{f}^{\text{eq}}_i(\mathbf{x},t) \right)\label{eq:partial-derivative-LBM-collision-approx}
\end{align}
\end{subequations}
where $\tilde{f}^{\text{eq}}_i(\mathbf{x},t)$ is the cubic approximation for $f^{\text{eq}}_i(\mathbf{x},t)$ from \eqref{eq:lbm-feq-approx}.

Since we are now considering a continuous evolution of the populations, in our notation we will leave out the time step $t$ from our notation and we refer to $f_i(\mathbf{x},t)$ as simply $f_i(\mathbf{x})$.  We may also refer to the entire vector of $f_i(\mathbf{x})$-variables as simply $f \in \mathbb{R}^{nQ}$.  Various schemes may be developed to determine the index that a specific $f_i(\mathbf{x})$ appears in the vector $f$.  The scheme we use is described in appendix \ref{sec:notation}.

The dynamics for streaming in \eqref{eq:partial-derivative-LBM-streaming} are linear with respect to $f$-variables.  The dynamics for collision in \eqref{eq:partial-derivative-LBM-collision-approx} are cubic in $f$-variables because we have substituted the cubic approximation $\tilde{f}^{\text{eq}}_i(\mathbf{x},t)$ for $f^{\text{eq}}_i(\mathbf{x},t)$.  Therefore we can prepare a system of equations of the form:
\begin{subequations}
\begin{align}
    \frac{\partial }{\partial t} f &\approx Sf + F_1 f + F_2 f^{\otimes2} + F_3 f^{\otimes3}\\
    \frac{\partial }{\partial t} f &\approx (S+F_1)f + F_2 f^{\otimes2} + F_3 f^{\otimes3},
\end{align}
\label{eq:lbm-multilinear-algebra}
\end{subequations}
where $S$, $F_1$, $F_2$, $F_3$ are sparse matrices, and $f^{\otimes 2} = f \otimes f$ and $f^{\otimes 3} = f \otimes f \otimes f$ represent second- and third-order monomials composed of $f$-variable factors, respectively.  The $S$-matrix accounts for contributions from streaming operations and the $F$-matrices account for contributions from the collision operation.  The matrices are large, so we don't want to store them in memory. Instead we have derived functions $\tilde{S}(r,c)$, $\tilde{F}_1(r,c)$, $\tilde{F}_2(r,c)$, $\tilde{F}_3(r,c)$ to compute the elements of each matrix in row $r$, column $c$ on the fly.  The adventurously detail-oriented reader is encouraged to review appendix \ref{sec:derivation-of-matrix-coeffs}.

The system in \eqref{eq:lbm-multilinear-algebra} is still nonlinear.  We will work with the truncated Carleman linearized approximation to \eqref{eq:lbm-multilinear-algebra}:
\begin{subequations}
\begin{align}
    \frac{\partial}{\partial t} 
    \underbrace{\begin{bmatrix} f \\ f^{\otimes 2} \\ f^{\otimes 3} \end{bmatrix}}_{\phi}
    &\approx
    \underbrace{
    \begin{bmatrix} 
    (S + F_1) & F_2 & F_3 \\ 
    0 & (S+F_1)^{[2]} & F_2^{[2]} \\
    0 & 0 & (S+F_1)^{[3]} 
    \end{bmatrix}
    }_{A}
    \underbrace{\begin{bmatrix} f \\ f^{\otimes 2} \\ f^{\otimes 3} \end{bmatrix}}_{\phi}
    + \underbrace{\mathbf{0}}_{b}\\
    \frac{\partial}{\partial t}\phi 
    &\approx
    A\phi + b\label{eq:3rd_order_carleman_linearized_system_renamed}
\end{align}
\label{eq:3rd_order_carleman_linearized_system}
\end{subequations}
Carleman Linearization creates an infinite dimensional system.  We have truncated the system in \eqref{eq:3rd_order_carleman_linearized_system} at third-order variables similar to \cite{Li2025, bakker2024quantum, Sanavio_2024}.  As convenient, we can rename variables and constants in the system as in \eqref{eq:3rd_order_carleman_linearized_system_renamed} so we can analyze the system as an ODE with $\phi$-variables, dynamics matrix $A$, and external forcing term $b = \mathbf{0}$.

$A$ is not block-upper-triangular. The blocks of $A$ are not the same size.  Table \ref{tab:F_matrix_dimensions} provides a summary of the dimensions.  
The blocks of $A$ are defined as follows:
\begin{align}
    (S+F_1)^{[2]} 
    &=
    \mathbb{I} \otimes (S+F_1) + (S+F_1) \otimes \mathbb{I} 
    \label{eq:S_plus_F_1_2_submatrix}
    \\
    F_2^{[2]}
    &=
    \mathbb{I} \otimes F_2 + F_2 \otimes \mathbb{I} 
    \label{eq:F_2_2_submatrix}
    \\
    (S+F_1)^{[3]}
    &=
    \mathbb{I} \otimes \mathbb{I} \otimes (S+F_1) + \mathbb{I} \otimes (S+F_1) \otimes \mathbb{I} + (S+F_1) \otimes \mathbb{I} \otimes \mathbb{I} 
    \label{eq:S_plus_F_1_3_submatrix}
\end{align}
The identify matrix $\mathbb{I}$ that appears in \eqref{eq:S_plus_F_1_2_submatrix}, \eqref{eq:F_2_2_submatrix}, and \eqref{eq:S_plus_F_1_3_submatrix} is $nQ \times nQ$ to be consistent with our original $f$-variables.  A derivation of \eqref{eq:S_plus_F_1_2_submatrix}, \eqref{eq:F_2_2_submatrix}, and \eqref{eq:S_plus_F_1_3_submatrix} is provided in appendix \ref{sec:appendix_derivation_of_submatrices}.

\begin{table}[h]
\centering
\begin{tabular}{|c|l|l|l|l|l|}
\hline
Item & Number of rows & Number of columns \\
\hline
$(S+F_1)$ & $nQ$ &  $nQ$\\
$F_2$ & $nQ$ &  $(nQ)^2$\\
$F_3$ & $nQ$ &  $(nQ)^3$\\
$(S+F_1)^{[2]}$ & $(nQ)^2$ &  $(nQ)^2$\\
$F_2^{[2]}$ & $(nQ)^2$ &  $(nQ)^3$\\
$(S+F_1)^{[3]}$ & $(nQ)^3$ &  $(nQ)^3$\\
$\phi_1 = f$ & $nQ$ & 1\\
$\phi_2 = f^{\otimes2}$ & $(nQ)^2$ & 1\\
$\phi_3 = f^{\otimes3}$ & $(nQ)^3$ & 1\\
$\phi$ & $nQ + (nQ)^2 + (nQ)^3$ & 1 \\
$A$ & $nQ + (nQ)^2 + (nQ)^3$ & $nQ + (nQ)^2 + (nQ)^3$ \\
\hline
\multicolumn{3}{|l|}{Recall that we have $n$ grid points and $Q$ lattice velocity vectors.}\\
\hline
\end{tabular}
\caption{Problem dimensions.}
\label{tab:F_matrix_dimensions}
\end{table}

\subsection{Time domain convergence of the Carleman linearized system}\label{sec:time_domain_convergence_main_body_text}

Carleman linearization lifts a nonlinear differential equation into an infinite system of linearized differential equations.  The infinite system is truncated at some order, which induces error.  We need to verify our truncated, linearized system will be within some acceptable error tolerance for the evolution time domain $t \in [0,T]$.  In appendix \ref{sec:appendix_time_domain_convergence_of_CL}, we build on \cite{Forets2018} and introduce theorem \ref{thm:max_convergence_time} for our specific LBM formulation, that allows us to calculate bounds on the evolution time $T_c$ where there is a guarantee that the solution to the truncated, Carleman-linearized system will be within some tolerance of the true solution.  

The numerical results/bounds from applying theorem \ref{thm:max_convergence_time} are shown in table \ref{tab:time_domain_T_convergence_bounds_for_sphere} in section \ref{sec:lbm-params-for-instances} alongside other problem instance parameters.  

\subsection{Quantum linearized differential equation solver}\label{sec:quantum-lin-DE-solver}

After linearizing the nonlinear differential equation, the dominant strategy to solve the linear form has been to transform this into a linear system of equations and leverage the exponential savings in memory from the solution procedures in \cite{Harrow2009, jennings2023a, subasi2019}.  
We use \cite{Berry2017} to transform our linear differential equation into a system of linear equations.  Then we use the linear systems solver from \cite{jennings2023a} to encode the solution. We capture quantum resource estimates using the results from \cite{jennings2023a} and incorporating detailed block-encoding costs.

As noted in \cite{Berry2017}, the solution of our linear differential equation is
\begin{equation}
    \label{eq:solution_vector}
    \ket{\phi(t)} = e^{At}\ket{\phi(0)} + (e^{At} -I )A^{-1}\ket{b}.
\end{equation}
Using the truncated Taylor series approximations $e^{At} \approx \sum_{j=0}^k \frac{(At)^j}{j!}$ and $(e^{At} -I )A^{-1} \approx \sum_{j=1}^k \frac{(At)^{j-1}}{j!}$ with truncation size $k$, \cite{Berry2017} constructs a system of linear equations \eqref{eq:berry2017_system_of_equations} based on the $L_{(m,k,p)}$ matrix, where $m$ is the number of time steps, $k$ is the truncation size, and $p$ is the number of trailing steps to hold the solution constant and boost the probability of obtaining the solution at the final time.

We can set up our system of equations with $\ket{\phi_{tj}}$ variables where $t$ will label the iteration in time and $j$ will label a summand in the Taylor approximations. We have the following relation to encode different summands between $\ket{\phi_{tj}} = \frac{Ah}{j} \ket{\phi_{t,j-1}}$ for time step size $h$, number of discrete time steps $m$, $0 \le t < m$, and $2 \le j < k$. The first variable for the first summand at an iteration step $\ket{\phi_{t,0}}$ is related to the result of the last iteration step by  $\ket{\phi_{t,0}} = \sum_{j=0}^k \ket{\phi_{t-1}, j}$ and lastly we have the relationship between the first summand and the second summand at an iteration step being $\ket{\phi_{t,1}} = A \ket{\phi_{t,0}} + h \ket{b}$.  From \cite{Berry2017}, this results in a system of linear equations of the form 
\begin{align}
    L_{(m,k,p)} 
    \left(
        \begin{array}{c}
        \ket{\phi_{t=0}}\\
        m \begin{dcases}
            \ket{\phi_{t=1}}\\
            \ket{\phi_{t=2}}\\
            \vdots \\
            \ket{\phi_{t=m-1}}\\
        \end{dcases}\\
        p \begin{dcases}
            \ket{\phi_{t=m-1}}\\
            \ket{\phi_{t=m-1}}\\
            \vdots\\
        \end{dcases}
        \end{array}
    \right)
    &= 
    \left( 
        \begin{array}{c}
            \ket{\phi_{\text{init}}}  \\
            m \begin{dcases}
                k+1 \begin{dcases}
                h\ket{b} \\
                0\\
                \vdots\\
                \end{dcases}
                \\
                k+1 \begin{dcases}
                h\ket{b}\\
                0\\
                \vdots\\
                \end{dcases}
                \\
                \vdots\\
            \end{dcases}
            \\
            p \begin{dcases}
                0\\
                \vdots\\
            \end{dcases}
            \\
        \end{array}
    \right).    \label{eq:berry2017_system_of_equations}
\end{align}
For illustration, a \textit{small example} $L_{(m=2,k=3,p=2)}$ matrix is
\begin{equation}
\label{eq:inverted_matrix}
L_{(2,3,2)}= 
\left( 
\begin{array}{ccccccccccc}
I &  &  &  &  &  &  &  &  &  &  \\
-Ah & I &  &  &  &  &  &  &  &  &  \\
 & -Ah/2 & I &  &  &  &  &  &  &  &  \\
 &  & -Ah/3 & I &  &  &  &  &  &  &  \\
 -I& -I &  -I& -I & I &  &  &  &  &  &  \\
 &  &  &  & -Ah & I &  &  &  &  &  \\
 &  &  &  &  & -Ah/2 & I &  &  &  &  \\
 &  &  &  &  &  & -Ah/3 & I &  &  &  \\
 &  &  &  &  -I&  -I& -I & -I & I &  &  \\
 &  &  &  &  &  &  &  & -I & I &  \\
 &  &  &  &  &  &  &  &  & -I & I \\
\end{array}
\right).
\end{equation}

Once we have solved the linear equations, we have a \textit{history state} that contains the solution for all time steps $0 \le t < m = \lceil T/h \rceil$, where $T$ is the final evolution time. Time step $h$ is bounded by $h \le 1/||A||_2$. We derive a bound for $||A||_2$ in appendix \ref{sec:bounding_spectral_norm_of_A}. We shall use this state and quantum amplitude estimation to extract an estimate of the drag force. The very general workflow is shown in figure \ref{fig:lucid_chart_figure}.

\begin{figure}[H]
    \centering

    \begin{tikzpicture}[
  node distance=0.25cm and 1cm,
  every node/.style={align=center},
  box/.style={draw, rectangle, minimum width=1cm, minimum height=1cm},
  block/.style={draw, rectangle, minimum width=2cm, minimum height=1cm},
  arrow/.style={draw, -{Latex[length=3mm]}}
]

\node[box] (input1) {$S + F_1$};
\node[box, right=of input1, xshift=1cm] (input2) {$F_2$};
\node[box, right=of input2, xshift=1cm] (input3) {$F_3$};

\node[block, below=of input1, yshift=-1cm] (blockencoding1) {$\begin{pmatrix} S + F_1 & * \\ * & * \end{pmatrix}$};
\node[block, below=of input2, right=of blockencoding1] (blockencoding2) {$\begin{pmatrix} F_2 & * \\ * & * \end{pmatrix}$};
\node[block, right=of blockencoding2, xshift=0cm] (blockencoding3) {$\begin{pmatrix} F_3 & * \\ * & * \end{pmatrix}$};

\node[block, below=of blockencoding2, yshift=-1cm] (UA) {$U_A = \begin{pmatrix} A & * \\ * & * \end{pmatrix}$};
\node[block, below=of UA, yshift=-1cm] (UL) {$U_L = \begin{pmatrix} L & * \\ * & * \end{pmatrix}$};

\node[block, below=of UL, yshift=-1cm] (linear) {\it Linear system solver by adiabatic trajectories};
\node[block, below=of linear, yshift=-1cm] (amplification) {\it Quantum amplitude amplification};
\node[block, below=of amplification, yshift=-1cm] (estimation) {\it Quantum amplitude estimation};

\draw[arrow] (input1.south) -- (blockencoding1.north);
\draw[arrow] (input2.south) -- (blockencoding2.north);
\draw[arrow] (input3.south) -- (blockencoding3.north);

\draw[arrow] (blockencoding1.south) -- (UA.north);
\draw[arrow] (blockencoding2.south) -- (UA.north);
\draw[arrow] (blockencoding3.south) -- (UA.north);

\draw[arrow] (UA.south) -- (UL.north);
\draw[arrow] (UL.south) -- (linear.north);
\draw[arrow] (linear.south) -- (amplification.north);
\draw[arrow] (amplification.south) -- (estimation.north);

\node[align=center, left=of input1, xshift=-1cm] {\it Input from classical workflow};
\node[align=center, left=of blockencoding1, yshift=0cm, xshift=-0.5cm] {\it Block encoding for \\ \it input classical matrices};
\node[align=center, left=of UA, xshift=-3cm] {\it Block encoding the\\ \it Carleman-linearized system};
\node[align=center, left=of UL, xshift=-4cm] {\it Block encoding the\\ \it $L$ matrix};
\node[align=center, left=of linear, xshift=-1.5cm] {\it Linear inversion on\\ \it quantum computer};
\node[align=center, left=of amplification, xshift=-2.25cm] {\it Increase probability \\ \it of solution};
\node[align=center, left=of estimation, xshift=-2.5cm] {\it Extract the drag force};

\end{tikzpicture}
    
    \caption{Overview of the quantum algorithmic pipeline workflow for extracting the drag coefficient. See figure \ref{fig:drag_est_call_graph} for details on how to break apart this diagram from a resource estimation perspective.}
    \label{fig:lucid_chart_figure}
\end{figure}

Quantum circuits are needed to block encode the classical data needed to specify the Carleman $A$ matrix and equations.  A significant amount of resources are used for these operations.  Appendix \ref{sec:quantum_compilation} discusses the details of block encoding.

\subsection{Quantum approach to estimating the drag force}\label{sec:quantum_approach_to_estimating_drag_force}

Here we describe the quantum method for estimating the drag force scalar $\mathcal{F}$ in the flow-past-a-sphere scenario.
 The costs of running this on the quantum computer will be assessed in section \ref{sec:quantum-resource-estimates}.  We consider drag force on the sphere in the $x$-direction (the direction of bulk flow).  We now describe how amplitude estimation can be used to yield an estimate of this quantity by encoding $\mathcal{F}$ into a quantum amplitude. 
As described in \cite{Li2025}, and reviewed in the previous two sections, the quantum LBM method provides a quantum circuit that generates a state that is proportional to the $K^{\text{th}}$-order Carleman linearization encoding
\begin{subequations}
\begin{align}
\phi
&=
f \oplus f^{\otimes 2}\oplus \ldots \oplus f^{\otimes K} 
\\
&=
\bigoplus^K_{k=1} f^{\otimes k}.
\end{align}
\label{eq:carleman_encoding}
\end{subequations}
Encoding $\phi$ into a quantum state requires re-normalizing this vector by the quantity
\begin{subequations}
\begin{align}
\label{eq:solution_vector_classical}
||\phi||_2^2
&=\sum^K_{k=1} ||f||_2^{2k}\\
&=||f||_2\sum^{K-1}_{k=0} ||f||_2^{2k}\\
&=||f||_2\left(\frac{1-||f||_2^{K}}{1-||f||_2}\right).
\end{align}
\end{subequations}
We will choose to encode $\phi$ into a quantum state as follows,
\begin{align}
    \ket{\phi}\equiv U_{\phi}\ket{0\ldots 0} &= \sum_{k=1}^K\frac{||f||_2^k}{||\phi||^2_2} \ket{k}\ket{f}^{\otimes k}\ket{0^M}^{\otimes K-k},
\end{align}
where the register with $\ket{k}$ is used to label the terms in the direct sum of \eqref{eq:carleman_encoding}
and the state $\ket{f}$ is defined such that $f_i(\mathbf{x})=f_i(x,y,z)=||f||_2\langle x,y,z,i\ket{f}$ and is encoded into $M$ qubits.
Note that, while the size of each sector in \eqref{eq:carleman_encoding} increases with $k$, our encoding embeds each of these subspaces in a larger Hibert space all of the same size corresponding to that of the $K^{\text{th}}$ sector. This will incur a marginal extra qubit cost at the benefit of a fairly efficient Carleman matrix block encoding (see appendix \ref{subsec:carleman_matrix_block_encoding}).
The amplitude estimation efficiency is derived from the fact that the second register is unentangled from the others and its state is $\ket{f}$.
Accordingly, we can consider $U_{\phi}$ to yield a block encoding of the distribution function $f$ at time $t$ with subnormalization $||f||_2^2$, meaning that there is some norm-1 vector $r$ such that
\begin{align}
    f\otimes r = ||f||_2 U_{\phi}\ket{0... 0} = ||f||_2 \ket{f}\otimes\ket{\textup{remainder}},
\end{align}
where $\ket{\textup{remainder}} := \sum_{k=1}^K\frac{||f||_2^k}{||\phi||_2} \ket{k}\ket{f}^{\otimes k-1}\ket{0^M}^{\otimes K-k}$ is the state of the system apart from the second register.
Having established this encoding, we are then able to use $U_{\phi}$ to estimate properties of  $f$, including the drag force on the sphere.

To develop a quantum method for estimating the drag force $\mathcal{F}$, in the following we show that $\mathcal{F}$ can be written as a linear function of $f$. This means that there exists some vector $v$ such that
\begin{align}
    \mathcal{F} = v \cdot f.
    \label{eq:f_x_def}
\end{align}
Once this is established, then an approach to estimating $\mathcal{F}$ on a quantum computer is to encode $\mathcal{F}$ in an amplitude of $\ket{f}$ and then use quantum amplitude estimation to determine it.
The linearity of $\mathcal{F}$ is established as follows.
The drag force in the $x$-direction can be derived from the total momentum exchange during one streaming step in the lattice Boltzmann method \cite{Ladd1993NumericalSO, Krueger2017TheLB}.  First, we define the set of boundary fluid nodes $\mathcal{B}$ and the set of solid wall nodes $\mathcal{W}$:
\begin{align}
    \mathcal{B} = \{ \mathbf{x} \in \mathbf{X} : \mathbf{x} \text{ is a fluid node and } \exists \mathbf{c}_i \in \mathbf{C} \text{ such that } \mathbf{x} + \mathbf{c}_i \text{ is a solid node.} \},
    \label{eq:set_of_boundary_nodes}
\end{align}
and 
\begin{align}
    \mathcal{W} = \{ \mathbf{x} \in \mathbf{X} : \mathbf{x} \text{ is a solid node and } \exists \mathbf{c}_i \in \mathbf{C} \text{ such that } \mathbf{x} + \mathbf{c}_i \text{ is a fluid node.} \}.
    \label{eq:set_of_wall_nodes}
\end{align}
We define the set of boundary links $\mathcal{B}_{\text{links}}$ as
\begin{align}
    \mathcal{B}_{\text{links}} = \{ (\mathbf{x}, i) \in \mathbf{X} \times \mathbb{Z}_Q : \mathbf{x} \text{ is a fluid node and } \mathbf{x} + \mathbf{c}_i \text{ is a solid node, where } \mathbf{c}_i \in \mathbf{C} \}.
    \label{eq:set_of_boundary_links}
\end{align}
Recall that $\mathbf{C}$ is the set of all lattice velocity vectors defined by table \ref{tab:D3Q27-constants}.  Note that one boundary node $\mathbf{x} \in \mathcal{B}$ may participate in one or more boundary links.  The total momentum change $\Delta \mathbf{p}$ is calculated as the sum over all boundary links that have one node outside of the sphere (fluid node) and one node inside of the sphere (solid node) and given by the equation
\begin{align}
\Delta \mathbf{p} = \Delta x^3 \sum_{(\mathbf{x},i)\in \mathcal{B}_{\text{links}}} (f_i(\mathbf{x}) + f_{\bar{i}}(\mathbf{x})) \mathbf{c}_i.
\end{align}
The two terms represent the distribution functions for particles entering and bouncing back from the solid wall nodes, and the notation $f_{\bar{i}}(\mathbf{x})$ implies $f_{\bar{i}}(\mathbf{x}) = f_j(\mathbf{x})$, where $\mathbf{c}_j = -\mathbf{c}_i$.  The $\Delta x^3$ ensures the result is a momentum rather than a momentum density, and we have omitted the prefactor $\rho$ since we have scaled to $\rho\approx 1$.
The force in the $x$-direction is then given by the following linear combination,
\begin{align}
    \mathcal{F} = \frac{\Delta \mathbf{p}\cdot \hat{x}}{\Delta t} = \frac{\Delta x^3}{\Delta t}\sum_{(\mathbf{x},i)\in\mathcal{B}_{\text{links}}}(f_i(\mathbf{x})+f_{\bar{i}}(\mathbf{x}))c_{ix} = v \cdot f, \label{eq:drag_force}
\end{align}
where $c_{ix}=\mathbf{c}_i \cdot \hat{x}$ is the component of the lattice velocity vector $\mathbf{c}_i$ in the $x$-direction. This expression is a linear function of the distributions $f_i(\mathbf{x})$ and we collect the linear coefficients (of the form $c_{ix}\Delta x^3 / \Delta t$) into a single vector $v$ of the same dimension as $f$. Consider a unitary transformation that prepares a state that is proportional to this vector,
\begin{align}
    U_{v}\ket{0\ldots 0} \equiv \ket{v},
\end{align}
where the state $\ket{v}$ is defined such that $v_i(\mathbf{x})=v_i(x,y,z)=||v||_2\langle x,y,z,i\ket{v}$.
Then, we have that
\begin{align}
    \mathcal{F} = v\cdot f = ||v||_2||f||_2\braket{v}{f}.
\end{align}
This quantity can be estimated using quantum amplitude estimation as follows.
Forming the controlled-unitary $V=(c$-$U_{v}^{\dagger})(c$-$U_{\phi})$, the quantity of interest is encoded in a square amplitude as
\begin{align}
\frac{1}{2}\left(1+\frac{\mathcal{F}}{||v||_2||f||_2}\right)
= |(\bra{+}\otimes I)V \ket{+0\ldots 0}|^2 = a.
\label{eq:qamp}
\end{align}
This shows that the quantity on the left-hand-side of \eqref{eq:qamp} is equal to the squared amplitude of the state $V(H\otimes I)\ket{00\ldots 0}$ on the subspace marked by $-H\otimes I$.
Therefore, we can estimate the left-hand side (and thus ${v}$) using quantum amplitude estimation.  The circuit used in quantum amplitude estimation \cite{brassard2002quantum} is based on the Grover iterate $W=-(V(H\otimes I)R_0(H\otimes I)V^{\dagger})(-H\otimes I)$, where $R_0$ is the reflection about the all-zero state.

Given an estimate $\hat{a}$ of the squared amplitude $a$ where $|\hat{a}-a|\leq \tilde{\epsilon}$, we can then construct an estimate $\hat{\mathcal{F}}$ of $\mathcal{F}$ to within error $\epsilon = 2||{v}||_2||f||_2\tilde{\epsilon}$ as
\begin{align}
\hat{\mathcal{F}}=||{v}||_2||f||_2(2\hat{a}-1).
\end{align}
This estimate also requires knowing $||{v}||_2$ and $||f||_2$. 
As explained in section \ref{subsec:methodolgy}, $||{v}||_2$ can be determined from the structure of the problem (and we establish an upper bound on this quantity for the purposes of resource estimation).
The quantity $||f||_2$, however, 
is a function of the solution to the nonlinear differential equation, which in general is not expected to be known ahead of time.
Accordingly, we must use the quantum computer to generate an estimate of $||f||_2$.
While alternative quantum linear ODE solvers including \cite{Berry2017} might yield a more straight-forward approach to estimating this quantity using quantum amplitude estimation, it is favorable to match the linear ODE solver used to estimate $a$ with the one used to estimate $||f||_2$.
We discuss the need for future work on this topic in section \ref{sec:future}.

\begin{figure}
\centering
\begin{subfigure}{0.8\textwidth}
  \centering
      \yquantset{operator/separation=4mm}
\begin{tikzpicture}
\begin{yquant}
    qubit {$\ket{+}$} ancilla;
    qubit {$\ket{0^n}$} system; 

    box {$W^m$} system | ancilla;
    h ancilla;
    measure ancilla;output {$0$} ancilla;

\end{yquant}
\end{tikzpicture}
\caption{Circuit used for amplitude estimation of $(1+\textup{Re}(\braket{v}{f})/2$, where the probability of $0$ is given by $\textup{Pr}(0|m)=(1+\cos(m\arccos{\textup{Re}(\braket{v}{f}}))/2$.}
\end{subfigure}

\bigskip

\begin{subfigure}{0.8\textwidth}
 \centering
   \yquantset{operator/separation=4mm}
   \begin{tikzpicture}
   \begin{yquantgroup}
       \registers{
        qubit {} q[4];
       }
       \circuit{
        slash q[1];
        slash q[2];
        slash q[3];
        box {$W$} (q);
        slash q[1];
        slash q[2];
        slash q[3];
       } \equals

       \circuit{
        slash q[1];
        slash q[2];
        slash q[3];
        box {$H$} (q[0]);
        box {$U_{v}$} (q[1], q[2]) | q[0];
        box {$U_{\phi}^{\dagger}$} (q[2], q[3]) | q[0];
        box {$H$} (q[0]);
        box {$R_0$} (q);
        box {$H$} (q[0]);
        box {$U_{\phi}$} (q[2], q[3]) | q[0];
        box {$U_{v}^{\dagger}$} (q[1], q[2]) | q[0];
        slash q[1];
        slash q[2];
        slash q[3];
        }
   \end{yquantgroup}
       
\end{tikzpicture}
   \caption{Grover iterate for QAE.}
\end{subfigure}
\caption{Circuits used in QAE subroutine.}
\end{figure}

To review, the main components of the drag estimation algorithm are:
\begin{enumerate}
    \item Construct unitary $U_{v}$ that encodes drag force coefficients and determine subnormalization $||v||_2$ (which can be done analytically).
    \item construct unitary $U_{\phi}$ that encodes solution state and determine subnormalization $||f||_2$ (which can be done using a separate call to amplitude estimation). 
    \item Use a quantum amplitude estimation algorithm to estimate $a$, the probability of a $(+)$ outcome on the ancilla qubit in $c$-$U_{v}^{\dagger}c$-$U_{\phi}\ket{+0\ldots 0}$.
    \item From $\hat{a}$, the estimate of $a$, return $\hat{\mathcal{F}}=||v||_2||f||_2(2\hat{a}-1)$, the estimate of $\mathcal{F}$.
\end{enumerate}

We conclude by discussing what costs are required to ensure an $\epsilon$-accurate estimate of $\mathcal{F}$ with failure probability less than $\delta$. These will be used in making the resource estimates in the following section.
Various quantum algorithms for amplitude estimation ensure an $\tilde{\epsilon}$-accurate estimate using $\tilde{O}(1/\tilde{\epsilon})$ calls to the Grover iterate per circuit, where $\tilde{O}(x)$ indicates $O(x\cdot\textup{polylog}(x))$.
This means that an $\epsilon$-accurate estimate of $\mathcal{F}$ can be ensured using $\tilde{O}(||{v}||_2||f||_2/\epsilon)$ Grover iterates per circuit. As an example, the iterative quantum amplitude estimation method \cite{grinko2021iterative} ensures an $\tilde{\epsilon}$-accurate estimate of $a$ with confidence $1-\delta$ using no more than
\begin{align}
\frac{32}{(1 - 2 \sin(\frac{\pi}{14}))^2} \log \left( \frac{2}{\delta} \log_2 \left( \frac{\pi}{4\tilde{\epsilon}} \right) \right)
\end{align}
circuit repetitions, with each circuit using a number of Grover iterates per circuit of no more than
\begin{align}
    \frac{\pi}{8 \tilde{\epsilon}}.
\end{align}
Given the logarithmic dependence of the number of samples on its parameters, it depends weakly on $\tilde{\epsilon}$ and $\delta$. For example, from $\tilde{\epsilon}=0.1$ and $\delta=0.1$ to $\tilde{\epsilon}=0.00001$ and $\delta=0.00001$, the number of circuit repetitions only increases from 425 to 1558.

In terms of a relative error $\overline{\epsilon} = \epsilon/\mathcal{F}$, the estimation error required of the amplitude estimation algorithm is
\begin{align}
    \tilde{\epsilon} = \frac{\mathcal{F}\overline{\epsilon}}{2||{v}||_2||f||_2}
\end{align}
and the number of Grover iterates per circuit is no more than
\begin{align}
\label{eq:amp_est_cost}
\frac{\pi ||{v}||_2||f||_2}{4\mathcal{F}\overline{\epsilon}}.
\end{align}

\section{Quantum resource estimates}\label{sec:quantum-resource-estimates}
    
A sufficient QPU for the approach described in section \ref{sec:quantum_workflow} does not exist yet.  One of the main goals of this work is to estimate the quantum resources that would be required to run the calculation.  The \emph{resource estimates} provide a surrogate performance comparison to classical state of the art techniques.  The resource estimates are based on our choice of quantum algorithm workflow, implementation details, and assumptions.  

The purpose of this section is to investigate the question: \textit{what is the estimated cost of solving utility-scale computational fluid dynamics on a quantum computer?} 
We are ultimately interested in determining such costs for the high-utility problem of determining the drag force on a ship hull. However, due to the limitations of existing methods in quantum compilation for addressing such problems, we will instead investigate the costs for the flow-past-a-sphere problem. We conclude this section by exploring what can be inferred about the ship hull problem using the results for the flow-past-a-sphere problem. We begin by describing our approach to generating resource estimates for the drag estimation task in the flow-past-a-sphere problem. 
We then present the LBM parameters used as input for the resource estimates.

\subsection{Methodology, assumptions, and approximations}
\label{subsec:methodolgy}

The resource costs presented are the number of logical qubits and non-Clifford gate counts (specifically $T$-gates). We count $T$-gates because they are believed to be the rate-limiting operations for modern implementations of quantum error correction (specifically lattice surgery on the surface code \cite{horsman2012surface}).  $T$-gate count serves as a proxy for runtime.

Our estimates err on the side of under counting qubits and $T$-gates.
This is because (1) a lower bound on resource estimates is sufficient to establish the impracticality of using the existing methods to solve the drag estimation problem considered, (2) we anticipate that the components throughout the quantum algorithm will be improved and so a detailed accounting at this point may be premature, and (3) the parts of the quantum circuit that are neglected are not expected to introduce substantial costs.

\begin{figure}[ht!]
    \centering    

\begin{tikzpicture}[
  node distance=0.75cm and 1cm,
  every node/.style={align=center},
  box/.style={draw, rectangle, minimum width=3.5cm, minimum height=1cm, text centered, rounded corners},
  block/.style={draw, rectangle, minimum width=4cm, minimum height=1cm, text centered, rounded corners},
  arrow/.style={draw, -{Latex[length=3mm]}}
]

\node[box] (lbm) {\it LBM drag force estimation };
\node[box, below=of lbm] (iter1) {\it Iterative quantum amplitude estimation for inner product};
\node[box, below=of iter1] (iter2) {\it Run Iterative quantum amplitude circuit};
\node[box, below=of iter2] (solver) {\it Taylor quantum ODE solver };
\node[box, below=of solver] (amplitude) {\it Oblivious amplitude amplification of solution};
\node[box, below=of amplitude] (prep) {\it Prepare ODE final time state)};
\node[box, below=of prep] (qlsa) {\it Quantum Linear solver };

\node[block, below=of qlsa, xshift=-4cm] (odehistory) {\it Block encode ODE history system};
\node[block, below=of qlsa, xshift=4cm] (bvector) {\it ODE history b Vector};

\node[block, below=of odehistory, xshift=0cm] (carleman) {\it  Block encode Carleman Linearized system};
\node[block, below=of carleman, xshift=-4cm] (cubic) {\it Block encode cubic term};
\node[block, below=of carleman, xshift=4cm] (quadratic) {\it Block Encode Quadratic Term};
\node[block, below=of cubic, xshift=4cm] (linear) {\it Block Encode Linear Term};
\node[block, below=of linear, xshift=0cm] (tgates) {\it $T$ gates};

\draw[arrow] (lbm.south) -- (iter1.north);
\draw[arrow] (iter1.south) -- (iter2.north);
\draw[arrow] (iter2.south) -- (solver.north);
\draw[arrow] (solver.south) -- (amplitude.north);
\draw[arrow] (amplitude.south) -- (prep.north);
\draw[arrow] (prep.south) -- (qlsa.north);
\draw[arrow] (qlsa.south) -- (odehistory.north);
\draw[arrow] (qlsa.south) -- (bvector.north);
\draw[arrow] (odehistory.south) -- (carleman.north);
\draw[arrow] (carleman.south) -- (cubic.north);
\draw[arrow] (carleman.south) -- (quadratic.north);
\draw[arrow] (carleman.south) -- (linear.north);
\draw[arrow] (linear.south) -- (tgates.north);
\draw[arrow] (cubic.south) -- (tgates.west);
\draw[arrow] (quadratic.south) -- (tgates.east);

\end{tikzpicture}
 
    \caption{
        Detailed call graph of subroutines used to estimate number of $T$ gates. The higher subroutine calls the lower subroutine a number of times.
    }
    
\label{fig:drag_est_call_graph}
\end{figure}

\paragraph{$T$-gate counting} As shown in figure \ref{fig:drag_est_call_graph}, the quantum algorithm is decomposed into subroutines with subtasks.  We count the number of times a subroutine calls a subtask. The decomposition continues until we reach a subroutine that calls to $T$-gates as its subtask.  (Implemented in \cite{qsub_package}.) 

\begin{table}
\centering
\begin{tabular}{|l|p{10cm}|}
\hline
Parameter & Description \\
\hline
$A, \phi, b$ 
&
The Carleman-linearized dynamics of the LBM as $\frac{\partial \phi}{\partial t} = A \phi + b$ per equation \eqref{eq:3rd_order_carleman_linearized_system} 
\\
\hline
$||A||_2 \in \mathbb{R}_+$ 
&
The spectral norm of $A$.  We establish an upper bound in appendix \ref{sec:bounding_spectral_norm_of_A}.
\\
\hline
$h \in \mathbb{R}_+$
&
Evolution step size used in \cite{Berry2017}. Note that $h \le 1/||A||_2$.\\
\hline
$\alpha(A) \in \mathbb{R}$
&
Spectral abscissa of $A$, maximum real part of eigenvalues of $A$.  We use a value of $\alpha(A)=0$ per numerical studies \cite{source_for_norm_A_anlysis}.
\\
\hline
$||b||_2 \in \mathbb{R}_+ \cup \{ 0 \}$.
&
We have not included boundary inlet/outlet conditions in our model thus far, so $||b||_2 = 0$.
\\
\hline
$T \in \mathbb{R}_+$
&
Evolution time (in lattice steps). See tables \ref{tab:resource_estimate_parameters_sphere} and \ref{tab:resource_estimate_parameters_hulls}.
\\
\hline
$\phi_{\text{min}}, \phi_{\text{max}} \in \mathbb{R}_+ \cup \{ 0\}$
&
$\phi_{\text{min}} \le ||\phi(t)||_2 \le \phi_{\text{max}}, t \in [0,T]$, lower/upper bounds on the norm of the vector of Carleman-linearized $\phi$-variables. See equation \eqref{eq:phi_bounds}.
\\
\hline
$C_{\text{max}}(T) \in \mathbb{R}_+$
&
$||e^{At}||_2 \le C_{\text{max}}(T), t \in [0,T]$.  For our resource estimates, we use  $C_{\text{max}}(T)=1$ as a placeholder until a better bound can be established and we note that the overall cost will scale proportionally to this value \cite{jennings2023b}.
\\
\hline
\end{tabular}
\caption{Definitions of parameters used for resource estimates.}
\label{tab:resource-estimate-parameter-definitions}
\end{table}

\paragraph{Qubit counting} The qubit counting works in a similar recursive manner.
The difference is that each subroutine is responsible for determining how many qubits it requires as a function of the qubit counts of the various qubit registers used by its subtasks.  (Implemented in \cite{qsub_package}.)  Some calls or qubit counts are approximated.   We review the assumptions for each of the components of the call graph:
\begin{itemize}
    \item \textbf{Quantum amplitude estimation (QAE)} 
    \begin{itemize}
        \item Number of calls: We use \eqref{eq:amp_est_cost} to determine the number of calls to the Grover iterate that uses $U_{\phi}$ and $U_{v}$. In the discussion that follows, we establish upper bounds for the normalization factors.
        \item Qubit count: We properly account for qubits at this layer of the hierarchy.
        \item Note: as explained in section \ref{sec:quantum_approach_to_estimating_drag_force}, while QAE is used to estimate the squared amplitude $a$ (see \eqref{eq:qamp}), in a complete version of the estimation algorithm, QAE may also be used to estimate $||f||_2$. We  ignore the costs of this additional subroutine.  We expect that its cost would not significantly exceed the cost of estimating the squared amplitude $a$ itself due to the similarity of the two tasks. 
    \end{itemize}
    
    \item \textbf{Solution state preparation $U_{\phi}$:} 
    \begin{itemize}
        \item Number of calls to linear ODE matrix $A$ block encoding: we use the analytical upper bounds provided in \cite{jennings2023b} to determine the number of calls to the linear ODE $A$ matrix block encoding.
        \item Number of calls to linear ODE initial state $\phi(0)$: we ignore the costs of this operation as we expect that it will require far fewer $T$-gates than the $A$ matrix block encoding.        
        \item Qubit count: we use the qubit accounting of \cite{jennings2023b} to account for the qubits in this layer.
    \end{itemize}
    
    \item \textbf{Drag vector coefficient state preparation $U_{v}$:} 
    \begin{itemize}
        \item Number of calls to quantum arithmetic operations: we expect this operation will use a number of quantum arithmetic operations which then use $T$-gates. We ignore the $T$-gate costs of this operation because it is called far fewer times than the linear ODE matrix block encoding and has fewer operations than this block encoding.
        \item Qubit count: we will assume that the qubits in this layer are sufficiently accounted for by using the number of qubits required by $U_{\phi}$. This is because $U_{v}$ is expected to use fewer workspace qubits than $U_{\phi}$.
    \end{itemize}
    \item \textbf{ODE history block encoding:} 
    \begin{itemize}
        \item Number of calls to $T$-gates: we only account for the $T$-gates in this operation that are due to the Carleman matrix block encoding addressed below. There will also be $T$-gates used in ancillary operations, though these $T$-gates are expected to be small relative to that of the Carleman matrix block encoding.
        \item Qubit count: we account for most qubits used in these operations, though we exclude the costs of some workspace qubits used for quantum arithmetic (see appendix \ref{sec:quantum_compilation}).
    \end{itemize}
    \item \textbf{Linear ODE (Carleman matrix) block encoding:} 
    \begin{itemize}
        \item Number of calls to nonlinear matrix block encodings: this block encoding requires only a single call to each of the $F_1 +S $, $F_2$, and $F_3$.
        \item Qubit count: we will assume that the qubits in this layer are sufficiently accounted for by using the number of qubits required by the block encodings themselves and their linear combination. There are some additional workspace qubits used in this operation, but their costs are not expected to be significant compared to the qubit costs of the block encodings themselves.
    \end{itemize}
    \item \textbf{$S, F_1$, $F_2$, $F_3$ block encodings:} 
    \begin{itemize}
        \item Number of calls to $T$-gates: we account for most $T$-gates in these operations, though the $T$-gate counts of some lower-level operations are not accounted for (as explained in appendix \ref{sec:quantum_compilation}), though these $T$-gates are expected to be small relative to the other operations.
        \item Qubit count: we account for most qubits used in these operations, though we exclude the costs of some workspace qubits used for quantum arithmetic (see appendix \ref{sec:quantum_compilation}).
    \end{itemize}    
\end{itemize}

In order to determine the runtime costs of quantum amplitude estimation, we must determine bounds for the subnormalizations $||{v}||_2$ and $||\phi||_2$ used to rescale the quantum amplitude estimate.
Both of these quantities must be determined to sufficient accuracy.

\paragraph{Bounding $||f||_2$:}\label{sec:bounding_f}  Recall that the definition of density at some node $\mathbf{x}$ is $\rho(\mathbf{x}) = \sum_i f_i(\mathbf{x})$. We also assumed that the fluid is incompressible and $|1-\rho(\mathbf{x})|< \varepsilon_\rho$ for $0 < \varepsilon_\rho \ll 1$ (see \eqref{eq:lbm-approx-inv-rho}) for all fluid nodes $\mathbf{x}$.  We have also defined the density of solid nodes $\mathbf{y}$ as $\rho(\mathbf{y}) = 0$.  Define the number of fluid nodes as $n_f$ where $0 < n_f \le n$.  The contribution to $||f||^2_2$ from some fluid node $\mathbf{x}$ is $\sum_i f_i(\mathbf{x})^2$.  This is maximized when populations are concentrated in one particular velocity vector bin $j$ and $f_j(\mathbf{x}) = (1 + \varepsilon_\rho)$ and $f_i(\mathbf{x}) = 0, \forall i \in \mathbb{Z}_Q, i \neq j$, thus $\sum_i f_i(\mathbf{x})^2 = (1+\varepsilon_\rho)^2$.  The same contribution to the norm is minimized when populations are uniformly distributed across all velocity vector bins as $f_i(\mathbf{x}) = (1-\varepsilon_\rho/Q)/Q, \forall i \in \mathbb{Z}_Q$, thus $\sum_i f_i(\mathbf{x})^2 = (1-\varepsilon_\rho/Q)^2/Q$.  Thus the bounds are 
\begin{align}
    n_f \left(\frac{(1-\varepsilon_\rho/Q)^2}{Q}\right)
    \le
    ||f||^2_2
    \le
    n_f(1+\varepsilon_\rho)^2
    \le
    n(1+\varepsilon_\rho)^2
    \label{eq:bounds_on_f}
\end{align}

\paragraph{Bounding $||\phi||_2$:}\label{sec:bounding_phi}  $\phi = f\bigoplus f^{\otimes 2} \bigoplus f^{\otimes 3}$ and we will extrapolate our argument for \eqref{eq:bounds_on_f} to bound $||\phi||_2$.
\begin{align}
        ||\phi||^2_2 
        &=
        \sum_{i} \sum_{\mathbf{x}} f_i(\mathbf{x})^2
        +
        \sum_{i,j} \sum_{\mathbf{x},\mathbf{y}} f_i(\mathbf{x})^2f_j(\mathbf{y})^2
        +
        \sum_{i,j,k} \sum_{\mathbf{x},\mathbf{y},\mathbf{z}} f_i(\mathbf{x})^2f_j(\mathbf{y})^2f_k(\mathbf{z})^2.
        \label{eq:phi_norm_squared_definition}
\end{align}
There is only a positive contribution to the summation from fluid grid nodes $\mathbf{x},\mathbf{y},\mathbf{z}$.  We can use the bounds from \eqref{eq:bounds_on_f} directly on the first term.  The second term is minimized when the populations are equally distributed in all velocity vector bins $i \in \mathbb{Z}_Q$, so
\begin{subequations}
    \begin{align}
        \sum_{i,j} \sum_{\mathbf{x},\mathbf{y}} \left(\frac{1-\varepsilon_\rho/Q}{Q}\right)^2\left(\frac{1-\varepsilon_\rho/Q}{Q}\right)^2
        &\le
        \sum_{i,j} \sum_{\mathbf{x},\mathbf{y}} f_i(\mathbf{x})^2f_j(\mathbf{y})^2
        \\
        Q^2 n_f^2 \left(\frac{1-\varepsilon_\rho/Q}{Q}\right)^4
        &\le
        \sum_{i,j} \sum_{\mathbf{x},\mathbf{y}} f_i(\mathbf{x})^2f_j(\mathbf{y})^2
        \\
        n_f^2 \left(\frac{(1-\varepsilon_\rho/Q)^4}{Q^2}\right)
        &\le
        \sum_{i,j} \sum_{\mathbf{x},\mathbf{y}} f_i(\mathbf{x})^2f_j(\mathbf{y})^2.
\end{align}
\end{subequations}
Similarly, the upper bound on the second term occurs when the populations are concentrated in only one velocity vector bin.
\begin{subequations}
    \begin{align}
        \sum_{i,j} \sum_{\mathbf{x},\mathbf{y}} f_i(\mathbf{x})^2f_j(\mathbf{y})^2
        &\le 
        \sum_{\mathbf{x},\mathbf{y}}(1+\varepsilon_\rho)^2(1+\varepsilon_\rho)^2
        \\
        &\le 
        n^2_f(1+\varepsilon_\rho)^4.
\end{align}
\end{subequations}
The same argument is made for the third-order term and the bounds are:
\begin{align}
        n_f^3\left(\frac{(1-\varepsilon_\rho/Q)^6}{Q^3}\right)
        \le 
        \sum_{i,j,k} \sum_{\mathbf{x},\mathbf{y},\mathbf{z}} f_i(\mathbf{x})^2f_j(\mathbf{y})^2f_k(\mathbf{z})^2 
        \le
        n_f^3(1+\varepsilon_\rho)^6.
\end{align}
Combining the bounds on the individual terms we have
\begin{align}
        \underbrace{
        \sum_{k=1}^{3} n_f^k\left(\frac{(1-\varepsilon_\rho/Q)^{2k}}{Q^k}\right)
        }_{\phi^2_{\text{min}}}
        \le 
        ||\phi||^2_2
        \le
        \underbrace{
        \sum_{k=1}^3 n_f^k(1+\varepsilon_\rho)^{2k}
        }_{\phi^2_{\text{max}}}
        \le
        \sum_{k=1}^3 n^k(1+\varepsilon_\rho)^{2k}
        <
        3n^3(1+\varepsilon_\rho)^6
        \label{eq:phi_bounds}
\end{align}
where the summation runs from $k=1,2,3$ because we have truncated our Carleman Linearization at third order.  This analysis was based on the incompressible assumption alone, so these bounds hold for any time $t \in [0,T]$.

\paragraph{Bounding $||{v}||_2$:} This quantity is the square norm of the vector that encodes the linear coefficients of the drag force in \eqref{eq:drag_force}. The elementary volume unit is related to $n$ as $\Delta x^3=V/n$, giving 
\begin{subequations}
\begin{align}
||{v}||_2^2 &=  \left|\frac{\Delta x^3}{\Delta t}\right|^2 \sum_{\mathbf{x} \in \mathbf{X}} \sum_{i \in \mathbb{Z}_Q} \big| \mathbbm{1}_{\mathcal{B}_{\text{links}}}(\mathbf{x},i) (\mathbf{c}_i \cdot \hat{x}) + \mathbbm{1}_{\mathcal{B}_{\text{links}}}(\mathbf{x},\bar{i}) (\mathbf{c}_{\bar{i}} \cdot \hat{x}) \big|^2
\\
& \leq  2\frac{V^2}{n^2\Delta t^2}
|\mathcal{B}_{\text{links}}|
\\
& \leq  2\frac{V^2}{n^2\Delta t^2}
|\mathcal{B}|Q.
\end{align}
\end{subequations}
where
\begin{align}
\mathbbm{1}_{\mathcal{B}_{\text{links}}}(\mathbf{x},i) 
& = 
\begin{cases}
1
& \text{ if } (\mathbf{x},i) \in \mathcal{B}_{\text{links}} \\
0
& \text{ else}\\
\end{cases}
\label{eq:indicator}
\end{align}
is the indicator function selecting $(\mathbf{x},i)$ pairs in $\mathcal{B}_{\text{links}}$ \eqref{eq:set_of_boundary_links}, and $\bar{i}=j$ where $\mathbf{c}_{j} = -\mathbf{c}_i$. To express $|\mathcal{B}|Q$ in terms of $n$, let $R=r/\Delta x$ be the sphere radius in number of grid spacings, so that the number of cells on the boundary of the sphere (i.e. $|\mathcal{B}|$) is less than $4\pi R^2 + \pi\delta^3/3$ (where a cell is considered on the boundary if its center is within $\delta$ of the sphere). We can get an estimate of the polynomial runtime scaling with respect to system size $n$. For the sphere, the number of boundary grid points is
\begin{align}
    |\mathcal{B}|\leq \frac{4\pi r^2 }{V^{2/3}}n^{2/3}+\frac{\pi \delta^3}{3}= O(n^{2/3}).
\end{align}
Using $Q=27$, we have
\begin{align}    ||{v}||_2^2 
&\leq \frac{216\pi V^{4/3}r^2}{ \Delta t^2n^{4/3}}. \label{eq:bounds_on_Fx}
\end{align}
Combining bounds \eqref{eq:bounds_on_f} and \eqref{eq:bounds_on_Fx}, we have that the required number of Grover iterates per circuit \eqref{eq:amp_est_cost} is no more than
\begin{align}
\label{eq:t_gate_scaling}
\frac{\pi||{v}||_2||f||_2}{4\mathcal{F}\overline{\epsilon}}\leq C\frac{ n^{1/6}}{ \mathcal{F}\overline{\epsilon}},
\end{align}
where
\begin{align}
    C=\frac{(3\pi)^{3/2}}{\sqrt{2}} \frac{V^{2/3} r}{\tilde{t}}
\end{align}
and we define $\tilde{t} = n^{1/3}\Delta t$.

Finally we discuss some assumptions that we make regarding the computational problem being solved.
We assume that the physical system obeys periodic boundary conditions in all three spatial directions.
This simplifies the form of the streaming operator (See \ref{sec:quantum_compilation}).
In principle, the width of the simulation can be chosen large enough such that the effects of the boundary do not impact the drag estimate more than the target precision.
Furthermore, due to the encoding scheme we have used, the qubit count only grows logarithmically in the number of grid nodes $n$. (However the number of grid nodes $n$ is driven up when a finer spatial resolution is necessary for simulating flows with large Reynolds numbers.)
The $T$-gate increases sub-linearly according to \eqref{eq:t_gate_scaling}.
Another approximation we have made is that accommodating the inlet and outlet conditions in the streaming matrix does not substantially increase the $T$-gate counts of the corresponding block encodings.
In a real implementation of the algorithm, the flow of certain nodes must be forced in order to continue driving the fluid.

\subsection{LBM parameters for problem instances}\label{sec:lbm-params-for-instances}

For all problem instances, we need to convert the physical parameters into parameters for the lattice topology.  After the conversion, spatial steps and time steps are in terms of integer lattice units.  The LBM parameter choices below for spatial $\Delta x$ and temporal $\Delta t$ discretization fidelity and final evolution time $T$ drive problem size and quantum resource estimates.  Table \ref{tab:resource_estimate_parameters_sphere} provides the parameter values used for the flow-past-a-sphere problem instances and table \ref{tab:resource_estimate_parameters_hulls} provides the LBM parameter values for the ship hull design problem instances.

We follow the methodology for parameter selection from \cite{Krueger2017TheLB}.  \cite{Krueger2017TheLB} recommends that the lattice Mach number be less than 0.3 and the maximum lattice velocity be approximately 0.1-0.2.   \cite{Krueger2017TheLB} also establishes the consistency relationship between $\Delta x$, $\Delta t$, and $\tau$ as:
\begin{equation}
    \nu = c_s^2 \left( \tau - \frac{1}{2} \right) \frac{{\Delta x}^2}{\Delta t}.
    \label{eq:lbm_param_consistency_7.14}
\end{equation}
This implies we can choose two out of three parameters from the set $\{ \Delta x, \Delta t, \tau \}$ and compute the third.  In each case, we set $\Delta x$ to be roughly equal to the Kolmogorov scale and set $\tau$ between 0.5 and 1, and calculated $\Delta t$.

Formally, the Kolmogorov scale is equal to 
$(\nu^3/\varepsilon_\mathrm{k})^{1/4}\approx L/\mathrm{Re}^{3/4}$, where $L$ is the characteristic length and $\varepsilon_\mathrm{k}\approx u^3/L$ is the rate of dissipation of turbulence kinetic energy per unit mass \cite{George2013a}. However, in order to maintain a consistent compressibility error ($\propto \mathrm{Ma}^2$), the lattice Mach number and maximum lattice velocity must be uniform across all flow-past-a-sphere instances. Due to the diffusive scaling $\Delta t \propto {\Delta x}^2$ in \eqref{eq:lbm_param_consistency_7.14} for the LBM, this requires that $\Delta x \propto 1/\mathrm{Re}$, and so we set the grid spacing as
\begin{equation} \label{eq:Dx}
    \Delta x = \frac{L}{\mathrm{Re}}, 
\end{equation}
and subsequently calculate the time step $\Delta t$ from \eqref{eq:lbm_param_consistency_7.14} with $\tau=0.6$.

The LBM is a time-marching procedure.  The system is started with initial conditions and step-wise evolved until a steady state is reached.   In practice, simulation results should be tested for convergence.  In our case, we provide an estimate on the evolution time as $T^\star = 2 t^\star_{\text{adv}}$ (in seconds) where $t^\star_{\text{adv}}$ is the advective time scale for each instance. From \cite{Succi_LBE}, the advective time scale is defined as
\begin{equation}
    t^\star_{\text{adv}} = L/u_{\text{max}},
    \label{eq:advective_time_scale}
\end{equation}
where $u_{\text{max}}$ is the expected maximum velocity.  We round $t^\star_{\text{adv}}$ to one significant figure. Recall that physical evolution time $T^{\star}$ and lattice step time $\Delta t$ are related by $T^\star = (\Delta t) T$.

\afterpage{
\clearpage
\begin{landscape}
\begin{table}[htbp]
  \centering
    \makebox[\textwidth]{\begin{tabular}{|l|l|l|l|l|l|l|l|l|}
    \hline
    $\mathrm{Re}$ & $1\times 10^1$ & $1\times 10^2$ & $1\times 10^3$ & $1\times 10^4$ & $1\times 10^5$ & $1\times 10^6$ & $1\times 10^7$ & $1\times 10^8$ \\
    \hline
    $\nu$ (\si{m^2/s}) & $1.003\times10^{-6}$ & $1.003\times10^{-6}$ & $1.003\times10^{-6}$ & $1.003\times10^{-6}$ & $1.003\times10^{-6}$ & $1.003\times10^{-6}$ & $1.003\times10^{-6}$ & $1.003\times10^{-6}$ \\
    \hline
    $u$ (\si{m/s}) (see \eqref{eq:velocity_as_function_of_Re}) & $1.003\times 10^{-5}$ & $1.003\times 10^{-4}$ & $1.003\times 10^{-3}$ & $1.003\times 10^{-2}$ & $1.003\times 10^{-1}$ & $1.003$ & $1.003\times 10^{1}$ & $1.003\times 10^{2}$ \\
    \hline
    $u_{\text{max}}= 105\% \times u$ & $1.053\times 10^{-5}$ & $1.053\times 10^{-4}$ & $1.053\times 10^{-3}$ & $1.053\times 10^{-2}$ & $1.053\times 10^{-1}$ & $1.053$ & $1.053\times 10^{1}$ & $1.053\times 10^{2}$ \\
    \hline
    $L$ - char. length in $m$ & $1    $ & $1    $ & $1    $ & $1    $ & $1    $ & $1    $ & $1    $ & $1$ \\
    \hline
    $t^\star_{\text{adv}}$ - advection time scale in $s$ & $9.495\times 10^{4}$ & $9.495\times 10^{3}$ & $9.495\times 10^{2}$ & $9.495\times 10^{1}$ & $9.495$ & $9.495\times 10^{-1}$ & $9.495\times 10^{-2}$ & $9.495\times 10^{-3}$ \\
    \hline
    $T^\star = 2t^\star_{\text{adv}}$ - evolution time in $s$ & $1.899\times 10^{5}$ & $1.899\times 10^{4}$ & $1.899\times 10^{3}$ & $1.899\times 10^{2}$ & $1.899\times 10^{1}$ & $1.899$ & $1.899\times 10^{-1}$ & $1.899\times 10^{-2}$ \\
    \hline
    D3Q27 lattice speed of sound $c_s$ & $1/\sqrt{3}$& $1/\sqrt{3}$& $1/\sqrt{3}$ & $1/\sqrt{3}$& $1/\sqrt{3}$& $1/\sqrt{3}$ & $1/\sqrt{3}$& $1/\sqrt{3}$ \\
    \hline
    $\rho_0$ - initial lattice density & 1 & 1 & 1 & 1 & 1 & 1 & 1 & 1 \\
    \hline
    $\tau$ & $0.6  $ & $0.6  $ & $0.6  $ & $0.6  $ & $0.6  $ & $0.6  $ & $0.6  $ & $0.6$ \\
    \hline
    $\Delta x = L/\text{Re}$ in $m$ & $1\times 10^{-1}$ & $1\times 10^{-2}$ & $1\times 10^{-3}$ & $1\times 10^{-4}$ & $1\times 10^{-5}$ & $1\times 10^{-6}$ & $1\times 10^{-7}$ & $1\times 10^{-8}$ \\
    \hline      
    $\Delta t$ in $s$ & $3.323\times 10^{2}$ & $3.323$ & $3.323\times 10^{-2}$ & $3.323\times 10^{-4}$ & $3.323\times 10^{-6}$ & $3.323\times 10^{-8}$ & $3.323\times 10^{-10}$ & $3.323\times 10^{-12}$ \\
    \hline      
    $T = T^{\star}/\Delta t$ - lattice evolution time & $5.714\times 10^{2}$ & $5.714\times 10^{3}$ & $5.714\times 10^{4}$ & $5.714\times 10^{5}$ & $5.714\times 10^{6}$ & $5.714\times 10^{7}$ & $5.714\times 10^{8}$ & $5.714\times 10^{9}$ \\
    \hline
    Approx. lattice max velocity & $0.035$ & $0.035$ & $0.035$ & $0.035$ & $0.035$ & $0.035$ & $0.035$ & $0.035$ \\
    \hline      
    Approx. lattice mach number & $0.061$ & $0.061$ & $0.061$ & $0.061$ & $0.061$ & $0.061$ & $0.061$ & $0.061$ \\
    \hline
    $n_x$ & $1.000\times 10^{2}$ & $1.000\times 10^{3}$ & $1.000\times 10^{4}$ & $1.000\times 10^{5}$ & $1.000\times 10^{6}$ & $1.000\times 10^{7}$ & $1.000\times 10^{8}$ & $1.000\times 10^{9}$ \\
    \hline
    $n_y$ & $8.000\times 10^{1}$ & $8.000\times 10^{2}$ & $8.000\times 10^{3}$ & $8.000\times 10^{4}$ & $8.000\times 10^{5}$ & $8.000\times 10^{6}$ & $8.000\times 10^{7}$ & $8.000\times 10^{8}$ \\
    \hline
    $n_z$ & $8.000\times 10^{1}$ & $8.000\times 10^{2}$ & $8.000\times 10^{3}$ & $8.000\times 10^{4}$ & $8.000\times 10^{5}$ & $8.000\times 10^{6}$ & $8.000\times 10^{7}$ & $8.000\times 10^{8}$ \\
    \hline
    $n = n_x n_y n_z$ & $6.400\times 10^{5}$ & $6.400\times 10^{8}$ & $6.400\times 10^{11}$ & $6.400\times 10^{14}$ & $6.400\times 10^{17}$ & $6.400\times 10^{20}$ & $6.400\times 10^{23}$ & $6.400\times 10^{26}$ \\
    \hline
    $Q$ & 27 & 27 & 27 & 27 & 27 & 27 & 27 & 27 \\
    \hline
    $nQ$ & $1.728\times 10^{7}$ & $1.728\times 10^{10}$ & $1.728\times 10^{13}$ & $1.728\times 10^{16}$ & $1.728\times 10^{19}$ & $1.728\times 10^{22}$ & $1.728\times 10^{25}$ & $1.728\times 10^{28}$ \\
    \hline
    $nQ + (nQ)^2 + (nQ)^3$ & $5.160\times 10^{21}$ & $5.160\times 10^{30}$ & $5.160\times 10^{39}$ & $5.160\times 10^{48}$ & $5.160\times 10^{57}$ & $5.160\times 10^{66}$ & $5.160\times 10^{75}$ & $5.160\times 10^{84}$ \\
    \hline
    $n_f$ & $6.395\times 10^5$ & $6.395\times 10^8$ & $6.395\times 10^{11}$ & $6.395\times 10^{14}$ & $6.395\times 10^{17}$ & $6.395\times 10^{20}$ & $6.395\times 10^{23}$ & $6.395\times 10^{26}$ \\
    \hline
    $||A||_2$ (appendix \ref{sec:bounding_spectral_norm_of_A}) & $901$ & $901$ & $901$ & $901$ & $901$ & $901$ & $901$ & $901$ \\
    \hline
    $h$    & $0.0011$ & $0.0011$ & $0.0011$ & $0.0011$ & $0.0011$ & $0.0011$ & $0.0011$ & $0.0011$ \\
    \hline
    $||b||_2$ & $0    $ & $0    $ & $0    $ & $0    $ & $0    $ & $0    $ & $0    $ & $0$ \\
    \hline
    $\varepsilon_\rho$ \eqref{eq:lbm-approx-inv-rho}, \eqref{eq:phi_bounds}  & $0.001$ & $0.001$ & $0.001$ & $0.001$ & $0.001$ & $0.001$ & $0.001$ & $0.001$ \\
    \hline
    $\phi_\mathrm{min}$ \eqref{eq:phi_bounds} & $3.64\times 10^6$ & $1.15\times 10^{11}$ & $3.64\times 10^{15}$ & $1.15\times 10^{20}$ & $3.64\times 10^{24}$ & $1.15\times 10^{29}$ & $3.64\times 10^{33}$ & $1.15\times 10^{38}$ \\
    \hline
    $\phi_\mathrm{max}$ \eqref{eq:phi_bounds} & $5.13\times 10^8$ & $1.62\times 10^{13}$ & $5.13\times 10^{17}$ & $1.62\times 10^{22}$ & $5.13\times 10^{26}$ & $1.62\times 10^{31}$ & $5.13\times 10^{35}$ & $1.62\times 10^{40}$ \\
    \hline
    $C_\mathrm{max}(T)$ & $1$ & $1$ & $1$ & $1$ & $1$ & $1$ & $1$ & $1$ \\
    \hline
    \end{tabular}}%
  \caption{Parameters used for resource estimates for flow-past-a-sphere problem instances. Note we have chosen to set $C_{\textup{max}}(T)$ to be 1 throughout. Future work may provide a refined estimate.}
  \label{tab:resource_estimate_parameters_sphere}%
\end{table}%
\end{landscape}
\clearpage
}

\afterpage{
\clearpage
\begin{table}[H]
\centering
\begin{tabular}{|l|l|l|l|}
\hline
 Problem instance & JBC & KCS & MV Regal \\
\hline
$\mathrm{Re}$ & $8.23\times 10^6$ & $6.64\times 10^6$ & $9.91\times 10^8$ \\
\hline
        Reynolds Number (Re) & \( 8.23\times 10^6 \) & \( 6.64\times 10^6 \) & \( 9.91\times 10^8 \)  \\
        \hline
        $\nu$ Kinematic Viscosity in $m^2/s$ & 
        $1.003\times10^{-6}$ & 
        $1.003\times10^{-6}$ & 
        $1.003\times10^{-6}$\\
        \hline
        $u$ - velocity in $m/s$ & 1.179 & 0.915 & 7.202 \\
        \hline
        $u_{\text{max}}= 105\% \times u$ in $m/s$ & 1.2380 & 0.96075 & 7.562 \\
        \hline
        $L$ - char. length in $m$ & 7 & 7.27 & 138 \\
        \hline
        $t^\star_{\text{adv}}$ - advection time scale in $s$ & 6 & 7 & 18 \\ 
        \hline
        $T^\star = 2t^\star_{\text{adv}}$ - evolution time in $s$ & 12 & 14 & 36 \\ 
        \hline
        D3Q27 lattice speed of sound $c_s$ & $1/\sqrt{3}$& $1/\sqrt{3}$& $1/\sqrt{3}$\\
        \hline
        $\rho_0$ & 1 & 1 & 1 \\
        \hline
        $\tau$ & 0.6 & 0.6 & 0.6 \\
        \hline
        $\Delta x$ in $m$ & $5\times10^{-6}$ & $5\times10^{-6}$ & $5\times10^{-7}$ \\
        \hline      
        $\Delta t$ in $s$ & $8.3084\times10^{-7}$ & $8.3084\times10^{-7}$ & $8.3084\times10^{-9}$ \\
        \hline      
        $T = T^{\star}/\Delta t$ - lattice evolution time & $1.44 \times 10^{7}$ & $1.68 \times 10^{7}$ & $4.33 \times 10^{9}$ \\
        \hline      
        Approx. lattice max velocity & 0.21 & 0.16 & 0.13\\
        \hline      
        Approx. lattice mach number & 0.36 & 0.28 & 0.22 \\
        \hline
        $n_x$ & $14/\Delta x = 2.8\times10^{6}$ & $14/\Delta x = 2.8\times10^{6}$ & $300/\Delta x = 6\times10^{8}$ \\
        \hline
        $n_y$ & $4/\Delta x = 8\times10^{5}$ & $4/\Delta x = 8\times10^{5}$ & $70/\Delta x = 1.4\times10^{8}$ \\
        \hline
        $n_z$ & $1.4125/\Delta x = $    & $1.34178/\Delta x =$ & $15.25/\Delta x=$ \\
              & $2.825\times10^{5}$  & $2.6836\times10^{5}$ & $3.05\times10^{7}$ \\
        \hline
        $n = n_x n_y n_z$ & $6.328\times10^{17}$ & $6.0112\times10^{17}$ & $2.5620\times10^{24}$ \\
        \hline
        $Q$ & 27 & 27 & 27 \\
        \hline
        $nQ$ & $1.7086\times10^{19}$ & $1.6230\times10^{19}$ & $6.9174\times10^{25}$\\
        \hline
        $nQ + (nQ)^2 + (nQ)^3$ 
        & $4.9876\times10^{57}$ 
        & $4.2754\times10^{57}$ 
        & $3.31\times10^{77}$ \\
\hline
$n_f$ & \(6.068 \times 10^{17}\) & \(5.806 \times 10^{17}\) & \(2.429 \times 10^{24}\) 
\\
\hline
$||A||_2$ (appendix \ref{sec:bounding_spectral_norm_of_A}) & 901 & 901 & 901 
\\
\hline
$h$ & 0.0011 & 0.0011 & 0.0011 
\\
\hline
$||b||_2$ & 0 & 0 & 0 
\\
\hline
$\varepsilon_\rho$ \eqref{eq:lbm-approx-inv-rho}, \eqref{eq:phi_bounds} & 0.001 & 0.001 & 0.001 
\\
\hline
$\phi_{\text{min}}$ \eqref{eq:phi_bounds} & $3.37 \times 10^{24}$ & $3.15 \times 10^{24}$ & $2.70 \times 10^{34}$ 
\\
\hline
$\phi_{\text{max}}$ \eqref{eq:phi_bounds} & $4.74 \times 10^{26}$ & $4.44 \times 10^{26}$ & $3.80 \times 10^{36}$ 
\\
\hline
$C_{\text{max}}(T)$ & 1 & 1 & 1 
\\
\hline
\end{tabular}
\caption{Parameters used for resource estimates for ship hull problem instance. The number of fluid nodes $n_f$ is approximated for ship hulls with bounding rectangle around ship hull. Note we have chosen to set $C_{\textup{max}}(T)$ to be 1 throughout. We leave it to future work to establish a more accurate bound for this quantity and, according to \cite{jennings2023b}, our $T$-gate count resources should be multiplied by the more-accurate estimate of $C_{\textup{max}}(T)$.}
\label{tab:resource_estimate_parameters_hulls}
\end{table}

\clearpage
}

\begin{table}[H]
    \centering
\begin{tabular}{c c c c}
    \hline
    \textbf{Re} & \(\Delta t\) & \(T_c^{\star} \) & \(T_c^{\star}\)\\
                &              & lower bound & upper bound\\
    \hline
    $10^1$ & $3.323 \times 10^{2}$ & $3.676 \times 10^{-2}$ & $4.112 \times 10^{-2}$ \\
    $10^2$ & $3.323 \times 10^{0}$ & $3.676 \times 10^{-4}$ & $4.112 \times 10^{-4}$ \\
    $10^3$ & $3.323 \times 10^{-2}$ & $3.676 \times 10^{-6}$ & $4.112 \times 10^{-6}$ \\
    $10^4$ & $3.323 \times 10^{-4}$ & $3.676 \times 10^{-8}$ & $4.112 \times 10^{-8}$ \\
    $10^5$ & $3.323 \times 10^{-6}$ & $3.676 \times 10^{-10}$ & $4.112 \times 10^{-10}$ \\
    $10^6$ & $3.323 \times 10^{-8}$ & $3.676 \times 10^{-12}$ & $4.112 \times 10^{-12}$ \\
    $10^7$ & $3.323 \times 10^{-10}$ & $3.676 \times 10^{-14}$ & $4.112 \times 10^{-14}$ \\
    $10^8$ & $3.323 \times 10^{-12}$ & $3.676 \times 10^{-16}$ & $4.112 \times 10^{-16}$ \\
    \hline
    \multicolumn{4}{l}{When $\tau=0.6$ and $\|\phi_0 \|_\infty=0.2958$.}\\
    \hline
\end{tabular}
    \caption{Upper and lower bounds for the evolution time $T_c^{\star}$ in physical units (seconds) that the solution error from truncating the Carleman linearized system can be driven arbitrarily low by increasing the truncation order.  The estimate for $\|\phi_0 \|_\infty=0.2958$ is derived from using \eqref{eq:lbm-feq-coeffs} to construct population distributions $f$ such that $\sum_i f_i \mathbf{c}_i=(u_x, 0, 0)$ where $u_x$ is stated in table \ref{tab:resource_estimate_parameters_sphere}, but converted to lattice units \cite{source_for_norm_A_anlysis}.
    }
    \label{tab:time_domain_T_convergence_bounds_for_sphere}
\end{table}

Section \ref{sec:time_domain_convergence_main_body_text} foreshadowed the concerns about error created by truncating the Carleman linearized system of equations.  Theorem \ref{thm:max_convergence_time} establishes lower and upper bounds on the evolution time $T_c$ where the error in the solution $\phi(t)$ for $t \in [0, T_c]$ to the truncated Carleman linearized system can be driven arbitrarily low by increasing the truncation order.  $T_c$ is in lattice time steps, and due to the choices of parameters in the flow-past-a-sphere problem instances, the bounds on $T_c$ are the same for all cases in table \ref{tab:resource_estimate_parameters_sphere} as 
\begin{equation*}
0.0001106 \le T_c \le 0.0001237.
\end{equation*}
The results suggest that there is no guarantee of accuracy past an exceedingly small evolution time $T_c$.  $T_c$ is effectively less than one lattice time step.  Theorem \ref{thm:convergence_thm} also suggests that increasing the truncation order will not improve $T_c$, but rather lower the solution error within the range $t \in [0, T_c]$.  Theorems \ref{thm:convergence_thm} and \ref{thm:max_convergence_time} do not imply the solution will diverge past time $T_c$, but simply say there is no guarantee of accuracy past $T_c$.  Table \ref{tab:time_domain_T_convergence_bounds_for_sphere} translates the bounds on $T_c$ in lattice units to $T^*_c$ in physical units.  Nonetheless, we continue our journey up the river and deeper into the heart of darkness in pursuit of \textit{quantum resource estimates}.

\subsection{Quantum resource estimates}
\label{subsec:qres}

In this section we generate numerical estimates of the quantum resources required to estimate the drag force for the flow-past-a-sphere problem instances using a quantum computer.
The resource estimates are reported in Figures \ref{fig:resource_estimates_unstructured_block_encodings} and \ref{fig:resource_estimates_bespoke_block_encodings}.
The resource estimates only give a rough assessment of the feasibility of using quantum computers to unlock utility for this problem. We generate resource breakdowns (Figure \ref{fig:flame_graph}) that give numbers of calls to various subroutines and serve as an algorithmic profile of the quantum computation.

Figures \ref{fig:resource_estimates_unstructured_block_encodings} and \ref{fig:resource_estimates_bespoke_block_encodings} display the total estimated $T$-gate count and the total logical qubit count for each of the flow-past-a-sphere instances (parameterized by Reynolds number) in table \ref{tab:resource_estimate_parameters_sphere}. The $T$-gate count serves as a proxy for the runtime of the quantum computation as the actual quantum computation runtime is expected to scale roughly proportionally to this number.
The logical qubit count serves as a proxy for the spatial resources needed to run the quantum computation.
The methodology underlying these resource counts is described in section \ref{subsec:methodolgy} and the software used to generate the estimates can be found at \cite{qsub_package}.  It has been shown for quantum chemistry applications that orders of magnitude can be removed from the quantum resources with efficient block encodings. In this case of our LBM formulation, efficient block encodings for the the streaming and the collision matrices also provide the most significant reduction in resources. In quantum chemistry applications, improvements can be made by taking advantage of symmetries in the Hamiltonian.  In our case, we take advantage of efficient classical descriptions of the matrix elements whose values can be encoded by careful use of tensor algebra and arithmetic (even though subsections of the matrices are dense).

Figures \ref{fig:resource_estimates_unstructured_block_encodings}  and \ref{fig:resource_estimates_bespoke_block_encodings} shows that the estimated $T$-gate counts are in the range of $\lowTgateCountEst$--$\highTgateCountEst$ and 
the estimated number of logical qubits ranges from thousands to hundreds of thousands.  The resource estimates for the toy CFD problem are in the same range as resource estimates for utility-scale quantum chemistry applications \cite{goings2022reliably}.  

Figure \ref{fig:flame_graph} shows a breakdown of the numbers of calls to subroutines used in the drag estimation quantum algorithm and is used to highlight sources of high costs. 
There are two main findings to discuss from the breakdown plots.
First, the breakdowns show that there are numerous factors that \textit{multiplicatively} contribute to the overall $T$-gate counts.
Accordingly, any reductions to the number of calls in one layer translate into a reduction in the total $T$-gate counts.
For example, if the calls to each subroutine are reduced by a factor of four, then the total $T$-gate count is reduced by roughly a factor of one thousand. 
Second, the breakdowns show that, among the subroutines, the largest possible reductions could come from reducing (1) the number of Carleman block encodings per ODE solution preparation and (2) the number of $T$-gates per Carleman block encoding. 
In the case of the former, recent quantum algorithms for differential equations have been developed \cite{Fang2023, An2023, pocrnic2025constantfactorimprovementsquantumalgorithms,dalzell2024shortcutoptimalquantumlinear}  that differ from the methods used in our resource estimates. In the case of the latter, we suspect that block encodings of the $F$-matrics cannot be improved more than the bespoke block encodings described in \ref{appendix_d: bespoke_block_encoding}. however, even with bespoke block encodings, the resource estimates are prohibitive.  

To reiterate, the shortcomings of our analysis are
\begin{itemize}    
    \item Resource estimates for state preparation subroutines are not included. This includes the preparation of the state $\ket{\vec{\mathcal{F}_x}}$, initial states used in the unstructured block encoding and initial state preparation for the ODE solver.
    \item  We did not compute the resources required for the inversion of the $L$ matrix from \eqref{eq:berry2017_system_of_equations}.   
    \item We cannot guarantee solution error past a small evolution time $T_c$ (see table \ref{tab:time_domain_T_convergence_bounds_for_sphere}). \item We set the parameter $C_{\text{max}}(T) = 1$ (defined in table \ref{tab:resource-estimate-parameter-definitions}) to 1. This plays an important role in the upper bound on the condition number, which in turn impacts the number of calls to the quantum linear solver. Setting the value to 1 is in line with estimates given by \cite{Li2025}.
\end{itemize}

\begin{figure}[H]
    \centering
        \includegraphics[width=1\linewidth]{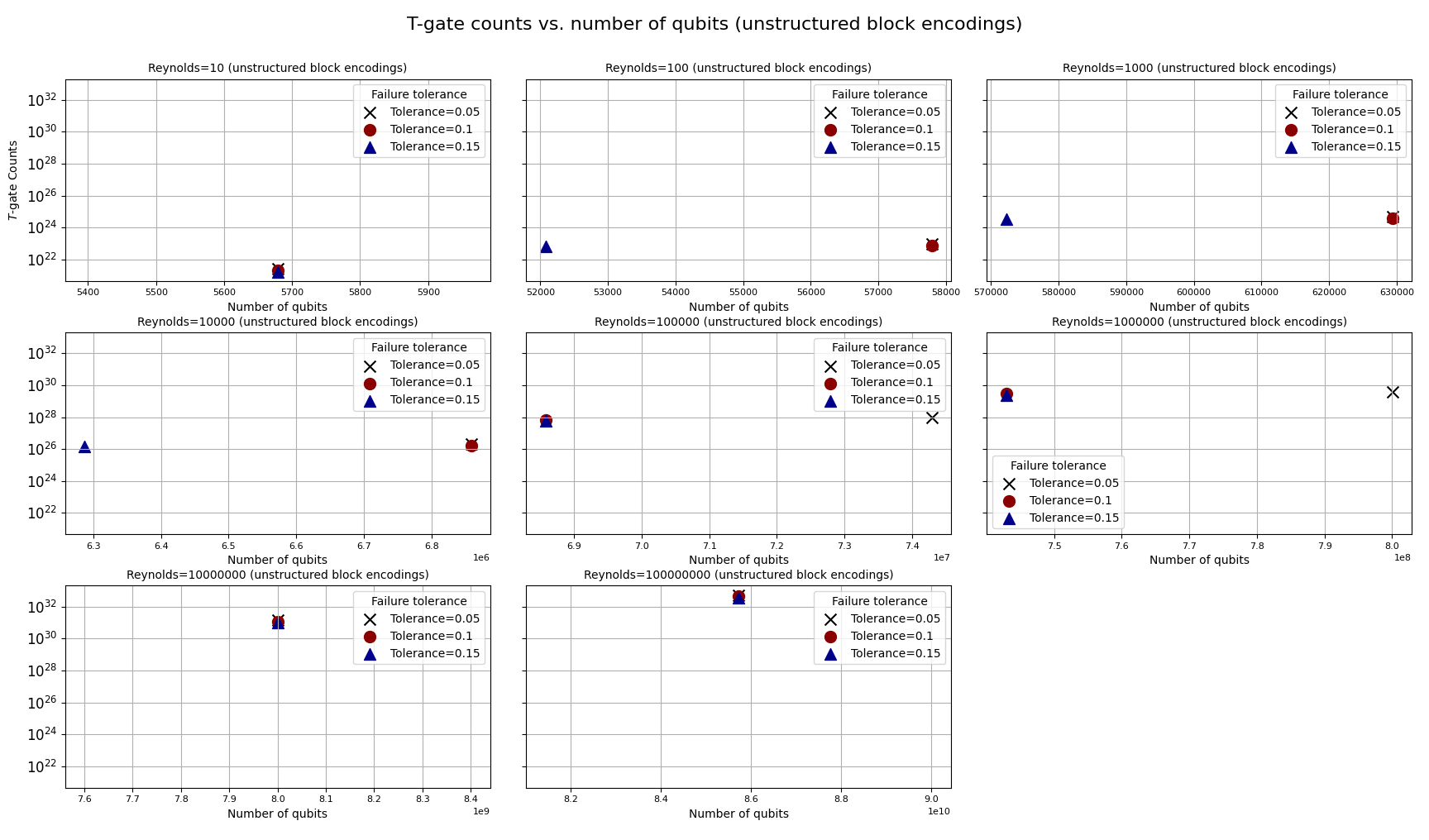}%
        \caption{Number of  $T$-gates required for different Reynold numbers using generic unstructured block encodings.}
        \label{fig:resource_estimates_unstructured_block_encodings}
\end{figure}

\begin{figure}[H]
    \centering
        \includegraphics[width=1\linewidth]{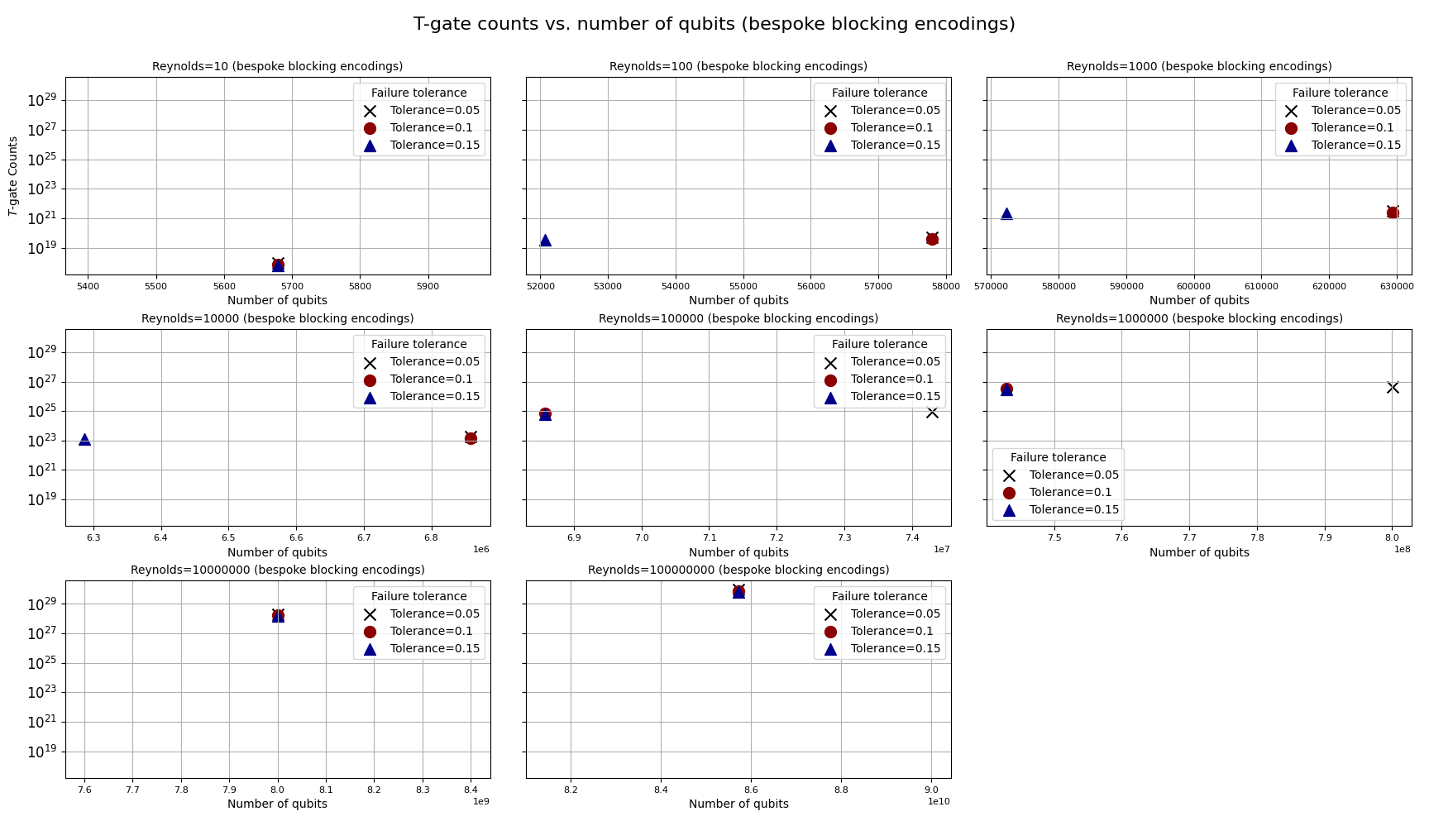}%
        \caption{Number of  $T$-gates required for different Reynold numbers using the bespoke block encodings.
        }
        \label{fig:resource_estimates_bespoke_block_encodings}
\end{figure}

\begin{figure}[H]
    \centering
    \subfloat[The breakdown of number of calls as shown in figure \ref{fig:drag_est_call_graph}. We are plotting the number of calls to the different subroutines. The height of a particular box is how many calls one call of the subroutine below has to make.]{
       \includegraphics[width=0.6\linewidth]{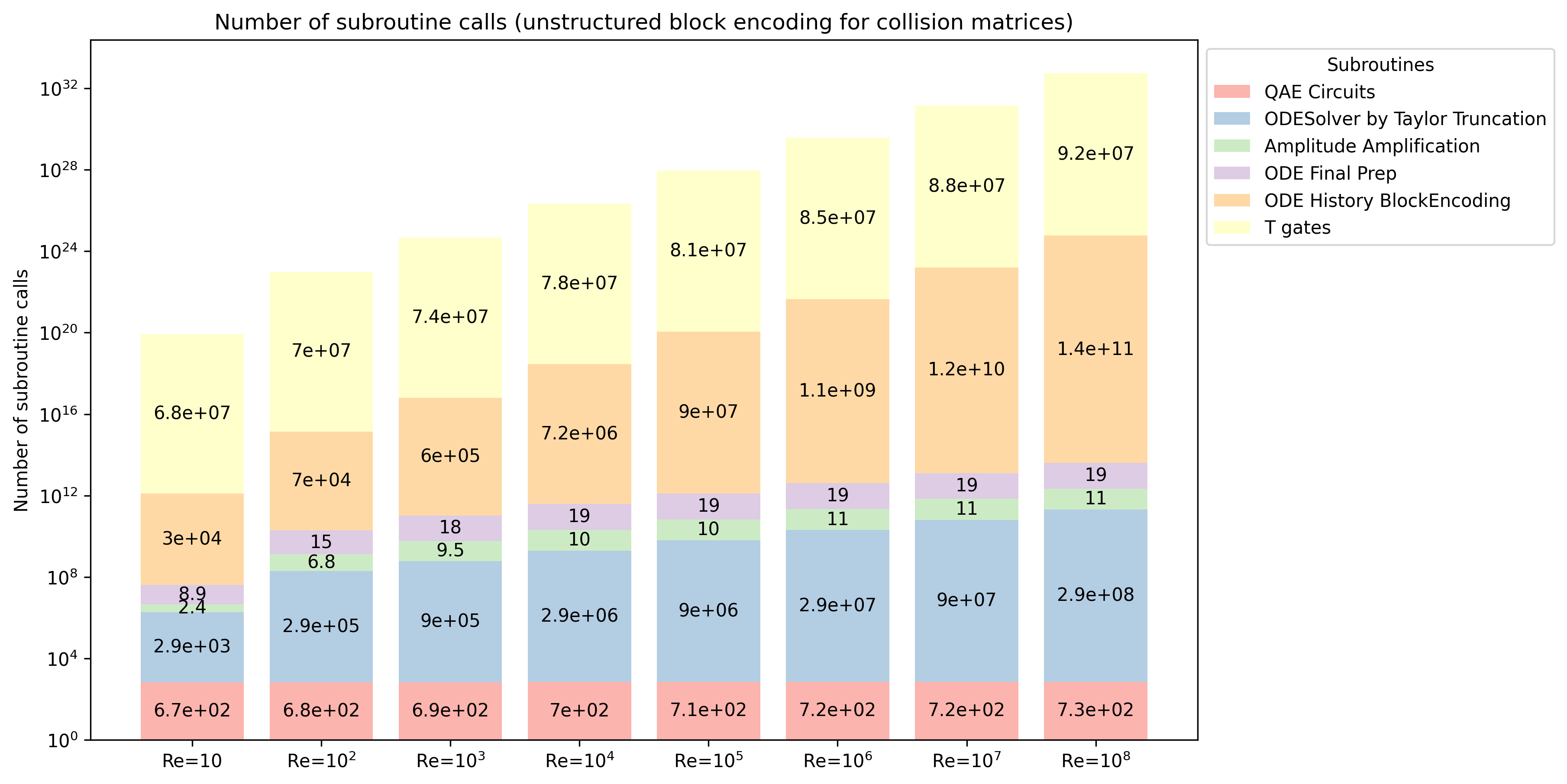}
        \label{fig:flame_graph_unstructured_block_encodings}
    }
        \vspace{0.5cm}
        
    \subfloat[The breakdown of number of calls as shown in figure \ref{fig:drag_est_call_graph}. For this plot we are using the bespoke block encodings for the collision $F$-matrices in \ref{appendix_d: bespoke_block_encoding}. For these block encodings, there are 3-4 orders of magnitude savings.]{
      \includegraphics[width=0.6\linewidth]{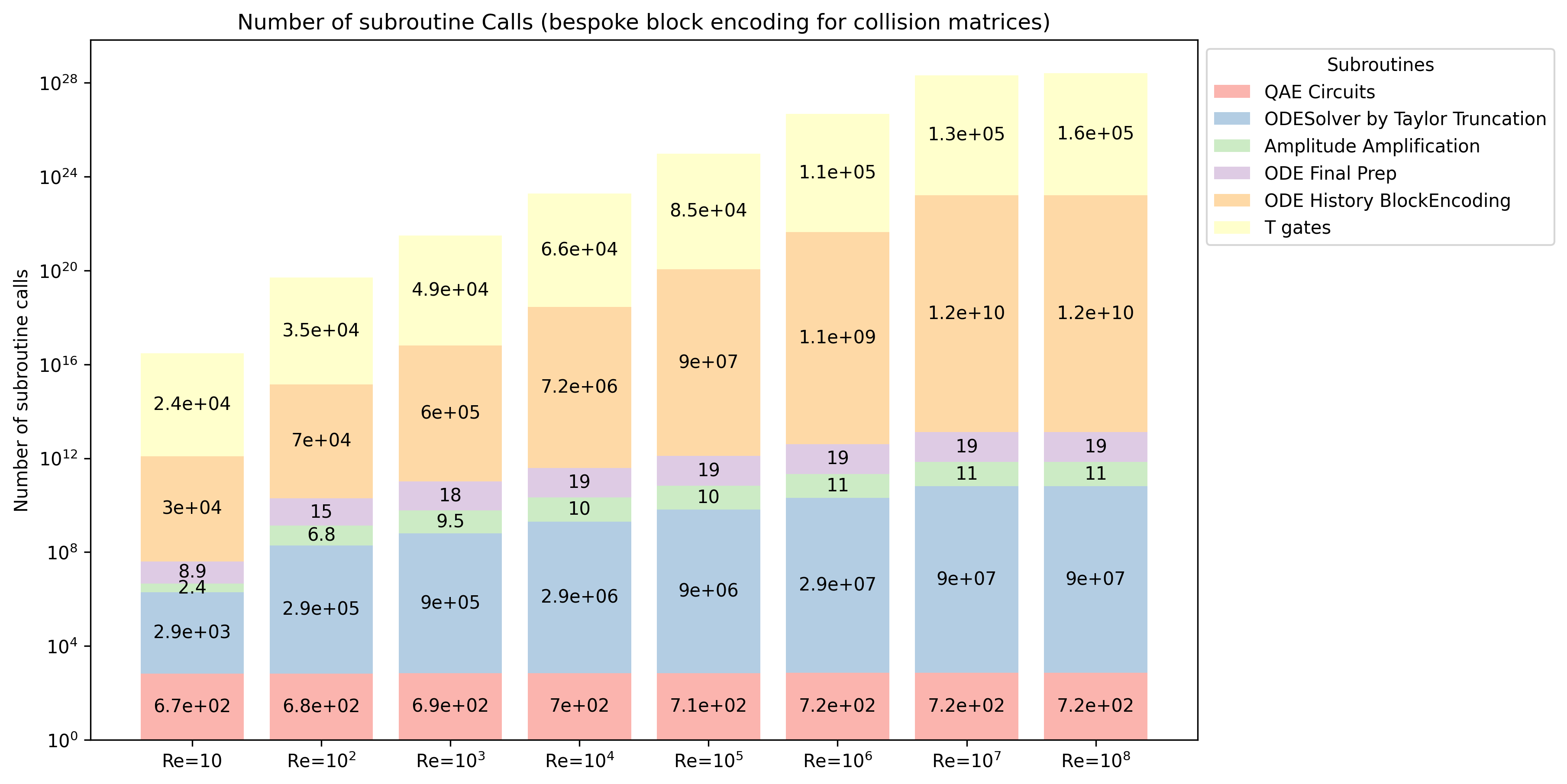}
      \label{fig:flame_graph_bespoke_block_encodings}
    }
    \caption{Resource breakdowns for the flow-past-a-sphere problem instances for the subroutines shown in figure \ref{fig:drag_est_call_graph}.}
    \label{fig:flame_graph}
\end{figure}

\begin{figure}[H]
    \centering
    \includegraphics[width=0.6\textwidth]{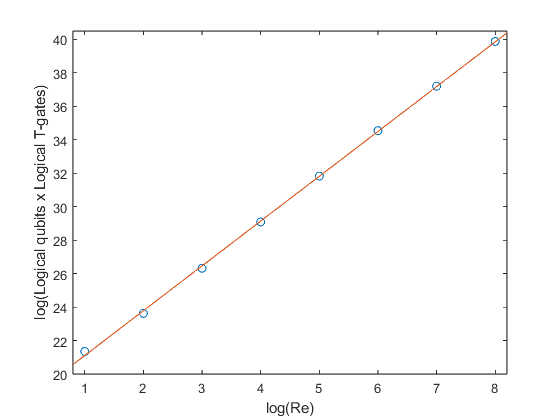}
    \caption{log(Logical QREs) vs. log(Reynolds number) shows a linear relationship with approximate slope 2.68, indicating a power-law relationship.}
    \label{fig:logQRE_vs_logRe}
\end{figure}

\subsection{Relationship between quantum resource estimates and Reynolds number}\label{sec:log_log_relationship_resource_estimates_vs_reynolds_number}

As stated earlier, the computational cost of classical DNS grows rapidly with increasing Reynolds number as $\sim O(\mathrm{Re}^3)$, becoming intractable in the high-turbulence regime $\mathrm{Re}>10^4$. To provide utility, the cost of DNS-accurate solutions calculated on a quantum computer must have better scaling.  Figure \ref{fig:logQRE_vs_logRe} shows a linear relationship between $\log(\text{Re})$ and $\log( (\text{logical qubits)$\times$($T$-gate estimate}))$ with approximate slope 2.68.  This is indicative of a power-law relationship $\sim O(\mathrm{Re}^{2.68})$.  By contrast, a polylogarithmic scaling (i.e., exponential speedup) would have manifested as a sublinear curve.  While the exponent in the quantum algorithm is better than that in the classical DNS scaling (2.68 vs. 3, respectively), this weak polynomial speedup will not lead to practical utility when error-correction overheads are considered.  For example, see the analysis of quadratic speedups by Babbush \textit{et al.} \cite{Babbush2021a}.

The lack of an exponential speedup observed in our QREs can be understood by carefully revisiting the work of Li \textit{et al.} \cite{Li2025} and Krovi \cite{Krovi2023}. They found the query complexity (number of two-qubit quantum gates and oracle queries) to be
linear in the number of lattice time steps ($T$, in our notation) and polylogarithmic in the number of spatial lattice nodes $n$. This suggests an exponential speedup with respect to $n$, since classical DNS complexity is approximately linear in $n$. However, Li \textit{et al.} identify a crucial caveat: If $T$ has implicit polynomial (or worse) dependence on $n$, this will negate the exponential speedup.

We find that $T$ indeed has implicit polynomial dependence on $n$ for the problem instances we consider. This may be understood as a consequence of the diffusive scaling $\Delta t \propto \Delta x^2$ exhibited in \eqref{eq:lbm_param_consistency_7.14}, which is necessary for ensuring the stability and accuracy of the LBM in the incompressible regime \cite{Krueger2017TheLB}. Hence, as $\Delta x$ decreases with increasing Reynolds number, $\Delta t$ also decreases, thereby increasing $T$ because $T=T^\star/\Delta t$. Since $n \propto \frac{1}{\Delta x^{3}}$ and $T$ depends on $\Delta x$, an implicit dependence of $T$ on $n$ is mediated by the Reynolds number. In our case, combining \eqref{eq:velocity_as_function_of_Re}, \eqref{eq:lbm_param_consistency_7.14}, and \eqref{eq:Dx} as
\begin{subequations}
    \begin{align}
        T &= T^{\star}/\Delta t & (T^{\star} \text{ is physical time})\\
         &= \frac{2L}{u} \left(\frac{1}{\Delta t}\right) & (t^{\star}_{\text{adv}}=L/u\text{ from \eqref{eq:advective_time_scale} and } T^{\star} \text{ set to } 2t^{\star}_{\text{adv}})\\
         &= \frac{2L^2}{\nu \text{Re}} \left(\frac{1}{\Delta t}\right) & (u=\nu \text{Re}/L \text{ from \eqref{eq:velocity_as_function_of_Re}})\\
         &= \frac{2L \Delta x}{\nu} \left(\frac{1}{\Delta t}\right) & (\text{Re}=L/\Delta x \text{ from \eqref{eq:Dx} })\\
         &= \left( \frac{2L}{c_s^{2}(\tau - 0.5)} \right) \left(\frac{1}{\Delta x}\right)& (\text{after incorporating \eqref{eq:lbm_param_consistency_7.14} for } \nu)\\
         &= \left( \frac{2L}{c_s^{2}(\tau - 0.5)} \right) \left(\frac{1}{V^{1/3}}\right) n^{1/3}& \text{(where $V$ is the total volume and $n = V/\Delta x^3$)}\\
         &\propto n^{1/3},
    \end{align}
    \label{eq:T_propto_n1/3_for_our_params}
\end{subequations}
shows the dependence $T \propto n^{1/3}$, which negates any exponential speedup.  

\subsection{Prospects for quantum utility in computational fluid dynamics}

Even if we relaxed the requirement of consistent compressibility error and diffusive scaling and instead scaled the grid spacing as $\Delta x \propto 1/\text{Re}^\alpha$, where $\alpha$ is left unspecified, the exponential speedup would still be negated by a dependence of evolution time $T$ on the number of grid nodes $n$.  The implicit dependence of $T$ on $n$ occurs in any explicit time-marching method to evolve a system of differential equations due to the Courant–Friedrichs–Lewy (CFL) condition.  The CFL condition requires time discretization $\Delta t \le \Delta x / u_\text{max}$ so that any effects traveling at maximum speed $u_\text{max}$ do not propagate beyond the neighboring lattice node in a single time step.  Hence, $\Delta t$ (on which the number of evolution time steps $T$ depends) depends on $\Delta x$ (on which the number of grid nodes $n$ depends).  We will show that $T$ grows as a power law in the number of grid nodes $n$.  First, note that the number of grid nodes is
\begin{subequations}
\begin{align}
    n &= \gamma \frac{L^3}{\Delta x^3}\\
    n  &=\gamma \text{Re}^{3\alpha}  \iff \text{Re}=(n/\gamma)^{\frac{1}{3\alpha}},
    \label{eq:n_grid_dependence_on_re}
\end{align}
\end{subequations}
where $\gamma$ is some real constant.  Then the lattice evolution time $T$ is
\begin{subequations}
    \begin{align}
        T &= T^{\star}/\Delta t & \text{( $T^{\star}$ is physical evolution time)} \\
         &= \left( \frac{T^{\star} }{C_\text{max}} \right ) \frac{u}{\Delta x} & \text{(incorporating the CFL condition as equality $\Delta t = C_\text{max} \Delta x / u$)} \\
         &= \left( \frac{T^{\star} \nu }{C_\text{max} L } \right ) \frac{\text{Re}}{\Delta x} & \text{(because $u = \text{Re}{\frac{\nu}{L}}$)} \\
         &= \left( \frac{T^{\star} \nu }{C_\text{max} L^2 } \right ) \text{Re}^{1+\alpha} & \text{(as we set $\text{Re}^{\alpha} = L/\Delta x$)} \\
         &= \left( \frac{T^{\star} \nu }{C_\text{max} L^2} \right ) \left( \frac{n}{\gamma} \right)^{\frac{1+\alpha}{3\alpha}} & \text{(from \eqref{eq:n_grid_dependence_on_re})} \\
         &\propto n^{\frac{1+\alpha}{3\alpha}}.
    \end{align}
    \label{eq:generic_cfl_argument_2}
\end{subequations}
To clarify, in \eqref{eq:T_propto_n1/3_for_our_params}, we arrived at $T\propto n^{1/3}$ as we incorporated specific choices of $T^{\star} = \frac{2L}{u}$ and $\alpha=1$ related to the problem instances we defined.  However, in \eqref{eq:generic_cfl_argument_2}, $T^{\star}$ is left as an unspecified constant.  If we specify the same values for $T^{\star}$ and $\alpha$ in \eqref{eq:generic_cfl_argument_2}, then the same $T\propto n^{1/3}$ relationship holds.

In the previous section we approached the argument using the parameter consistency requirements of the lattice Boltzmann method in \eqref{eq:lbm_param_consistency_7.14}. In this section we approached the argument using the more general CFL condition.  In either case, the number of evolution time steps $T$ grows as a power law in the number of grid points $n$. Thus, simulations requiring excessively fine spatial discretization will also require excessively fine temporal discretization. In the case of fluid dynamics simulations that cannot be fast-forwarded, this leads to an a power law scaling in query complexity for a quantum algorithm.

\subsection{Prospects for quantum utility beyond simple geometries}
\label{subsec:prospects_for_utility}

Now that we have considered the flow-past-a-sphere problem instances, we examine how much of the approach can be ported over to utility-scale instances. For the lattice Boltzmann method, the collision matrices are completely independent of the geometry. The number of blocks in the $F$-matrices will increase polynomially.  However, the streaming matrix will change drastically.  What allowed us to progress so far is the simple mechanism to identify where solid or fluid nodes are.  Because of the geometry, \eqref{eq:classical_function} was very easily turned into a quantum circuit since there is a simple arithmetic expression that needs to be evaluated. On the other hand, finding a quantum circuit that can identify the location of solid and fluid nodes in arbitrary geometries will be challenging. Methods to efficiently identify fluid/solid nodes will be critical.

As can be noted between the differences in table \ref{tab:resource_estimate_parameters_sphere} and table \ref{tab:resource_estimate_parameters_hulls}, important parameters like the number of grid points and the evolution times will drastically increase.  The inverse of the $\tau$ parameter enters the matrix elements and subtly increases subnormalization of block encodings.  While respecting the consistency relationship \eqref{eq:lbm_param_consistency_7.14}, future work may explore the trade offs between $\tau$, spatial and temporal discretization, and quantum resource estimates.
\section{Conclusion}\label{sec:conclusion}

In this study we have explored the potential for future fault-tolerant quantum computers to provide utility in applications of incompressible computational fluid dynamics. Choosing simulation-driven ship design as a representative application area, we formulated a set of specific problem instances to serve as benchmarks for evaluating the performance and utility of current and future computational methods. Though formally agnostic to computational approach (e.g., classical vs quantum), our set of problem instances were formulated with two key limitations of quantum differential equation algorithms in mind: The initial state and boundary conditions must have some degree of uniformity or geometrical simplicity to avoid the burdensome cost of data loading, and the outputs of interest should be low-degree polynomial functions (linear, in our case) of the solution vector to make the data extraction more efficient.

To unlock the estimated utility associated with the problem instances, a quantum CFD solver must significantly exceed the performance (runtime and accuracy) of current state-of-the-art classical CFD solvers.  Continual advances in classical CFD make this a moving target, though high-turbulence DNS remains highly intractable. We suggest that a transformational \textit{utility threshold} is the ability to solve CFD problems with DNS-level accuracy at runtimes comparable to current classical RANS solvers (in units of wall clock runtime).  Exceeding the transformational utility threshold would unlock the full utility. For the ship-hull problem instances specifically, we estimated the utility of replacing existing CFD in the shipbuilding industry to lie in the \$10M--\$100M range as a tight lower bound on utility. More broadly, the utility across all applications of incompressible CFD may approach $\sim$\$100B as a weak upper bound. We noted that even exceeding an incremental utility threshold tied to classical DNS runtimes would unlock some incremental utility though improved fundamental understanding of turbulence, though we estimate this utility to be $10\times$ to $100\times$ smaller.

As a surrogate for the runtime of future fault-tolerant quantum computers, we estimated the logical-level resources (qubits and $T$-gate counts) required for our quantum workflow. We did this for the geometrically simple problem instances involving flow past a sphere, as a waypoint toward utility-scale problem instances involving model- and ship-scale hulls. Our quantum approach consisted of the Carleman-linearized lattice Boltzmann method \cite{Li2025} paired with the linear-system-solver-based algorithms of \cite{Berry2017} and \cite{jennings2023b} to evolve a quantum state vector encoding the fluid phase space density to a final evolution time. Lastly, we presented a method for extracting the drag force measurable against the solution vector using amplitude estimation. Our resource estimates incorporated the costs of block-encoding the Carleman-linearized ODE, preparing the final solution vector, and estimating the drag force with quantum amplitude estimation.

We identified the following parameters that drive computational cost and quantum resource estimates.  A high Reynolds number (Re) implies a turbulent flow.  As turbulence increases, an increasingly fine mesh is required to resolve flow features.  This leads to a larger system of difference equations to solve over the mesh.  Since we only seek the steady-state solution, the evolution time is only applicable to the quantum LBM approach (the classical RANS-based approach is intrinsically steady-state) and clearly a longer evolution time will increase quantum resource estimates. Finally, the complexity of the geometry itself drives up the quantum resource estimates as all nodes in the LBM must be identified as either solid or fluid nodes for block encoding the bounce-back conditions of the streaming matrix $S$.  

Our logical-level quantum resource estimates for the problem instances involving flow past a sphere, in terms of $T$-gate counts range from $\lowTgateCountEst$ to $\highTgateCountEst$. 
These counts are relatively high compared to corresponding estimates for problems in quantum chemistry  that are around $10^{9}$ \cite{caesura2025a}, or $10^{10}$ \cite{goings2022reliably} $T$-gates.
Our estimated $T$-gate counts are also high in an absolute sense in that each $T$-gate is expected to take in the range of $\mu$s to ms \cite{beverland2022assessing}, with parallelization of these gates being at most the number of logical qubits (hundreds of thousands).
Considering the largest of the flow-past-a-sphere instances and taking the maximum possible parallelization, this leads to runtime estimates on the order of hundreds of thousands of years.  Through the resource estimation process, we have identified paths for potential resource optimization and algorithmic improvements, as discussed further in section \ref{sec:future}.

Of even greater concern is the lack of exponential speedup  in the scaling of our quantum resource estimates with respect to Reynolds number (a result also found by \cite{jennings2025endtoendquantumalgorithmnonlinear}). We found that the predicted polylogarithmic dependence on spatial grid size is negated by an intrinsic dependence between grid size and the number of time steps needed to reach steady state. This finding extends beyond the LBM-based quantum approach, to the broader case of explicit time-evolution in nonlinear differential equations that are subject to the CFL condition or similar convergence conditions linking time step size to spatial grid resolution. After a number of compilation optimizations there may yet be regimes where quantum computers could offer some small gains, if so constant factor analysis of quantum differential equation solvers like \cite{pocrnic2025constantfactorimprovementsquantumalgorithms} might yet prove to be useful.

\subsection{Future work}\label{sec:future}

Due to the bleak outlook for explicitly time-evolved nonlinear systems, future work should focus on identifying applications and formulations where a steady-state solution can be calculated directly, or via implicit schemes that may allow sufficient relaxation of the CFL condition. Looking to applications of differential equations beyond CFD, recent investigations into large systems of coupled oscillators \cite{Babbush2023a,danz2025b} and applications of the finite element method \cite{Clader2013a,Montanaro2016a,danz2025a} suggest these areas warrant further study for potential quantum utility. Nonlinear Schr\"odinger equations (e.g., in describing Bose-Einstein condensate dynamics or laser propagation through nonlinear media) are likewise intriguing, though recent results show that quantum utility is unlikely in the case of unitary (i.e., non-dissipative) dynamics \cite{Brustle2025a}.

Future CFD work may consider the following scalar measurables as estimated from a system at steady-state, though the path to steady-state formulation is unclear at this time.
\begin{itemize}
    \item \textbf{Vorticity} is relevant to fundamental studies of turbulence (e.g., recent verification of the link between turbulence and entropy \cite{Yao2024a}) and can offer additional utility by informing better turbulence models such as RANS.  Vorticity at an inspection point $\mathbf{x}$ is calculated as $\nabla \times \mathbf{u}(\mathbf{x})$, where $\mathbf{u}(\mathbf{x})$ is the 3D velocity field constructed from the LBM population distributions $f(\mathbf{x})$.
    \item \textbf{Velocity Correlation Coefficients} can be calculated for turbulent flows inspection points $\mathbf{x}$ and $\mathbf{y}$ to determine if the velocities at said points are correlated.  This can inform models for the spatial scale of turbulent velocity fluctuations.   
    \item \textbf{Flow separation points} (where laminar flow transitions to turbulent) on the surface of the solid can be identified by temporal fluctuations in velocity or lack thereof.  
\end{itemize}

Future work should also quantify the additional resource costs incurred by nontrivial geometries such as ship hulls. The costs associated with both the block encoding of the LBM streaming matrix and the drag force estimation are tied to the notoriously burdensome task of loading classical data. In the case of a sphere, the determination of whether a lattice node is inside or outside the sphere boundary can be represented by a simple arithmetic equation that can be implemented efficiently in a quantum circuit.  In the case of a ship hull a large number of polytopes and corresponding equations must be incorporated, potentially adding significant cost.

\section*{Acknowledgements}

This work was funded under the DARPA Quantum Benchmarking program. The authors wish to acknowledge the DARPA QB team and the Test \& Evaluation teams at NASA QuAIL and MIT-LL for helpful guidance and feedback. We thank B. Das, D. Fang,  D. Jennings, H. Krovi, A. Paler, and T. Watts for helpful discussions regarding quantum differential equation algorithms and resource estimation. We thank A. Nayak, D. Ponkratov, W. Qiu, D. Radosavlijevic, S. Verma, and M. Wheeler for helpful discussions regarding classical CFD and ship hull design. Authors PK and BR acknowledge support from DARPA under Air Force Contract No. FA8702-15-D-001. 

\newpage
\appendix
\appendixpage
\addcontentsline{toc}{section}{Appendices}

\section{Utility estimates and bounds}\label{apx:utility}

\subsection{Software/compute market utility}

\begin{itemize}
\item A 2021 study \cite{Nely_2021} identified key entities in the CFD software market as
    \begin{itemize}
        \item Ansys, Inc., approximately 40\% of market share
        \item Open source software (including OpenFOAM), approximately 23\% of market share
        \item COMSOL, Inc., approximately 20\% of market share.
    \end{itemize}
\item Ansys, Inc., (a publicly traded company) reported an annual revenue of \$2.27B in 2023, a 9.9\% increase from 2022.  This includes all Ansys software sales and is therefore a \emph{weak upper bound} on the sales due to Ansys CFD software.  
\item Compute services and hardware vendors also claim some market share, but we cannot identify hardware sales where CFD was the dedicated end use.
\item According to \cite{engys-cfd-trends}, naval and marine applications accounted for roughly 6\% of CFD users in 2023. 
\end{itemize}

Compiling the above information, we arrive at a rough order-of-magnitude estimate for the incompressible CFD software/compute market utility of $\sim$\$1B, and for the application of simulation-driven ship design we estimate software/compute market utility in the range \$10M--\$100M.

\subsection{Engineering services market utility}

We assess this user-group utility strictly in the context of the shipbuilding market. Shipbuilders understand that system optimization (hull, propeller, energy saving devices) should be done at full ship scale \cite{wheeler2021a}. However, the computational time is still a challenge as they may only have a week or two to produce the final design. Whereas model-scale simulations require about an hour of compute time per design variant, ship-scale simulations require about a week per variant, and so model-scale simulations are used. It is anticipated that ship design will improve significantly if calculations can be sped up sufficiently \cite{mizzi2015a}.  In this work, we consider two model-scale design simulations (sections \ref{sec:jbc} and \ref{sec:kcs}) and one full-scale design simulation (section \ref{sec:mv-regal}). The cost associated with CFD in modern-day ship design workflows is estimated from the following data:

\begin{itemize}
    \item Shipbuilding market size \$142B in 2020, expected to reach \$195B by 2030 \cite{shipbuilding-market}
    \begin{itemize}
        \item This is whole market, including many non-CFD influences.
        \item Example large firm: HD Hyundai Heavy Industries Co., Ltd. Reported shipbuilding sales and profit of \$1.5B and \$57M, respectively in Q1 of 2024 \cite{hyundai-2024Q1-earnings}.
    \end{itemize}
    \item Many major shipbuilders operated at a loss in 2021-2022, driving the need for streamlining production (including engineering design) activities \cite{usni-maritime-review-2023}.
    \item A 2009 study \cite{deschamps2009a} found that for a multi-ship construction program where each of eight ships built was quoted at roughly \$30M, the distributed non-recurring costs associated with design, engineering, and planning amounted to roughly \$4.5M per ship (15\%).
    \item Representative annual compute cost lower bound for a small engineering firm: \$63K.  Based on the following assumptions:
    \begin{itemize}
        \item 4x (small cluster) on-demand Amazon Web Services (AWS) EC2 hpc7a.48xlarge instance at \$7.20 per hour (AMD 96 vCPU, 768GiB instance)
        \item 2,200 hours (roughly 25\% utilization),
        \item Not including S3 storage costs.
        \item AWS pricing varies based on reserved instances vs. on-demand.
        \item hpc7a.48xlarge selected based on \cite{aws-cfd-workshop} and updated from the legacy hpc6a.48xlarge instance.
    \end{itemize}
    \item Representative annual compute costs for large engineering firm were calculated in \cite{wheeler2021a}.  In the study, a bare hull resistance calculation at ship scale typically costs about \$1,000 per design variant, based on Amazon Cloud computing services (AWS) in 2021. For a design optimization project involving 100 design variants, this amounts to roughly \$100k \emph{per hull design}.
    \item At model-scale the cost is far lower, estimated at roughly \$10 per design variant, based on a cost of \$0.1 per core-hour \cite{gatin2019a}. In this case, the labor cost associated with preparing and running the simulations becomes the dominant cost, estimated at 40 minutes per design variant (roughly \$100 per variant or \$10k for 100 variants).
    \item For the example design cost spread out over 8 ships at \$4.5M per ship \cite{deschamps2009a}, CFD costs of \$10k--\$100k would then account for roughly 0.03\%--0.3\% of total design costs.
    
\end{itemize}

Compiling the above cost data, with CFD costs accounting for roughly 0.03\%--0.3\% of design costs, and design costs accounting for roughly 15\% of total ship design and build costs, the order-of-magnitude estimate for current and near-future CFD costs associated with the overall shipbuilding market falls in the same \$10M--\$100M range that was estimated for the ship-design segment of the software/compute market. We refer to this estimate of CFD costs in modern-day ship design practices as the ``classical state-of-the-art replacement utility'' for our considered application instance. This represents a \emph{tight lower bound} on the utility to the shipbuilding industry that would be unlocked by surpassing the transformational utility threshold.

The above utility estimate may also be regarded as a \emph{weak lower bound} on the utility to \emph{all} segments of the engineering services market (i.e., beyond shipbuilding) that leverage CFD platforms. The broader utility of incompressible CFD to the general engineering services market may be obtained by multiplying the \$1B estimate for the incompressible CFD software/compute market by return-on-investment (ROI) estimates. Reference \cite{hyperion-ansys-blog-roi} estimates that each \$1 invested in HPC resources for simulation returns approximately \$151.90 of revenue and \$34.9 of profit, through a combination of (1) making better products (40\% of projects) and (2) creating cost savings (20\% of projects). This implies a \emph{weak upper bound} of $\sim$\$100B on the utility of incompressible CFD to the engineering services market.

Beyond reducing current CFD costs, exceeding the transformational utility threshold may unlock additional utility in shipbuilding by reducing or even eliminating the need for experimental testing in the ship design workflow:

\begin{itemize}    
    
    \item The cost of testing a model-scale ship in a towing-tank facility is estimated to start at \$20k which includes fabricating the model and running the tests \cite{dms_towing_tank}. This process does not scale for ship hull optimization, where a variety of hull geometry modifications are tested.
    \item To estimate towing-tank costs, we surveyed towing tank operators affiliated with the International Towing Tank Conference (ITTC) \cite{ittc_towing_tanks}. The 12 responses we received quoted costs ranging from $\sim$\$10k for a basic bare hull resistance test in a small-scale tank to $\sim$\$1M for a comprehensive set of tests of large-scale models across a range of design parameters. Omitting outliers and averaging the ranges' mid-points, we obtained a typical cost estimate of \$77k. 
    \item Testing a model-scale ship at a towing-tank facility also comes with significant scheduling lead time. The test facility must be booked months or years in advance.
 
\end{itemize}

Overall, the cost of model-scale experimental testing in the ship design workflow exceeds model-scale CFD costs by at least $\sim10\times$. We therefore estimate the utility associated with rendering towing tank tests obsolete to lie in the range \$100M--\$1B. This represents a \emph{tight upper bound} on the utility to the shipbuilding industry that would be unlocked by surpassing the transformational utility threshold.

\subsection{Derived product end-user utility}
Surpassing the transformative utility threshold would enable simulation-driven workflows that result in better-performing ship designs \cite{siemens_sim_design}.  Various forms of resulting utility can be quantified:
\begin{itemize}
    \item Reduced capital expense (CAPEX) of building the vehicle.
    \begin{itemize}
        \item In 2022, new-build construction escalated about 25\% and new orders declined 22\% due to the rising cost of steel and labor \cite{usni-maritime-review-2023}, underscoring a new need to optimize designs early in the engineering and build cycle.
    \end{itemize}
    \item Reduced operational expense (OPEX) throughout the lifetime of the vessel.
    \begin{itemize}
        \item Ship hulls fall into many size and purpose categories.  Consider the average ``Handymax'' bulk carrier that is approximately 150-200m long and can carry approximately 40,000-59,999 dead weight tons of carrying capacity.  
        \item The average Handymax carrier consumes approximately 25-30 metric tons of diesel fuel per day \cite{maritime-fuel}.
        \item As of 2022, there were approximately 4000 Handymax-class vessels in operation \cite{dry-cargo-international}
        \item Assuming 25\% of the fleet is underway at any time, then this singular class of ships would consume approximately 25,000 metric tons of diesel per day, or 9M metric tons of diesel per year.
        \item Assuming a cost of \$3 per gallon, the fuel costs for the fleet would be approximately \$7B.
        \item This represents a \emph{weak lower bound} on operating fuel costs for the worldwide fleet since we have focused on only one type of vessel: Handymax
    \end{itemize}
    \item In \cite{iea-org-net-zero}, the International Energy Agency posits that the worldwide shipping fleet is \emph{not} on track to meet net zero emissions standards by 2050, which would require an almost 15\% reduction in emissions from 2022-2030.  Ship owners will begin to demand CFD-optimized ship hulls from ship builders to ensure compliance with regulations.      
    \item Design requirements sensitivity analysis:  during the design of the hull, identify end user requirements that are overly burdensome and give the end user an opportunity to relax or alter requirements in favor of a design with reduced CAPEX or OPEX.  

\end{itemize}
These estimates focused on end users purchasing and operating resistance-optimized shipping vessels only, and thus represent a \emph{weak lower bound} on all end users that benefit from improved products from CFD platforms.
\section{Notation and variable indexing}\label{sec:notation}

The goal of this section is to define the ordering/indexing scheme for the $f_i(\mathbf{x})$ variables used in this paper.  Alternative schemes may be used. 

The computational domain of the LBM is the set of three dimensional grid points $\mathbf{X} = \{ (x,y,z) \in \mathbb{Z}_{n_x} \times \mathbb{Z}_{n_y} \times \mathbb{Z}_{n_z} \}$.  $n_x$, $n_y$, and $n_z$ are the grid dimensions and the total number of grid nodes $n=n_x n_y n_z$.  First we define a bijective mapping $\mathcal{L}_{\mathbf{x}(1)}$ of the $n$ grid points in the set $\mathbf{X}$ to a subset of the nonnegative integers $\mathbb{Z}_n$.  There are many ways to implement $\mathcal{L}_{\mathbf{x}(1)}$. We use the following. First order first by $x$ coordinate, then $y$, then $z$, so that our expression for the position index $\alpha\in \mathbb{Z}_n$ for some node $\mathbf{x}_\alpha = (x_\alpha, y_\alpha, z_\alpha)$ is
\begin{subequations}
    \begin{align}
        \mathcal{L}_{\mathbf{x}(1)} : \{ \mathbf{x}_\alpha \in \mathbf{X} \} &\rightarrow \{ \alpha \in \mathbb{Z}_n \}\\
        \mathcal{L}_{\mathbf{x}(1)}(\mathbf{x}_\alpha) & = x_\alpha + n_x y_\alpha + n_x n_y z_\alpha \\
        &= \alpha.
    \end{align}
    \label{eq:L_x-ordering-implementation}
\end{subequations}
 The inverse is
\begin{subequations}
    \begin{align}
     \mathcal{L}^{-1}_{\mathbf{x}(1)} : \mathbb{Z}_n &\rightarrow \mathbf{X} \\
     \mathcal{L}^{-1}_{\mathbf{x}(1)}(\alpha)
     &= (  \alpha \bmod n_x , \quad (\alpha\bdiv n_x) \bmod n_y , \quad \alpha\bdiv (n_x n_y) )\\
     &= (  x_\alpha , y_\alpha , z_\alpha ) \\
     &= \mathbf{x}_\alpha,
    \end{align}
    \label{eq:L_x-ordering-implementation-inverse}
\end{subequations}
where $a\bdiv b \equiv \lfloor a/b \rfloor$.

Next, we establish the mapping a bijective mapping of the $f$-variables to a subset of the nonnegative integers $\mathbb{Z}_{nQ} = \{0,1,2,...,nQ-1\}$.  For some variable  $f_{i}(\mathbf{x}_\alpha)$ with corresponding pair $( \mathbf{x}_\alpha , i ) \in \mathbf{X} \times \mathbb{Z}_Q$ we have:
\begin{subequations}
\begin{align}
\mathcal{L}_1 : \{ ( \mathbf{x}_\alpha , i ) \in \mathbf{X} \times \mathbb{Z}_Q \} &\rightarrow \{ \mu \in \mathbb{Z}_{nQ} \}\\
\mathcal{L}_1( \mathbf{x}_\alpha, i ) 
& = Q\mathcal{L}_{\mathbf{x}(1)}(\mathbf{x}_\alpha) + i\\
& = Q(x_\alpha + n_x y_\alpha + n_x n_y z_\alpha) + i\\
& = \mu.
\end{align}
\label{eq:L_1-ordering-implementation}
\end{subequations}

The inverse is
\begin{subequations}
    \begin{align}
        \mathcal{L}^{-1}_1 : \{ \mu \in \mathbb{Z}_{nQ} \} & \rightarrow \{ ( \mathbf{x}_\alpha , i ) \in \mathbf{X} \times \mathbb{Z}_Q \}\\
        \mathcal{L}^{-1}_1(\mu) &= ( \mathcal{L}^{-1}_{\mathbf{x}(1)}(\mu \bdiv Q) , \quad \mu \bmod Q )\\
        &= ( \mathcal{L}^{-1}_{\mathbf{x}(1)}(\alpha) , i )\\
        &= ( \mathbf{x}_\alpha , i ).
    \end{align}
    \label{eq:L_1-ordering-inverse}
\end{subequations}
E.g., we can now refer to the $f$-variable at index $\mu$ as $f_\mu = f_i(\mathbf{x}_\alpha)$, where $\mu$, $i$, $\mathbf{x}_\alpha$ satisfy \eqref{eq:L_x-ordering-implementation}-\eqref{eq:L_1-ordering-inverse}. 

The above ordering can be implemented naturally with a tensor product of registers $|z_\alpha\rangle |y_\alpha\rangle |x_\alpha\rangle |\mathbf{c}_i\rangle$.

Given we have some $\mathcal{L}_1$, we can now make statements of the form ``$\mathcal{L}_1\left( f_i(\mathbf{x}_\alpha )\right) < \mathcal{L}_1\left( f_j(\mathbf{x}_\beta) \right)$'', and thus we can establish a sort-order of $f_{i}(\mathbf{x})$-variables.  We will also use this to establish the index/order of the $f, f^{\otimes 2}$, and $f^{\otimes 3}$ vectors.  This ordering will also impact the row/column position that coefficients appear in the matrices $S, F_1, F_2$ and $F_3$.

In the Carleman Linearized differential equation in \ref{sec:cl-lbm}, the variables we need to evolve are $f_i(\mathbf{x})$.  During the linearization process, we introduced additional ``variables'' of the form $f_j(\mathbf{x}_\beta)f_k(\mathbf{x}_\gamma)$ (a second-order monomial) and $f_j(\mathbf{x}_\beta)f_k(\mathbf{x}_\gamma)f_\ell(\mathbf{x}_\delta)$ (a third-order monomial). Since we are limiting our Carleman Linearization to third order, we limit the monomial to three factors.    


Using multilinear algebra notation per section \ref{sec:cl-lbm}, 
$f$ is a vector of length $nQ$ that holds all monomials of distribution functions $f_i(\mathbf{x})$ of order 1 (for $i \in \mathbb{Z}_Q$ and $\mathbf{x}\in \mathbf{X}$).  Similarly $f \otimes f = f^{\otimes 2}$ is a vector of length $(nQ)^2$ that holds all monomials of distribution functions of order 2.  For example, $f_{j} (\mathbf{x}_\beta) f_{k} (\mathbf{x}_{\gamma} )$ is an element of $f^{\otimes 2}$.  But $f_{k} (\mathbf{x}_\gamma) f_{j} (\mathbf{x}_\beta)$ is also an element of $f^{\otimes 2}$.  Scalar multiplication is commutative and $f_{j} (\mathbf{x}_\beta) f_{k} (\mathbf{x}_{\gamma} ) = f_{k} (\mathbf{x}_\gamma) f_{j} (\mathbf{x}_\beta)$.  We call $f_{j} (\mathbf{x}_\beta) f_{k} (\mathbf{x}_{\gamma} )$ and $f_{k} (\mathbf{x}_\gamma) f_{j} (\mathbf{x}_\beta)$ degenerate monomial terms. See table \ref{tab:full-list-of-second-order-monomials} for more examples.  Finally $f \otimes f \otimes f = f^{\otimes 3}$ is a vector of length $(nQ)^3$ that holds all monomials of order 3.  $f^{\otimes 3}$ contains even more degenerate terms.  See table \ref{tab:full-list-of-third-order-monomials} for more examples.

\begin{table}[h]
\centering
\begin{tabular}{|c|l|l|l|l|l|}
\hline
      & Second-order Monomial   & Distinct  &  Degenerate  & Total  \\
      &  & Factors & Variants & Variants \\
      \hline
1    &  $f_j(\mathbf{x}_\beta)f_j(\mathbf{x}_\beta) = f_j(\mathbf{x}_\beta)^{2}$ & 1 & None & 1\\
\hline
2    &  $f_j(\mathbf{x}_\beta)f_k(\mathbf{x}_\gamma)$ & 2  & $f_k(\mathbf{x}_\gamma)f_j(\mathbf{x}_\beta)$ & 2\\
\hline
\end{tabular}
\caption{Full list of second-order monomials composed of factors $f_j(\mathbf{x}_\beta)$ and $f_k(\mathbf{x}_\gamma)$.}
\label{tab:full-list-of-second-order-monomials}
\end{table}

\begin{table}[h]
\centering
\begin{tabular}{|c|l|l|l|l|l|}
\hline
      & Third-order Monomial    & Distinct & Degenerate  & Total  \\
      &    & Factors & Variants & Variants\\
      \hline
1    &  $f_j(\mathbf{x}_\beta)f_j(\mathbf{x}_\beta)f_j(\mathbf{x}_\beta)=f_j(\mathbf{x}_\beta)^{3}$ & 1 & None & 1\\
\hline
2    &  $f_j(\mathbf{x}_\beta)f_j(\mathbf{x}_\beta)f_k(\mathbf{x}_\gamma) = f_j(\mathbf{x}_\beta)^{2}f_k(\mathbf{x}_\gamma) $ & 2 & $f_k(\mathbf{x}_\gamma)f_j(\mathbf{x}_\beta)^{2}$ & 3\\
 & &  &$f_j(\mathbf{x}_\beta)f_k(\mathbf{x}_\gamma)f_j(\mathbf{x}_\beta)$ & \\
\hline
3    &  $f_j(\mathbf{x}_\beta)f_k(\mathbf{x}_\gamma)f_\ell(\mathbf{x}_\delta) $ & 3 &$f_k(\mathbf{x}_\gamma)f_j(\mathbf{x}_\beta)f_\ell(\mathbf{x}_\delta)$ & 6\\
    & &      & $f_k(\mathbf{x}_\gamma)f_\ell(\mathbf{x}_\delta)f_j(\mathbf{x}_\beta)$ & \\
    & &      & $f_j(\mathbf{x}_\beta)f_\ell(\mathbf{x}_\delta)f_k(\mathbf{x}_\gamma)$ & \\
    & &      & $f_\ell(\mathbf{x}_\delta)f_j(\mathbf{x}_\beta)f_k(\mathbf{x}_\gamma)$ & \\
    & &      & $f_\ell(\mathbf{x}_\delta)f_k(\mathbf{x}_\gamma)f_j(\mathbf{x}_\beta)$ & \\    
\hline
\end{tabular}
\caption{Full list of third-order monomials composed of factors $f_j(\mathbf{x}_\beta), f_k(\mathbf{x}_\gamma)$, and $f_\ell(\mathbf{x}_\delta)$.}
\label{tab:full-list-of-third-order-monomials}
\end{table}

We will extend the sort-order defined in \eqref{eq:L_1-ordering-implementation} to distinguish and access (by index) degenerate second- and third-order monomials as follows.
\begin{subequations}
\begin{align}    
\mathcal{L}_{\mathbf{x}(2)} : \{ (\mathbf{x}_\beta, \mathbf{x}_\gamma) \in \mathbf{X}^2 \} &\rightarrow \mathbb{Z}_{n^2}\\
\mathcal{L}_{\mathbf{x}(2)}(\mathbf{x}_\beta, \mathbf{x}_\gamma) 
&= n\mathcal{L}_{\mathbf{x}(1)}(\mathbf{x}_\beta) + \mathcal{L}_{\mathbf{x}(1)}(\mathbf{x}_\gamma),
\end{align}
\label{eq:L_x2-ordering-definition}
\end{subequations}

\begin{subequations}
\begin{align}    
\mathcal{L}_2 : \{ (f_j(\mathbf{x}_\beta),f_k(\mathbf{x}_\gamma)) : j,k \in \mathbb{Z}_Q, (\mathbf{x}_\beta, \mathbf{x}_\gamma) \in \mathbf{X}^2 \} &\rightarrow \mathbb{Z}_{(nQ)^2}\\
\mathcal{L}_2(f_j(\mathbf{x}_\beta),f_k(\mathbf{x}_\gamma)) 
&= 
Q^2\left( \mathcal{L}_{\mathbf{x}(2)}( \mathbf{x}_\beta, \mathbf{x}_\gamma )  \right) + Qj + k\\
\mathcal{L}_2(f_j(\mathbf{x}_\beta),f_k(\mathbf{x}_\gamma)) 
&= 
Q^2\left( n\mathcal{L}_{\mathbf{x}(1)}( \mathbf{x}_\beta) + \mathcal{L}_{\mathbf{x}(1)}( \mathbf{x}_\gamma )  \right) + Qj + k\\
\mathcal{L}_2(f_j(\mathbf{x}_\beta),f_k(\mathbf{x}_\gamma)) 
&= 
Q^2n\mathcal{L}_{\mathbf{x}(1)}( \mathbf{x}_\beta)  + Q\mathcal{L}_{\mathbf{x}(1)}( \mathbf{x}_\gamma )  + Qj + k.
\end{align}
\label{eq:L_2-ordering-definition}
\end{subequations}
Equation \eqref{eq:L_2-ordering-definition} provides multiple equivalent definitions for $\mathcal{L}_2$.

\begin{subequations}
\begin{align}    
\mathcal{L}_{\mathbf{x}(3)}: \{ (\mathbf{x}_\beta, \mathbf{x}_\gamma, \mathbf{x}_\delta ) \in \mathbf{X}^3 \} &\rightarrow \mathbb{Z}_{n^3}\\
\mathcal{L}_{\mathbf{x}(3)}(\mathbf{x}_\beta, \mathbf{x}_\gamma, \mathbf{x}_\delta) 
&= 
n\mathcal{L}_{\mathbf{x}(2)}(\mathbf{x}_\beta, \mathbf{x}_\gamma) + \mathcal{L}_{\mathbf{x}(1)}(\mathbf{x}_\delta)\\
\mathcal{L}_{\mathbf{x}(3)}(\mathbf{x}_\beta, \mathbf{x}_\gamma, \mathbf{x}_\delta) 
&= 
n^2\mathcal{L}_{\mathbf{x}(1)}(\mathbf{x}_\beta) + 
n\mathcal{L}_{\mathbf{x}(1)}(\mathbf{x}_\gamma) +
\mathcal{L}_{\mathbf{x}(1)}(\mathbf{x}_\delta)
\end{align}
\label{eq:L_x3-ordering-definition}
\end{subequations}
Equation \eqref{eq:L_x3-ordering-definition} provides multiple equivalent definitions for $\mathcal{L}_{\mathbf{x}(3)}$.

\begin{subequations}
\begin{align}    
\mathcal{L}_3 
: \{ (f_j(\mathbf{x}_\beta),f_k(\mathbf{x}_\gamma),f_\ell(\mathbf{x}_\delta)  ) &: j,k,\ell \in \mathbb{Z}_Q, (\mathbf{x}_\beta, \mathbf{x}_\gamma, \mathbf{x}_\delta) \in \mathbf{X}^3 \} \rightarrow \mathbb{Z}_{(nQ)^3}\\
\mathcal{L}_3(f_j(\mathbf{x}_\beta),f_k(\mathbf{x}_\gamma),f_\ell(\mathbf{x}_\delta)  ) 
&= 
Q^3 \left( \mathcal{L}_{\mathbf{x}(3)}( \mathbf{x}_\beta, \mathbf{x}_\gamma,\mathbf{x}_\delta )  \right) + Q^2j + Qk + \ell\\
\mathcal{L}_3(f_j(\mathbf{x}_\beta),f_k(\mathbf{x}_\gamma),f_\ell(\mathbf{x}_\delta)  ) 
&= 
Q^3 \left( n^2 \mathcal{L}_{\mathbf{x}(1)}( \mathbf{x}_\beta ) + n\mathcal{L}_{\mathbf{x}(1)}( \mathbf{x}_\gamma ) + \mathcal{L}_{\mathbf{x}(1)}( \mathbf{x}_\delta ) \right)
 + Q^2j + Qk + \ell\\
\mathcal{L}_3(f_j(\mathbf{x}_\beta),f_k(\mathbf{x}_\gamma),f_\ell(\mathbf{x}_\delta)  ) 
&= 
Q^3n^2 \mathcal{L}_{\mathbf{x}(1)}( \mathbf{x}_\beta ) + Q^3n\mathcal{L}_{\mathbf{x}(1)}( \mathbf{x}_\gamma ) + Q^3\mathcal{L}_{\mathbf{x}(1)}( \mathbf{x}_\delta ) + Q^2j + Qk + \ell
\end{align}
\label{eq:L_3-ordering-definition}
\end{subequations}
Equation \eqref{eq:L_3-ordering-definition} provides some--but not all--equivalent definitions for $\mathcal{L}_{\mathbf{x}(3)}$.

The inverses $\mathcal{L}_{\mathbf{x}(2)}^{-1}$, $\mathcal{L}_{\mathbf{x}(3)}^{-1}$, $\mathcal{L}_2^{-1}$ and $\mathcal{L}_3^{-1}$ exist.

\section{Functions to compute $S$- and  $F$-matrix elements}\label{sec:derivation-of-matrix-coeffs}

The goal of this section is to derive functions $\tilde{S}$, $\tilde{F}_1$, $\tilde{F}_2$, $\tilde{F}_3$ that produce the elements of the $S$- and $F$-matrices and can be called for block encoding.  Table \ref{tab:tildeF_lookup} is provided as a quick reference.  The input to functions $\tilde{S}$, $\tilde{F}_1$, $\tilde{F}_2$, $\tilde{F}_3$ is a tuple of monomials which correspond to a row and column of the matrix and the functions return the element of the matrix at the row/column. The row/column variable indexing schemes are described in appendix \ref{sec:notation}.  The functions are derived for the D3Q27 lattice.  See table \ref{tab:D3Q27-constants} for a complete list of constants for the D3Q27 lattice.  As a reminder, $\tau$ is not a D3Q27 constant, but is related to the viscosity of the medium and thus specific to the problem instance.  

\begin{table}[h]
\centering
\begin{tabular}{|c|l|l|l|}
\hline
Function  & Example input  & Row/Column & Dependencies \\ 
\hline
$\tilde{S}(m_r, m_c)$  \eqref{eq:tildeS} & $m_r = f_i(\mathbf{x}_\alpha)$ & row = $\mathcal{L}_1(m_r)$ &    \\ 
                                         & $m_c = f_j(\mathbf{x}_\beta)$ & col = $\mathcal{L}_1(m_c)$ &    \\
\hline
$\tilde{F}_1(m_r, m_c)$  \eqref{eq:tildeF_1} & $m_r = f_i(\mathbf{x}_\alpha)$ & row = $\mathcal{L}_1(m_r)$ &  $\tilde{T}_{0[1]}$ \eqref{eq:tildeT_{0[1]}}  \\ 
                                         &  $m_c = f_j(\mathbf{x}_\beta)$                     &       col = $\mathcal{L}_1(m_c)$                    & $\tilde{T}_{1[1]}$ \eqref{eq:tildeT_{1[1]}}   \\
                                         &  &  &  $\tilde{T}_{2[1]}$ \eqref{eq:tildeT_{2[1]}}  \\
\hline
$\tilde{F}_{2\text{-dense}}(m_r, m_c)$  \eqref{eq:tildeF_2_dense} & $m_r = f_i(\mathbf{x}_\alpha)$ & row = $\mathcal{L}_1(m_r)$ &  $\tilde{T}_{3[2]\text{-dense}}$ \eqref{eq:tildeT_3[2]_dense}  \\ 
                                         & $m_c = f_j(\mathbf{x}_\beta)f_k(\mathbf{x}_\gamma)$                      &     col = $\mathcal{L}_2(m_c)$                      & $\tilde{T}_{4[2]\text{-dense}}$ \eqref{eq:tildeT_4[2]_dense}   \\
\hline
$\tilde{F}_{3\text{-dense}}(m_r, m_c)$  \eqref{eq:tildeF_3_dense} & $m_r = f_i(\mathbf{x}_\alpha)$ & row = $\mathcal{L}_1(m_r)$ &  $\tilde{T}_{3[3]\text{-dense}}$ \eqref{eq:tildeT_3[3]_dense}  \\ 
                                         & $m_c = f_j(\mathbf{x}_\beta)f_k(\mathbf{x}_\gamma)f_\ell(\mathbf{x}_\delta)$                      &   col = $\mathcal{L}_3(m_c)$                        & $\tilde{T}_{4[3]\text{-dense}}$ \eqref{eq:tildeT_4[3]_dense}   \\
\hline
$\tilde{F}_{2\text{-sparse}}(m_r, m_c)$  \eqref{eq:tildeF_2_sparse} & $m_r = f_i(\mathbf{x}_\alpha)$ & row = $\mathcal{L}_1(m_r)$ &  $\tilde{T}_{3[2]\text{-sparse}}$ \eqref{eq:tildeT_3[2]_sparse}  \\ 
                                         & $m_c = f_j(\mathbf{x}_\beta)f_k(\mathbf{x}_\gamma)$                      &   col = $\mathcal{L}_2(m_c)$                        & $\tilde{T}_{4[2]\text{-sparse}}$ \eqref{eq:tildeT_4[2]_sparse}   \\
\hline
$\tilde{F}_{3\text{-sparse}}(m_r, m_c)$  \eqref{eq:tildeF_3_sparse} & $m_r = f_i(\mathbf{x}_\alpha)$ & row = $\mathcal{L}_1(m_r)$ &  $\tilde{T}_{3[3]\text{-sparse}}$ \eqref{eq:tildeT_3[3]_sparse}  \\ 
                                         &  $m_c = f_j(\mathbf{x}_\beta)f_k(\mathbf{x}_\gamma)f_\ell(\mathbf{x}_\delta)$                    &     col = $\mathcal{L}_3(m_c)$                      & $\tilde{T}_{4[3]\text{-sparse}}$ \eqref{eq:tildeT_4[3]_sparse}   \\
\hline
\end{tabular}
\caption{\label{tab:tildeF_lookup} List of the $\tilde{S}$ and $\tilde{F}$ functions that return elements of the $S$- and $F$-matrices.}
\end{table}

As discussed in section \ref{sec:cl-lbm}, the approximation for the LBM dynamics is:
\begin{equation}
    \frac{\partial }{\partial t} f \approx Sf + F_1 f + F_2 f^{\otimes2} + F_3 f^{\otimes3}. \tag{repeating \ref{eq:lbm-multilinear-algebra}}
\end{equation}

The structure of the $F_1$, $F_2$ and $F_3$ matrices are shown in figures \ref{fig:f_1_structure}, \ref{fig:f_2_structure} and \ref{fig:f_3_structure}, respectively.

\begin{figure}[H]
\centering
\begin{tikzpicture}[scale=1] 
\draw (0,0) rectangle (4,4);

\fill[black] (0,3) rectangle ++(1,1);
\fill[black] (1,2) rectangle ++(1,1);
\node at (2.5,1.5) {$\ddots$};
\fill[black] (3,0) rectangle ++(1,1);

\draw [decorate, decoration={brace, amplitude=5pt, raise=5pt}] (1,2) -- (1,3);
\node[left] at (0,2.35) {$f_i(\mathbf{x}), \forall i \in \mathbb{Z}_Q$};

\draw [decorate, decoration={brace, amplitude=5pt, raise=5pt}] (1,3) -- (2,3);
\node[above] at (1.5,4.5) {$f_j(\mathbf{x})$};
\node[above] at (1.5,4.0) {$\forall j \in \mathbb{Z}_Q$};

\end{tikzpicture}
\caption{Structure of the $F_1$ matrix.  This structure is produced if the row/column indexing of $f$-variables is implemented per appendix \protect{\ref{sec:notation}}.  The matrix is $nQ \times nQ$.  Nonzero terms may appear in the black $Q \times Q$ block of rows/columns that correspond to $f$-variables that share the same grid point $\mathbf{x}$.  The black blocks are all the same.  We later refer to one instance of a black block as $F_1^{\mathbf{x}} \in \mathbb{R}^{Q \times Q}$.}
\label{fig:f_1_structure}
\end{figure}

\begin{figure}[H]
\centering
\begin{tikzpicture}[scale=0.75] 
\draw (0,0) rectangle (16,4);

\fill[pattern=north west lines, pattern color=gray] (0,3) rectangle ++(4,1);
\fill[black] (0,3) rectangle ++(1,1);

\fill[pattern=north west lines, pattern color=gray] (4,2) rectangle ++(4,1);
\fill[black] (5,2) rectangle ++(1,1);

\node at (10,1.5) {$\ddots$};

\fill[pattern=north west lines, pattern color=gray] (12,0) rectangle ++(4,1);
\fill[black] (15,0) rectangle ++(1,1);

\draw [decorate, decoration={brace, amplitude=5pt, raise=5pt}] (4,2) -- (4,3);
\node[left] at (0,2.35) {$f_i(\mathbf{x}), \forall i \in \mathbb{Z}_Q$};

\draw [decorate, decoration={brace, amplitude=5pt, raise=5pt}] (5,3) -- (6,3);
\node[above] at (5.5,4.75) {$f_j(\mathbf{x})f_k(\mathbf{x})$};
\node[above] at (5.5,4.25) {$\forall j, k \in \mathbb{Z}_Q$};

\draw [decorate, decoration={brace, amplitude=5pt, raise=5pt}] (8,2) -- (4,2);
\node[below] at (5.5,0) {$f_j(\mathbf{x})f_k(\mathbf{y})$};
\node[below] at (5.5,-0.6) {$\forall \mathbf{y} \in \mathbf{X}$};
\node[below] at (5.5,-1.2) {$\forall j, k \in \mathbb{Z}_Q$};

\end{tikzpicture}
\caption{Structure of the $F_2$ matrix.  This structure is produced if the row/column indexing of $f$-variables is implemented per appendix \protect{\ref{sec:notation}}.  The matrix is $nQ \times (nQ)^2$.  Nonzero terms may appear in the black $Q\times Q^2$ blocks. The black blocks have rows corresponding to $f$-variables and columns corresponding to $f^2$-variables that share the same grid point $\mathbf{x}$. 
The patterned blocks are size $Q \times nQ^2$ and will be zero, implying there will be some all-zero columns.  The black blocks are all the same. We later refer to one instance of a black block as $F_2^{\mathbf{x}} \in \mathbb{R}^{Q \times Q^2}$.}
\label{fig:f_2_structure}
\end{figure}

\begin{figure}[H]
\centering
\begin{tikzpicture}[scale=0.75] 
\draw (0,0) rectangle (16,4);

\fill[pattern=north west lines, pattern color=gray] (0,3) rectangle ++(4,1);
\fill[black] (0,3) rectangle ++(1,1);

\fill[pattern=north west lines, pattern color=gray] (4,2) rectangle ++(4,1);
\fill[black] (5,2) rectangle ++(1,1);

\node at (10,1.5) {$\ddots$};

\fill[pattern=north west lines, pattern color=gray] (12,0) rectangle ++(4,1);
\fill[black] (15,0) rectangle ++(1,1);

\draw [decorate, decoration={brace, amplitude=5pt, raise=5pt}] (4,2) -- (4,3);
\node[left] at (0,2.35) {$f_i(\mathbf{x}), \forall i \in \mathbb{Z}_Q$};

\draw [decorate, decoration={brace, amplitude=5pt, raise=5pt}] (5,3) -- (6,3);

\node[above] at (5.5,4.75) {$f_j(\mathbf{x})f_k(\mathbf{x})f_\ell(\mathbf{x})$};
\node[above] at (5.5,4.25) {$\forall j, k, \ell \in \mathbb{Z}_Q$};

\draw [decorate, decoration={brace, amplitude=5pt, raise=5pt}] (8,2) -- (4,2);
\node[below] at (5.5,0) {$f_j(\mathbf{x})f_k(\mathbf{y})f_k(\mathbf{z})$};
\node[below] at (5.5,-0.6) {$\forall \mathbf{y},\mathbf{z} \in \mathbf{X}$};
\node[below] at (5.5,-1.2) {$\forall j, k, \ell \in \mathbb{Z}_Q$};

\end{tikzpicture}

\caption{Structure of the $F_3$ matrix.  This structure is produced if the row/column indexing of $f$-variables is implemented per appendix \protect{\ref{sec:notation}}.  The matrix is $nQ \times (nQ)^3$. Nonzero terms may appear in the black $Q\times Q^3$ blocks.  The black blocks have rows corresponding to $f$-variables and columns corresponding to $f^3$-variables that share the same grid point $\mathbf{x}$. The patterned blocks are size $Q \times n^2Q^3$ and will be zero, implying there will be some all-zero columns.  The black blocks are all the same. We later refer to one instance of a black block as $F_3^{\mathbf{x}} \in \mathbb{R}^{Q \times Q^3}$.}
\label{fig:f_3_structure}
\end{figure}

We begin our analysis by expanding the differential equation for $\frac{\partial f_i(\mathbf{x})}{\partial t}$ into separate terms.

\begin{subequations}
\begin{align}
\frac{\partial f_i(\mathbf{x}) }{\partial t} 
&=
\underbrace{
T_{s[1]}
}_{\text{streaming term }}
+ 
\underbrace{
\frac{-1}{\tau} \left( f_i(\mathbf{x}) - f^{\text{eq}}_i(\mathbf{x}) \right)
}_{\text{collision term}}
\\
&\approx
T_{s[1]}
+ 
\underbrace{
\frac{-1}{\tau} \left( f_i(\mathbf{x}) - \tilde{f}^{\text{eq}}_i(\mathbf{x}) \right)
}_{\text{collision term with approximation } \tilde{f}^{\text{eq}}_i(\mathbf{x}) \text{ from \eqref{eq:lbm-feq-approx}}}
\\
&\approx
T_{s[1]} + 
\underbrace{
\frac{-1}{\tau} \left( 
f_i(\mathbf{x}) - w_i \left( \rho + 3 \rho \mathbf{u}\cdot \mathbf{c}_i + \frac{9}{2}
\underbrace{(2-\rho)}_{\text{See \eqref{eq:lbm-approx-inv-rho}}}(\rho \mathbf{u} \cdot \mathbf{c}_i)^2 + \frac{-3}{2} \underbrace{(2-\rho)}_{\text{See \eqref{eq:lbm-approx-inv-rho}}} ( \rho \mathbf{u} \cdot \rho \mathbf{u}) \right)
\right) 
}_{\text{approximate collision term}}
\\
&\approx T_{s[1]}
+ 
\underbrace{
\left( \frac{-1}{\tau} \right) f_i(\mathbf{x})
}_{T_{0[1]}}
+
\underbrace{
\left( \frac{w_i}{\tau} \right) \rho 
}_{T_{1[1]}}
+ 
\underbrace{
\left ( \frac{3w_i}{\tau} \right) 
\left( \rho \mathbf{u}\cdot \mathbf{c}_i \right)
}_{T_{2[1]}} 
\\
&\hspace{3cm}+
\underbrace{
\left( \frac{9w_i}{2\tau} \right) (2-\rho)
(\rho \mathbf{u} \cdot \mathbf{c}_i)^2
}_{T_3 = T_{3[2]} + T_{3[3]}}
+
\underbrace{
\left ( \frac{-3w_i}{2\tau} \right) (2-\rho) ( \rho \mathbf{u} \cdot \rho \mathbf{u}) 
}_{T_4 = T_{4[2]}+ T_{4[3]}}
\notag
\end{align}
\label{eq:lbe-diff-eq-expanded}
\end{subequations}
Equation \eqref{eq:lbe-diff-eq-expanded} corresponds to one ``row'' of equation \eqref{eq:lbm-multilinear-algebra} for the row associated with variable $f_i(\mathbf{x})$. The streaming term $T_{s[1]}$ will contain coefficients for first-order monomials that will be elements of the $S$ matrix.
Terms $T_{0[1]}, T_{1[1]}$, and $T_{2[1]}$  will contain the coefficients for first-order monomials that will be elements in the $F_1$ matrix (hence the subscript $[1]$).  Terms $T_3$ and $T_4$ will contain coefficients for second- and third-order monomials that will be elements of the $F_2$ and $F_3$ matrices.

Consider the streaming term $T_{s[1]}:$
\begin{align}
T_{s[1]} 
&=
\begin{cases}
f_i (\mathbf{x} - \mathbf{c}_i ) - f_i (\mathbf{x}) 
&\text{ if } \mathbf{x} \text{ is a fluid node}\\
&\text{ and } \mathbf{x} - \mathbf{c}_i \text{ is a fluid node},
\\
f_j (\mathbf{x}) - f_i (\mathbf{x}) 
&\text{ if } \mathbf{x} \text{ is a fluid node}\\
&\text{ and } \mathbf{x} - \mathbf{c}_i \text{ is a solid node}\\
&\text{ and } \mathbf{c}_j = - \mathbf{c}_i,
\\
0
&\text{ else. }
\end{cases}
\label{eq:T_{s[1]}}
\end{align}
Our goal is to develop a function $\tilde{S}$ whose input is a tuple of first-order variables (i.e., first-order monomials) $(f_i(\mathbf{x_\alpha}), f_j(\mathbf{x}_\beta) )$, and returns the element in  of the matrix $S$ (from equation \eqref{eq:lbm-multilinear-algebra}), in row $r$ and column $c$, where $r = \mathcal{L}_1(f_i(\mathbf{x}_\alpha) )$ and $c = \mathcal{L}_1(f_j(\mathbf{x}_\beta ))$.
\begin{align}
\tilde{S}( f_i(\mathbf{x}_\alpha), f_j(\mathbf{x}_\beta) )
&=
\begin{cases}
-1 
&\text{ if } \mathbf{x}_\beta = \mathbf{x}_\alpha \text{ (local to node $\mathbf{x}_\alpha$)}\\
&\text{ and } \mathbf{x}_\alpha \text{ is a fluid node}\\
&\text{ and } \mathbf{c}_j = \mathbf{c}_i \neq \mathbf{0} \text{ (same velocity vector)},\\
1 
&\text{ if } \mathbf{x}_\beta = \mathbf{x}_\alpha - \mathbf{c}_i \text{ (not local to node $\mathbf{x}_\alpha$)}\\
&\text{ and } \mathbf{x}_\alpha \text{ is a fluid node}\\
&\text{ and } \mathbf{x}_\beta \text{ is a fluid node}\\
&\text{ and } \mathbf{c}_j = \mathbf{c}_i \neq \mathbf{0} \text{ (same velocity vector)},\\
1
&\text{ if } \mathbf{x}_\beta = \mathbf{x}_\alpha \text{ (local to node $\mathbf{x}_\alpha$)}\\
&\text{ and } \mathbf{x}_\alpha \text{ is a fluid node}\\
&\text{ and } \mathbf{x}_\alpha - \mathbf{c}_i \text{ is a solid node}\\
&\text{ and } \mathbf{c}_j = -\mathbf{c}_i \neq \mathbf{0} \text{ (reverse velocity vector)},\\
0
&\text{ else.}
\end{cases}
\label{eq:tildeS}
\end{align}

Next, we consider collision terms $T_{0[1]}$ through $T_4$.  Term $T_{0[1]}$ is trivial.  First consider $T_{1[1]}$:
\begin{align}
T_{1[1]} 
&=
\left( \frac{w_i}{\tau} \right)
\sum_j f_j(\mathbf{x}),
\label{eq:T_{1[1]}}
\end{align}
when \eqref{eq:lbm-rho} is substituted for $\rho$.

\begin{align}
T_{2[1]} 
&=
\left ( \frac{3w_i}{\tau} \right)  \sum_j ( \mathbf{c}_j \cdot \mathbf{c}_i) f_j(\mathbf{x})
\label{eq:T_{2[1]}}
\end{align}
when \eqref{eq:lbm-u-2} is substituted for $\rho \mathbf{u}$.

Now we define functions $\tilde{T}_{0[1]}$, $\tilde{T}_{1[1]}$, and $\tilde{T}_{2[1]}$, which accept a pair of first-order monomials $(m_r, m_c)$ and return only the coefficients of $m_c$ from terms $T_{0[1]}, T_{1[1]}$, and $T_{2[1]}$, respectively.  

\begin{align}
\tilde{T}_{0[1]} ( f_i(\mathbf{x}_\alpha ) , f_j(\mathbf{x}_\beta ) )
&= 
\begin{cases}
\frac{-1}{\tau} &\text{ if } \mathbf{x}_\alpha = \mathbf{x}_\beta \text{ and } i = j, \\
0 &\text{ else.}
\end{cases}
\label{eq:tildeT_{0[1]}}
\end{align}

\begin{align}
\tilde{T}_{1[1]} ( f_i(\mathbf{x}_\alpha) , f_j(\mathbf{x}_\beta ) )
&= 
\begin{cases}
\frac{w_i}{\tau} &\text{ if } \mathbf{x}_\alpha = \mathbf{x}_\beta, \\
0 &\text{ else.}
\end{cases}
\label{eq:tildeT_{1[1]}}
\end{align}

\begin{align}
\tilde{T}_{2[1]} ( f_i(\mathbf{x}_\alpha) , f_j(\mathbf{x}_\beta ) )
&= 
\begin{cases}
\frac{3w_i}{\tau}(\mathbf{c}_j \cdot \mathbf{c}_i) &\text{ if } \mathbf{x}_\alpha = \mathbf{x}_\beta, \\
0 &\text{ else.}
\end{cases}
\label{eq:tildeT_{2[1]}}
\end{align}

Armed with $\tilde{T}_{0[1]}$, $\tilde{T}_{1[1]}$, and $\tilde{T}_{2[1]}$, we can define $\tilde{F}_1( m_r , m_c)$ for first-order monomials $m_r, m_c$.  $\tilde{F}_1( m_r , m_c)$ will return the element of the $F_1$ matrix in row $r = \mathcal{L}_1(m_r)$ and column $c = \mathcal{L}_1(m_c)$.
\begin{equation}
\tilde{F}_1(m_r, m_c) = 
\underbrace{\tilde{T}_{0[1]}( m_r , m_c )}_{\text{\eqref{eq:tildeT_{0[1]}}}}
+ \underbrace{\tilde{T}_{1[1]}( m_r , m_c )}_{\text{\eqref{eq:tildeT_{1[1]}}}}
+ \underbrace{\tilde{T}_{2[1]}( m_r , m_c )}_{\text{\eqref{eq:tildeT_{2[1]}}}}
\label{eq:tildeF_1}
\end{equation}

We continue our analysis on terms $T_3$ and $T_4$ with a similar goal.  First we consider $T_3.$
\begin{align}
T_3 
&= 
\underbrace{
\left( \frac{9w_i}{\tau} \right)(\rho \mathbf{u} \cdot \mathbf{c}_i)^2
}_{T_{3[2]}}
+ 
\underbrace{
\left( \frac{-9w_i}{2\tau} \right) \rho 
(\rho \mathbf{u} \cdot \mathbf{c}_i)^2,
}_{T_{3[3]}}
\label{eq:T_3}
\end{align}
$T_{3[2]}$ is a second-order polynomial with second-order monomials and $T_{3[3]}$ is a third-order polynomial with third-order monomials.  Note the subscripts $[2]$ and $[3]$.

\begin{subequations}
\begin{align}
T_{3[2]} 
&= 
\left( \frac{9w_i}{\tau} \right)(\rho \mathbf{u} \cdot \mathbf{c}_i)^2 \\
&= 
\left( 
    \frac{9w_i}{\tau} 
\right)
\left( 
    \sum_j (\mathbf{c}_i \cdot \mathbf{c}_j) f_j(\mathbf{x}) 
\right)^2\\
&= 
\left( 
    \frac{9w_i}{\tau} 
\right)
\left( 
    \sum_{j,k} (\mathbf{c}_i \cdot \mathbf{c}_j) (\mathbf{c}_i \cdot \mathbf{c}_k) f_j(\mathbf{x})f_k(\mathbf{x}) 
\right)
\end{align}
\label{eq:T_3[2]}
\end{subequations}

The expansion in \eqref{eq:T_3[2]} includes potentially nonzero coefficient for degenerate monomial terms.  If we accept \eqref{eq:T_3[2]} as is, we can then construct $\tilde{T}_{3[2]\text{-dense}}$, which will produce a potentially nonzero element of the $F_2$ matrix for each degenerate second-order monomial (where the degenerate second-order monomials correspond to distinct columns of the matrix).  The input to $\tilde{T}_{3[2]\text{-dense}}$ is a pair of monomials $(m_r, m_c)$, where $m_r$ is a first-order monomial and $m_c$ is a second-order monomial.
\begin{subequations}
\begin{align}
& \tilde{T}_{3[2]-\text{dense}} ( f_i(\mathbf{x}_\alpha ) , f_j(\mathbf{x}_\beta )f_k(\mathbf{x}_\gamma ) ) = \\
&\begin{cases}
\left(
    \frac{9w_i}{\tau} 
\right) 
(\mathbf{c}_i \cdot \mathbf{c}_j) (\mathbf{c}_i \cdot \mathbf{c}_k) 
  &\text{ if } \mathbf{x}_\alpha = \mathbf{x}_\beta = \mathbf{x}_\gamma,\\
0 & \text{ else.} 
\end{cases}
\end{align}
\label{eq:tildeT_3[2]_dense}
\end{subequations}

If we want to produce the sparsest $F_2$ matrix possible, then one method is to construct $\tilde{T}_{3[2]\text{-sparse}}$ as follows.
\begin{subequations}
\begin{align}
& \tilde{T}_{3[2]\text{-sparse}} ( f_i(\mathbf{x}_\alpha ) , f_j(\mathbf{x}_\beta )f_k(\mathbf{x}_\gamma ) ) = \\
&\begin{cases}
\left(
    \frac{9w_i}{\tau} 
\right) 
(\mathbf{c}_i \cdot \mathbf{c}_j)^2
& \text{ if } \mathbf{x}_\alpha = \mathbf{x}_\beta = \mathbf{x}_\gamma\\
 & \text{ and }
 \mathcal{L}_1(f_j(\mathbf{x}_{\beta})) =
 \mathcal{L}_1(f_k(\mathbf{x}_{\gamma})), \\
\left(
    \frac{9w_i}{\tau} 
\right) 
(2) (\mathbf{c}_i \cdot \mathbf{c}_j) (\mathbf{c}_i \cdot \mathbf{c}_k) 
& \text{ if } \mathbf{x}_\alpha = \mathbf{x}_\beta = \mathbf{x}_\gamma\\
 & \text{ and }
 \mathcal{L}_1(f_j(\mathbf{x}_{\beta})) <
 \mathcal{L}_1(f_k(\mathbf{x}_{\gamma})), \\
0 & \text{ else.} 
\end{cases}
\end{align}
\label{eq:tildeT_3[2]_sparse}
\end{subequations}
The inclusion of the clause $\mathcal{L}_1(f_j(\mathbf{x}_{\beta})) =
\mathcal{L}_1(f_k(\mathbf{x}_{\gamma}))$ implies that $f_j(\mathbf{x}_\beta ) = f_k(\mathbf{x}_\gamma )$ and the second-order monomial in question is the unique $f_j(\mathbf{x}_\beta )^2$.  The inclusion of the clause $\mathcal{L}_1(f_j(\mathbf{x}_{\beta})) <
\mathcal{L}_1(f_k(\mathbf{x}_{\gamma}))$ implies that we only assign a potentially nonzero coefficient to one of the two degenerate second-order monomials and the other degenerate monomial will be assigned a coefficient of zero.

Next we consider term $T_{3[3]}$ of \eqref{eq:T_3} which contains third-order monomials.
\begin{subequations}
\begin{align}
T_{3[3]} &= 
\left( \frac{-9w_i}{2\tau} \right) \rho 
(\rho \mathbf{u} \cdot \mathbf{c}_i)^2 \\
T_{3[3]} &= 
\left( 
    \frac{-9w_i}{2\tau}
\right) 
\left ( 
    \sum_j f_j(\mathbf{x} ) 
\right) 
\left( 
    \sum_k (\mathbf{c}_i \cdot \mathbf{c}_k)f_k(x)
\right)^2\\
T_{3[3]} &= 
\left( 
    \frac{-9w_i}{2\tau}
\right) 
\left ( 
    \sum_j f_j(\mathbf{x} ) 
\right) 
\left( 
    \sum_{k,\ell} (\mathbf{c}_i \cdot \mathbf{c}_k)(\mathbf{c}_i \cdot \mathbf{c}_\ell)f_k(x)f_\ell(x)
\right)
\label{eq:T_{3[3]}_expanded_puc2}
\\
T_{3[3]} &= 
\left( \frac{-9w_i}{2\tau} \right) 
\sum_{k,\ell} (\mathbf{c}_i \cdot \mathbf{c}_k)(\mathbf{c}_i \cdot \mathbf{c}_\ell)\underbrace{f_0(\mathbf{x})}f_k(\mathbf{x})f_\ell(\mathbf{x})
\label{eq:T_{3[3]}_distributed_start}
\\
&\quad \quad \vdots
\notag
\\
&\quad \quad +
\left( \frac{-9w_i}{2\tau} \right)
\sum_{k,\ell} (\mathbf{c}_i \cdot \mathbf{c}_k)(\mathbf{c}_i \cdot \mathbf{c}_\ell)\underbrace{f_j(\mathbf{x})}f_k(\mathbf{x})f_\ell(\mathbf{x})
\notag
\\
&\quad \quad \vdots
\notag
\\
&\quad \quad +
\left( \frac{-9w_i}{2\tau} \right)
\sum_{k,\ell} (\mathbf{c}_i \cdot \mathbf{c}_k)(\mathbf{c}_i \cdot \mathbf{c}_\ell)\underbrace{f_{Q-1}}(\mathbf{x})f_k(\mathbf{x})f_\ell(\mathbf{x}) 
\label{eq:T_{3[3]}_distributed_end}
\end{align}
\label{eq:T_3[3]}
\end{subequations}
And we can clearly see the coefficient we seek for the third-order monomial $f_j(\mathbf{x})f_k(\mathbf{x})f_\ell(\mathbf{x})$.  The expansion in \eqref{eq:T_{3[3]}_expanded_puc2} included all second-order (including degenerate) monomials.  The distribution of the $(\rho \mathbf{u}\cdot \mathbf{c}_i)^2$ term into the $\rho$ term then created all third-order (including degenerate) monomials.  Define $\tilde{T}_{3[3]\text{-dense}}$ for input pair of monomials $(m_r, m_c)$ where $m_r$ is a first-order monomial and $m_c$ is a third-order monomial as:
\begin{align}
\label{eq:tildeT_3[3]_dense}
& \tilde{T}_{3[3]\text{-dense}} ( f_i(\mathbf{x}_\alpha) , f_j(\mathbf{x}_\beta )f_k(\mathbf{x}_\gamma )f_\ell(\mathbf{x}_\delta) ) = \\
&\begin{cases}
\left( 
\frac{ -9 w_i }{2 \tau}
\right)
(\mathbf{c}_i \cdot \mathbf{c}_k) (\mathbf{c}_i \cdot \mathbf{c}_\ell) 
&\text{ if } \mathbf{x}_\alpha = \mathbf{x}_\beta = \mathbf{x}_\gamma = \mathbf{x}_\delta, \\
0 & \text{ else.} \\
\end{cases}
\notag
\end{align}

To construct the sparse version of $\tilde{T}_{3[3]}$ we need to consider the factors of the monomial $m_c = f_j(\mathbf{x}_\beta )f_k(\mathbf{x}_\gamma )f_\ell(\mathbf{x}_\delta)$.  If the third-order monomial $m_c$ has one unique factor, e.g., $m_c = f_j(\mathbf{x}_\beta)^3$, then $m_c$ only appears once in lines \eqref{eq:T_{3[3]}_distributed_start}-\eqref{eq:T_{3[3]}_distributed_end} and so the set of coefficients for degenerate variants of $m_c$ only has one item.  We also consider third-order monomials $m_c$ of the form $m_c= f_j(\mathbf{x}_\beta)^2f_k(\mathbf{x}_\gamma)$ with two distinct factors:  $f_j(\mathbf{x}_\beta)$ and $f_k(\mathbf{x}_\gamma)$ where $f_j(\mathbf{x}_\beta) \neq f_k(\mathbf{x}_\gamma)$, and of the form $m_c = f_j(\mathbf{x}_\beta)f_k(\mathbf{x}_\gamma)f_\ell(\mathbf{x}_\delta)$ with three distinct factors $f_j(\mathbf{x}_\beta)$, $f_k(\mathbf{x}_\gamma)$ and $f_\ell(\mathbf{x}_\delta)$, where $f_j(\mathbf{x}_\beta) \neq f_k(\mathbf{x}_\gamma) \neq f_\ell(\mathbf{x}_\delta)$.

\begin{align}
\label{eq:tildeT_3[3]_sparse}
& \tilde{T}_{3[3]\text{-sparse}} ( f_i(\mathbf{x}_\alpha) , f_j(\mathbf{x}_\beta )f_k(\mathbf{x}_\gamma )f_\ell(\mathbf{x}_\delta) ) = \\
&\begin{cases}
\left( 
    \frac{ -9 w_i }{2 \tau}
\right)
(\mathbf{c}_i \cdot \mathbf{c}_j)^2 
&\text{ if } \mathbf{x}_\alpha = \mathbf{x}_\beta = \mathbf{x}_\gamma = \mathbf{x}_\delta \\
& \text{ and } 
\mathcal{L}_1( f_j(\mathbf{x}_\beta )) = 
\mathcal{L}_1( f_k(\mathbf{x}_\gamma )) \\
& \text{ and } 
\mathcal{L}_1( f_k(\mathbf{x}_\gamma )) = 
\mathcal{L}_1( f_\ell(\mathbf{x}_\delta )), \\
\left( 
    \frac{ -9 w_i }{2 \tau}
\right)
\left( 
    (\mathbf{c}_i \cdot \mathbf{c}_j)^2 + 2(\mathbf{c}_i \cdot \mathbf{c}_j)(\mathbf{c}_i \cdot \mathbf{c}_\ell) 
\right)
&\text{ if } \mathbf{x}_\alpha = \mathbf{x}_\beta = \mathbf{x}_\gamma = \mathbf{x}_\delta \\
& \text{ and } 
\mathcal{L}_1( f_j(\mathbf{x}_\beta )) = 
\mathcal{L}_1( f_k(\mathbf{x}_\gamma )) \\
& \text{ and } 
\mathcal{L}_1( f_k(\mathbf{x}_\gamma )) < 
\mathcal{L}_1( f_\ell(\mathbf{x}_\delta )), \\
\left( 
    \frac{ -9 w_i }{2 \tau}
\right)
\left( 
    2(\mathbf{c}_i \cdot \mathbf{c}_j)(\mathbf{c}_i \cdot \mathbf{c}_k)
    +
    2(\mathbf{c}_i \cdot \mathbf{c}_j)(\mathbf{c}_i \cdot \mathbf{c}_\ell)
    +
    2(\mathbf{c}_i \cdot \mathbf{c}_k)(\mathbf{c}_i \cdot \mathbf{c}_\ell)
\right)
&\text{ if } \mathbf{x}_\alpha = \mathbf{x}_\beta = \mathbf{x}_\gamma = \mathbf{x}_\delta \\
& \text{ and } 
\mathcal{L}_1( f_j(\mathbf{x}_\beta )) < 
\mathcal{L}_1( f_k(\mathbf{x}_\gamma )) \\
& \text{ and } 
\mathcal{L}_1( f_k(\mathbf{x}_\gamma )) < 
\mathcal{L}_1( f_\ell(\mathbf{x}_\delta )), \\
0 & \text{ else.}
\end{cases}
\notag
\end{align}

Next, consider term $T_4$ from \eqref{eq:lbe-diff-eq-expanded}:
\begin{align}
T_4 = 
\underbrace{
\left ( \frac{-3w_i}{\tau} \right) (\rho\mathbf{u} \cdot \rho\mathbf{u})
}_{T_{4[2]}}
+
\underbrace{
\left ( \frac{3w_i}{2\tau} \right) \rho (\rho\mathbf{u} \cdot \rho\mathbf{u}),
}_{T_{4[3]}}
\label{eq:T_4}
\end{align}
where $T_{4[2]}$ is a second-order polynomial and $T_{4[3]}$ is a third-order polynomial.

\begin{subequations}
\begin{align}
T_{4[2]} &= 
\left ( \frac{-3w_i}{\tau} \right) (\rho\mathbf{u} \cdot \rho\mathbf{u}) \\
T_{4[2]} &= 
\left ( \frac{-3w_i}{\tau} \right) 
\left( \sum_j f_j(\mathbf{x}) \mathbf{c}_j \right) \cdot
\left( \sum_j f_j(\mathbf{x}) \mathbf{c}_j \right)\\
T_{4[2]} &= 
\left ( \frac{-3w_i}{\tau} \right) 
\left( 
\left( \sum_j f_j(\mathbf{x}) c_{ix} \right)^2 
+ \left( \sum_j f_j(\mathbf{x}) c_{iy} \right)^2 
+\left( \sum_j f_j(\mathbf{x}) c_{iz} \right)^2 
\right)\\
T_{4[2]} &= 
\left ( \frac{-3w_i}{\tau} \right) 
\sum_{j,k} 
c_{jx} c_{kx} f_j(\mathbf{x})f_k(\mathbf{x})
\label{eq:T_4[2]_expanded_x}
\\
&\quad \quad+
\left ( \frac{-3w_i}{\tau} \right) 
\sum_{j,k} 
c_{jy} c_{ky} f_j(\mathbf{x})f_k(\mathbf{x})
\label{eq:T_4[2]_expanded_y}\\
&\quad \quad+
\left ( \frac{-3w_i}{\tau} \right) 
\sum_{j,k} 
c_{jz} c_{kz} f_j(\mathbf{x})f_k(\mathbf{x})
\label{eq:T_4[2]_expanded_z}
\\
T_{4[2]} 
&= 
\left ( \frac{-3w_i}{\tau} \right) 
\sum_{j,k} (\mathbf{c}_j \cdot \mathbf{c}_k) f_j(\mathbf{x})f_k(\mathbf{x})
\end{align}
\label{eq:T_4[2]}
\end{subequations}
Recall that $\mathbf{c}_i = (c_{ix} , c_{iy} , c_{iz} )$.  See table \ref{tab:D3Q27-constants} for a complete list of $\mathbf{c}_i$ lattice velocity vectors.  Similar to $T_{3[2]}$ in \eqref{eq:T_3[2]}, our expansion of term $T_{4[2]}$ in \eqref{eq:T_4[2]} contains all degenerate second-order monomials.  We can then choose to construct a dense $F_2$ by relying on $\tilde{T}_{4[2]\text{-dense}}$ in \eqref{eq:tildeT_4[2]_dense} or a sparse $F_2$ matrix by relying on $\tilde{T}_{4[2]\text{-sparse}}$ in \eqref{eq:tildeT_4[2]_sparse}.

\begin{align}
\label{eq:tildeT_4[2]_dense}
& \tilde{T}_{4[2]\text{-dense}} ( f_i(\mathbf{x}_{\alpha}) , f_j(\mathbf{x}_{\beta})f_k(\mathbf{x}_{\gamma}) ) ) = & \\
& \begin{cases}
\left( \frac{-3w_i}{\tau} \right) (\mathbf{c}_j \cdot \mathbf{c}_k) 
&\text{ if } \mathbf{x}_{\alpha} = \mathbf{x}_{\beta} = \mathbf{x}_{\gamma},
\\
0 &\text{ else.} 
\end{cases}
\notag
\end{align}

\begin{align}
    \label{eq:tildeT_4[2]_sparse}
    &
    \tilde{T}_{4[2]\text{-sparse}} ( f_i(\mathbf{x}_{\alpha}) , f_j(\mathbf{x}_{\beta})f_k(\mathbf{x}_{\gamma}) ) ) 
    =&
    \\
    & 
    \begin{cases}
    \left( \frac{-3w_i}{\tau} \right)
    (\mathbf{c}_j \cdot \mathbf{c}_k) 
    &\text{ if } \mathbf{x}_{\alpha} = \mathbf{x}_{\beta} = \mathbf{x}_{\gamma} \\
    &\text{ and } 
    \mathcal{L}_1(f_j(\mathbf{x}_{\beta})) =
    \mathcal{L}_1(f_k(\mathbf{x}_{\gamma})),
    \\
    \left( \frac{-3w_i}{\tau} \right)
    (2)(\mathbf{c}_j \cdot \mathbf{c}_k) 
    &\text{ if } \mathbf{x}_{\alpha} = \mathbf{x}_{\beta} = \mathbf{x}_{\gamma}\\
    &\text{ and } 
    \mathcal{L}_1(f_j(\mathbf{x}_{\beta})) <
    \mathcal{L}_1(f_k(\mathbf{x}_{\gamma})),
    \\
    0 &\text{ else.} 
    \end{cases}
    \notag
\end{align}

Consider term $T_{4[3]}$ of \eqref{eq:T_4}:
\begin{subequations}
\begin{align}
T_{4[3]} &= 
\left ( \frac{3w_i}{2\tau} \right) \rho (\rho\mathbf{u} \cdot \rho\mathbf{u}) \\
T_{4[3]} &= 
\left ( \frac{3w_i}{2\tau} \right) 
\left(
    \rho
\right)
\left( 
    \sum_{k,\ell} (\mathbf{c}_k \cdot \mathbf{c}_\ell) f_k(\mathbf{x})f_\ell(\mathbf{x})
\right)\label{eq:T_4[3]_algebra_from_T_4[2]}\\
T_{4[3]} &= 
\left ( \frac{3w_i}{2\tau} \right) 
\left(
    \sum_j f_j(\mathbf{x})
\right)
\left( 
    \sum_{k,\ell} (\mathbf{c}_k \cdot \mathbf{c}_\ell) f_k(\mathbf{x})f_\ell(\mathbf{x})
\right)\\
T_{4[3]} &= 
\left ( \frac{3w_i}{2\tau} \right) 
\sum_{k,\ell} (\mathbf{c}_k \cdot \mathbf{c}_\ell) \underbrace{f_0(\mathbf{x})}f_k(\mathbf{x})f_\ell(\mathbf{x})\\
&\quad \quad \vdots
\notag
\\
&\quad \quad+ 
\left ( \frac{3w_i}{2\tau} \right) 
\sum_{k,\ell} (\mathbf{c}_k \cdot \mathbf{c}_\ell) \underbrace{f_j(\mathbf{x})}f_k(\mathbf{x})f_\ell(\mathbf{x})
\notag
\\
&\quad \quad \vdots
\notag
\\
&\quad \quad + 
\left ( \frac{3w_i}{2\tau} \right) 
\sum_{k,\ell} (\mathbf{c}_k \cdot \mathbf{c}_\ell) \underbrace{f_{Q-1}(\mathbf{x})}f_k(\mathbf{x})f_\ell(\mathbf{x})
\notag
\end{align}
\label{eq:T_4[3]}
\end{subequations}

Line \eqref{eq:T_4[3]_algebra_from_T_4[2]} is derived by substituting similar steps of \eqref{eq:T_4[2]}.  Similar to our analysis of $T_{3[3]}$, the expansion of $T_{4[3]}$ in \eqref{eq:T_4[3]} includes all third-order monomials, including degenerate variants.  We then proceed to construct $\tilde{T}_{4[3]\text{-dense}}$, which assigns a potentially nonzero coefficient to each degenerate third-order monomial (corresponding to multiple columns in the $F_3$ matrix), and we construct $\tilde{T}_{4[3]\text{-sparse}}$, which assigns a potentially nonzero coefficient to one specific representation of a third-order monomial (corresponding to only one column in the $F_3$ matrix).

\begin{align}
\label{eq:tildeT_4[3]_dense}
& \tilde{T}_{4[3]\text{-dense}} ( f_i(\mathbf{x}_\alpha) , f_j(\mathbf{x}_\beta )f_k(\mathbf{x}_\gamma )f_\ell(\mathbf{x}_\delta) ) = \\
&\begin{cases}
\left ( \frac{3w_i}{2\tau} \right) 
\left( 
\mathbf{c}_k \cdot \mathbf{c}_\ell
\right)
&\text{ if } \mathbf{x}_\alpha = \mathbf{x}_\beta = \mathbf{x}_\gamma = \mathbf{x}_\delta,\\
0 & \text{ else.} 
\end{cases}
\notag
\end{align}

To construct $\tilde{T}_{4[3]\text{-sparse}}$, we again need to consider third-order monomials of the form $m_c = f_j(\mathbf{x}_\beta)^3$, $m_c = f_j(\mathbf{x}_\beta )^2f_k(\mathbf{x}_\gamma )$, or $m_c = f_j(\mathbf{x}_\beta )f_k(\mathbf{x}_\gamma )f_\ell(\mathbf{x}_\delta)$, where $f_j(\mathbf{x}_\beta )\neq  f_k(\mathbf{x}_\gamma ) \neq f_\ell(\mathbf{x}_\delta)$ are unique factors.   

\begin{align}
\label{eq:tildeT_4[3]_sparse}
& \tilde{T}_{4[3]\text{-sparse}} ( f_i(\mathbf{x}_\alpha) , f_j(\mathbf{x}_\beta )f_k(\mathbf{x}_\gamma )f_\ell(\mathbf{x}_\delta) ) = \\
&\begin{cases}
\left ( \frac{3w_i}{2\tau} \right)  ( \mathbf{c}_j \cdot \mathbf{c}_j)
&\text{ if } \mathbf{x}_\alpha = \mathbf{x}_\beta = \mathbf{x}_\gamma = \mathbf{x}_\delta \\
& \text{ and } 
\mathcal{L}_1( f_j(\mathbf{x}_\beta )) = 
\mathcal{L}_1( f_k(\mathbf{x}_\gamma )) \\
& \text{ and } 
\mathcal{L}_1( f_k(\mathbf{x}_\gamma )) = 
\mathcal{L}_1( f_\ell(\mathbf{x}_\delta )),
\\
\left ( \frac{3w_i}{2\tau} \right)  
\left( 
    \mathbf{c}_j \cdot \mathbf{c}_j + 2(\mathbf{c}_j \cdot \mathbf{c}_k)
\right)
&\text{ if } \mathbf{x}_\alpha = \mathbf{x}_\beta = \mathbf{x}_\gamma = \mathbf{x}_\delta \\
& \text{ and } 
\mathcal{L}_1( f_j(\mathbf{x}_\beta )) = 
\mathcal{L}_1( f_k(\mathbf{x}_\gamma )) \\
& \text{ and } 
\mathcal{L}_1( f_k(\mathbf{x}_\gamma )) < 
\mathcal{L}_1( f_\ell(\mathbf{x}_\delta )),
\\
\left ( \frac{3w_i}{2\tau} \right) 
\left(
 2(\mathbf{c}_j \cdot \mathbf{c}_k) 
 + 2(\mathbf{c}_j \cdot \mathbf{c}_\ell)
 + 2(\mathbf{c}_k \cdot \mathbf{c}_\ell)
\right)
&\text{ if } \mathbf{x}_\alpha = \mathbf{x}_\beta = \mathbf{x}_\gamma = \mathbf{x}_\delta \\& \text{ and } 
\mathcal{L}_1( f_j(\mathbf{x}_\beta )) < 
\mathcal{L}_1( f_k(\mathbf{x}_\gamma )) \\
& \text{ and } 
\mathcal{L}_1( f_k(\mathbf{x}_\gamma )) < 
\mathcal{L}_1( f_\ell(\mathbf{x}_\delta )),
\\
0 & \text{ else.}
\notag
\end{cases}
\end{align}


Now that we have derived $\tilde{T}_{3[2]}, \tilde{T}_{3[3]}, \tilde{T}_{4[2]}$, and $\tilde{T}_{4[3]}$ we can present the sparse and dense variants of functions $\tilde{F}_2$ and $\tilde{F}_3$.  $\tilde{F}_2$ is defined for a pair of monomials $m_r, m_c$, where $m_r$ is a first-order monomial and $m_c$ is a second-order monomial.  $\tilde{F}_2$ will return the element of the $F_2$ matrix at row $\mathcal{L}_1(m_r)$ and column $\mathcal{L}_2(m_c)$ as
\begin{align}
    \tilde{F}_{2\text{-dense}}( m_r , m_c)
    = 
    \underbrace{\tilde{T}_{3[2]\text{-dense}} ( m_r , m_c)}_{\text{\eqref{eq:tildeT_3[2]_dense}}}
    + 
    \underbrace{\tilde{T}_{4[2]\text{-dense}} (m_r , m_c)}_{\text{\eqref{eq:tildeT_4[2]_dense}}},
\label{eq:tildeF_2_dense}
\end{align}
or
\begin{align}
    \tilde{F}_{2\text{-sparse}}( m_r , m_c) = 
    \underbrace{\tilde{T}_{3[2]\text{-sparse}} ( m_r , m_c)}_{\text{\eqref{eq:tildeT_3[2]_sparse}}}
    + 
    \underbrace{\tilde{T}_{4[2]\text{-sparse}} (m_r , m_c)}_{\text{\eqref{eq:tildeT_4[2]_sparse}}}.
\label{eq:tildeF_2_sparse}
\end{align}

$\tilde{F}_3$ is defined for a pair of monomials $m_r, m_c$, where $m_r$ is a first-order monomial and $m_c$ is a third-order monomial.  $\tilde{F}_3$ will return the element of the $F_3$ matrix at row $\mathcal{L}_1(m_r)$ and column $\mathcal{L}_3(m_c)$ as
\begin{align}
    \tilde{F}_{3\text{-dense}}( m_r , m_c) 
    &= 
    \underbrace{\tilde{T}_{3[3]\text{-dense}} ( m_r , m_c)}_{\text{\eqref{eq:tildeT_3[3]_dense}}}
    + 
    \underbrace{\tilde{T}_{4[3]\text{-dense}} (m_r , m_c)}_{\text{\eqref{eq:tildeT_4[3]_dense}}}
\label{eq:tildeF_3_dense}
\end{align}
or
\begin{align}
    \tilde{F}_{3\text{-sparse}}( m_r , m_c) = 
    \underbrace{\tilde{T}_{3[3]\text{-sparse}} ( m_r , m_c)}_{\text{\eqref{eq:tildeT_3[3]_sparse}}}    
    + 
    \underbrace{\tilde{T}_{4[3]\text{-sparse}} (m_r , m_c)}_{\text{\eqref{eq:tildeT_4[3]_sparse}}}  .  
\label{eq:tildeF_3_sparse}
\end{align}

It is possible to construct other combinations of the sparse and dense variations of the functions.  We use the dense version of the functions for our analysis.

\section{Indexing streaming matrix elements}\label{sec:streaming}

The goal of this section is to develop relationships and expressions for row/column variable index pairs where nonzero entries appear in the streaming matrix $S$.  This section will use the variable ordering index implementation described in \eqref{eq:L_x-ordering-implementation}-\eqref{eq:L_1-ordering-inverse} from appendix \ref{sec:notation}.  In our discussion, we may refer to the $f$-variable at index $\mu \in \mathbb{Z}_{nQ}$ as $f_\mu = f_i(\mathbf{x}_\alpha) \equiv f_{i(\mu)}(\mathbf{x}_{\alpha(\mu)})$, where $i, \mathbf{x}_\alpha$, and $\mu$ satisfy our index mapping \eqref{eq:L_x-ordering-implementation}-\eqref{eq:L_1-ordering-inverse}.  When the situation is clear enough, we may drop the $i(\mu)$, $\alpha(\mu)$ notation and simply refer to positions and velocity vectors as $\mathbf{x}_\mu \equiv \mathbf{x}_{\alpha(\mu)}$ and $\mathbf{c}_{\mu} \equiv \mathbf{c}_{i(\mu)}$.

The streaming matrix $S$, has dimensions $nQ\times nQ$, and has the action of redistributing population within the position-velocity population distribution vector from $f_\eta\equiv f_{j(\eta)}(\mathbf{x}_{\beta(\eta)})$ to some $f_\mu\equiv f_{i(\mu)}(\mathbf{x}_{\alpha(\mu)})$.  Each row of $S$ has a $-1$ on the diagonal corresponding to outward streaming, and a single $+1$ corresponding to inward streaming, and all zeros elsewhere.  

The streaming matrix may be broken down into the outgoing and incoming streaming terms: \\
\begin{align}
    \label{eq:streaming_matrix}
    S=S^{\mathrm{out}}+S^{\mathrm{in}}.
\end{align}

The outgoing streaming terms are trivial, serving to remove population from each $f_\eta$:
\begin{equation}
    S^{\mathrm{out}}=-I.
\end{equation}

For rows/columns corresponding to zero velocity vectors (i.e., $i(\mu)=j(\eta)=0$), there should be no streaming (outgoing or incoming).  In this case, $f_\mu$ is decreased by $-1$ from $S_{\mu \mu}^{\mathrm{out}}$ and increased by $+1$ from $S_{\mu \mu}^{\mathrm{in}}$, the net result being a zero change.  This leads to zero rows/columns in the $S$ matrix, thus $S$ is not full rank.

We define a binary vector to flag solid (vs. fluid) nodes
\begin{equation}
   \mathcal{N}_\alpha(\mu) = 
\begin{cases}
     0 \text{ if } \mathbf{x}_{\alpha(\mu)} \text{ is a fluid node} \\
     1 \text{ if } \mathbf{x}_{\alpha(\mu)} \text{ is a solid node},
\end{cases}
\end{equation}
and then consider inbound streaming terms from node $\mathbf{x}_{\beta(\eta)}$ into $\mathbf{x}_{\alpha(\mu)}$:
\begin{align}
     S^{\mathrm{in}}_{\mu\eta} 
     & = 
     \underbrace{
     (1 - \mathcal{N}_\mu)
     (1 - \mathcal{N}_\eta)
     (\delta_{\mathbf{x}_\mu, \mathbf{x}_\eta+\mathbf{c}_\eta})
     (\delta_{\mathbf{c}_\mu,\mathbf{c}_\eta})
     }_{\text{$\mu$ and $\eta$ both correspond to fluid nodes}}
     +
     \underbrace{
     (\mathcal{N}_{\zeta})
     (1 - \mathcal{N}_\eta)
     (\delta_{\mathbf{x}_\mu,\mathbf{x}_\eta})
     (\delta_{\mathbf{c}_\mu,-\mathbf{c}_\eta})
     (\delta_{\mathbf{x}_\zeta,\mathbf{x}_\eta+\mathbf{c}_\eta})
     }_{\text{$\zeta$ corresponds to an adjacent solid node}}, 
     \label{eq:Sin}
\end{align}
with simplified notation $\mathbf{x}_\mu\equiv\mathbf{x}_{\alpha(\mu)}$ and $\mathbf{c}_\mu\equiv\mathbf{c}_{i(\mu)}$.  The fluid streaming scenario in the first term is depicted in figure \ref{fig:streaming-figure}.  The bounce-back scenario in the second term is depicted in figure \ref{fig:bb-figure} due to an adjacent solid node $\mathbf{x}_\zeta = \mathbf{x}_\eta + \mathbf{c}_\eta$.

Beginning with the first term, the condition $\mathbf{x}_\mu = \mathbf{x}_\eta+\mathbf{c}_\eta$ is satisfied for the above-specified position ordering when
\begin{align}
    \mu
    =
    \eta 
    + (Q)c_{\eta x}
    + (Qn_x)c_{\eta y}
    + (Qn_xn_y)c_{\eta z}, 
    \label{eq:mu-fluid}
\end{align}
where in the special case of periodic boundary conditions the addition of the second, third, and fourth terms must be carried out modulo $Qn_x$, modulo $Qn_x n_y$, and modulo $Qn_xn_yn_z$, respectively. The condition $\mathbf{c}_\mu=\mathbf{c}_\eta$ is likewise satisfied, since the velocity vector index $i$ \eqref{eq:L_1-ordering-inverse} does not change when variable index $\mu$ changes by an integer multiple of $Q$.

\begin{figure}
    \centering
    \begin{tikzpicture}
        
        \draw (6,0) 
        node[
            circle, 
            draw, 
            fill=white, 
            inner sep=3pt, 
            minimum size=5mm, 
            label=below:{$\mathbf{x}_{\beta(\eta)}$},
            label=above:{Fluid}
        ] 
        (F1) 
        {};
        
        \draw (12,0)
        node[
            circle, 
            draw, 
            fill=white, 
            inner sep=3pt, 
            minimum size=5mm, 
            label=below:{$\mathbf{x}_{\alpha(\mu)}= \mathbf{x}_{\beta(\eta)} + \mathbf{c}_{j(\eta)}$},
            label=above:{Fluid}
        ]
        (F2)
        {};

        \draw[->] (F1) -- node[above] {$\mathbf{c}_{j(\eta)} $} (9,0);
        \draw[->] (F2) -- node[above] {$\mathbf{c}_{i(\mu)} = \mathbf{c}_{j(\eta)} $} (15,0);
    \end{tikzpicture}
    \caption{Fluid streaming: the population $f_\mu$ (right) is replaced by the population $f_\eta$ (left). $f_\mu$ corresponds to the population at node $\mathbf{x}_{\alpha(\mu)}$ with velocity vector $\mathbf{c}_{i(\mu)}$.  $f_\eta$ corresponds to the population at node $\mathbf{x}_{\beta(\eta)}$ with velocity vector $\mathbf{c}_{j(\eta)}$.  In this case $\mathbf{c}_{i(\mu)} = \mathbf{c}_{j(\eta)}$.}
    \label{fig:streaming-figure}
\end{figure}


\begin{figure}
\centering
\begin{tikzpicture}

\draw (6,0) 
node[
    circle, 
    draw, 
    fill=white, 
    inner sep=3pt, 
    minimum size=5mm, 
    label=below:{$\mathbf{x}_{\alpha(\mu)} = \mathbf{x}_{\beta(\eta)}$},
    label=above:{Fluid}
] 
(F) 
{};

\draw (12,0)
node[
    circle, 
    draw, 
    fill=black, 
    inner sep=3pt, 
    minimum size=5mm, 
    label=below:{$\mathbf{x}_{\gamma(\zeta)} = \mathbf{x}_{\beta(\eta)} + \mathbf{c}_{j(\eta)}$},
    label=above:{Solid}
]
(S)
{};

\draw[->] (F) -- node[above] {$\mathbf{c}_{i(\eta)}$} (9,0);
\draw[->] (F) -- node[above] {$\mathbf{c}_{i(\mu)} = -\mathbf{c}_{j(\eta)} $} (3,0);

\end{tikzpicture}
\caption{Bounce back: the population $f_\mu$ (left-pointing arrow) is replaced by the population $f_\eta$ (right-pointing arrow). $f_\mu$ corresponds to the population at fluid node $\mathbf{x}_{\alpha(\mu)}=\mathbf{x}_{\beta(\eta)}$ with velocity vector $\mathbf{c}_{i(\mu)}$ traveling away from the solid.  $f_\eta$ corresponds to the population at fluid node $\mathbf{x}_{\beta(\eta)}$ with velocity vector $\mathbf{c}_{j(\eta)}$ traveling towards the solid.  In this case $\mathbf{c}_{i(\mu)} = -\mathbf{c}_{j(\eta)}$.}
\label{fig:bb-figure}
\end{figure}

In the second term of \eqref{eq:Sin}, we instead have the condition $\mathbf{x}_\mu = \mathbf{x}_\eta$ since bounce-back is a local process. However, it must first be determined whether the adjacent node at $\mathbf{x}_{\zeta} = \mathbf{x}_\eta+\mathbf{c}_\eta$ is solid. The expression for index $\zeta$ is the same as \eqref{eq:mu-fluid},
\begin{equation}
    \zeta = \eta + (Q)c_{\eta x} + (Qn_x) c_{\eta y} + (Qn_xn_y) c_{\eta z}, 
    \label{eq:zeta}
\end{equation}
with similar modulo operations for periodic boundaries.  

For the velocity vector ordering specified in appendix \ref{sec:D3Q27_lattice_constants}, the bounce-back condition $\mathbf{c}_\mu=-\mathbf{c}_\eta$ is satisfied  when
\begin{align}
    \mu & = \eta + c_{\eta x} + 3 c_{\eta y} + 9 c_{\eta z}
\label{eq:mu-solid}
\end{align}
which makes use of the fact that the lattice vector components $c_{i x}$, $c_{i y}$, and $c_{i z}$ cycle through values $\{1,-1,0\}$ with periodicities of 3, 9, and 27, respectively. The condition $\mathbf{x}_\mu = \mathbf{x}_\eta$ is likewise satisfied since it can be shown that
\begin{equation}
    0 \le i + c_{i x} + 3 c_{i y} + 9 c_{i z} \le 26, \forall i\in [0,26],
\end{equation}
and therefore $\mu$ is in the same $Q$-length position block as $\eta$.

The lattice vector components may be expressed as arithmetic functions of index $i(\mu)=\mu\bmod Q$ \eqref{eq:L_1-ordering-inverse}, consistent with the ordering specified in appendix \ref{sec:D3Q27_lattice_constants}:
\begin{subequations}
\begin{align}
c_{\mu x} &= \left[ (\mu \bmod Q) + 2\right] \bmod{3}  - 1 \\
c_{\mu y} &= \left[ (\mu \bmod Q)\bdiv 3 + 2\right] \bmod{3}  - 1 \\
c_{\mu z} &= \left[ (\mu \bmod Q)\bdiv 9 + 2\right] \bmod{3} - 1
\end{align}
\label{eq:c_mu}
\end{subequations}

The $+1$ entries of $S^{\mathrm{in}}_{\mu\eta}$ correspond to all row index $\mu$, column index $\eta$ pairings that satisfy either:
\begin{itemize}
    \item \eqref{eq:mu-fluid} with $\mathcal{N}_\mu=\mathcal{N}_\eta=0$, or
    \item \eqref{eq:mu-solid} with $\mathcal{N}_\eta=0$ and $\mathcal{N}_{\zeta}=1$ per \eqref{eq:zeta}
\end{itemize}

Note that in \eqref{eq:Sin} for $S^{\mathrm{in}}_{\mu\eta}$, both $(1-\mathcal{N}_{\mu})$ in the first term and $\mathcal{N}_{\zeta}$ in the second term require evaluating $\mathcal{N}$ for the same expression (\eqref{eq:mu-fluid} and \eqref{eq:zeta}). In fact, in the first term one could replace $(1- \mathcal{N}_\mu)$ with $(1- \mathcal{N}_{\zeta})$ since in this case $\mu=\zeta$. There is a one-to-one correspondence between $\eta$ and $\zeta$. We can think of $\mathcal{N}_{\zeta}$ as a flag that tells us whether the population streaming out from index $\eta$ will stream into index $\mu=\zeta$ (if $\mathbf{x}_{\zeta}$ is a fluid node) or instead stream into the index $\mu$ given by \eqref{eq:mu-solid} if $\mathbf{x}_{\zeta}$ is a solid node.

\subsection{Flow past a sphere}

In the simple example of flow past a sphere, we can provide an explicit expression for $\mathcal{N}(\mathbf{x})$. We consider a sphere of radius $r$, centered at position $\mathbf{x}_\mathrm{c}=(x_\mathrm{c},y_\mathrm{c},z_\mathrm{c})$. All nodes within the sphere are deemed solid, so we have
\begin{align}
   \label{eq:classical_function}
    \mathcal{N(\mathbf{x}_\alpha}) = H\big( r^2 -(x_\alpha - x_\mathrm{c})^2 - (y_\alpha - y_\mathrm{c})^2 - (z_\alpha - z_\mathrm{c})^2 \big),
\end{align}
where $H$ is the Heaviside step function.

\subsection{Rectangular prisms}

An explicit expression for $\mathcal{N}(\mathbf{x})$ can also be provided for the simple case of a rectangular prism spanning $(x_0,y_0,z_0)$ to $(x_0+L_x,y_0+L_y,z_0+L_z)$:
\begin{align}
   \label{eq:rec_prism}
    \mathcal{N(\mathbf{x}_\alpha}) =  &\Big[H\big( x_\alpha - x_0 \big) - H\big( x_\alpha - (x_0 + L_x) \big)\Big] \\
    & \times \Big[H\big( y_\alpha - y_0 \big) - H\big( y_\alpha - (y_0 + L_y) \big)\Big] \notag \\
    & \times \Big[H\big( z_\alpha - z_0 \big) - H\big( z_\alpha - (z_0 + L_z) \big)\Big]. \notag
\end{align}

 \section{Quantum circuit details}
\label{sec:quantum_compilation}

Quantum mechanics only affords us unitary operations.  We will use \emph{block encoding} to embed non-unitary matrices into unitary matrices and apply matrix arithmetic on the embedded operations.  Block encoding was introduced in \cite{Gilyen2019long}.

\begin{definition}[Block Encoding]\label{def:block_encoding}
    Suppose A is an a-qubit operator, $\alpha$, $\epsilon \in \mathbb{R}_+$,  $b \in \mathbb{Z}_+$, then an $a + b$ qubit
unitary $U_A$ is said to be an $(\alpha, b, \epsilon)$ block encoding of A if 
 $ \| A - \alpha\left( \bra{0}^b \otimes I\right) U_A\left(\ket{0}^b \otimes I \right) \|_2 \leq \epsilon$.
\end{definition}

We can also block encode rectangular matrices by padding the left-most block with the necessary amount of zeros. We have the requirement that $\alpha \le \|A\| $ to ensure the embedding is possible. But $\alpha$ also ends up playing a crucial role in the cost of the block encoding. This can be seen by noting that
\begin{equation}
    U_A \ket{0}^b \ket{\psi} = \frac{1}{\alpha} \ket{0}^b A \ket{\psi} + \ket{\text{garbage}}.
\end{equation}
Thus the probability of getting the application of $A$ we want, i.e., measuring the ancillae in  $\ket{0}^b$,  is $\|A \ket{\psi}\|_2\alpha^{-2}$. Thus, we prefer a sub-normalization $\alpha$ close to 1.

Sometimes in block encodings there will be ancilla qubits that are introduce within the course of the block encoding but are not measured in order to get the matrix in the right sub-block. The following definition is a very slight extension to the previous definition.

\begin{definition}\label{def:block_encoding_definition_2}
Let $a_c,a_p,n\in\mathbb{N}$ and $m=a_c+a_p+n$. We say that an $m$-qubit unitary operator $U$ is an $(\alpha ,(a_c,a_p),\epsilon )$ \emph{block encoding} of an n-qubit operator $A$ (not necessarily unitary) if $\lVert A-\tilde{A}\rVert _2<\epsilon$ where
\begin{equation}
    \tilde{A}=\frac{\left(\langle 0|^{\otimes a}\otimes I_n\right) U\left( |0\rangle ^{\otimes a}\otimes I_n\right)}{\left\lVert\left(\langle 0|^{\otimes a}\otimes I_n\right) U\left( |0\rangle ^{\otimes a}\otimes I_n\right)\right\rVert}
\end{equation}
and $\alpha =\left\lVert\left(\langle 0|^{\otimes a}\otimes I_n\right) U\left( |0\rangle ^{\otimes a}\otimes I_n\right)\right\rVert$. Here, we take $\lVert A\rVert =\max |A_{ij}|$.
\end{definition}

To avoid arbitrarily selecting an error scaling for each block encoding, we will ``normalize" each matrix $A$ such that $\lVert A\rVert =1$. The $a_c$ ancillary qubits are referred to as \emph{clean ancillae} in that their states are reset to $|0\rangle$ after the block-encoding circuit is complete; the other $a_p$ ancillary qubits are called \emph{persistent ancillae} and their states must be left undisturbed for the remainder of the computation. In the block-encoding circuits illustrated below, only the data qubits and persistent ancillae are shown. A multi-controlled gate with controls on $n$ qubits will have $n-1$ clean ancillae using a standard decomposition.

\subsection{Block encoding strategy overview}
\label{subsec:carleman_block_encoding}

In this section we overview the block encoding strategy that will be used to block encode the $S$- and $F$-matrices ($F_1, F_2$, and $F_3$), as well as how these are used to form the Carleman matrix block encoding.

Throughout this section, we will ignore the costing of some workspace ancilla qubits since our primary focus is counting $T$-gates. Qubit re-use should be possible in all computations and thus the number of extra qubits needed is expected to only add a small overhead.
Instead of an approximation, we view our results as lower bounds on the qubit counts.

We first note that $F$-matrices are only functions of the of the local populations. Whatever structure we have at a node is the same for every node.  We use the $f$-variable indexing scheme in appendix \ref{sec:notation}.  This gives us the structure of block matrices with nonzero blocks acting on $f$-variables that are all related to the same grid node $\mathbf{x}$.  The same nonzero block structure repeats for all grid nodes within the $F$-matrices.  See Figures \ref{fig:f_1_structure}, \ref{fig:f_2_structure}, and \ref{fig:f_3_structure}.  We refer to each nonzero block in the $F$-matrices as $F_1^{\mathbf{x}} \in \mathbb{R}^{Q \times Q}, F_2^{\mathbf{x}} \in \mathbb{R}^{Q \times Q^2}, \text{ and } F_3^{\mathbf{x}} \in \mathbb{R}^{Q \times Q^3}$, where $Q=27$ and the subscript indicates which full-size $F$-matrix they are related to.  We will use a generic \textit{sparse structured} input model to capture the repetition of the $F^{\mathbf{x}}$ blocks. For the $F^{\mathbf{x}}$ blocks themselves we compare a generic \textit{dense unstructured} input model with a bespoke \textit{dense structured} input model, with the latter reducing quantum resource estimates by several orders of magnitude.

$F_{1}^{\mathbf{x}}$ is a square matrix, while $F_{2}^{\mathbf{x}}$ and $F_{3}^{\mathbf{x}}$ are rectangular. $F_{2}^{\mathbf{x}}$ is padded with $Q^2-Q$ zero rows to create the square matrix $\bar{F}_{2}^{\mathbf{x}}$ and $F_{3}^{\mathbf{x}}$ is padded with $Q^3-Q$ zero rows to create the square matrix $\bar{F}_{3}^{\mathbf{x}}$.

To block encode the streaming $S$-matrix, we will construct and cost out reversible operations for the shifting and indexing operations detailed in appendix \ref{sec:streaming}, conditioned on an oracle function to flag fluid vs. solid grid nodes.

We then describe how the $S$- and $F$-matrices can be used to form the Carleman matrix block encoding.
We develop a method for embedding the block structure of this matrix into qubit registers and a method for using just one call to each block encoding to construct the Carleman block encoding, thus improving the scaling of the number of $T$-gates with respect to truncation order.
We ultimately arrive at a theorem that gives the parameters of the Carleman matrix block encoding as a function of the parameters of the $S$- and $F$-matrix block encodings.
This informs the resource estimates presented in section \ref{sec:quantum-resource-estimates}.

\subsection{Generic input models}
\label{appendix d: Generic input models}
We consider two generic input models for block encoding the matrices: \textit{sparse structured} and \textit{dense unstructured}. For a \textit{sparse structured} matrix, there exist arithmetic expressions to calculate column/row indices given a list of nonzero matrix elements. 

\paragraph*{Sparse structured matrix}
Following \cite{sunderhauf2024}, there are three equivalent ways of representing the nonzero matrix elements:
\begin{enumerate}
    \item a data item  and multiplicity index $(d,m)$ for $d \in \mathbb{Z}_D, m \in \mathbb{Z}_M$, or
    \item a column index  and the sparsity of that column $(j,s_c) $ for $j \in \mathbb{Z}_N, s_c \in \mathbb{Z}_{S_c}$, or
    \item a row index and the sparsity of that row $(i, s_r)$ for  $i \in \mathbb{Z}_N, s_c \in \mathbb{Z}_{S_r}$.
\end{enumerate}
We must obey
\begin{equation}
    \label{eq:block_encoding_restriction}
    MD = N\check{S},
\end{equation}
where $\check{S} = \text{max} \{ S_c, S_r \}$, where $S_c$ and $S_r$ are the column and row sparsities, and we are allowed to add padding to $M$ or $\check{S}$ to make the equation hold.

We reproduce the circuit from \cite{sunderhauf2024} in Figure \ref{fig:block_encoding_structured_matrix} for block encoding some structured matrix $B$.
\begin{figure}[h]
    \centering
      \yquantset{operator/separation=3mm}
\begin{tikzpicture}
  \begin{yquant}   
    qubit {$\ket{0}_{\text{data}}$} data;
    qubit {$\ket{0}_{\text{del}}$} del;
    qubit {$\ket{j}$} j; 
    qubit {$\ket{s_c}$} s;
    slash j;
    slash s;
    hspace {0.9cm} s;
    box {$H_{\check{S}}$} s;
    box {$O_c^\dagger$} (s, j);
    box {$O_{\text{range}}$} (s, j, del);
    box {$R_y\left(\frac{A_d}{\max_d \{A_d\}}\right)$} data | j;
    box {$O_r$} (s, j);
    box {$H_{\check{S}}^\dagger$} s;
  \end{yquant}
\end{tikzpicture}
\caption{Structured input block encoding.}
\label{fig:block_encoding_structured_matrix}
\end{figure}
Oracle $O_c$ and oracle $O_r$ are
\begin{align}
    O_c \ket{d} \ket{m} &= \ket{j}\ket{s_c}, \\
    O_r \ket{d} \ket{m} &= \ket{i}\ket{s_r},
\end{align}
where $i, j$ are the row and column indices respectively and $s_c, s_r$ are the sparsities for the columns and rows. We have a data oracle $O_{\text{data}}$ that has the following action
\begin{equation}
    O_{\text{data}}\ket{0} \ket{d}
    =
    \left( 
        \left( 
            \frac{A_d}{\max_d \{ A_d \}}
        \right)
        \ket{0}
        +
        \left(
            \sqrt{1- \left(\frac{A_d}{\max_d \{ A_d \}}\right)^2} 
        \right)
        \ket{1} 
    \right)
    \otimes
    \ket{d}, 
\end{equation}
where $d$ is the label of the nonzero matrix element $A_d$,
\begin{equation}
    O_{\text{range}} \ket{d} \ket{m} \ket{0}_{\text{del}} = \begin{cases}
        \ket{d} \ket{m} \ket{0}_{\text{del}} & \text{if } A_{i(d, m), j(d,m)} = A_d,\\
        \ket{d} \ket{m} \ket{1}_{\text{del}} & \text{if } A_{i(d, m), j(d,m)} = 0,
    \end{cases}
\end{equation}
 $i(d, m), j(d,m)$ are arithmetic expression for column and row indices, and
\begin{equation}
    H_{\check{S}}\ket{0} = \frac{1}{\sqrt{\check{S}}} \sum_{s=0}^{\check{S}-1} \ket{s}.
\end{equation}
This gives us a $(\sqrt{S_c S_r}\left( \max_d \{ A_d \}\right), 2 + \log(\check{S}),0 )$  block encoding of $A$ in general.

We also use an alternative model from \cite{Camps2023} where multiplicities are unknown.  This is depicted in Figure~\ref{fig:Oprime_data_block_encoding}.
\begin{figure}[ht ]
      \yquantset{operator/separation=4mm}
      \centering
\begin{tikzpicture}
\begin{yquant}
    qubit {$\ket{0}$} q0;
    qubit {$\ket{0}^{\otimes m}$} nonzero; 
    qubit {$\ket{j}$} qj;

    box {$D_{s_c}$} nonzero;
    box {$O'_{\text{data}}$} (q0, nonzero, qj);
    box {$O'_c$} (nonzero, qj);
    box {$D_{s_c}^{\dagger}$} (nonzero);
    hspace {1.54cm} q0; 
    measure nonzero;
    hspace {0.6cm} q0; 
    measure  q0;

\end{yquant}
\end{tikzpicture}
\caption{The second version of the structured input model.  This allows us to cost out the block encoding of a matrix when the multiplicities of the nonzero elements are unknown.}
\label{fig:Oprime_data_block_encoding}
\end{figure}
In this circuit, the top register encodes the matrix element, the middle register encodes label $\ell$ for the list of $s_c$ nonzero elements in the $j^{\text{th}}$ column and the bottom register encodes the column index.  The operators are defined as
\begin{align}
    O'_c\ket{\ell,j} = \ket{\ell, c(\ell,j)}, \\
    D_{s_c} \ket{0}^{\otimes m} = \frac{1}{\sqrt{s_c}} \sum_{\ell=0}^{s_c-1} \ket{\ell}, 
\end{align}
where $s_c$ is the number of nonzero matrix elements in the $j^{\text{th}}$ column and $c(\ell,j)$ is an arithmetic expression that calculates the row index. This gives us an $(s_c, m+1, 0)$ block encoding.

For both \textit{structured} block encoding schemes, we use $ O_{\text{data}}$ and $O'_{\text{data}}$ to represent any unitary matrix that actually contains the nonzero matrix elements. The actual implementation will depend on the structure of the surrounding circuit elements.

\paragraph*{Dense unstructured matrix}
For a \textit{dense unstructured} square matrix $A \in \mathbb{R}^{N \times N}$ where $N = 2^q$, the circuit will have quantum registers for the column and row labels and will consist mostly of a data oracle $O''_{\text{data}}$ that stores the data items appropriately. We use the circuit shown in figure \ref{fig:unstructured_input_circuit} from \cite{Camps2022}.

\begin{figure}[h!]
\centering
\yquantset{operator/separation=6.1mm}
\begin{tikzpicture}
  \begin{yquant}
    qubit {$\ket{0}$} ancilla;
    qubit {$\ket{0}^{\otimes q}$} register;
    qubit {$\ket{\psi}$} psi;
    nobit out;

    box {$H_{S}$} register;
    box {$O''_{\text{data}}$} (ancilla, register, psi);
    swap (psi , register);
    box {$H_S^{\dagger}$} register;

    hspace {2.42cm} ancilla;

    measure ancilla; output {$0$} ancilla;
    measure register; output {$0$} register;
    output {$\frac{A \ket{\psi}}{\|A\ket{\psi}\|_2}$} psi;
  
  \end{yquant}
\end{tikzpicture}
\caption{Unstructured input.}
\label{fig:unstructured_input_circuit}
\end{figure}
where
\begin{equation}
O''_{\text{data}} \ket{0} \ket{i} \ket{j}=  \left(A_{ij}\ket{0} + \sqrt{1-A_{ij}}\ket{1} \right)\ket{i}\ket{j}.
\end{equation}
Since we have real elements, we can use \textit{uniformly controlled} $R_y$ rotations per figure \ref{fig:uniformly_controlled_Ry_rotations}. 

\begin{figure}[h!]
 \centering
   \yquantset{operator/separation=4mm}
   \begin{tikzpicture}
       \begin{yquantgroup}
           \registers{
            qubit {$\ket{0}$} data;
            qubit {$\ket{i}$} row;
            qubit {$\ket{j}$} column;
           }
           \circuit{
            slash row;
            slash column;
            box {$O''_{\text{data}}$} (data, row, column);
           } \equals

           \circuit{
            slash row;
            slash column;  
            box {$R_y(\vec{\theta})$} data | row, column ;
}
    \end{yquantgroup}       
   \end{tikzpicture}
   \caption{Data matrix that is uniformly controlled on the column and row registers.}
   \label{fig:uniformly_controlled_Ry_rotations}
\end{figure}

The above naive implementation of $O''_{\text{data}}$ for generic square matrix $A$ requires $O(N^4)$ operations where we have $O(N^2)$  $\mathrm{CTRL}^{2q}$-$(R_y)$ gates (each using $2q$ control qubits).  Each requires $O(N^2)$ CNOT operations and $R_y$ operations. This is worse than what we can achieve classically. In order to solve this problem, \cite{Camps2022} uses a circuit from \cite{Mottonen2004} that reduces the number of CNOT operations to $O(N^2)$  and $R_y$ operations where the control qubit for the $\ell^{\text{th}}$ CNOT operation is determined by where the $\ell^{\text{th}}$ and $(\ell+1)^{\text{th}}$ gray code differ. The trade off is that the new angles used in the rotations come from solving

 \begin{equation}
   \left( H^{\otimes 2q} P_{G} \right) \vec{\theta}' = \vec{\theta}, 
\end{equation}
where $P_{G}$ changes from gray code to base-2 integer representation and can be exponentially smaller than before, so the Clifford gate synthesis for the single qubit $R_y(\theta_{new})$ has an exponential increase in accuracy required which ultimately translates to more $T$-gates.

For the unstructured model we get a $(\frac{1}{2^q}, q+1, 0)$ block encoding for $A\in \mathbb{R}^{2^q \times 2^q}$. This assumes that $|A_{ij}| \leq 1$,  $ \forall ~ i, j$.

\subsection{Block encoding the $F$-matrices via generic input models}
\label{generic}

In this section, we shall use the generic models of block encoding to cost out the classical upload for the algorithm. In the later sections we shall present bespoke block encodings for the collision matrices.
 We will produce the sub-normalization constants for the block encodings and the number of $T$-gates required. 

We can construct the full-size $F$-matrices with the following tensor products.
\begin{enumerate}
    \item The $F_1$ matrix is a square matrix can be written in the following form $\mathcal{D}^{(1)}\otimes F_1^{\mathbf{x}}$ where $\mathcal{D}^{(1)}= I_{n \times n}$ and $n$ is the number of lattice grid points.
    \item The ${F}_2$ matrix has the structure $\mathcal{D}^{(2)} \otimes {F}_2^{\mathbf{x}}$, where $\mathcal{D}^{(2)}_{ij} = \delta_{(n+1)i,j}$ and $\mathcal{D}^{(2)}$ is an $n \times n^2$ matrix.
    \item The ${F}_3$ matrix has the structure $\mathcal{D}^{(3)} \otimes {F}_3^{\mathbf{x}}$, where $\mathcal{D}^{(3)}_{ij} = \delta_{(n^2+n+1)i,j}$ and $\mathcal{D}^{(3)}$ is an $n \times n^3$ matrix.
\end{enumerate}
In this approach, we have a tensor product of a \textit{sparse structured} matrix $\mathcal{D}^{(\cdot)}$ and the \textit{dense unstructured} $F^{\mathbf{x}}$-matrices.  To cost out the block encoding of the $F^{\mathbf{x}}$-matrices, we focus on the data oracle $O''_{\text{data}}$ whose costs will be dominated by the $T$-gate counts that come from implementing the $R_y$ operations. Using the Repeat Until Success (RUS) circuits from \cite{Bocharov2015} and synthesizing the $R_y$ operations to to accuracy $\epsilon$, we require $1.15 \log_2(\frac{1}{\epsilon})$ $T$-gates. 

\paragraph*{Cost of block encoding $F_1$}
We restrict ourselves to considering what the matrix looks like at a specific node $\mathbf{x}$, i.e., the block $F^{\mathbf{x}}_1$. The formula for the elements of $F_1$ is \eqref{eq:tildeF_1} and we can use the same formula to generate the elements of the $F^{\mathbf{x}}_1$ block:
\begin{subequations}\label{eq:f1_block}
\begin{align}
    \left[ F_1^{\mathbf{x}} \right]_{ii}
    &=
    \left( -1 + w_i + 3 w_i (\mathbf{c}_i \cdot \mathbf{c}_i) \right) \frac{1}{\tau}
    \\
    \left[ F_1^{\mathbf{x}} \right]_{ij} 
    &=
    \left( 1 + 3 (\mathbf{c}_i \cdot \mathbf{c}_j) \right) \frac{w_i}{\tau}. 
\end{align}    
\end{subequations}

Straightforward calculations using table \ref{tab:D3Q27-constants} show that we have 729 nonzero elements and only 15 unique nonzero elements (for the D3Q27 lattice).

The block encoding $U_{F_1}$ for $F_1$ is $\underbrace{I_{2 \times 2} \otimes I_{2 \times 2} \otimes \dots \otimes I_{2 \times 2}}_{\lceil \log_2 n \rceil } \otimes   
U_{F_1^{\mathbf{x}}}$. 
The unstructured block encoding input model gives a $(\frac{1}{1.58950617Q}, \lceil\log_2 Q  \rceil +1, 0)$  block encoding of $F_1^{\mathbf{x}}$ and the cost of $F_1$ is simply the cost of one of the blocks. The prefactor ensures all the matrix elements are less than or equal to 1.  This gives us a $(\frac{1}{1.58950617Q}, 0, 0)$ block encoding of $F_1$. 

If we can tolerate some error $\epsilon_{F_1}$ of $U_{F_1}$, then we need a total of $(1.15)(729) \log_2(\frac{729}{\epsilon_{F_1}})$ $T$-gates. 
 
\paragraph*{Cost of block encoding padded matrices $\bar{F}_2$ and $\bar{F}_3$}

For matrices generated with the functions $\tilde{F}_{2\text{-dense}}$ from \eqref{eq:tildeF_2_dense} and $\tilde{F}_{3\text{-dense}}$ from \eqref{eq:tildeF_3_dense}, we have the following $F^{\mathbf{s}}$ blocks 
\begin{align}
    \label{eq:f2_block}
    \left[ {F}_2^{\mathbf{x}} \right]_{a_i, b_{j,k}} 
    &= \Big( 3 (\mathbf{c}_i \cdot \mathbf{c}_j)(\mathbf{c}_i \cdot \mathbf{c}_k) - (\mathbf{c}_j \cdot \mathbf{c}_k ) \Big)\left(\frac{3 w_i}{\tau}\right), 
    \\
    \label{eq:f3_block}
     \left[ F_3^{\mathbf{x}} \right]_{a_i, b_{j,k,\ell}} 
     &=  \Big( 3(\mathbf{c}_i \cdot \mathbf{c}_k)(\mathbf{c}_i \cdot \mathbf{c}_\ell) - (\mathbf{c}_k \cdot \mathbf{c}_\ell) \Big) \left(\frac{3 w_i}{\tau}\right)\left(\frac{-1}{2}\right).
\end{align}
Note that $F^{\mathbf{x}}_3$ is a scaled tiling of $F^{\mathbf{x}}_2$.  The $j$ index does not appear in \eqref{eq:f3_block}, yet the structure is the same as \eqref{eq:f2_block}, thus, the information in $F^{\mathbf{x}}_2$ is repeated for each $j=0,1,...,Q-1$ in the construction of $F^{\mathbf{x}}_3$ by \eqref{eq:f3_block}.  

In total we have $42$ nonzero unique elements for the $F_2^{\mathbf{x}}$ block with a total of $15,180$ nonzero elements.  The $F_3^{\mathbf{x}}$ block has $42$ unique nonzero elements with $409,860$ nonzero elements \cite{source_for_norm_A_anlysis}.

As before, the block encoding of  $\bar{F}_2^{\mathbf{x}} \in \mathbb{R}^{Q^2 \times Q^2}$ and $\bar{F}_3^{\mathbf{x}} \in \mathbb{R}^{Q^3 \times Q^3}$ is accomplished with the unstructured input model. We get $(\frac{1}{4.\overline{4}Q}, 
 \lceil \log_2 Q \rceil +1, 0)$ block encoding for $\bar{F}_2^{\mathbf{x}}$ and the $(\frac{1}{2.\overline{2}Q}, 
 \lceil \log_2 Q \rceil +1, 0)$  for  $\bar{F}_3^{\mathbf{x}}$.  Although we are dealing with $Q^2 \times Q^2$ and $Q^3 \times Q^3$, the extra rows are just padded zeros, so we only need to prepare an equal superposition of labels for $Q$ rows. 
 
Next we consider padding $\mathcal{D}^{(2)}$ and $\mathcal{D}^{(3)}$ to $\bar{\mathcal{D}}^{(2)}$ and $\bar{\mathcal{D}}^{(3)}$ by adding rows of zeros. $\bar{\mathcal{D}}^{(2)}$ and $\bar{\mathcal{D}}^{(3)}$ will be $n^{2}\times n^{2}$ and $n^{3}\times n^{3}$ matrices, respectively.

Using the structured input model we have the 
that in the equation $MD = NS$ both $D$ and $\check{S}$ are equal to 1 since $\check{S} = \max\{ S_c, S_r \} = 1$  and the length of the data item list is only 1.   $N = n^2$ (or $n^3$) and $M$ has been padded from $n$ to get to $n^2$ (or $n^3$). With padded multiplicity register in mind, we have the following arithmetic expression for row and column indices (base-0 indexing)
\begin{align}
    i(d, m) &= m, \\
    j(d, m) &= (n+1)m ,
\end{align}
with $m \in \mathbb{Z}_{n^2}$ for $\bar{\mathcal{D}}^{(2)}$ and 

\begin{align}
    i(d, m) &= m, \\
    j(d, m) &= (n^2 + n +1)m, 
\end{align}
for $\bar{\mathcal{D}}^{(3)}$
with $m \in \mathbb{Z}_{n^3}$.
 
The out-of-range oracle $O_{\text{range}}$ will flag a qubit when $m > n$.  This is accomplished with an oracle $O_{\text{comp\_const}(\nu)}$ that computes a comparison with a constant $\nu$
  \begin{equation}
        O_{\text{comp\_const}(\nu)}\ket{m}\ket{0}  = \begin{cases}
      \ket{m}\ket{1} & \text{if } m > \nu \\
      \ket{m} \ket{0} &  \text{else}.
  \end{cases}
   \end{equation}
We have from \cite{Kan2023}, the resources for $O_{\text{comp\_const}(\nu=n)}$ is $(8 \lceil \log_2 n^2 \rceil - 16)$ $T$-gates and an additional $(3\lceil \log_2 n^2 \rceil  -2)$ ancilla qubits for $\bar{\mathcal{D}}^{(2)}$.  For $\bar{\mathcal{D}}^{(3)}$, the cost for $O_{\text{comp\_const}(\nu=n)}$ is $(8 \lceil \log_2 n^3 \rceil - 16)$ $T$-gates and an additional $(3\lceil \log_2 n^3 \rceil  -2)$ ancilla qubits. 

For the $O_{j(d,m)}$ oracle, we need an oracle that multiplies an integer in a register by a constant. Using \cite{Kan2023},  we need $\left( 2 \lceil \log_2n^2  \rceil \left(  \lceil \log_2 n^2 \rceil -1)\right) \right)$ $T$-gates for $\bar{\mathcal{D}}^{(2)}$ and $\left( 2 \lceil \log_2n^3  \rceil \left(  \lceil \log_2 n^3 \rceil -1)\right) \right)$ $T$-gates for $\bar{\mathcal{D}}^{(3)}$.
\begin{figure}[ht!]
  \centering
\yquantset{operator/separation=5mm}
\begin{tikzpicture}
  \begin{yquant}

    qubit {$\ket{j}$} j;
    qubit {$\ket{s_c}$} s;
    qubit {$\ket{0}_{\text{del}}$} del;
    qubit {$\ket{0}_{\text{range ancilla}}$} range;
    qubit {$\ket{0}_{\text{data}}$} data;

    box {$O_{j(d,m)}^\dagger$} (s, j);
    box {$O_{\text{comp\_const}(n)}$} (s, del, range);

    hspace {0.7cm} del;
    hspace {5.6cm} data;
    measure data; output {$0$} data;
    hspace {0.4cm} range; 
    measure del; output {$0$} del;
    hspace {0.4cm} data; 

    hspace {0.4cm} s; 
    hspace {0.4cm} j; 
    hspace {0.4cm} del; 
  \end{yquant}
\end{tikzpicture}
    \caption{Block encoding for $\bar{\mathcal{D}}^{(2)}$ or $\bar{\mathcal{D}}^{(3)}$.}
 \label{fig:d_nu_block_encoding}
\end{figure}
The circuit shown in figure \ref{fig:d_nu_block_encoding} gives us a $(1, 2, 0)$ block encoding of $\bar{\mathcal{D}}^{(2)}$ and $\bar{\mathcal{D}}^{(3)}$.

Lemma 1 from \cite{Camps2020} shows that when block encoding tensor products, the subnormalizations should be multiplied and the number of ancillae should be added.  Using this and our construction $\bar{F}_2^{'} = \bar{\mathcal{D}}^{(2)} \otimes \bar{F}^{\mathbf{x}}_2$ and $\bar{F}_3^{'} = \bar{\mathcal{D}}^{(3)} \otimes \bar{F}^{\mathbf{x}}_3$, we get 
a $(\frac{1}{4.\overline{4}Q}, \lceil \log_2 Q \rceil + 3  , 0)$ block encoding for both $\bar{F}_2$ and $\bar{F}_3$. 
The prime in $\bar{F}_2^{'}$ and $\bar{F}_3^{'}$ signifies that, at this point, the matrices have rows of zero between each block $\bar{F}^{\mathbf{x}}_2$ in $\bar{F}_2^{'}$ and $\bar{F}^{\mathbf{x}}_3$ in $\bar{F}_3^{'}$. To get $\bar{F}_2$ and $\bar{F}_3$ we can conjugate the prime version by some permutation matrix containing $\text{SWAP}$ and $NOT$ gates that acts in the part of the Hilbert space that gets tensored to the actual space we want $F_2$ and $F_3$ to act on. 
The total $T$-gate counts are \newline
$\left[ 8 \lceil \log_2 n^{2} \rceil + 17457\log_2\frac{15180}{\epsilon_{F_2}} -16 + 2\lceil \log_2n^2\rceil \left( \lceil \log_2 n^2\rceil -1  \right)\right]$
and \newline
$\left[ 8 \lceil \log_2 n^{3} \rceil + (1.15)(409860)\log_2\frac{409860} {\epsilon_{\bar{F}_3}} -16 + 2 \lceil \log_2n^3  \rceil \left(  \lceil \log_2 n^3 \rceil -1)\right)
\right]$,
for $\bar{F}_2$ and $\bar{F}_3$, respectively. The results are summarized in table \ref{table:summary_of_F_block_encoding_results}.

\begin{table}[ht]
\centering
\begin{tabular}{|c|c|c|}
\hline
Matrix & Block Encoding  &  $T$-gate count \\
\hline
$F_1$ & $\left( \frac{1}{1.58950617Q}, \lceil \log_2 Q \rceil +1, 0 \right)$ & $838.35 \log_2(\frac{729}{\epsilon_{F_1}})$   
\\
\hline
$\bar{F}_2$ & $\left( \frac{1}{4.\overline{4}Q}, \lceil \log_2 Q \rceil + 3 , 0 \right)$  & $8 \lceil \log_2 n^{2} \rceil + 17457\log_2\frac{15180}{\epsilon_{F_2}} -16 + 2 \lceil \log_2n^2\rceil \left( \lceil \log_2 n^2\rceil -1  \right) $  
\\
\hline
$\bar{F}_3$ &  $\left( \frac{1}{2.\overline{2}Q}, \lceil \log_2 Q \rceil + 3 , 0 \right)$   & $8 \lceil \log_2 n^{3} \rceil$ + $471339\log_2\frac{409860} {\epsilon_{F_3}} -16 + 2 \lceil \log_2n^3  \rceil \left(  \lceil \log_2 n^3 \rceil -1)\right)$ 
\\
\hline
\end{tabular}
\caption{Summary of results for block encoding $F_1, \bar{F}_2, \bar{F}_3$ matrices}
\label{table:summary_of_F_block_encoding_results}
\end{table}

\subsection{Bespoke block encodings of the $F$-matrices}
\label{appendix_d: bespoke_block_encoding}

To reduce the quantum resource estimates associated with block encoding the $F$-matrices, we developed a bespoke methodology based on decomposing the expressions in \eqref{eq:f1_block}, \eqref{eq:f2_block}, and \eqref{eq:f3_block} into matrix operations which can be efficiently simulated by well-known block-encoding circuits \cite{pyLIQTR}.

We first document a collection of tools we will use to block encode the $F$ matrices below. We estimate the resources required by these block encodings in terms of $T$-gates; the three primative sources of $T$-gates in these circuits are controlled Hadamard gates costing 2 $T$-gates each, Toffoli gates costing 4 $T$-gates each, and rotation gates which up to 2-norm error $\epsilon$ cost $1.15\log _2(1/\epsilon )+9.2$ $T$-gates each \cite{bocharov15}. From these pieces, we can cost out more complicated structures such as an $n$-qubit controlled $X$ gate costing $4(n-1)$ $T$-gates using a Toffoli construction from Gidney \cite{Gidney2018} and a controlled rotation costing $2.3\log _2(1/\epsilon )+20.7$ $T$-gates (constructed with two uncontrolled rotations each with accuracy $\epsilon /2$). Sequences of multi-qubit controlled gates can lead to significant savings if the Toffoli expansions are selected carefully.

Critical block-encoding circuits include those which combine blocks via matrix operations. The linear combination of unitaries (LCU) \cite{Gui_Lu_2006} allows one to add block encodings together in a weighted sum. If $A$ and $B$ are two block encodings with ancilla registers, we can form a \emph{product} block encoding by placing the data components in sequence and the ancilla components on separate registers. Meanwhile we can form a \emph{tensor product} block encoding by stacking the data registers as well as the ancilla registers as in figure \ref{fig:prod-tensor}. See Camps \cite{PhysRevA.102.052411} for more details.

\begin{figure}
\centering
\begin{tikzpicture}
    \begin{yquantgroup}
    \registers{
    qubit {} q[6];
    }
    \circuit{
      discard q[4-5];
      
      [name=A] box {\Ifnum\idx<1 $A_a$\Else $A_d$\Fi} (q[1]), (q[2],q[3]);
      \draw (A-0) -- (A-1);      
      hspace {2mm} -;

      [name=B] box {\Ifnum\idx<1 $B_a$\Else $B_d$\Fi} (q[0]), (q[2],q[3]);
      \draw (B-0) -- (B-1);  
    }
    \circuit{
    discard q[0-5];
    }
    \circuit{
      [name=A] box {\Ifnum\idx<1 $A_a$\Else $A_d$\Fi} (q[1]), (q[4],q[5]);
      \draw (A-0) -- (A-1);      
      hspace {2mm} -;

      [name=B] box {\Ifnum\idx<1 $B_a$\Else $B_d$\Fi} (q[0]), (q[2],q[3]);
      \draw (B-0) -- (B-1);        
    }
    \end{yquantgroup}
\end{tikzpicture}
\caption{(\emph{Left}) Product block encoding $BA$ (\emph{Right}) Tensor product block encoding $B\otimes A$. Here, $A_a$ is the ancilla register of $A$ and $A_d$ is the data register of $A$, likewise for $B$.}
\label{fig:prod-tensor}
\end{figure}

One particular matrix we will frequently use in the following block encodings is the projection operator. For example, if we wanted to project off of an individual basis state of the data register, we could use a multi-control $X$ gate (control sequence matching the binary representation of the basis state) and target on an ancilla register. Importantly, this operation allows us to `zero-out' entries of the data block by essentially moving them to an ancilla register. Placing a projection operator after a block encoding eliminates rows of the block while placing the projection before the block encoding eliminates columns. Figure \ref{fig:row-col} illustrates this concept.

\begin{figure}
\centering
\begin{tikzpicture}
    \begin{yquantgroup}
    \registers{
    qubit {} q[4];
    }
    \circuit{
      [name=A] box {\Ifnum\idx<1 $A_a$\Else $A_d$\Fi} (q[1]), (q[2],q[3]);
      \draw (A-0) -- (A-1);      
      hspace {2mm} -;

      cnot q[0] | q[2],q[3];
    }
    \circuit{
    discard q[0-3];
    }
    \circuit{
      cnot q[0] | q[2],q[3];
      hspace {2mm} -;
      
      [name=A] box {\Ifnum\idx<1 $A_a$\Else $A_d$\Fi} (q[1]), (q[2],q[3]);
      \draw (A-0) -- (A-1);         
    }
    \end{yquantgroup}
\end{tikzpicture}
\caption{Projection operator off of the $|11\rangle$ state applied to a block encoding of $A$ (\emph{Left}) Eliminates the last row of $A$ (\emph{Right}) Eliminates the last column of $A$.}
\label{fig:row-col}
\end{figure}

\paragraph{Converting \eqref{eq:f1_block}, \eqref{eq:f2_block}, and \eqref{eq:f3_block} to matrix arithmetic}

Let $c=[c_{ix},c_{iy},c_{iz}]\in\mathbb{R}^{Q\times 3}$ be a matrix containing the set of $Q=27$ 3D lattice vectors (see Table \ref{tab:D3Q27-constants}). Noticing that $c$ can be written as a concatenation of tensor products, we let $x=[1,-1,0]^\top$ and let ${\bf 1}_{3\times 1}=[1,1,1]^\top$, so that we have
\begin{equation}
  c_{ix}={\bf 1}\otimes{\bf 1}\otimes x,\hspace{1cm}c_{iy}={\bf 1}\otimes x\otimes{\bf 1},\hspace{1cm}c_{iz}=x\otimes{\bf 1}\otimes{\bf 1}.  
\end{equation}
Next, we define the matrix of all possible dot products $C=cc^\top\in\mathbb{R}^{Q\times Q}$, which describes the dot product in (\ref{eq:f1_block}). Then the matrices containing each of the first two dot products in \eqref{eq:f2_block} and \eqref{eq:f3_block} can be expressed as
\begin{equation}
    \left( c_i\cdot c_k\right) _{i,k,l}=C\otimes{\bf 1}_{1\times Q},\hspace{1cm}\left( c_i\cdot c_l\right) _{i,k,l}={\bf 1}_{1\times Q}\otimes C .
\end{equation}
The matrix describing the product of these two expressions is given by the Hadamard product, or entry-wise product of two matrices
\begin{equation}
    \left( (c_i\cdot c_k)(c_i\cdot c_l)\right) _{i,k,l}=\left( C\otimes{\bf 1}_{1\times Q}\right)\odot\left( {\bf 1}_{1\times Q}\otimes C\right) \label{eq:dotprodprod}
\end{equation}
Rather than block encoding the Hadamard product outright, we can simplify the system by noting that $(A\otimes B)\odot (C\otimes D) =(A\odot C)\otimes (B\odot D)$ when $\text{dim}(A)=\text{dim}(C)$ and $\text{dim}(B)=\text{dim}(D)$. Thus, we instead consider the matrix
\begin{subequations}
\begin{align}
\mathcal{C} & = \left( C\otimes{\bf 1}_{Q\times Q}\right)\odot\left( {\bf 1}_{Q\times Q}\otimes C\right) \\
 & = C\otimes C,
\end{align}
\end{subequations}
where the $(Q+1)i^\text{th}$ row is the $i^\text{th}$ row of \eqref{eq:dotprodprod}. Note that for the Hadamard product, the identity matrix ${\bf 1}_{Q\times Q}$ is a $Q\times Q$ matrix of all ones.

The third dot product of \eqref{eq:f2_block} and \eqref{eq:f3_block} does not contain $i$, so there is no dependence on the row index. Because of this, we can express this term as
\begin{equation}
\left( c_k\cdot c_l\right) _{i,k,l}={\bf 1}_{Q\times 1}\otimes\left( c_{ix}^\top\otimes c_{ix}^\top+c_{iy}^\top\otimes c_{iy}^\top+c_{iz}^\top\otimes c_{iz}^\top\right).
\label{eq:thirddotprod} 
\end{equation}

The factor of $w_i$ in \eqref{eq:f2_block} and \eqref{eq:f3_block} can be applied by multiplying the matrix formed by summing \eqref{eq:dotprodprod} and \eqref{eq:thirddotprod} by a matrix $W=\text{diag}(w_i)$. We can further simplify this by noting that 
\begin{equation}
    w_i = \frac{8}{27} \left( \frac{1}{4} \right)^{|c_{ix}|} \left( \frac{1}{4} \right)^{|c_{iy}|} \left( \frac{1}{4} \right)^{|c_{iz}|} ,
\end{equation}
and expressing this dependence on the components of $\mathbf{c}_i$ via the tensor product
\begin{subequations}
    \begin{align}
        W & = w\otimes w\otimes w, \label{eq:W} \\
        \mathrm{where}\quad w & = \text{diag}\left(\left[\frac{1}{4},\frac{1}{4},1\right]\right).
    \end{align}
\end{subequations}

\paragraph{Block encoding $F_1$.}

We use several of the resources developed above to create a block encoding for the smaller $F_1$ matrix. We can rewrite $F_1$ in terms of the following matrix:
$$F_1=Z^{\otimes 3}(W({\bf 1}_{Q\times Q}+3C)-I_6)Z^{\otimes 3}$$
where $Z$ is the projection off of $|11\rangle$.  Let $G_n$ represent an $n$-qubit Grover matrix. Since $G_n+ I_n=\frac{1}{2^{n-1}}{\bf 1}_{2^n\times 2^n}$ and we can write $C=64\tilde{C}$ where $\tilde{C}$ is the matrix actually block encoded by the circuit in figure \ref{fig:bigC}, we can rewrite an expression for $F_1$ in a form better adapted for the LCU as
\begin{equation}
    F_1=Z^{\otimes 3}(32W(G_6+I_6+6\tilde{C})-I_6)Z^{\otimes 3},
\end{equation}
where $G_6$ is the six qubit Grover matrix. First, we establish a block encoding for the Grover matrix.
\begin{lemma}
The circuit in figure \ref{fig:G6} is a $(1,(4,0),0)$ block encoding of $G_6$ costing 16 $T$-gates and 44 $T$-gates controlled.
\end{lemma}
\begin{figure}[h]
\centering
\begin{tikzpicture}
\begin{yquant}
qubit {} q[6];
h q;
x q;
zz (q);
x q;
h q;
\end{yquant}
\end{tikzpicture}
\caption{Block-encoding circuit for $G_6$}
\label{fig:G6}
\end{figure}

To block encode the $W$ matrix \eqref{eq:W}, we encode the three-fold tensor product of $w=\text{diag}\left(\frac{1}{4},\frac{1}{4},1,1\right)$, which can be accomplished with figure \ref{fig:W}.
\begin{lemma}
The circuit in figure \ref{fig:W} is a $(1,(0,3),\epsilon )$ block encoding of
$W$ costing $3.45\log (1/\epsilon )$ $T$-gates.
\end{lemma}
\begin{figure}[h]
  \centering
\begin{tikzpicture}
  \begin{yquant}
    qubit {} q[9];
    
    box {$R_y(\theta_2)$} q[2] | ~ q[7];
    box {$R_y(\theta_2)$} q[1] | ~ q[5];
    box {$R_y(\theta_2)$} q[0] | ~ q[3];
    
  \end{yquant}
\end{tikzpicture}
\caption{Block-encoding circuit for $W = w\otimes w\otimes w$, with $w=\text{diag}\left(\frac{1}{4},\frac{1}{4},1,1\right)$. Controlled rotations are applied to the upper three ancillae.}
\label{fig:W}
\end{figure}

We then combine the Grover matrix with pieces from the previous section to arrive at a complete block encoding for $F_1$:
\begin{theorem}
The circuit in figure \ref{fig:F1circuit} is a $(\frac{1}{257},(8,9),\epsilon )$ block encoding of $F_1$ costing $465.2+13.8\log _2(1/\epsilon )$ $T$-gates.
\end{theorem}
{\bf Proof:} Let us define the LCU PREP rotation matrices
$$R_1=\frac{1}{\sqrt{7}}\begin{pmatrix} 1 & -\sqrt{6} \\ \sqrt{6} & -1 \end{pmatrix},\hspace{1cm}R_2=\frac{1}{2\sqrt{2}}\begin{pmatrix}\sqrt{7} & -1 \\ 1 & \sqrt{7} \end{pmatrix},\hspace{1cm}R_3=\frac{1}{\sqrt{257}}\begin{pmatrix}16 & -1 \\ 1 & 16 \end{pmatrix}.$$
Then the encoded matrix in figure \ref{fig:F1circuit} is an approximation of $F_1/257$ with spectral norm approximately 0.0225. The max-value sub-normalization can be calculated to be $1/257$. 

In the circuit from figure \ref{fig:F1circuit}, there are 24 $T$-gates from projecting off of $|11\rangle$, $6.9\log _2(3/\delta )+55.2$ $T$-gates from approximating each of the LCU PREP rotation matrices by $\delta$. Controlled-$W$ costs $6.9\log _2(3/\epsilon )+70.1$ to approximate to $\epsilon$. Controlled-$G$ costs 44, controlled-$C$ costs 132, and 8 to implement the unary iteration. This leads us to a total of
$$6.9\log _2(3/\delta )+6.9\log _2(3/\epsilon )+333.3.$$

\cite{KuklinskiRempferTBP} gives a coarse approximation of the error in an LCU circuit as
$$\frac{4\delta\text{max}(\alpha _k)+2\text{max}(\alpha _k\epsilon _k)}{\lVert\cdot\rVert},$$
where $\alpha _k$ is the sub-normalization of the $k^\text{th}$ term being summed in the LCU, $\epsilon _k$ is the error on this term, $\delta$ is the error on the PREP oracle, and $\lVert\cdot\rVert$ is the spectral norm of the matrix truly being encoded. Substituting values from this expression, the total error of the circuit is $88.8(2\delta+\epsilon )$. $T$-gate count is thus minimized if we choose $\epsilon =2\delta$. Plugging this back in gives us the total $T$-gate count on an $\epsilon$-approximation of $F_1$. $\hfill\Box$

\begin{figure}
\centering
\begin{tikzpicture}
\begin{yquant}
qubit {} q[15];
cnot q[5] | q[13], q[14];
cnot q[4] | q[11], q[12];
cnot q[3] | q[9], q[10];
box {$R_3$} q[6];
box {$R_2^\dagger$} q[7];
box {$R_1^\dagger$} q[8];
box {$W$} (q[9-14]) | ~ q[6];
box {$G$} (q[9-14]) | ~ q[6], q[7], q[8];
box {$C$} (q[9-14]) | q[8] ~ q[6], q[7];
box {$R_3$} q[6];
box {$R_2$} q[7];
box {$R_1$} q[8];
cnot q[2] | q[13], q[14];
cnot q[1] | q[11], q[12];
cnot q[0] | q[9], q[10];
\end{yquant}
\end{tikzpicture}
\caption{Block-encoding circuit for $F_1$}
\label{fig:F1circuit}
\end{figure}

\paragraph{Block encoding $F_2$}

Now we are prepared to convert the equations we've developed into block-encoding circuits. To precisely account for resources (e.g., reusable ancilla), we use definition \ref{def:block_encoding_definition_2} (cf. definition \ref{def:block_encoding}).

One other caveat we must mention in these circuits is that the block encoding of $F_2\in\mathbb{R}^{Q\times Q^2}$ will be embedded in a larger $64\times 64^2$ matrix. Working with powers of 2 will facilitate the circuit construction.

First, we will block encode a matrix which contains both $x$ and ${\bf 1}_{3\times 1}$ in its columns. In particular, we will write
\begin{equation}
A=\begin{pmatrix}
1 & 1 & \cdot & \cdot \\
1 & -1 & \cdot & \cdot \\
1 & 0 & \cdot & \cdot \\
\cdot & \cdot & \cdot & \cdot
\end{pmatrix}=
\frac{1}{2}\begin{pmatrix}
1 & 1 & 1 & 1 \\
1 & -1 & 1 & -1 \\
1 & 1 & -1 & -1 \\
1 & -1 & -1 & 1
\end{pmatrix}
+\frac{1}{2}\begin{pmatrix}
1 & 1 & 1 & 1 \\
1 & -1 & 1 & -1 \\
1 & -1 & -1 & 1 \\
1 & 1 & -1 & -1
\end{pmatrix}
\end{equation}
The first matix in this sum is just $H\otimes H$, and the second swaps columns 2 and 4. We write the block-encoding circuit below.
\begin{lemma}
The circuit in figure \ref{fig:Amat} is a $(\frac{1}{2},(1,1),0)$ block-encoding circuit of $A$ costing 4 $T$-gates. The controlled circuit costs 16 $T$-gates. 
\end{lemma}
\begin{figure}
\centering
\begin{tikzpicture}
    \begin{yquantgroup}
    \registers{
      qubit {} q[3];
      }
    \circuit{
      [name=A1] box {\Ifnum\idx<1 $A_a$\Else $A_d$\Fi} q[0], (q[1],q[2]);      
      \draw (A1-0) -- (A1-1);
      }
    \equals
    \circuit{
      h q[0];
      cnot q[1] | q[0],q[2];
      h q[0];
      h q[1];
      h q[2];
      }
    \end{yquantgroup}
\end{tikzpicture}
\caption{Block-encoding circuit for $A$, where $A_a$ indicates the part acting on a persistant ancilla and $A_d$ indicates the part acting on the two data qubits.}
\label{fig:Amat}
\end{figure}
We include the $T$-gate cost of the controlled circuit since we will be combining figure \ref{fig:Amat} in a linear combination of unitaries (LCU) later.

Next, notice that $A\otimes A\otimes A$ contains $c_{ix}$ in its $2^\text{nd}$ column, $c_{iy}$ in its $5^\text{th}$ column, and $c_{iz}$ in its $17^\text{th}$ column. Thus, we arrive at the following block-encoding circuit for $c$.
\begin{lemma}
The circuit in figure \ref{fig:c} is a $(\frac{1}{8},(5,4),0)$ block encoding of $c$ costing 44 $T$-gates. The controlled circuit costs 84 $T$-gates.
\end{lemma}
\begin{figure}
\centering
\begin{tikzpicture}
   \begin{yquant}
      qubit {} q[10];
      
      cnot q[0] | ;
      cnot q[0] | q[9] ~ q[4], q[5], q[6], q[7], q[8];
      cnot q[0] | q[7] ~ q[4], q[5], q[6], q[9], q[8];
      cnot q[0] | q[5] ~ q[4], q[9], q[6], q[7], q[8];

      [name=A1] box {\Ifnum\idx<1 $A_a$\Else $A_d$\Fi} q[3], (q[8],q[9]);      
      \draw (A1-0) -- (A1-1);
      hspace {2mm} -;
     
      [name=A2] box {\Ifnum\idx<1 $A_a$\Else $A_d$\Fi} (q[2]), (q[6],q[7]);
      \draw (A2-0) -- (A2-1);      
      hspace {2mm} -;

      [name=A3] box {\Ifnum\idx<1 $A_a$\Else $A_d$\Fi} (q[1]), (q[4],q[5]);
      \draw (A3-0) -- (A3-1);  

   \end{yquant}
\end{tikzpicture}
\caption{Block-encoding circuit for $c$. The top four qubits are persistent ancillae.}
\label{fig:c}
\end{figure}
Notice that in figure \ref{fig:c}, we do not expend any resources shifting the aforementioned columns to the first three columns. This is because the matrix $C=cc^\top$ is invariant under column permutations in $c$ in the larger $64\times 64$ matrix.

To block encode $C$, we simply take the product $C=cc^\top$ by reversing the circuit in figure \ref{fig:c} and putting it in series with figure \ref{fig:c} with a separate ancillary register. The gate count is slightly less than doubled because we can advantageously commute the multi-controls to reduce $T$-gate count.
\begin{lemma}
The circuit in figure \ref{fig:bigC} is a $(\frac{3}{64},(5,8),0)$ block encoding of $C$ costing 56 $T$-gates. The controlled circuit costs 132 $T$-gates.
\end{lemma}

\begin{figure}
\centering
\begin{tikzpicture}
   \begin{yquantgroup}
    \registers{
    qubit {} q[14];
    }
    \circuit{
      
      [name=Ad1] box {\Ifnum\idx<1 $A_a^\dagger$\Else $A_d^\dagger$\Fi} q[3], (q[12],q[13]);      
      \draw (Ad1-0) -- (Ad1-1);
      hspace {2mm} -;
     
      [name=Ad2] box {\Ifnum\idx<1 $A_a^\dagger$\Else $A_d^\dagger$\Fi} (q[2]), (q[10],q[11]);
      \draw (Ad2-0) -- (Ad2-1);      
      hspace {2mm} -;

      [name=Ad3] box {\Ifnum\idx<1 $A_a^\dagger$\Else $A_d^\dagger$\Fi} (q[1]), (q[8],q[9]);
      \draw (Ad3-0) -- (Ad3-1);  
      hspace {2mm} -;

      cnot q[0], q[4] | ;
      cnot q[0], q[4] | q[12] ~ q[8], q[9], q[10], q[11], q[13];
      cnot q[0], q[4] | q[10] ~ q[8], q[9], q[12], q[11], q[13];
      cnot q[0], q[4] | q[8] ~ q[12], q[9], q[10], q[11], q[13];

      [name=A1] box {\Ifnum\idx<1 $A_a$\Else $A_d$\Fi} q[7], (q[12],q[13]);      
      \draw (A1-0) -- (A1-1);
      hspace {2mm} -;
     
      [name=A2] box {\Ifnum\idx<1 $A_a$\Else $A_d$\Fi} (q[6]), (q[10],q[11]);
      \draw (A2-0) -- (A2-1);      
      hspace {2mm} -;

      [name=A3] box {\Ifnum\idx<1 $A_a$\Else $A_d$\Fi} (q[5]), (q[8],q[9]);
      \draw (A3-0) -- (A3-1);
    }
    \equals
    \circuit{

      discard q[1-3];
      discard q[5-7];

      slash q[0];
      slash q[4];

      [name=A3d2] box {\Ifnum\idx<1 $(A_a^\dagger)^{\otimes 3}$\Else $(A_d^\dagger)^{\otimes 3}$\Fi} q[0], (q[8-]);      
      \draw (A3d2-0) -- (A3d2-1);
      hspace {2mm} -;

      [name=CP2] box {$C-PREP$} (q[0-4]), (q[8-]);      
      \draw (CP2-0) -- (CP2-1);
      hspace {2mm} -;

      [name=A32] box {\Ifnum\idx<1 $(A_a)^{\otimes 3}$\Else $(A_d)^{\otimes 3}$\Fi} q[4], (q[8-]);      
      \draw (A32-0) -- (A32-1);
    }
   \end{yquantgroup}
\end{tikzpicture}
\caption{Block-encoding circuit for $C$. Slashes in the wires on the RHS denote 4-qubit ancillary registers, where it is understood that $C-PREP$ acts on the first qubit of each register, whereas $(A_a)^{\otimes 3}$ and $(A_a^\dagger)^{\otimes 3}$ act on the other three qubits in each register.}
\label{fig:bigC}
\end{figure}

To block encode $C$, we stack the circuits from figure \ref{fig:bigC} on top of each other. To access $(c_i\cdot c_k)(c_i\cdot c_l)$, we use a series of 6 CNOT gates to move row $65i$ to row $i$. We will worry about zeroing out the top 6 qubits later.
\begin{lemma}
The circuit in figure \ref{fig:calC} is a $(\frac{9}{4096},(5,8),0)$ block encoding of $C$ costing 112 $T$-gates. The controlled circuit costs 264 $T$-gates.
\end{lemma}
\begin{figure}
\centering
\begin{tikzpicture}
   \begin{yquant}
      qubit {} q[16];
      slash q[0-3];
      
      [name=Ad1] box {\Ifnum\idx<1 $(A_a^\dagger)^{\otimes 3}$\Else $(A_d^\dagger)^{\otimes 3}$\Fi} q[3], (q[10-]);      
      \draw (Ad1-0) -- (Ad1-1);
      hspace {2mm} -;

      [name=CP1] box {$C-PREP$} (q[2,3]), (q[10-]);      
      \draw (CP1-0) -- (CP1-1);
      hspace {2mm} -;

      [name=A1] box {\Ifnum\idx<1 $(A_a)^{\otimes 3}$\Else $(A_d)^{\otimes 3}$\Fi} q[2], (q[10-]);      
      \draw (A1-0) -- (A1-1);
      hspace {2mm} -;
      
      [name=Ad2] box {\Ifnum\idx<1 $(A_a^\dagger)^{\otimes 3}$\Else $(A_d^\dagger)^{\otimes 3}$\Fi} q[1], (q[4-9]);      
      \draw (Ad2-0) -- (Ad2-1);
      hspace {2mm} -;

      [name=CP2] box {$C-PREP$} (q[0,1]), (q[4-9]);      
      \draw (CP2-0) -- (CP2-1);
      hspace {2mm} -;

      [name=A2] box {\Ifnum\idx<1 $(A_a)^{\otimes 3}$\Else $(A_d)^{\otimes 3}$\Fi} q[0], (q[4-9]);      
      \draw (A2-0) -- (A2-1);
      hspace {2mm} -;

      cnot q[9] | q[15];
      cnot q[8] | q[14];
      cnot q[7] | q[13];
      cnot q[6] | q[12];
      cnot q[5] | q[11];
      cnot q[4] | q[10];
   \end{yquant}
\end{tikzpicture}
\caption{Block-encoding circuit for $(c_i\cdot c_k)(c_i\cdot c_l)$ via block encoding of $C$ followed by row permutation. Slashes indicate 4-qubit ancillary registers. The upper six data qubits (CNOT targets) will be zeroed out later.}
\label{fig:calC}
\end{figure}

To block encode $(c_k\cdot c_l)$, consider the matrix $(A^\top)^{\otimes 6}$ (note $A$ is real so $A^\top=A^\dagger$) which contains $c_{ix}^\top\otimes c_{ix}^\top$ in its $66^\text{th}$ row, $c_{iy}^\top\otimes c_{iy}^\top$ in its $261^\text{st}$ row, and $c_{iz}^\top\otimes c_{iz}^\top$ in its $1041^\text{st}$ row. We use a preparation matrix $vPREP_1$ to move these rows to the top.
\begin{lemma} \label{lem:Adag6perm}
The circuit in figure \ref{fig:Adag6perm} is a $(\frac{1}{64},(1,6),0)$ block encoding of
$\begin{pmatrix}
c_{ix}^\top\otimes c_{ix}^\top \\
c_{iy}^\top\otimes c_{iy}^\top \\
c_{iz}^\top\otimes c_{iz}^\top
\end{pmatrix}$
costing 24 $T$-gates. The controlled circuit costs 116 $T$-gates.
\end{lemma}
\begin{figure}
\centering
\begin{tikzpicture}
   \begin{yquantgroup}
   \registers{
      qubit {} q[13];
    }
    \circuit{
      slash q[0];
      
      [name=Ad] box {\Ifnum\idx<1 $(A_a^\dagger)^{\otimes 6}$\Else $(A_d^\dagger)^{\otimes 6}$\Fi} q[0], (q[1-]);      
      \draw (Ad-0) -- (Ad-1);
      hspace {2mm} -;

      cnot q[6] | q[12];
      cnot q[4] | q[10];
      cnot q[2] | q[8];
      hspace {0mm} -;
      cnot q[10], q[12] | ;
      cnot q[8], q[11], q[12] | q[10];
      cnot q[10] | q[11];
    }
    \equals
    \circuit{
      slash q[0];
      
      [name=Ad] box {\Ifnum\idx<1 $(A_a^\dagger)^{\otimes 6}$\Else $(A_d^\dagger)^{\otimes 6}$\Fi} q[0], (q[1-]);      
      \draw (Ad-0) -- (Ad-1);
      hspace {2mm} -;

      box {$vPREP_1$} (q[2-]);
    }
   \end{yquantgroup}
\end{tikzpicture}
\caption{Block-encoding circuit for the matrix in Lemma \ref{lem:Adag6perm}. The slash in the top line denotes a 6-qubit ancillary register.}
\label{fig:Adag6perm}
\end{figure}
To sum the top three vector rows, we block encode $[1,1,1,0]$ in the top row of a matrix and apply those data qubit gates to the bottom two qubits. We then multiply this by a $64\times 64$ matrix of 1s in the first column and zeros elsewhere to copy the top row to the other 63 rows below. The procedure is summarized below.
\begin{lemma}
The circuit in figure \ref{fig:ckdotcl} is a $(\frac{3}{1024},(5,8),0)$ block encoding of
$(c_k\cdot c_l)$ costing 52 $T$-gates. The controlled circuit costs 172 $T$-gates.
\end{lemma}
\begin{figure}
\centering
\begin{tikzpicture}
   \begin{yquantgroup}
   \registers{
      qubit {} q[15];   
   }
   \circuit{
      slash q[2];

      [name=Ad] box {\Ifnum\idx<1 $(A_a^\dagger)^{\otimes 6}$\Else $(A_d^\dagger)^{\otimes 6}$\Fi} q[2], (q[3-]);   
      \draw (Ad-0) -- (Ad-1);
      hspace {2mm} -;

      box {$vPREP_1$} (q[4-14]);
      hspace {2mm} -;

      h q[1];
      h q[13];
      h q[14];
      zz (q[1,13,14]);
      h q[13], q[14] | q[1] ;
      hspace {0mm} -;
      cnot q[0];
      h q[1];
      cnot q[0] | ~ q[9-];
      h q[9-];
    }
    \equals
    \circuit{
      slash q[0,3];
      discard q[1,2],q[4-];
      box {$(A^\dagger)^{\otimes 6}$} (q[0-3]);   
      box {$vPREP$} (q[0-3]);  
    }
   \end{yquantgroup}
\end{tikzpicture}
\caption{Block-encoding circuit for $(c_k\cdot c_l)$. The slash on the LHS denotes a 6-qubit ancillary register upon which $(A_a^\dagger)^{\otimes 6}$ acts; there are an additional two ancillae above for the block encoding of a matrix with top row [1,1,1,0]. The slashes on the RHS denote an 8-qubit ancillary register (upper) and a 12-qubit data register (lower) upon which $(A_d^\dagger)^{\otimes 6}$ acts.}
\label{fig:ckdotcl}
\end{figure}

To access the sum in \eqref{eq:f2_block} and \eqref{eq:f3_block}, we use a simple LCU circuit to combine the circuits in figures \ref{fig:calC} and \ref{fig:ckdotcl}.
\begin{lemma}
The circuit in figure \ref{fig:sum} is a $(\frac{3}{1664},(6,9),\epsilon )$ block encoding of
$-3(c_i\cdot c_k)(c_i\cdot c_l)+(c_k\cdot c_l)$ costing $268+2.3\log (1/\epsilon )$ $T$-gates.
\end{lemma}
{\bf Proof:} Notice that figure \ref{fig:calC} block encodes $\frac{1}{2^{12}}(c_i\cdot c_k)(c_i\cdot c_l)$ and figure \ref{fig:ckdotcl} block encodes $\frac{1}{2^{10}}(c_k\cdot c_l)$. Therefore, we choose $\theta _1$ such that 
\begin{equation}
    R_y(\theta _1)=\frac{1}{\sqrt{13}}\begin{pmatrix}
    1 & -\sqrt{12} \\
    \sqrt{12} & 1
\end{pmatrix}.
\end{equation}
This implies the matrix being block encoded by the LCU sum is $\frac{1}{13\cdot 2^{10}}\left( -3(c_i\cdot c_k)(c_i\cdot c_l)+\frac{1}{2^{10}}(c_k\cdot c_l)\right)$. By placing figures \ref{fig:calC} and \ref{fig:ckdotcl} in the LCU, notice $(A^\top)^{\otimes 6}$ can be implemented uncontrolled, thus reducing gate count. Not including the rotations, there are 252 $T$-gates in this circuit. If each rotation in approximated to an accuracy $\epsilon$, then the total circuit has accuracy $3328\epsilon /3$. By normalizing the error of the total circuit, we arrive at the result. 
\begin{figure}
\centering
\begin{tikzpicture}
  \begin{yquantgroup}
    \registers{
        qubit {} q[3];
    }
    \circuit{
        slash q[1,2];
        hspace {0mm} -;
        box {$\left(A^\dagger\right)^{\otimes 6}$} (q[1-2]);
        box {$R_y(\theta_1)$} q[0];
        box {$vPREP$} (q[1-2]) | ~ q[0];
        box {$CPREP$} (q[1-2]) | q[0];
        box {$A^{\otimes 6}$} (q[1-2]) | q[0];
        hspace {0mm} -;
        box {$R_y(\theta_1)$} q[0];    
    }
    \equals
    \circuit{
        slash q[1,2];
        box {$LCU$} (q);
    }
   \end{yquantgroup}

\end{tikzpicture}
\caption{Block-encoding circuit for $-3(c_i\cdot c_k)(c_i\cdot c_l)+(c_k\cdot c_l)$. The slash on the middle line denotes an 8-qubit ancillary register, while the slash on the bottom line denotes the 12-qubit data register.}
\label{fig:sum}
\end{figure}

    
    

To arrive at the final block encoding of $F_2$, we place figure \ref{fig:W} in series with figure \ref{fig:sum} and use a series of multi-controls to zero-pad the rest of the matrix.
\begin{proposition} \label{prop:F2}
The circuit in figure \ref{fig:F2circuit} is a $(\frac{3}{13312},(6,22),\epsilon )$ block encoding of $F_2$ costing $328+5.75\log (1/\epsilon )$ $T$-gates.
\end{proposition}
{\bf Proof:} Since the error of a product of two block encodings with respective errors $\epsilon _1,\epsilon _2$ has error $\epsilon _1+\epsilon _2$, we can optimize the $T$-gate count by choosing the $W$ operation to have error $2\epsilon /5$ and the LCU operator to have error $3\epsilon /5$; this leads to an extra 4 $T$-gates coming out of the logarithms. There are 56 additional $T$-gates required to zero-pad the circuit. $\hfill\Box$
\begin{figure}
\centering
\begin{tikzpicture}
  \begin{yquant}
    qubit {} q[25];

    cnot q[12] | q[23],q[24];
    cnot q[11] | q[21],q[22];
    cnot q[10] | q[19],q[20];
    cnot q[9] | q[17],q[18];
    cnot q[8] | q[15],q[16];
    cnot q[7] | q[13],q[14];

    box {$LCU$} (q[13-]);

    box {$R_y(\theta_2)$} q[6] | ~ q[23];
    box {$R_y(\theta_2)$} q[5] | ~ q[21];
    box {$R_y(\theta_2)$} q[4] | ~ q[19];

    hspace {0mm} -;

    cnot q[3];
    cnot q[3] | ~ q[13],q[14],q[15],q[16],q[17],q[18];
    cnot q[2] | q[23],q[24];
    cnot q[1] | q[21],q[22];
    cnot q[0] | q[19],q[20];
  \end{yquant}
\end{tikzpicture}
\caption{Block-encoding circuit for $F_2$. The bottom 12 lines are the data qubit register; all lines above are ancillae. The $LCU$ subcircuit includes 9 persistent ancillae not shown here (see figure \ref{fig:sum}).}
\label{fig:F2circuit}
\end{figure}

\paragraph{Block encoding $F_3$.}

To extend proposition \ref{prop:F2} to a block encoding of $F_3$, we need to tensor $F_2$ with a line of $[1,1,1,0]$ tensored with itself three times.
\begin{proposition}
The circuit in figure \ref{fig:F3circuit} is a $(\frac{3}{106496},(6,25),\epsilon )$ block encoding of $F_3$ costing $340+5.75\log (1/\epsilon )$ $T$-gates.
\end{proposition}
{\bf Proof:} Compared to the results in proposition \ref{prop:F2}, we only need an extra 12 $T$-gates for the zero-padding, an additional 3 qubits, and multiplying the scaling factor by $1/8$ due to the Hadamard gates. 

\begin{figure}[H]
\centering
\begin{tikzpicture}
  \begin{yquant}
    qubit {} q[22];

    cnot q[3] | q[8],q[9];
    cnot q[2] | q[6],q[7];
    cnot q[1] | q[4],q[5];

    hspace {0mm} -;
    
    cnot q[0];

    box {$H$} q[4];
    box {$H$} q[5];
    box {$H$} q[6];
    box {$H$} q[7];
    box {$H$} q[8];
    box {$H$} q[9];

    cnot q[0] | ~ q[4-9];

    box {$F_2$} (q[10-]);
  \end{yquant}
\end{tikzpicture}
\caption{Block-encoding circuit for $F_3$.}
\label{fig:F3circuit}
\end{figure}

\subsection{Block encoding the $S$-matrix}

In this section, we block encode the streaming matrix $S \in \{-1,0,1\}^{nQ \times nQ}$ defined in (\ref{eq:tildeS}). We first re-write the definition so that it's evident how it acts on vector spaces of position and velocity.

We define the following vector space structure
\begin{equation}
   \label{eq:streaming_vector_space}
   \mathcal{V}= V_{\text{position}} \otimes V_{\text{velocity}} \otimes V_{\text{auxiliary}}.
\end{equation}
The third tensor factor $V_{\text{auxiliary}}$ will be used to encode the constraints that correspond to different grid points being fluid or solid nodes.  

The $S^{\text{in}}$ term from \eqref{eq:streaming_matrix} can be composed of operations that employ arithmetic expressions 
and a comparator, and so it can readily be created by composing a series of reversible computations.  This implies that we have a sum of unitaries for the streaming operators. 
Block encoding each term in \eqref{eq:streaming_matrix}, we have the following equation
\begin{equation}
    \label{eq:lcu_streaming}
    U_{S}= -I + U_{S^{\text{in}}}.
\end{equation}

For the implementation of $U_{S^{\text{in}}}$, we break $S^{\text{in}}$ into the terms
\begin{equation}
    S^{\text{in}} = S^{1} + S^{2}.
\end{equation}
$S^{1},S^{2}$ will be applied conditioned on the state of some flag qubits. $S_1$ will only act if grid point $\mathbf{x}$ is a fluid node and $S_2$ will only act if $\mathbf{x} - \mathbf{c}_i$ is a solid node. We define the following $S_{\text{shift}}$ operator:
\begin{align}
    \label{eq:shifting_operator}
    S_{\text{shift}} 
    \ket{\mathbf{x}} 
    \ket{\mathbf{c}_i}
    \ket{0}_{\text{auxiliary}}
    &= 
    \ket{\mathbf{x}}
    \ket{\mathbf{c}_i}
    \ket{\mathbf{x} - \mathbf{c}_i}_{\text{auxiliary}}.
\end{align}


Thus the procedure for applying $U_S^{\text{in}}$ is as follows. First use oracle $O_{\mathcal{N}}$ to compute whether a point $\mathbf{x}$ is a fluid or solid node.  Then condition the application of $U_{S^1}$ on the result.  To apply $U_{S^2}$, first apply $S_{\text{shift}}$ to calculate the $\ket{\mathbf{x} - \mathbf{c}_i}$ position.  Then apply oracle $O_{\mathcal{N}}$ again on the shifted position and condition the application of of $U_{S^2}$ on the result.  See Figure \ref{fig:circuit_U_S_in}.

\begin{figure}[ht!]
     \centering
      \yquantset{operator/separation=3mm}
\begin{tikzpicture}
  \begin{yquant}
    qubit {$\ket{0}_{S^2}$} flags2;
    qubit {$\ket{0}^{\otimes k}_{\text{aux}}$} auxiliary;
    qubit {$\ket{0}_{S^1}$} flags1;
    qubit {$\ket{\mathbf{x}}$} position;
    qubit {$\ket{\mathbf{c}}$} velocity;
    qubit {$\ket{0}_{\text{ancillae}}^{\otimes b}$} ancillae;

    box {$O_{\mathcal{N}}$} (auxiliary, flags1, position);
    box {$U_{S^{1}}$} (position, velocity, ancillae) ~ flags1;
    box {$S_{\text{shift}}$} (auxiliary, flags1, position, velocity);
    box {$O_{\mathcal{N}}$} (auxiliary, flags2);
    box {$U_{S^{2}}$} (position, velocity, ancillae) | flags2;

  \end{yquant}
\end{tikzpicture}
\caption{Circuit for implementing $U_{S^{\text{in}}}$. The sufficiently large auxiliary space is the $\ket{0}^{\otimes k}_{\text{aux}}$ register, the flag qubit for $S^1$ is $\ket{0}_{S^1}$, flag qubit for $S^2$ is $\ket{0}_{S^2}$. The register $\ket{0}_{\text{ancillae}}^{\otimes b}$ is used for block encoding and $b$ is sufficiently large.}
\label{fig:circuit_U_S_in}
\end{figure}

Similar to the previous section, quantum costs will be $T$-gate counts. The count will be calculated by costing out oracle $O_{\mathcal{N}}, U_{S^{1}}, U_{S^{2}} \text{ and } S_{\text{shift}}$. The $T$-gate counts that follow depend on $n_p$, defined as 
\begin{equation}
    n_p = \max \{ \lceil \log_2 n_x \rceil , \lceil \log_2 n_y \rceil , \lceil \log_2 n_z \rceil \},
\end{equation}
which is the size of a qubit register required to encode the integer lattice position in the $x$, $y$, or $z$ direction. 

\paragraph{$S_{\text{shift}}$}
To cost out (\ref{eq:shifting_operator}), we first state how we shall use a \textit{two's complement} scheme to encode the velocity vectors. The velocity magnitudes are $(c_x, c_y, c_z) \in \{1, 0, -1 \}^{3}$.  We encode the magnitudes as bit strings as $ 00 \longmapsto 0, 01 \longmapsto 1, 11 \longmapsto -1$. The bit string $10$ is unphysical and not used. During the initial state preparation, the amplitude of states that include the bit string $10$ are fixed to zero.  We need to pad representation to the left with the appropriate number of ones in the case of $-1$ or zeros  in the case of $0$ and $1$ in order to facilitate the subtraction. The $S_{\text{shift}}$ operator will need to do the following subtractions
\begin{align}
    \ket{x} \ket{c_x} \ket{0}_{\text{aux}} 
    &\longmapsto
    \ket{x} \ket{c_x} \ket{x-c_x}_{\text{aux}}
    \\
    \ket{y} \ket{c_y} \ket{0}_{\text{aux}} 
    &\longmapsto 
    \ket{y} \ket{c_y}\ket{y-c_y}_{\text{aux}}
    \\
    \ket{z} \ket{c_z} \ket{0}_{\text{aux}} 
    &\longmapsto 
    \ket{z} \ket{c_z}\ket{z-c_z}_{\text{aux}}.
\end{align}
Using \cite{Gidney2018}, we need $12(n_p-1)$ $T$-gates.

\paragraph{Oracle $O_\mathcal{N}$ for the Sphere}
Again using \cite{Gidney2018, Kan2023}, to implement $O_{\mathcal{N}}$ (or ultimately \eqref{eq:classical_function}), we need to 
\begin{enumerate}
    \item subtract a constant for at most three possible different directions, using $\left(3(4n_p-8)\right)$ $T$-gates and at most $3(2n_p-2)$ ancilla qubits, and 
    \item square numbers for at most three possible directions using $\left(12n_p^2\right)$ $T$-gates, and
    \item add numbers for at most three possible directions using $3(4n_p-4)$ $T$-gates, and
    \item perform a comparison with sphere radius $r$ using $\left(8n_p-16\right)$ $T$-gates.
\end{enumerate}

\paragraph{Oracle $O_\mathcal{N}$ for the Rectangular prism} For the rectangular prism, we use the following subroutines.
\begin{enumerate}
    \item  6 comparators to implement the Heaviside function, which is $6(8n_p -16)$ $T$-gates.
    \item 6 adding of variables to constants which is $6(4n_p-8)$ $T$-gates.
    \item Subtractions of the outcomes from the Heaviside functions. 
  However, since we are working binary addition, this can be done with simple CNOT gates. The outcome of this leaves 3 ancilla qubits in either the $\ket{0}$ or \ket{1} state.
    \item 3 multiplications from subtracting 3 pairs of Heaviside functions can be accomplished by a 3 bit-Toffoli gate which would cost 6 $T$-gates using \cite{Gidney2021}.
\end{enumerate}
This gives a total cost of $72n_p -138$ $T$-gates for a rectangular prism.  We have $N_{prism}$ prisms.

\paragraph{Block encoding $S^1$ and $S^2$}

We shall use the second model for the \textit{structured} input represented by the circuit in Figure \ref{fig:Oprime_data_block_encoding} to create $U_{S^1}$ and $U_{S^2}$. $S^1, S^2 \in \{0,1\}^{nQ \times nQ}$, so it's clear that $O'_{\text{data}}$ is the identity since the only nonzero matrix elements are $1$. The next task is to cost out the oracle $O_c'$. We can use the combination of (\ref{eq:mu-fluid}) and (\ref{eq:c_mu}) to obtain an algebraic equation for the row index given a column index. For the streaming matrix $S_{\mu\eta}$
\begin{equation}
  \begin{split}
    \mu &= \eta + \left( Q \left[ (\eta \mod Q )\mod 3 + 2 \mod 3 \right] -Q \right) + \\ 
   & \left( Q n_x \left[(\frac{\eta \mod Q}{3}) \mod 3 + 2 \mod 3 \right] - Q n_x \right) + \\
   & \left( Q n_x n_y \left[(\frac{\eta \mod Q}{9}) \mod 3 + 2 \mod 3 \right] - Q n_x n_y \right)
   \end{split}
\end{equation}
Noting that $ \lfloor \frac{x \mod Q }{q}\rfloor \mod 3  \equiv \lfloor \frac{x}{q} \rfloor \mod 3$ if $3|Q$ means that our statements are, in fact,
\begin{equation}
  \label{eq:streaming_matrix_row_column_eq}
  \begin{split}
    \mu &= \eta + \left( Q \left[ (\eta \mod 3 + 2 \mod 3 \right] - Q \right) + \\ 
   & \left( Q n_x \left[ \lfloor \frac{\eta }{3} \rfloor \mod 3  + 2 \mod 3 \right] -Q n_x \right) + \\
   & \left( Q n_x n_y \left[ \lfloor \frac{\eta }{9} \rfloor \mod 3  + 2 \mod 3 \right] - Q n_x n_y \right)
   \end{split}
\end{equation}

In costing out the blocking encodings, we shall use some small bespoke circuits for the $ mod\hspace{2mm} 3$ computations. Our two-qubit state representation will be $0 \longmapsto \ket{00},1 \longmapsto \ket{01}, 2 \longmapsto \ket{11}$ with $\ket{10}$ not being a valid state and therefore will be a fixed point. The first gadget is the $INC_3(x)$ which just does the operation $x  \leftarrow (x+1)\mod 3$ that is represented in Figure~\ref{fig:$INC_3$}. 

\begin{figure}[h!]
\centering
\yquantset{operator/separation=6.1mm}
\begin{tikzpicture}
  \begin{yquant}
    qubit {$\ket{b}$} b;
    qubit {$\ket{a}$} a;

     cnot b | a;
     cnot a | ~b;

  \end{yquant}
\end{tikzpicture}
\caption{$INC_3(x)$ for $\ket{ba}$}
\label{fig:$INC_3$}
\end{figure}

The next gadget  (Figure~\ref{fig:$ADD_3$}) performs the operation $x \leftarrow (x + y) \mod 3$; we require two clean ancilla initialized in $\ket{00}$

\begin{figure}[h!]
\centering
\yquantset{operator/separation=6.1mm}
\begin{tikzpicture}
  \begin{yquant}
    qubit {$\ket{y}_1$} y1;
    qubit {$\ket{y}_2$} y2;
    qubit {\ket{0}} ancilla1;
    qubit {\ket{0}} ancilla2;
    qubit {$\ket{x}_1$} x1;
    qubit {$\ket{x}_2$} x2;

    cnot ancilla1 | y2, ~y1;
    cnot ancilla2 | y2, y1;
    box {$INC_3$} (x1,x2) | ancilla1;
    box {$INC_3$} (x1,x2) | ancilla2;
    box {$INC_3$} (x1,x2) | ancilla2;
    cnot ancilla2 | y2, y1;
    cnot ancilla1 | y2, ~y1;

  \end{yquant}
\end{tikzpicture}
\caption{$ADD_3(x)$}
\label{fig:$ADD_3$}
\end{figure}

The last gadget we introduce is $\textsc{REM}_3(x)$ which comes from the need to find a remainder $\mod 3$. First note that the remainder of an integer, $N$,  is $R = N\mod 3 = \sum_{\text{even bits}} - \sum_{\text{odd bits}}$. Thus we just need to use $INC_3$ on the even qubits and use the reverse $DEC_3$ on the odd qubits.

Now that have our $\mod 3$ gadgets, we can explain the general strategy to add
\begin{equation}
    C \left[ \lfloor \frac{\eta }{q} \rfloor \mod 3  + 2 \mod 3 \right],
\end{equation}
where $C \in \{Q, Qn_x, Qn_xn_y \}$ and $q\in \{ 3,9 \} $. It's important to realize that expression in the bracket will be either 1 or 2 (where we ignore 0 since it would mean we are not adding the expression) and therefore we are either adding $C$ or $2C$ to $\eta$. Now we need a way to calculate $\lfloor \frac{\eta }{q} \rfloor \mod 3 $. For $q=3$ we can invoke $-(REM_3(x))$ but skip controlling on the first bit of $\eta$.  For $q=9$, we invoke $REM_3(x)$ but skip controlling on the first and second bit of $\eta$. Lastly, we need a way to add $C$ or $2C$ to $\eta$. Given our $\mod 3$ encoding, we can control on the first qubit of our two qubit register and it will fire if it is $\ket{1}$ (recall $1 \longmapsto\ket{01}$ and $2 \longmapsto \ket{11}$). But if the two qubit register is in the state $\ket{11}$, we can use a Toffoli gate and place a clean ancilla initialized in $\ket{0}$ and use it as a control to fire another addition of $C$.
Once all that is done, we can subtract the constant $Q + Qn_x + Qn_xn_y$.

\begin{figure}[h!]
\centering

\yquantset{operator/separation=4.1mm}
\begin{tikzpicture}

  \begin{yquant}
    qubit {$\ket{\eta}$} eta;
    qubit {$\ket{a_1}$} a1;
    qubit {$\ket{a_0}$} a0;
    qubit {$\ket{b_1}$} b1;
    qubit {$\ket{b_0}$} b0;
    qubit {$\ket{c_1}$} c1;
    qubit {$\ket{c_0}$} c0;
    qubit {$\ket{f_a}$} fa;
    qubit {$\ket{f_b}$} fb;
    qubit {$\ket{f_c}$} fc;

    box {$REM$} (a1, a0) | eta;
    [red, control style=red] 
    box {$-REM$} (b1, b0) | eta;
    [blue, control style=blue] 
    box {$REM$} (c1, c0) | eta;
    box {$+2 \mod 3$} a1, a0;
    box {$+2 \mod 3$} b1, b0;
    box {$+2 \mod 3$} c1, c0;

    box {$+C$} eta | a0;
    cnot fa | a0, a1 ;
    box {$+C$} eta | fa;

    box {$+C$} eta | b0;
    cnot fa | b0, b1 ;
    box {$+C$} eta | fb;

    box {$+C$} eta | c0;
    cnot fa | c0, c1 ;
    box {$+C$} eta | fc;

  \end{yquant}
\end{tikzpicture}
\caption{Circuit for implementing $ C \left[ \lfloor \frac{\eta }{q} \rfloor \mod 3  + 2 \mod 3 \right]$. The color red for the second control is meant  denote that are skipping the first qubit in the $\ket{\eta}$ register and blue is meant to denote that we are skipping the first two qubits in the $\ket{\eta}$}
\label{fig:$Oc_circuit$}
\end{figure}

We are now in a position to cost out operation $O'_c$. This has 3 controlled \textsc{REM} gadgets each of which has $INC_3$ gadgets that has 2 \textsc{CNOT} (one which is controlled on the \ket{0} state). There are about $ \lceil\log_2(nQ)\rceil $ $INC_3$ or $DEC_3$ since we are controlling on each qubit in $\ket{\eta}$. More precisely, we have $2 \left(\lceil\log_2(nQ)\rceil + \lceil\log_2((n-1)Q)\rceil + \lceil\log_2((n-2)Q)\rceil \right) $ Toffoli gates for the 3 instances of the controlled $\textsc{REM}$ gadgets or $14\left(\lceil\log_2(nQ)\rceil + \lceil\log_2((n-1)Q)\rceil + \lceil\log_2((n-2)Q)\rceil  \right) $ $T$ gates. Then using \cite{fedoriaka2025newcircuitquantumadder} for the controlled constant adder, we have $5(11 n-15)$ $T$ gates. The computation and un-computation of the flag registers $\{\ket{f_a}, \ket{f_b}, \ket{f_c} \}$ takes 42 $T$ gates. Lastly we use \cite{fedoriaka2025newcircuitquantumadder}
to add the constant $Q + Qn_x + Qn_xn_y$ with $11n-15$ $T$ gates.

Table \ref{table:T_gates_for_streaming} summarizes the results.
 \begin{table}[ht]
\centering
\begin{tabular}{|c|c|c|}
\hline
Operator & $T$-gate counts (sphere) & $T$-gate counts (rectangular prism)\\
\hline
$O_{\mathcal{N}}$ & 12 $n_p^2 +32 n_p - 52$ & $N_{prism}(72n_p -138)$ \\
\hline
$S_{\text{shift}}$ & $12(n_p -1) $ & $12(n_p -1)$ \\
\hline
$S^1, S^2$ & \multicolumn{2}{c|}{$ 6(11n-15) +14\left(\lceil\log_2(nQ)\rceil + \lceil\log_2((n-1)Q)\rceil + \lceil\log_2((n-2)Q)\rceil  \right) +42  $} \\
\hline
\end{tabular}
\caption{$T$-gate counts for $O_{\mathcal{N}}, S_{\text{shift}}$, $S^1$, $S^2$.}
\label{table:T_gates_for_streaming}
\end{table}

We are finally in a position to write the block-encoding circuit for streaming matrix as defined in 
(\ref{eq:lcu_streaming}).  The circuit is shown in Figure \ref{fig:block_encoding_streaming_matrix_S}.

\begin{figure}[H]
     \centering
      \yquantset{operator/separation=3mm}
\begin{tikzpicture}
  \begin{yquant}
    qubit {$\ket{0}$} ancilla;
    qubit {$\ket{\psi}$} psi;
 
    box {$H$} ancilla;
    box {$U_{S^{in}}$} psi ~ ancilla;
    box {$H$} ancilla;
    box {$X$} ancilla;
    measure ancilla; output {$0$} ancilla;
  \end{yquant}
\end{tikzpicture}
\caption{The above gives us an (1,1,0) Block Encoding for streaming matrix, $S$ in (\ref{eq:lcu_streaming}). }
\label{fig:block_encoding_streaming_matrix_S}
\end{figure}

\subsection{Block Encoding the Linear Term ($S + F_1$)}
Now that we have a block encoding for $F_1$ and $S$, the last step is to present the block encoding for $S+F$. We will use a combination of the definition of a state preparation pair Definition 51 in  \cite{Gilyen2019long}, Lemma 52 also in \cite{Gilyen2019long} and Theorem 1 in \cite{Camps2020}. We reproduce them here for the convenience of the reader

\begin{definition}[State preparation pair]
\label{def:state_prep_pair}
Let $y \in \mathbb{C}^n$ and $\|y\|_1 \leq \beta$, the pair of unitaries $(P_L, P_R)$ 
is called a $(\beta,b,\varepsilon)$-state-preparation-pair if 
\[
P_L|0\rangle^{\otimes b} = \sum_{j=0}^{2^b-1} c_j |j\rangle 
\quad \text{and} \quad
P_R|0\rangle^{\otimes b} = \sum_{j=0}^{2^b-1} d_j |j\rangle
\]
such that 
\[
\sum_{j=0}^{m-1} \big| \beta(c_j^* d_j) - y_j \big| \leq \varepsilon
\]
and for all $j \in m, \dots, 2^b-1$ we have $c_j^* d_j = 0$.
\end{definition}

\begin{lemma}[Linear combination of block-encoded matrices]
\label{lemma:lcu_of_block_encoded_matrices}
Let $A = \sum_{j=1}^m y_j A_j$ be an $s$-qubit operator and $\varepsilon \in \mathbb{R}_+$. 
Suppose that $(P_L, P_R)$ is a $(\beta,b,\varepsilon_1)$-state-preparation-pair for $y$, 
\[
W = \sum_{j=0}^{m-1} |j\rangle \langle j| \otimes U_j 
  + \Big( I - \sum_{j=0}^{m-1} |j\rangle \langle j| \Big) \otimes I_a \otimes I_s
\]
is an $s+a+b$ qubit unitary such that for all $j \in 0,\dots,m$ we have that $U_j$ is an 
$(\alpha,a,\varepsilon_2)$-block-encoding of $A_j$. 

Then we can implement a 
$(\alpha\beta, a+b, \alpha\varepsilon_1 + \alpha\beta\varepsilon_2)$-block-encoding of $A$, 
with a single use of $W$, $P_R$ and $P_L^{\dagger}$.
\end{lemma}

\begin{theorem}
Let $A$ be as defined in Lemma~\ref{lemma:lcu_of_block_encoded_matrices} with 
$(\alpha^{(j)}, a^{(j)}, \epsilon^{(j)})$-block-encodings of 
$A_j$, 
for $j \in [m]$, 

Assume that all block-encodings are extended to 
$a = \max_j a^{(j)}$ ancilla qubits, 
$\alpha = \max_j \alpha^{(j)}$, 
and $\epsilon_1 = \max_j \epsilon^{(j)}$. 

Then, by Lemma~\ref{lemma:lcu_of_block_encoded_matrices}, we can construct a unitary $U_{b+n}$ that is an 
$(\alpha\beta, a+b, \alpha\epsilon_2 + \beta\epsilon_1)$-block-encoding of $B_s$.
\end{theorem}

For construction a block encoding o $U_{S+F_1}$ given block encoding of $U_S$ and $U_{F_1}$, we have $(H,H)$ state preparation pair which is a $(2,1,0)$-state preparation pair for vector $y =(1,1)$. Using the block encoding of $S$, $F_1$, the following parameters 
\begin{equation}
    a_{S+F_1} = \max\{1,9\}, \hspace{1mm} \alpha_{S+F_1} = \max\{\frac{1}{257},1\}, \hspace{1mm} \epsilon_{S+F_1} = \max\{\epsilon_{F_1}, \epsilon_S \},
\end{equation}
 we have a $(2,10,2\epsilon_{F_1})$ block encoding of $S+F_1$.

\subsection{Block encoding the Carleman $A$-matrix}
\label{subsec:carleman_matrix_block_encoding}

Lastly, we need to block encode the Carleman matrix $A$, as defined in \eqref{eq:3rd_order_carleman_linearized_system_renamed}, using the block encodings of $F_1$, $F_2$, $F_3$, and $S$ that were developed in the previous sections.
A challenge presented by the definition of $A$ is that each of the blocks in the matrix are different sizes, making the choice of how to map the domain onto the space of qubits nontrivial.
The strategy we take is to define $\tilde{A}$ that acts on an extended space, while preserving the action of $A$ on a particular subspace.  The original Carleman matrix $A$ is $\left[ nQ + (nQ)^2 + (nQ)^3\right] \times \left[ nQ + (nQ)^2 + (nQ)^3 \right]$.  The extended space matrix $\tilde{A}$ will be $3(nQ)^3 \times 3(nQ)^3$.  For each block in $A$ there will be a corresponding block in $\tilde{A}$.  In our construction of $\tilde{A}$, we will ensure that all blocks in $\tilde{A}$ are conveniently the same size: $(nQ)^3 \times (nQ)^3$.
The advantage of this embedding is that we can factorize the domain and range of $\tilde{A}$ as $ \mathbb{R}^3 \otimes \mathbb{R}^{(nQ)} \otimes \mathbb{R}^{(nQ)}  \otimes \mathbb{R}^{(nQ)}$.
The basis vectors in the first factor $\mathbb{R}^3$ label the row and column locations of the blocks.
This factorization facilitates the quantum compilation that incorporates the block encodings of $S$, $F_1$, $F_2$, and $F_3$ into a single block encoding of $\tilde{A}$.

We now define the matrices comprising the blocks of $\tilde{A}$ in terms of how they relate to the six nonzero blocks of $A$ (see \eqref{eq:3rd_order_carleman_linearized_system}), which we refer to as $A_{\alpha,\beta}$, where $\alpha$ and $\beta$ label the block row and block column locations. 
One step we will make is to map the nonsquare $F_2$ and $F_3$ into square matrices $\bar{F}_2$ and $\bar{F}_3$ by padding with zero rows.

The padded versions of these matrices can be defined by appending zero rows via a tensor product as $\bar{F}_2 = \hat{e}_0^{T}\otimes F_2$ and $\bar{F}_3 = \hat{e}_0^{T}\otimes\hat{e}_0^{T}\otimes F_3$, where we choose $\hat{e}_0$ to be the first basis vector in $\mathbb{R}^{(nQ)}$. Defining $F'_1=S+F_1$, the embeddings for the blocks are:
\begin{align}
    A_{1,1}=F'_1 &\longmapsto \tilde{A}_{1,1}= F'_1 \otimes I \otimes I\\
    A_{1,2}=F_2 =  & \longmapsto\tilde{A}_{1,2}=  \hat{e}_0^{T} \otimes F_2 \otimes I \\
    A_{1,3}=F_3  &\longmapsto\tilde{A}_{1,3}= \hat{e}_0^{T} \otimes \hat{e}_0^{T} \otimes F_3 \\
    A_{2,2}=F'_1 \otimes I + I\otimes F'_1 &\longmapsto\tilde{A}_{2,2}=  \left(F'_1 \otimes I + I\otimes F'_1 \right)\otimes I \\
    A_{2,3}= F_2 \otimes I + I \otimes F_2  &\longmapsto\tilde{A}_{2,3}= \hat{e}_0^{T} \otimes (F_2 \otimes I + \otimes I  \otimes F_2)\\
    A_{3,3}= F'_1 \otimes I \otimes I + I\otimes F'_1 \otimes I + I \otimes I \otimes F'_1 &\longmapsto\tilde{A}_{3,3}=  F'_1 \otimes I \otimes I + I\otimes F'_1 \otimes I + I \otimes I \otimes F'_1 
\end{align}
where the identity matrices act on $\mathbb{R}^{(nQ)}$, and the embedding for vectors in the domain of $A$ is defined as
\begin{align}
    \phi=\phi_1\oplus \phi_2\oplus \phi_3&\longmapsto\tilde{\phi}=(\hat{e}_0\otimes\hat{e}_0\otimes \phi_1) \oplus (\hat{e}_0\otimes \phi_2)\oplus \phi_3,
\end{align}
where the direct product structure on the left corresponds to $\mathbb{R}^{(nQ)}
\oplus
\left( \mathbb{R}^{(nQ)} \right)^{\otimes 2}
\oplus
\left( \mathbb{R}^{(nQ)} \right)^{\otimes{3}}$.
Though tedious, one can confirm that applying $\tilde{A}$ to any such $\tilde{\phi}$ will result in a vector of the form $\tilde{\phi}=(\hat{e}_0\otimes\hat{e}_0\otimes \phi_1) \oplus (\hat{e}_0\otimes \phi_2)\oplus \phi_3$.
Accordingly, we say that the above embedding preserves the action of $A$.
Critically, this ensures that solutions to $\frac{\partial}{\partial t}\tilde{\phi}(t)=\tilde{A}\tilde{\phi}(t)$ are simply solutions to $\frac{\partial}{\partial t}\phi(t)=A\phi(t)$ whenever the initial state $\tilde{\phi}(t=0)$ is an embedded vector.

Since we have shown that it is sufficient to generate solutions to the extended linear ODE $\frac{\partial}{\partial t}\tilde{\phi}(t)=\tilde{A}\tilde{\phi}(t)$, we will develop an approach to block encoding $\tilde{A}$.
First, we decompose $\tilde{A}$ into three terms $\tilde{A}=\tilde{A}_1+\tilde{A}_2+\tilde{A}_3$, where
\begin{align}
    \tilde{A}_1&=\ket{1}\bra{1}\otimes \tilde{A}_{1,1}+\ket{2}\bra{2}\otimes \tilde{A}_{2,2}+\ket{3}\bra{3}\otimes \tilde{A}_{3,3}\\
    \tilde{A}_2&=\ket{2}\bra{1}\otimes \tilde{A}_{1,2}+\ket{1}\bra{2}\otimes \tilde{A}_{2,1}\\
\tilde{A}_3&=\ket{3}\bra{1}\otimes \tilde{A}_{1,3}.
\end{align}
Because each of these terms only involves one of $F'_1$, $F_2$, and $F_3$ and depends linearly on them, we can block encode each individually using a single call to each  respective block encoding.

Although in our specific case we are interested in $F_r$ only up to $r=3$, we will establish a result that applies for arbitrary number of $F_r$ and $\tilde{A}_r$. We will use this to give a general theorem for constructing Carleman matrix block encodings. For the resource estimates of section \ref{sec:quantum-resource-estimates}, we will use this theorem to obtain the qubit counts, subnormalizations, and failure rates of the Carleman matrix block encoding.

Our first result is a lemma that constructs each block encoding of $\tilde{A}_r$ from a single call to the block encoding of $\bar{F}_r$.
\begin{lemma} 
\label{lem:F_r_matrix_BE}
Given $U_{\bar{F}_r}$, an $(\alpha_r, m_r, \epsilon_r)$ block encoding of the square matrix $\bar{F}_r=\ket{0}^{\otimes r-1}\otimes F_r$, the following circuit acting on $nT+m_r+3\lceil\log_2(T-r+1)\rceil$ qubits gives an
\[
\left(\alpha_r\frac{(T-r+1)(T-r + 2)}{2}, m_r+2\log_2(T-r+1),(T-r+1)\epsilon_r \right)
\]
block encoding of the $r^{\text{th}}$ degree term in the Carleman matrix
\begin{align}
    \tilde{A}_r = \sum_{k=1}^{T-r+1} |k\rangle\langle k+r-1|\otimes \ket{0}^{\otimes r-1}\sum_{j=1}^k I^{\otimes (j-1)}\otimes F_r\otimes I^{\otimes (T-r-j+1)},
\end{align}
where each $I$ is a $2^n\times 2^n$ matrix and $F_r$ is a $2^n\times (2^n)^r$ matrix:

\begin{figure}
      \yquantset{operator/separation=3mm}
\begin{tikzpicture}
  \begin{yquant*}[every nobit output/.style={}, register/separation=3mm]
    qubit {$\ket{ 0^{\lceil \log_2(T-r+1) \rceil }}$} preparepermutation;
    qubit {$\ket{ 0^{\lceil \log_2(T-r+1) \rceil}}$} trunc;
    qubit {$\ket{ 0^{\lceil \log_2(T-r+1) \rceil}}$} prepareblock;
    qubit {$\ket{\psi}_{j=1}$} x;
    qubit {$\vdots$} x[+1]; discard x[1];
    qubit {$\ket{\psi}_{j=T-r+1}$} x[+1];
    qubit {$\ket{\psi}$} y;
    qubit {$\ket{\psi}_1$} z;
    qubit {$\vdots$} z[+1]; discard z[1];
    qubit {$\ket{\psi}_{r-1}$} z[+1];
    qubit {$\ket{0^{m_r}}$} block_encoding_ancilla;
    
    box {$C_r$} trunc;
    box {$U_K$} prepareblock;
    box {$ U_S$}  preparepermutation| trunc, prepareblock;
    box {$P$} (x,y,z) | preparepermutation ;
    box {$U_{\overline{F}_r}$} (y,z, block_encoding_ancilla);
    box {$P^{'}$} (x,y) | preparepermutation;
    box {$ U_S^{\dagger}$}  preparepermutation| trunc, prepareblock; 
    box {$U_K^{\dagger}$} prepareblock;
  
    output {$\ket{0^{\lceil \log_2(T-r+1) \rceil}}$} prepareblock;
    output {$\ket{0^{N}}$} z[0];
    output {$\vdots$} z[1];
    output {$\ket{0^{N}}$}z[2];
    output {$\ket{0^{\lceil \log_2(T-r+1) \rceil}}$} preparepermutation;
    output {$\ket{0^{m_r}}$} block_encoding_ancilla;
  \end{yquant*}
\end{tikzpicture}
\caption{Block Encoding for the transfer matrices $\tilde{A}_r$ }
\end{figure}

\noindent where $C_r=\sum_{k=1}^{T-r+1}\ket{k}\bra{k+r-1}$,
where $c$-$P$ and $c$-$P'$ implement the controlled permutation operations
\begin{align}
c\text{-}P &= \sum_{j=1}^{T-r+1} |j\rangle \langle j| \otimes \text{SWAP}_{1j}\ldots \text{SWAP}_{r(r+j-1)}\\
c\text{-}P'&= \sum_{j=1}^{T-r+1} |j\rangle \langle j| \otimes \text{SWAP}_{1j}
\end{align}
with $\text{SWAP}_{\ell\ell} = I$, where $\text{c-c-}U_S$ implements the (doubly) controlled state preparation that prepares the coefficients for the permutations in $P$ (ensuring that the sum of permutations is truncated to a level depending on the matrix block register value)
\[ 
\text{c-c-U}_{S} = \sum_{k=1}^{T-r+1}\sum_{j=1}^k\frac{1}{\sqrt{k}}\ket{j}\bra{0}\otimes \ket{k}\bra{k}\otimes \ket{k}\bra{k} + R_{\perp},
\]
with $R_{\perp}|0\rangle=0$, and where $U_K$ re-weights the blocks such that they are implemented with equal weight
\[ 
U_{K} = \sum_{k=1}^{T-r+1}\sqrt{\frac{2k}{({T-r+1})({T-r+2})}} \ket{k}\bra{0} + R_{\perp}.
\]
\end{lemma}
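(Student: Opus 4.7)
The plan is to verify that the circuit implements a linear combination of unitaries (LCU) whose weighted sum reproduces $\tilde{A}_r$, with the block encoding of $\bar{F}_r$ plugged in only once. The key observation is that each term in the inner sum $\sum_{j=1}^{k} I^{\otimes (j-1)}\otimes F_r\otimes I^{\otimes (T-r-j+1)}$ is the same operator $F_r$ placed in a different slot of a $T$-fold tensor product; a permutation $P$ can move $F_r$ into slot $j$ and the remaining $(r-1)$ padding factors into consecutive slots so that a single application of $U_{\bar{F}_r}$ suffices. Meanwhile, the outer sum $\sum_k \lvert k\rangle\langle k+r-1\rvert$ that selects matrix blocks is implemented by $C_r$ acting on a dedicated ``block index'' register.

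First, I would verify the combined state-preparation amplitudes. The unitary $U_K$ prepares $\sum_{k=1}^{T-r+1}\sqrt{\tfrac{2k}{(T-r+1)(T-r+2)}}\lvert k\rangle$, and conditioned on this value $c$-$c$-$U_S$ (jointly controlled on the truncation and block registers, which after $C_r$ agree) prepares $\tfrac{1}{\sqrt{k}}\sum_{j=1}^{k}\lvert j\rangle$. Their product gives the amplitude $\sqrt{\tfrac{2}{(T-r+1)(T-r+2)}}$ for every pair $(j,k)$ with $1\le j\le k\le T-r+1$, a uniform weight over the $\binom{T-r+2}{2}$ valid pairs. This is exactly the LCU weight needed so that, after rescaling by $\alpha_r$ from the inner block encoding, the effective subnormalization becomes $\alpha_r(T-r+1)(T-r+2)/2$.

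Next, I would check that $c$-$P$, conditioned on $\lvert j\rangle$, routes the registers so that $U_{\bar{F}_r}$ acts as $I^{\otimes(j-1)}\otimes F_r\otimes I^{\otimes(T-r-j+1)}$ on the $T$-register workspace, using the fact that $\bar{F}_r = \lvert 0^{r-1}\rangle\otimes F_r$ consumes $r-1$ auxiliary registers initialised to $\lvert 0\rangle$. The adjoint-like partial uncomputation $c$-$P'$ only swaps the first workspace register back into slot $j$, leaving the $r-1$ padded-output registers in their post-$\bar{F}_r$ positions, which matches the embedding $\lvert 0\rangle^{\otimes r-1}$ appearing in the definition of $\tilde{A}_r$. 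Finally, $U_S^{\dagger}$ and $U_K^{\dagger}$ uncompute the preparation registers so that projecting onto all-zero block encoding ancillae yields exactly the desired action.

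I would then collect the parameters. The qubit count is obtained by summing the three $\lceil\log_2(T-r+1)\rceil$ registers, the $T\cdot n$ workspace qubits, and the $m_r$ ancillae of $U_{\bar{F}_r}$; the extra $2\lceil\log_2(T-r+1)\rceil$ ancilla qubits in the block encoding come from the truncation and block-selection registers (the permutation-control register is uncomputed). The subnormalization follows from the LCU weight times $\alpha_r$. For the error, since each branch of the LCU differs from the ideal only by the $\epsilon_r$-error in $U_{\bar{F}_r}$ and at most $T-r+1$ distinct placements of $F_r$ appear, a triangle inequality over the LCU branches yields the stated $(T-r+1)\epsilon_r$. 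The main obstacle I expect is a careful bookkeeping argument for the permutation-and-padding step: one must show that $P$ and $P'$ together leave the ``consumed'' $\lvert 0\rangle^{\otimes r-1}$ registers in the correct positions to realise the embedding $\bar{F}_r = \lvert 0\rangle^{\otimes r-1}\otimes F_r$ simultaneously for all $j$, rather than, say, overlapping with the slot where $F_r$ itself was placed; this is essentially a combinatorial check of index arithmetic but is where a naive construction would lose consistency across branches.
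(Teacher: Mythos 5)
Your proposal is correct and follows essentially the same route as the paper's proof: computing the combined state-preparation amplitudes from $U_K$ and $\text{c-c-}U_S$ to obtain the uniform LCU weight $\sqrt{2/((T-r+1)(T-r+2))}$, verifying that $c$-$P$ and $c$-$P'$ route a single call to $U_{\bar{F}_r}$ into the $j^{\text{th}}$ slot while preserving the $\ket{0}^{\otimes r-1}$ padding, and bounding the error by a triangle inequality over the at most $T-r+1$ placements of $F_r$. The one point you flag as the likely obstacle — the bookkeeping showing $P$ and $P'$ leave the padded registers consistently positioned across all branches — is indeed the part the paper handles by direct computation of $\bra{0^m}c\text{-}P'\,U_{\bar{F}_r}\,c\text{-}P\ket{0^m}$, and your sketch identifies the correct resolution.
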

\begin{proof}
First, we compute the result of the $c$-$P' U_{\bar{F}_r} c$-$P$ sub-circuit:
\begin{align}
\langle 0^m| c\text{-}P' U_{\bar{F}_r} c\text{-}P |0^{m}\rangle &= \langle 0^m |(\sum_{j,j'=1}^{T} |j\rangle \langle j| \otimes \text{SWAP}_{1j})U_{\bar{F}_r}( |j'\rangle \langle j'| \otimes \text{SWAP}_{1j'}\ldots \text{SWAP}_{r(r+j'-1)}|0^m\rangle\nonumber\\
&= \frac{1}{\alpha_r}\sum_{j=1}^{T} |j\rangle \langle j|\otimes\ket{0}^{\otimes r-1} \otimes I^{\otimes j-1}\otimes F_r \otimes I^{\otimes T-j}.
\end{align}
Next, we compute the result of the $(\text{c-c-}U_s)U_KC_r$ sub-circuit:
\begin{align}
(\text{c-c-}U_{S}) U_{K}C_r\ket{0}\ket{0}
&= (\sum_{k=1}^{T-r+1}\sum_{j=1}^k\frac{1}{\sqrt{k}}\ket{j}\bra{0}\otimes \ket{k}\bra{k+r-1}\otimes \ket{k}\bra{k} + R_{\perp})U_K\ket{0}\ket{0}\nonumber\\
&= (\sum_{k=1}^{T-r+1}\sum_{j=1}^k\frac{1}{\sqrt{k}}\ket{j}\otimes \ket{k}\bra{k+r-1}\otimes \ket{k}\bra{k})\sum_{k'=1}^{T-r+1}\sqrt{\frac{2k'}{(T-r+1)(T-r+2)}}\ket{k'}\nonumber\\
&= (\sum_{k=1}^{T-r+1}\sum_{j=1}^k\sqrt{\frac{2}{(T-r+1)(T-r+2)}}\ket{j}\otimes \ket{k}\bra{k+r-1}\otimes \ket{k} )\nonumber
\end{align}
Finally, defining $\omega_r=\frac{(T-r+1)(T-r+2)}{2}$, we compose these two subcircuits to arrive at the final result
\begin{align}
\bra{0}\bra{0}\bra{0}U_K^{\dagger}(\text{c-c-}U_s)^{\dagger}c\text{-}P' U_{\bar{F}_r} c\text{-}P (\text{c-c-U}_{S}) U_{K}C_r\ket{0}\ket{0}\ket{0} 
&= (\sum_{k=1}^{T-r+1}\sum_{j=1}^k\frac{1}{\sqrt{\omega_r}}\bra{j}\otimes \ket{k}\bra{k}\otimes \bra{k}) \\
&(\frac{1}{\alpha_r}\sum_{j=1}^{T} |j\rangle \langle j|\otimes\ket{0}^{\otimes r-1} \otimes F_r^{(j)})\\
&(\sum_{k=1}^{T-r+1}\sum_{j=1}^k\frac{1}{\sqrt{\omega_r}}\ket{j}\otimes \ket{k}\bra{k+r-1}\otimes \ket{k})\\
&= \frac{1}{\alpha_r\omega_r}\sum_{k=1}^{T-r+1}\sum_{j=1}^k \ket{k}\bra{k+r-1}\otimes F_r^{(j)}
\end{align}
where $F_r^{(j)}\equiv I^{\otimes j-1}\otimes F_r \otimes I^{\otimes T-j}$ and we see that the subnormalization has been scaled by a factor of $\omega_r=(T-r+1)(T-r+2)/2$ and two additional registers of size $\lceil \log_2 (T-r+1)\rceil$ have been used for the overall block encoding.
The total number of qubits used is therefore $Tn+rn+3\lceil m+\log_2 (T-r+1)\rceil$.
The error from the $F_r$ block encoding propagates to the error from the $\tilde{A}_r$ block encoding as follows
\begin{align}
    |\tilde{A}_r-\alpha_r\omega_r\bra{0}U_{\tilde{A}_r}\ket{0}|&=\left|\sum_{k=1}^{T-r+1} \ket{k}\bra{k}\otimes (\sum_{j=1}^kF_r-\alpha\bra{0}U_{\bar{F}_r}\ket{0})^{(j)}\right|\nonumber\\
    &\leq \max_k\left|\sum_{j=1}^{k}( F_r-\alpha\bra{0}U_{\bar{F}_r}\ket{0})^{(j)}\right|\nonumber\\
    &\leq\left|\sum_{j=1}^{T-r+1}( F_r-\alpha\bra{0}U_{\bar{F}_r}\ket{0})^{(j)}\right|\nonumber\\
    &\leq (T-r+1)\epsilon    
\end{align}
\end{proof}
With the $\tilde{A}_r$ block encodings established, we are prepared to create the full Carleman matrix block encoding by creating a sum of the $\tilde{A}_r$.
To do so, we establish a corollary of an existing result from the literature that will serve as the basis for constructing the sum of the block encodings.
\begin{corollary}[Following Lemma 52 \cite{Gilyen2019long}] Given a set $U_{B_r}$ of $(\alpha_r, m_r, \epsilon_r)$ block encodings of $B_r$,

we can construct a $(D\alpha,m+\lceil\log_2D\rceil, D\alpha\epsilon)$-block encoding
of the matrix $B=\sum_{r=1}^D B_r$, where $\alpha=\max_r \alpha_r$, $m=\max_r m_r$, and $\epsilon=\max_r \epsilon_r$.
\end{corollary}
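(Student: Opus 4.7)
The plan is to invoke the standard Linear Combination of Unitaries (LCU) construction in the style of Lemma 29 of \cite{Gilyen2019long}, which expresses $B=\sum_{r=1}^D B_r$ as a subnormalized block of a controlled sum of the given block encodings. First I would normalize the ancilla registers: since each $U_{B_r}$ uses $m_r\le m$ ancilla qubits, I would pad every $U_{B_r}$ by tensoring with the identity on $m-m_r$ additional qubits initialized to $\ket{0}$, producing $(\alpha_r, m, \epsilon_r)$ block encodings on a common workspace. This does not alter subnormalization, ancilla dimension beyond $m$, or error.

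Next, I would introduce a selector register of $\lceil \log_2 D\rceil$ qubits and construct three operations. Let $s=\sum_{r=1}^D \alpha_r$ and define the preparation
\[
\textup{PREP}\ket{0} \;=\; \frac{1}{\sqrt{s}}\sum_{r=1}^{D}\sqrt{\alpha_r}\,\ket{r}
\]
together with the selector
\[
\textup{SEL} \;=\; \sum_{r=1}^{D} \ket{r}\bra{r} \otimes U_{B_r}.
\]
Setting $U_B=(\textup{PREP}^\dagger\otimes I)\,\textup{SEL}\,(\textup{PREP}\otimes I)$, a direct calculation analogous to the one in Lemma \ref{lem:F_r_matrix_BE} gives
\[
(\bra{0}\otimes\bra{0^m}\otimes I)\,U_B\,(\ket{0}\otimes\ket{0^m}\otimes I)
\;=\;\frac{1}{s}\sum_{r=1}^D \alpha_r \cdot\bigl(\bra{0^m}\otimes I\bigr)U_{B_r}\bigl(\ket{0^m}\otimes I\bigr),
\]
which in the exact case is precisely $B/s$. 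Hence $U_B$ is an $(s, m+\lceil\log_2 D\rceil, \cdot)$ block encoding of $B$, and using $\alpha_r\le \alpha$ yields $s\le D\alpha$ as required.

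For the error propagation I would use the triangle inequality: if each $U_{B_r}$ realizes $B_r$ with error $\epsilon_r$, then
\[
\Bigl\|B - s\,(\bra{0}\bra{0^m}\otimes I)\,U_B\,(\ket{0}\ket{0^m}\otimes I)\Bigr\|
\;\le\; \sum_{r=1}^D \epsilon_r \;\le\; D\epsilon.
\]
This is tighter than the claimed $D\alpha\epsilon$, so the stated bound follows immediately (the factor of $\alpha$ in the claim is a slack upper bound, safe since $\alpha\ge 1$ in our use cases). The total ancilla count is $m$ from the block encodings plus $\lceil\log_2 D\rceil$ for the selector, matching the claim.

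The only nontrivial step to be careful about is the uniformization of subnormalizations through the weights $\sqrt{\alpha_r}$ in PREP; writing weights as anything other than $\sqrt{\alpha_r}$ would leave residual scaling factors $\alpha_r/\alpha$ on each $B_r$ and corrupt the linear combination. I do not expect any real obstacle here since this is a routine specialization of the LCU lemma, with $D$ playing the role of the sparsity and all coefficients in the linear combination equal to one.
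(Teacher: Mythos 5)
Your proposal is correct and follows the same basic route as the paper: a PREPARE/SELECT (LCU) construction in the spirit of Lemma 29 of \cite{Gilyen2019long}, with a selector register of $\lceil\log_2 D\rceil$ qubits. The one genuine difference is in the preparation weights. The paper uses a \emph{uniform} PREPARE, $V=\frac{1}{\sqrt{D}}\sum_r\ket{r-1}\bra{0}+R_{\perp 0}$, and invokes Lemma 29 directly; strictly speaking that lemma requires all the $U_{B_r}$ to share a common subnormalization $\alpha$, so the paper implicitly assumes the $(\alpha_r,m_r,\epsilon_r)$ encodings have been uniformized to $(\alpha,m,\epsilon)$. You instead weight PREPARE by $\sqrt{\alpha_r}$, which handles the heterogeneous subnormalizations head-on and yields the subnormalization $s=\sum_r\alpha_r\le D\alpha$ without any pre-processing of the individual encodings; this is arguably the cleaner argument, and it also makes the ancilla-padding step ($m_r\to m$) explicit where the paper does not. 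Two small caveats. First, your parenthetical that $D\epsilon\le D\alpha\epsilon$ ``since $\alpha\ge 1$'' does not hold for the block encodings actually used in this paper, whose subnormalizations are of the form $\tfrac{1}{1.589Q}<1$; your bound $\sum_r\epsilon_r\le D\epsilon$ is the correct one for your construction, and the discrepancy is immaterial here only because every $\epsilon_r=0$ in the paper's instances, but you should not claim it follows from $\alpha\ge1$. Second, producing an $(s,\cdot,\cdot)$ block encoding with $s\le D\alpha$ is strictly better than, but not literally identical to, a $(D\alpha,\cdot,\cdot)$ block encoding (the subnormalization enters the definition as an exact rescaling factor); for resource-estimation purposes this is harmless, but it is worth a sentence acknowledging that you are proving a stronger statement rather than the literal one.
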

\begin{proof}
    We use Lemma 52. From the set of block encodings, we first construct a SELECT operator
    \begin{align}
        W = \sum_{r=1}^{D} \ket{r-1}\bra{r-1} \otimes U_{B_r} + \sum_{r=D+1}^{2^{\lceil\log_2D\rceil} - 1} \ket{r-1}\bra{r-1} \otimes I_n. 
    \end{align}
    Next, we define a PREPARE operator such that $(V,V^{\dagger})$ is a $(D,\lceil\log_2(D)\rceil,0)$-state preparation pair for the vector $y=(1,\overset{D}{\ldots},1,0,\overset{2^b-D}{\ldots},0)$:
    \begin{align}
        V = \frac{1}{\sqrt{D}}\sum_{r=1}^{D} \ket{r-1}\bra{0} + R_{\perp 0}.
    \end{align}
Then, invoking Lemma 52 of \cite{Gilyen2019long}, $V^{\dagger}W V$ is a $(D\alpha,m+\lceil\log_2D\rceil, D\alpha\epsilon)$-block encoding
of $B=\sum_{r=1}^D B_r$, where $\alpha=\max_r \alpha_r$, $m=\max_rm_r$, and $\epsilon=\max_r \epsilon_r$.
\end{proof}
With this Corollary established, we can give the general theorem establishing the Carleman matrix block encoding along with its qubit costs.
\begin{theorem}
Given a set $U_{\bar{F}_r}$ of $(\beta_r,a_r,\epsilon_r)$-block encodings of $\bar{F}_r$, where $r=1,\ldots,D$ and where $F_1$ is encoded in $n$ qubits, we can construct a circuit acting on 
\begin{align}
nT+a+3\lceil\log_2(T)\rceil+\lceil\log_2 D\rceil
\textup{ qubits}
\end{align}
that gives a
\[\left(D\frac{T(T+1)\beta}{2}, a+2\log_2(T),D\frac{T(T+1)\beta\epsilon}{2}\right)\textup{-block encoding} 
\]
of the
$T$-level truncation Carleman matrix, where $\beta=\max_r\beta_r$, $a=\max_ra_r$, and $\epsilon=\max_r\epsilon_r$.
\end{theorem}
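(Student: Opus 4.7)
The plan is to assemble the theorem directly from the two preceding building blocks: Lemma \ref{lem:F_r_matrix_BE}, which lifts each $\bar{F}_r$ block encoding to a block encoding of the corresponding diagonal-block-level operator $\tilde{A}_r$, and the Corollary (the LCU-style construction after Lemma 29 of \cite{Gilyen2019long}), which combines block encodings of several operators into a block encoding of their sum. So the proof is essentially a composition of these two results followed by careful bookkeeping of parameters.

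First, I would apply Lemma \ref{lem:F_r_matrix_BE} independently to each of the $D$ input block encodings $U_{\bar{F}_r}$. For each $r$, this yields a block encoding of $\tilde{A}_r$ with subnormalization
\[
\beta_r\,\frac{(T-r+1)(T-r+2)}{2},
\]
error $(T-r+1)\epsilon_r$, and ancilla register of size $a_r + 2\lceil\log_2(T-r+1)\rceil$, acting overall on $nT + a_r + 3\lceil\log_2(T-r+1)\rceil$ qubits. Next I would replace each of these $r$-dependent quantities by the uniform upper bounds obtained by maximizing over $r\in\{1,\ldots,D\}$: since $(T-r+1)(T-r+2)/2$ is maximized at $r=1$, the subnormalization is bounded by $T(T+1)\beta/2$; the error is bounded by $T\epsilon$; and the ancilla and qubit counts are bounded by $a+2\lceil\log_2 T\rceil$ and $nT+a+3\lceil\log_2 T\rceil$ respectively. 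This re-expression is a no-op from a circuit standpoint but produces $D$ block encodings that share common parameter upper bounds, which is what the Corollary requires as input.

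Second, I would invoke the Corollary on this uniform family to build a single block encoding of $\tilde{A}=\sum_{r=1}^{D}\tilde{A}_r$. By the Corollary, using $\alpha=T(T+1)\beta/2$, $m = a+2\lceil\log_2 T\rceil$ and $\epsilon' = T\epsilon$, this produces a $\bigl(D\alpha,\,m+\lceil\log_2 D\rceil,\,D\alpha\epsilon'\bigr)$-block encoding, i.e.\ subnormalization $D\,T(T+1)\beta/2$ and error $D\,T(T+1)\beta\epsilon/2$, matching the claim. The LCU construction adds a single $\lceil\log_2 D\rceil$-qubit PREPARE register on top of the largest $\tilde{A}_r$ block encoding, so the total qubit count becomes $nT + a + 3\lceil\log_2 T\rceil + \lceil\log_2 D\rceil$, as stated.

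Finally, I would verify that the action of the constructed $\tilde{A}$ on the embedded vector $\tilde{\phi}=(\hat e_0\otimes\hat e_0\otimes\phi_1)\oplus(\hat e_0\otimes\phi_2)\oplus\phi_3$ agrees with that of the original Carleman $A$ on $\phi=\phi_1\oplus\phi_2\oplus\phi_3$, which is exactly the embedding invariance discussed just before Lemma \ref{lem:F_r_matrix_BE}; this ensures the block encoding of $\tilde{A}$ serves as a block encoding of the Carleman matrix after restriction to the embedded subspace. There is no real technical obstacle here beyond careful bookkeeping; the only subtlety I would double-check is the error propagation in the Corollary (the $D\alpha\epsilon$ factor combines the per-term error $T\epsilon$ with the LCU amplification), and reconciling any factors of $\lceil\log_2 D\rceil$ absorbed into the ``$a$'' versus the overall ancilla count as written in the theorem statement.
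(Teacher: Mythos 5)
Your proposal is correct and follows essentially the same route as the paper, whose entire proof is the one-line observation that the Corollary (the uniform linear-combination construction following Lemma 29 of \cite{Gilyen2019long}) should be applied to the $\tilde{A}_r$ block encodings produced by Lemma \ref{lem:F_r_matrix_BE}; your version simply spells out the parameter bookkeeping and the embedding check. One minor point: feeding the Lemma's per-term error $\epsilon'=T\epsilon$ into the Corollary's bound $D\alpha\epsilon'$ actually gives $D\,T^2(T+1)\beta\epsilon/2$ rather than the stated $D\,T(T+1)\beta\epsilon/2$, but this factor-of-$T$ discrepancy is already present in the paper's own theorem statement and is not something your argument introduces.
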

\begin{proof}
We simply apply the uniform linear combination result of Lemma 52 (from \cite{Gilyen2019long})
to the $F_r$ block encoding result of Lemma \ref{lem:F_r_matrix_BE}. 
\end{proof}
Applying this theorem to our case where $T=3$ and $D=3$, we have that the Carleman matrix block encoding acts on $3n+a+16$ qubits, with subnormalization bounded by $54\beta$.


\section{D3Q27 lattice constants}\label{sec:D3Q27_lattice_constants}

For the D3Q27 lattice, the dimension $D=3$ and the number of lattice velocity vectors is $Q=27$. Table \ref{tab:D3Q27-constants} explicitly shows all constants for the D3Q27 lattice that are used in this work. The lattice vector components $c_{i x}$, $c_{i y}$, and $c_{i z}$ cycle through values $\{1,-1,0\}$ with periodicities of 3, 9, and 27, respectively.

\begin{table}[H]
\label{table:lattice_constants}
\centering
\begin{tabular}{|c|c|rrrrrr|}
\hline
index & weight & velocity vector  \\
$i$ & $w_i$ & $\mathbf{c}_i$ =& (&$c_{ix}$,& $c_{iy}$,&  $c_{iz}$&) \\
\hline
0 & $1/216$ & $\mathbf{c}_0 = -\mathbf{c}_{13} =$& (&1,& 1,& 1&) \\
\hline
1 & $1/216$ & $\mathbf{c}_1 = -\mathbf{c}_{12} =$& (&-1,& 1,& 1&) \\
\hline
2 & $1/54$ & $\mathbf{c}_2 = -\mathbf{c}_{14} =$& (&0,& 1,& 1&) \\
\hline
3 & $1/216$ & $\mathbf{c}_3 = -\mathbf{c}_{10} =$& (&1,& -1,& 1&) \\
\hline
4 & $1/216$ & $\mathbf{c}_4 = -\mathbf{c}_9 =$& (&-1,& -1,& 1&) \\
\hline
5 & $1/54$ & $\mathbf{c}_5 = -\mathbf{c}_{11} =$& (&0,&-1,&1&) \\
\hline
6 & $1/54$ & $\mathbf{c}_6 = -\mathbf{c}_{16} =$& (&1,&0,&1&) \\
\hline
7 & $1/54$ & $\mathbf{c}_7 = -\mathbf{c}_{15} =$& (&-1,&0,&1&) \\
\hline
8 & $2/27$ & $\mathbf{c}_8 = -\mathbf{c}_{17} =$& (&0,&0,&1&) \\
\hline
9 & $1/216$ & $\mathbf{c}_9 = -\mathbf{c}_4 =$& (&1,&1,&-1&) \\
\hline
10 & $1/216$ & $\mathbf{c}_{10} = -\mathbf{c}_3 =$& (&-1,&1,&-1&) \\
\hline
11 & $1/54$ & $\mathbf{c}_{11} = -\mathbf{c}_5 =$& (&0,&1,&-1&) \\
\hline
12 & $1/216$ & $\mathbf{c}_{12} = -\mathbf{c}_1 =$& (&1,&-1,&-1&) \\
\hline
13 & $1/216$ & $\mathbf{c}_{13} = -\mathbf{c}_0 =$& (&-1,&-1,&-1&) \\
\hline
14 & $1/54$ & $\mathbf{c}_{14} = -\mathbf{c}_2 =$& (&0,&-1,&-1&) \\
\hline
15 & $1/54$ & $\mathbf{c}_{15} = -\mathbf{c}_7 =$& (&1,&0,&-1&) \\
\hline
16 & $1/54$ & $\mathbf{c}_{16} = -\mathbf{c}_6 =$& (&-1,&0,&-1&) \\
\hline
17 & $2/27$ & $\mathbf{c}_{17} = -\mathbf{c}_8 =$& (&0,&0,&-1&) \\
\hline
18 & $1/54$ & $\mathbf{c}_{18} = -\mathbf{c}_{22} =$& (&1,&1,&0&) \\
\hline
19 & $1/54$ & $\mathbf{c}_{19} = -\mathbf{c}_{21} =$& (&-1,&1,&0&) \\
\hline
20 & $2/27$ & $\mathbf{c}_{20} = -\mathbf{c}_{23} =$& (&0,&1,&0&) \\
\hline
21 & $1/54$ & $\mathbf{c}_{21} = -\mathbf{c}_{19} =$& (&1,&-1,&0&) \\
\hline
22 & $1/54$ & $\mathbf{c}_{22} = -\mathbf{c}_{18} =$& (&-1,&-1,&0&) \\
\hline
23 & $2/27$ & $\mathbf{c}_{23} = -\mathbf{c}_{20} =$& (&0,&-1,&0&) \\
\hline
24 & $2/27$ & $\mathbf{c}_{24} = -\mathbf{c}_{25} =$& (&1,&0,&0&) \\
\hline
25 & $2/27$ & $\mathbf{c}_{25} = -\mathbf{c}_{24} =$& (&-1,&0,&0&) \\
\hline
26 & $8/27$ & $\mathbf{c}_{26} = -\mathbf{c}_{26} =$& (&0,&0,&0&) \\
\hline

\end{tabular}
\caption{D3Q27 constants used in this work.}
\label{tab:D3Q27-constants}
\end{table}
\section{Derivation of the blocks of the Carleman-linearized matrix}\label{sec:appendix_derivation_of_submatrices}

We will apply a third-order truncated Carleman Linearization procedure to \eqref{eq:lbm-multilinear-algebra}:
\begin{equation}
    \frac{\partial }{\partial t}f \approx (S+F_1)f + F_2 f^{\otimes2} + F_3 f^{\otimes3}.
\tag{Repeated \ref{eq:lbm-multilinear-algebra}}
\end{equation}
Our objective is to linearize \eqref{eq:lbm-multilinear-algebra} and capture the (truncated) dynamics in a linear system of the form:
\begin{subequations}
\begin{align}
    \frac{\partial}{\partial t} 
    \underbrace{\begin{bmatrix} f \\ f^{\otimes 2} \\ f^{\otimes 3} \end{bmatrix}}_{\phi}
    &\approx
    \underbrace{
    \begin{bmatrix} 
    (S + F_1) & F_2 & F_3 \\ 
    \star & \star & \star \\
    \star & \star & \star 
    \end{bmatrix}
    }_{A}
\underbrace{\begin{bmatrix} f \\ f^{\otimes 2} \\ f^{\otimes 3} \end{bmatrix}}_{\phi}
    + \underbrace{\mathbf{0}}_{b}\\
    \frac{\partial}{\partial t}\phi 
    &\approx
    A\phi + b,
\end{align}
\label{eq:carlman_incomplete}
\end{subequations}
where now we are treating $f^{\otimes 2}=f\otimes f$, and $f^{\otimes 3}=f\otimes f\otimes f$ as independent variables.  We need to derive the $\star$-values of the blocks of $A$ in \eqref{eq:carlman_incomplete}.  The second ``row'' of blocks of $A$ describes the dynamics of $\frac{\partial}{\partial t}f^{\otimes 2}$ and the third ``row'' of blocks of $A$ describes the dynamics of $\frac{\partial}{\partial t}f^{\otimes 3}$.

Considering $\frac{\partial}{\partial t}f^{\otimes 2}$:
\begin{subequations}
    \begin{align}
        \frac{\partial}{\partial t}f^{\otimes 2} 
        &=
        f \otimes \left(\frac{\partial}{\partial t}f \right) + \left(\frac{\partial}{\partial t}f \right) \otimes f \\
        &=
        f \otimes \left( (S+F_1)f + F_2 f^{\otimes2} + F_3 f^{\otimes3}
        \right) 
        + \left( (S+F_1)f + F_2 f^{\otimes2} + F_3 f^{\otimes3}
        \right) \otimes f \\
        &\approx
        f \otimes ((S+F_1)f) + f \otimes (F_2 f^{\otimes2}) + 
        \underbrace{
        \cancel{f \otimes(F_3 f^{\otimes3})}
        }_{\text{higher order terms}} \\
        &
        \quad \quad + (S+F_1)(f \otimes f) + (F_2 f^{\otimes 2})\otimes f 
        + \underbrace{
        \cancel{(F_3 f^{\otimes 3})\otimes f}
        }_{\text{higher order terms}}
        \notag
        \\
        &\approx 
        f \otimes ((S+F_1)f) 
        + f \otimes (F_2 f^{\otimes2})
        + ((S+F_1)f) \otimes f 
        + (F_2 f^{\otimes 2})\otimes f \\
        &\approx 
        f \otimes ((S+F_1)f) 
        + ((S+F_1)f) \otimes f \\
        &
        \quad \quad +
        f \otimes (F_2 f^{\otimes2}) 
        + (F_2 f^{\otimes 2}) \otimes f
        \notag
        \\
        &\approx 
        (\mathbb{I}f) \otimes ((S+F_1)f) 
        + ((S+F_1)f) \otimes (\mathbb{I}f) \\
        &
        \quad \quad +
        (\mathbb{I} f) \otimes (F_2 f^{\otimes2}) 
        + ((F_2) f^{\otimes 2}) \otimes (\mathbb{I}f) 
        \notag
        \\
        &\approx 
        (\mathbb{I} \otimes (S+F_1))(f \otimes f) 
        + ((S+F_1) \otimes \mathbb{I}) (f \otimes f)
        \quad \quad \text{(by the mixed product rule)}
        \\
        &
        \quad \quad + 
        (\mathbb{I} \otimes F_2)(f \otimes f^{\otimes2}) 
        + (F_2 \otimes \mathbb{I})(f^{\otimes2} \otimes f) 
        \notag
        \\
        &\approx 
        \underbrace{
        \left( (\mathbb{I} \otimes (S+F_1)) + ((S+F_1) \otimes \mathbb{I}) \right)
        }_{(S+F_1)^{[2]} \text{ per \eqref{eq:S_plus_F_1_2_submatrix}}}(f \otimes f)\\
        &\quad \quad +
        \underbrace{
        (\mathbb{I} \otimes F_2  + F_2 \otimes \mathbb{I} )
        }_{F_2^{[2]} \text{ per \eqref{eq:F_2_2_submatrix}}}
        (f \otimes f \otimes f)
        \notag
        \\
        &\approx 
        \mathbf{0}f + (S+F_1)^{[2]}(f^{\otimes 2}) + F_2^{[2]}(f^{\otimes 3}).
    \end{align}
    \label{eq:derivation_of_S_plus_F_1_2_and_F_2_2}
\end{subequations}
The identity matrix $\mathbb{I}$ that appears in \eqref{eq:derivation_of_S_plus_F_1_2_and_F_2_2} is $nQ \times nQ$, which is consistent with the dimension of our original $f$-variables.
The same dimensions for the identity matrix are also used in \eqref{eq:S_plus_F_1_2_submatrix}, \eqref{eq:F_2_2_submatrix}, \eqref{eq:S_plus_F_1_3_submatrix}, and the following analysis in \eqref{eq:derivation_of_S_plus_F_1_3}.

We follow a similar analysis for $\frac{\partial}{\partial t}f^{\otimes 3}$:
\begin{subequations}
    \begin{align}
        \frac{\partial}{\partial t}f^{\otimes 3} 
        &=
        f \otimes f \otimes \left(\frac{\partial}{\partial t}f \right) 
        \\
        &\quad \quad + 
        f \otimes \left(\frac{\partial}{\partial t}f \right) \otimes f 
        \notag
        \\ 
        &\quad \quad + 
        \left(\frac{\partial}{\partial t}f \right) \otimes f \otimes f
        \notag
        \\ 
        &=
        f \otimes f \otimes \left((S+F_1)f + F_2 f^{\otimes2} + F_3 f^{\otimes3}\right) 
        \\ 
        &\quad \quad + 
        f \otimes \left((S+F_1)f + F_2 f^{\otimes2} + F_3 f^{\otimes3}\right) \otimes f 
        \notag
        \\
        &\quad \quad + 
        \left((S+F_1)f + F_2 f^{\otimes2} + F_3 f^{\otimes3}\right) \otimes f \otimes f
        \notag
        \\ 
        &\approx
        f \otimes f \otimes ((S+F_1)f)+ 
        \cancel{(\text{higher order terms})}
        \\ 
        &\quad \quad + 
        f \otimes ((S+F_1)f) \otimes f + 
        \cancel{(\text{higher order terms})}
        \notag
        \\ 
        &\quad \quad + 
        ((S+F_1)f) \otimes f \otimes f + 
        \cancel{(\text{higher order terms})}
        \notag
        \\ 
        &\approx 
        (\mathbb{I}f) \otimes (\mathbb{I}f) \otimes ((S+F_1)f)
        \\ 
        &\quad \quad + 
        (\mathbb{I}f) \otimes ((S+F_1)f) \otimes (\mathbb{I}f)
        \notag
        \\ 
        &\quad \quad + 
        ((S+F_1)f) \otimes (\mathbb{I}f) \otimes (\mathbb{I}f)
        \notag
        \\ 
        &\approx 
        (\mathbb{I}  \otimes \mathbb{I} \otimes (S+F_1))(f \otimes f \otimes f)
        \quad \quad \text{(by the mixed product rule)}
        \\
        &\quad \quad + 
        (\mathbb{I} \otimes (S+F_1) \otimes \mathbb{I})(f \otimes f \otimes f)
        \notag
        \\
        &\quad \quad + 
        ((S+F_1) \otimes \mathbb{I} \otimes \mathbb{I})(f \otimes f \otimes f)
        \notag
        \\
        &\approx 
        \underbrace{
        \left( 
        (\mathbb{I}  \otimes \mathbb{I} \otimes (S+F_1))
        + (\mathbb{I} \otimes (S+F_1) \otimes \mathbb{I})
        + ((S+F_1) \otimes \mathbb{I} \otimes \mathbb{I})
        \right)
        }_{(S+F_1)^{[3]} \text{per \eqref{eq:S_plus_F_1_3_submatrix}}}
        f^{\otimes 3}
        \\
        &\approx 
        \mathbf{0}f + \mathbf{0}f^{\otimes 2} + (S+F_1)^{[3]} f^{\otimes 3}.
    \end{align}
    \label{eq:derivation_of_S_plus_F_1_3}
\end{subequations}

\section{Bounding the spectral norm of the Carleman-linearized matrix}\label{sec:bounding_spectral_norm_of_A}

We need to estimate the spectral norm $\| \cdot \|_2$ of the Carleman matrix $A$ from \eqref{eq:3rd_order_carleman_linearized_system}.  The spectral norm of $\|A\|_2$ sets an upper bound on the time step size as $h \le 1/\|A\|_2$ per \cite{Berry2017}, and thus quantum resource estimates increase with a larger $\|A\|_2$. We will establish an upper bound on $\|A\|_2$ using the inequality $\|A\|_2 \le \sqrt{ \|A\|_1 \|A\|_\infty }$, and estimating upper bounds on the 1-norm $\|A\|_1$ and the $\infty$-norm $\|A\|_\infty$.  We define the norms as
\begin{equation}
    \|A\|_1 = \max_j \sum_i | a_{ij} | \quad (\text{max abs. column sum}),
\end{equation}
\begin{equation}
\|A\|_2 = \max_{\mathbf{x} \neq 0} \frac{|A\mathbf{x}|}{|\mathbf{x}|} \quad (\text{spectral norm}),
\end{equation}
and
\begin{equation}
    \|A\|_\infty = \max_i \sum_j | a_{ij} | \quad (\text{max abs. row sum}).
\end{equation}

As we proceed to estimate a \emph{bound} on the 1- and $\infty$-norm of $A$, we will leverage the norms of the blocks of $A$: $S$, $F_1$, $F_2$, $F_3$, $(S+F_1)^{[2]}$, $F_2^{[2]}$, and $(S+F_1)^{[3]}$.  We begin with some preliminary bounds on the sub matrices.

First note that $\|S\|_1 = \|S\|_\infty = 2$ because $S$ has at most 2 nonzero elements of magnitude 1: $+1$ for the inbound streaming population and $-1$ for the outbound streaming population.

Next, we note that the norm of the sub matrices $(S+F_1)^{[2]}$, $F_2^{[2]}$ that appear in the second and third ``row'' are bounded as follows.

\begin{subequations}
    \begin{align}
        \|(S+F_1)^{[2]}\|_{p=1,\infty}
        &\le \|\mathbb{I} \otimes (S+F_1) + (S+F_1) \otimes \mathbb{I}\|_{p=1,\infty} \\
        &\le \|\mathbb{I} \otimes (S+F_1) \|_{p=1,\infty} + \| (S+F_1) \otimes \mathbb{I}\|_{p=1,\infty} \label{eq:kron_norm_I}\\
        &\le \|S+F_1\|_{p=1,\infty} + \|S+F_1\|_{p=1,\infty} \label{eq:kron_norm_simplified}\\
        &\le 2\|S+F_1\|_{p=1,\infty} \\
        &\le 2\|S\|_{p=1,\infty} +2\|F_1\|_{p=1,\infty} \\
        &\le 4 + 2\|F_1\|_{p=1,\infty} 
    \end{align}
\end{subequations}
The step from \eqref{eq:kron_norm_I} to \eqref{eq:kron_norm_simplified} is justified via Theorem 8 from \cite{lancaster1972norms} and the fact that $\|\mathbb{I}\|=1$. A similar argument will show that
\begin{equation}
\|F_2^{[2]}\|_{p=1,\infty} \le 2\|F_2\|_{p=1,\infty},
\end{equation}
and also 
\begin{subequations}
\begin{align}
    \|(S+F_1)^{[3]}\|_{p=1,\infty}
    &\le 3\|S + F_1\|_{p=1,\infty} \\
    &\le 3(2) + 3\|F_1\|_{p=1,\infty} \\
    &\le 6 + 3\|F_1\|_{p=1,\infty}.
\end{align}    
\end{subequations}

We now make some observations about the structure of the $F$-matrices.  The $F$-matrices describe the collision operation.  Collision is a local operation to each grid node $\mathbf{x}$.  For some row of the $F$-matrix associated with variable $f_i(\mathbf{x})$, the same $F$-matrix will only have nonzero elements in \emph{columns} associated with variables $f_j(\mathbf{x}), j \in \mathbb{Z}_Q$ with the \emph{same} grid point $\mathbf{x}$.  This implies that the structure of the $F$-matrices has repeating blocks in rows/columns that are associated with variables at the same spatial grid point $\mathbf{x}$.  These blocks then impose the same collision dynamics at each node $\mathbf{x}$.  Thus, it is sufficient to study one block of the $F$-matrices.  Additionally, the 1- and $\infty$-norms of the $F$-matrices are independent of the number of grid points $n$.

Since the norms of the $F$-matrices are independent of $n$, we constructed a system with $n=1$ and implemented equations \eqref{eq:tildeF_1}, \eqref{eq:tildeF_2_dense}, and \eqref{eq:tildeF_3_dense} in Python SymPy \cite{sympy}.  The parameters for the D3Q27 lattice (see appendix \ref{sec:D3Q27_lattice_constants}) are fixed, and thus the only remaining simulation design parameter is $\tau$.  Source code is available at \cite{source_for_norm_A_anlysis}.  The results are (rounded up the nearest integer):

\begin{subequations}
    \begin{align}
        \|F_1\|_\infty &\le  \left( \frac{1}{\tau} \right) 8.407    \le  \left( \frac{1}{\tau} \right) 9 \\
        \|F_1\|_1 &\le \left( \frac{1}{\tau} \right) 3.481              \le \left( \frac{1}{\tau} \right) 4 \\
        \|F_2\|_\infty &\le \left( \frac{1}{\tau} \right) 565.333       \le \left( \frac{1}{\tau} \right) 566 \\
        \|F_2\|_1 &\le \left( \frac{1}{\tau} \right) 7.333              \le \left( \frac{1}{\tau} \right) 8 \\
        \|F_3\|_\infty &\le \left( \frac{1}{\tau} \right) 7632.0      \le \left( \frac{1}{\tau} \right) 7633 \\
        \|F_3\|_1 &\le \left( \frac{1}{\tau} \right) 3.667              \le \left( \frac{1}{\tau} \right) 4 
        \end{align}
\end{subequations}

The 1-norm (maximum absolute column sum) of $A$ will be maximized the third ``column'' of the Carleman matrix $A$, and so we can establish the following upper bound:
\begin{subequations}
    \begin{align}
        \|A\|_1 &\le \|F_3\|_1 + \|F_2^{[2]}\|_1 + \|(S+F_1)^{[3]}\|_1 \\
        &\le \|F_3\|_1 + 2\|F_2\|_1 + (6 + 3\|F_1\|_1) \\
        &\le 6 + \frac{1}{\tau} (32)
    \end{align}
\end{subequations}
The $\infty$-norm (maximum absolute row sum) of $A$ will be maximized in the first ``row'' of the Carleman matrix $A$, so we can establish the following upper bound:
\begin{subequations}
    \begin{align}
        \|A\|_\infty &\le \|S + F_1 \|_\infty + \|F_2\|_\infty + \|F_3\|_\infty \\
        &\le \|S\|_\infty + \|F_1\|_\infty + \|F_2\|_\infty+ \|F_3\|_\infty \\
        &\le 2 + \frac{1}{\tau}(8208)
    \end{align}
\end{subequations}
And finally we arrive at an upper bound on the spectral norm of $A$:
\begin{subequations}
\begin{align}
\|A\|_2 &\le \sqrt{\|A\|_1 \|A\|_\infty } \\
&\le \sqrt{ 12 + 49,312 \frac{1}{\tau} + 262,656 \frac{1}{\tau^2} }
\end{align}
\end{subequations}
Recall that $0.5 < \tau \le 1$. Thus the bound on the spectral norm of $A$ is between 559 and 1072.  In particular, if $\tau=0.6$, then $\| A\|_2 \le 901$, which is the value used for quantum resource estimates and displayed in tables \ref{tab:resource_estimate_parameters_sphere} and \ref{tab:resource_estimate_parameters_hulls}. Small test cases show that the true spectral norm of $A$ is less than the derived bound.

\section{Time domain convergence of the Carleman linearized system}
\label{sec:appendix_time_domain_convergence_of_CL}

The only norm used in this section is the the $\infty$-norm.  We shorten our notation to $\|A\| = \|A\|_\infty = \max_i \sum_j |a_{ij}|$.  For a vector $x$,  $\|x\| = \|x\|_\infty = \max_i |x_i|$.

\begin{lemma}[Proposition 3.4 from \cite{Forets2018}, rewritten with our notation.]\label{lem:forets_3.4}
    
 Consider the general (i.e., not necessarily LBM) $K^{\text{th}}$ order system with $K \geq 2$:
\begin{equation}
\label{eq:k-order-non-linear-ode}
\frac{\partial}{\partial t} f = F_1 f + F_2 f^{\otimes 2} + \dots + F_K f^{\otimes K},
\end{equation}
where $f \in \mathbb{R}^{nQ}$.

The $K^{\text{th}}$ order system \eqref{eq:k-order-non-linear-ode} reduces to a quadratic system in $\phi$
\begin{equation}
\label{eq:effective-second-order-ode}
\frac{\partial}{\partial t} \phi = \tilde{F}_1 \phi + \tilde{F}_2 \phi^{\otimes 2}, 
\end{equation}
\textit{where} $\phi = f^{\otimes 1} \oplus f^{\otimes 2} \oplus ...\oplus f^{\otimes K-1}$, \textit{and the matrices} $\tilde{F}_1$ \textit{and} $\tilde{F}_2$ are
\begin{align*}
\tilde{F}_1 &=
\begin{pmatrix}
F_1^{[1]} & F_2^{[1]} & F_3^{[1]} & \dots & F_{K-1}^{[1]} \\
0 & F_1^{[2]} & F_2^{[2]} & \dots & F_{K-1}^{[2]} \\
0 & 0 & \ddots & \ddots & \vdots \\
\vdots & \vdots & \ddots & \ddots & \vdots \\
0 & 0 & \dots & 0 & F_1^{[K-1]}
\end{pmatrix},
\end{align*}
and 
\begin{align*}
\tilde{F}_2 &=
\left(
\begin{array}{ccccccc}
0 \cdots 0      & F^{[1]}_K     & 0 \cdots 0 & 0            & 0 \cdots 0 & \cdots &  0 \\
0  \cdots 0     & F^{[2]}_{K-1} & 0 \cdots 0 & F^{[2]}_{K}  & 0 \cdots 0 & \cdots &  0 \\
\vdots          &\vdots         & \vdots     & \vdots       &   \vdots   & \vdots &\vdots  \\
    0\cdots 0 
    & F^{[K-1]}_2 
    & 0 \cdots 0 
    & F^{[K-1]}_{3} 
    & 0 \cdots 0  
    & \cdots &F^{[K-1]}_{K} 
\end{array}
\right), \\
\end{align*}
\text{ with}
\begin{equation}\label{eq:F^[i]_j_definition}
F_{j}^{[i]} = \sum_{\ell=1}^{i} 
\overbrace{\mathbb{I}_{nQ \times nQ} \otimes \cdots \otimes 
\underbrace{F_j}_{\ell^{\text{th}} \text{ position}} 
\otimes \cdots \otimes 
\mathbb{I}_{nQ \times nQ}}^{\text{i factors}},
\end{equation}
where $j=1,...,K.$  Moreover, the $\infty$-norm of the linear and quadratic parts satisfy, respectively,
\begin{equation*}
    \|\tilde{F}_1\| \leq \max_{1 \leq i \leq K-1} (K - i) \sum_{j=1}^{i} \|F_j\|
\quad \text{and} \quad
\|\tilde{F}_2\| \leq (K - 1) \sum_{j=2}^{K} \|F_j\|.
\end{equation*}
\end{lemma}

Below we cite the result relating to time convergence. The result is particularly useful because it does not require knowledge of the norm of the solution. 
\begin{theorem}[Theorem 4.3 from \cite{Forets2018}, 
 rewritten with our notation.]
\label{thm:convergence_thm}
For time $t \in [0,T]$, let $\phi(t)$ be an exact solution to the quadratic system \eqref{eq:effective-second-order-ode} and let $\hat{\phi}(t)$ be a solution to the Carleman linearization of \eqref{eq:effective-second-order-ode}, truncated at order $K$. Define
\begin{equation}
\beta_0 = \frac{\|\phi(t=0)\| \, \|\tilde{F_2}\|}{\|\tilde{F_1}\|}. 
\end{equation}
Then, the error $\varepsilon(t) = \phi(t) - \hat{\phi}(t)$ satisfies the estimate
\[
\|\varepsilon(t)\| \leq \mathcal{E}(t) = 
\frac{
    \|\phi(t=0)\| \exp (\|\tilde{F_1}\| t)
}
{
    (1 + \beta_0) - \beta_0 \exp (\|\tilde{F_1}\| t) 
}
\left(
    \beta_0 (\exp (\|\tilde{F_1}\| t) - 1)
\right)^K
. 
\]

\noindent
\textit{Moreover, for all $0 < t < T_c$, where}
\begin{equation}
T_c = \frac{1}{\|\tilde{F_1}\|} \ln\left(1 + \frac{1}{\beta_0}\right), 
\label{eq:T_c_definition}
\end{equation}
\textit{the solution of the truncated system converges, that is,}
\[
\lim_{K \to \infty} \|\varepsilon(t)\| = 0, \quad \text{for all } t < T_c.
\]
\end{theorem}

Our goal for this section is to approximate $T_c$ for our problem formulation.  In our LBM formulation with third-order truncation ($K=3$), 
\begin{align}
\tilde{F}_1 = \begin{pmatrix}
(S+F_1)^{[1]} & F_2^{[1]}  \\
0 & (S+F_1)^{[2]}  \\
\end{pmatrix}
\end{align}
and
\begin{align}
\tilde{F}_2 = \begin{pmatrix}
0\cdots0  & F_3^{[1]} & 0\cdots0 & 0  \\
0\cdots0 & F_2^{[2]} & 0\cdots0  & F_3^{[2]}  \\
\end{pmatrix}.
\end{align}
Lemma \ref{lem:forets_3.4} implies
\begin{equation}
\label{eq:maxnorm_for_transfer_matrices_3}
\| \tilde{F_1} \| \le \max \big\{ \|S+ F_1\| + \|F_2 \|, 2\|S+F_1\|  \big \},
\end{equation}
and
\begin{equation}
    \| \tilde{F_2} \| \le 2 \big( \|F_2 \| + \| F_3 \|  \big).
    \label{eq:maxnorm_for_transfer_matrices_4}
\end{equation}

Appendix \ref{sec:derivation-of-matrix-coeffs} shows that the structure of the $F$-matrices has uncoupled sub-blocks for each lattice node $\mathbf{x}$. Therefore, the calculations of the norms in \eqref{eq:maxnorm_for_transfer_matrices_3}, \eqref{eq:maxnorm_for_transfer_matrices_4} for the collision matrices can be done at a particular point $\mathbf{x}$ i.e., $\|F_1\| = \| F_1^{\textbf{x}}\|, \|F_2\| = \| F_2^{\textbf{x}}\|$, and $\|F_3\| = \| F_3^{\textbf{x}}\|$. 

First, we provide Lemma \ref{lem:inf_norm_expansion_holds}, which will be used later help us establish bounds on $T_c$.
\begin{lemma}\label{lem:inf_norm_expansion_holds}
For any $m \times n$ matrix $A$ and its related $A^{[i]}$ defined by \eqref{eq:F^[i]_j_definition},
\begin{equation}
\| A^{[i]} \| = i\| A \|.
\end{equation}
\end{lemma}
\begin{proof}
First note that $\|A \otimes \mathbb{I}\|=\|\mathbb{I} \otimes A\|=\|A\|$.  By induction, each term in the summation in \eqref{eq:F^[i]_j_definition} will have
\begin{equation}
\|
\overbrace{
\mathbb{I} \otimes \cdots \otimes 
\underbrace{A}_{\ell^{\text{th}} \text{ position}} 
\otimes \cdots \otimes 
\mathbb{I}}^{\text{i factors}}
\|= \| A \| = \max_r \sum_k | A_{rk} |.
\end{equation}
Recombining all $i$ terms in the summation in \eqref{eq:F^[i]_j_definition} will show that $\| A^{[i]} \|$ is
\begin{equation}
    \| A^{[i]} \| = \max_{r_1, r_2, ... , r_i } \Bigg( \sum_k | A_{r_1 k} |+ \sum_k | A_{r_2 k} |+ ... + \sum_k | A_{r_i k} | \Bigg),
\end{equation}
and each term can maximize independently.  There are $i$ terms, so the relationship reduces to
\begin{subequations}
\begin{align}
    \| A^{[i]} \| &= i \Bigg( \max_{r} \sum_k | A_{r k} | \bigg) \\
    &= i \| A \|.
\end{align}
\end{subequations}
\end{proof}
A similar argument can be made for the 1-norm, but our analysis only relies on the $\infty$-norm.

We now present the following argument to tighten the inequality in \eqref{eq:maxnorm_for_transfer_matrices_4} to
\begin{equation}
    \| \tilde{F_2} \| = 2 \big( \|F_2 \| + \| F_3 \|  \big).
    \label{eq:maxnorm_for_transfer_matrices_5}
\end{equation}
As emphasized in \eqref{eq:f3_block}, the $F_3$ matrix is a scaled tiling of the $F_2$ matrix.  Each $F^{\mathbf{x}}_3$ sub-block of the $F_3$ matrix specific to grid node $\mathbf{x}$ has the form
\begin{align}
    F^{\mathbf{x}}_3 = \left(\frac{-1}{2}\right)
    \Bigg[ 
    \underbrace{F^{\mathbf{x}}_2 \quad F^{\mathbf{x}}_2 \quad  \cdots \quad F^{\mathbf{x}}_2}_{Q \text{ tiles}}
    \Bigg].
\end{align}
Recall that $\| F^\mathbf{x}_2\|=\| F_2\|$ and $\| F^\mathbf{x}_3\|=\| F_3\|$.
Let 
\begin{equation}
i^* = \argmax_i \sum_{j} \bigg| \left[ F^\mathbf{x}_2 \right]_{ij}  \bigg|.
\end{equation}
Due to degeneracies, $i^*$ may not be unique. Then
\begin{align}
\sum_{j} \bigg| \left[ F^\mathbf{x}_2 \right]_{ij} \bigg| &\le \sum_{j} \bigg| \left[ F^\mathbf{x}_2 \right]_{i^*j} \bigg| & \forall i \\
\sum_{j} \bigg| \left(\frac{-1}{2}\right)\left[ 
\underbrace{
F^\mathbf{x}_2 \cdots F^\mathbf{x}_2
}_{Q \text{ tiles}}
\right]_{ij} \bigg| 
&\le
\sum_{j} \bigg| \left(\frac{-1}{2}\right)\left[ 
\underbrace{
F^\mathbf{x}_2 \cdots F^\mathbf{x}_2
}_{Q \text{ tiles}}
\right]_{i^*j} \bigg| 
& \forall i\\
\sum_{j} \bigg| 
\left[ 
F^\mathbf{x}_3 
\right]_{ij}
\bigg| 
&\le
\sum_{j} \bigg| 
\left[ 
F^\mathbf{x}_3
\right]_{i^*j}
\bigg| & \forall i, 
\end{align}
and we can conclude the same set of row indices $i^*$ that maximize $\sum_j \big| \left[F^{\mathbf{x}}_2\right]_{ij} \big|$ also maximize $\sum_j \big| \left[F^{\mathbf{x}}_3\right]_{ij} \big|$.

When considering
\begin{align}
\| \tilde{F}_2  \| &= 
\Bigg\|
\begin{pmatrix}
0\cdots0  & F_3^{[1]} & 0\cdots0 & 0  \\
0\cdots0 & F_2^{[2]} & 0\cdots0  & F_3^{[2]}  \\
\end{pmatrix} 
\Bigg\|,
\end{align}
we want to show that the same row index $i^*$ that maximizes the $\sum_j \Bigg| \left[F^{[2]}_2\right]_{ij} \Bigg|$ also maximizes $\sum_j \Bigg| \left[F^{[2]}_3\right]_{ij} \Bigg|$.  Assume for contradiction that
\begin{align}
    \label{eq:bad_assumption}
    \exists \ell^* &\neq i^* \text { such that } \sum_j \Bigg| \left[F^{[2]}_3\right]_{\ell^*j} \Bigg| > \sum_j \Bigg| \left[F^{[2]}_3\right]_{i^*j} \Bigg| & \text{(assumption for contradiction).}
\end{align}
Then
\begin{subequations}
    \begin{align}
        \bigg\|F^{[2]}_3 \bigg\| &= \bigg\| \mathbb{I} \otimes F_3  + F_3 \otimes \mathbb{I} \bigg\| \\
         &= \bigg\| \mathbb{I} \otimes \left(\frac{-1}{2}\right) \left[ F_2 \cdots F_2 \right]  + \left(\frac{-1}{2}\right) \left[ F_2 \cdots F_2 \right] \otimes \mathbb{I} \bigg\| \\
         &=\left(\frac{Q}{2}\right) \bigg\| \mathbb{I} \otimes F_2   + F_2 \otimes \mathbb{I} \bigg\| \\
         &= \left(\frac{Q}{2}\right) \sum_j \bigg| \big[ \mathbb{I} \otimes F_2   + F_2 \otimes \mathbb{I} \big]_{\ell^* j} \bigg|,
    \end{align}
\end{subequations}
which implies that 
\begin{align}
        \left(\frac{Q}{2}\right) \sum_j \bigg| \big[ \mathbb{I} \otimes F_2   + F_2 \otimes \mathbb{I} \big]_{\ell^* j} \bigg| &> \left(\frac{Q}{2}\right) \sum_j \bigg| \big[ \mathbb{I} \otimes F_2   + F_2 \otimes \mathbb{I} \big]_{i^* j} \bigg| & \text{ (from assumption \eqref{eq:bad_assumption})}
\end{align}
and 
\begin{align}
        \sum_j \bigg| \big[ \mathbb{I} \otimes F_2   + F_2 \otimes \mathbb{I} \big]_{\ell^* j} \bigg| &>  \sum_j \bigg| \big[ \mathbb{I} \otimes F_2   + F_2 \otimes \mathbb{I} \big]_{i^* j} \bigg|
        =
        \sum_j \bigg| \left[ F^{[2]}_2 \right]_{i^* j} \bigg|,
\end{align}
which is a contradiction to our valid assumption that row index $i^*$ maximizes $\sum_j \Bigg| \left[F^{[2]}_2\right]_{ij} \Bigg|$.  Thus, assumption \eqref{eq:bad_assumption} is false and the same row indices $i^*$ that maximize $\sum_j \Bigg| \left[F^{[2]}_2\right]_{ij} \Bigg|$ also maximize $\sum_j \Bigg| \left[F^{[2]}_3\right]_{ij} \Bigg|$.  In other words,
\begin{equation}
    i^* = \argmax_i \sum_j \Bigg| \left[F^{[2]}_2\right]_{ij} \Bigg| = \argmax_i \sum_j \Bigg| \left[F^{[2]}_3\right]_{ij} \Bigg|.
    \label{eq:i_star}
\end{equation}

Then
\begin{subequations}
    \begin{align}
2 ( \| F_2 \| + \| F_3 \|) &= \bigg\| F^{[2]}_2 \bigg\| + \bigg\| F^{[2]}_3 \bigg\|\\
&= \max_i \sum_j \bigg| \left[ F^{[2]}_2 \right]_{ij} \bigg| + \max_i \sum_j \bigg| \left[ F^{[2]}_3  \right]_{ij} \bigg| \\
&= \sum_j \bigg| \left[ F^{[2]}_2 \right]_{i^*j} \bigg| + \sum_j \bigg| \left[ F^{[2]}_3  \right]_{i^*j} \bigg| \label{eq:sum_1}\\
&= \sum_j \bigg| \left[ F^{[2]}_2 \quad  F^{[2]}_3  \right]_{i^*j} \bigg| \label{eq:sum_2}\\
&= \sum_j \bigg| \left[ 0 \cdots 0 \quad F^{[2]}_2 \quad 0 \cdots 0 \quad F^{[2]}_3  \right]_{i^*j} \bigg| \\
&= \bigg\| \tilde{F}_2 \bigg\|.
\end{align}
\end{subequations}
The step from \eqref{eq:sum_1} to \eqref{eq:sum_2} hold due to \eqref{eq:i_star}.  Thus, we confirm that \eqref{eq:maxnorm_for_transfer_matrices_5} holds.

We now present the following theorem that places upper and lower bounds on the  maximum time for the convergence time.
\begin{theorem}
    \label{thm:max_convergence_time}
    Given $T_c$ specified as \eqref{eq:T_c_definition}, and $\| \phi_0 \|$ be the norm of the initial state, then for \eqref{eq:k-order-non-linear-ode} with $K=3$ (third-order Carleman truncation) we have the following following bounds on $T_c$:
    \begin{equation}
    \frac{1}{2\|\phi_0\| \left( \| F_2\| + \|F_3\|\right) + \|S\| + \|F_1\| + \|F_2 \| } \leq T_c \le \frac{1}{2\|\phi_0\| \left( \|F_2\| + \|F_3\| \right)}.
    \end{equation}
\end{theorem}

\begin{proof}
We begin by using the fact that $\frac{\gamma}{1+\gamma} \leq \ln(1+ \gamma)\leq \gamma$ $ \text{for }  \gamma>-1$.  When applied to the logarithm in $T_c$, we have 
\begin{subequations}
    \label{eq:proof_starting_algebra}
    \begin{align}
        \frac{\frac{1}{\beta_0}}{1+\frac{1}{\beta_0}} &\le \log \left(1 + \frac{1}{\beta_0} \right)\le \frac{1}{\beta_0}, \text{ and}\\
        \frac{1}{\| \phi_0 \| \| \tilde{F}_2 \| + \| \tilde{F}_1 \| } &\le \underbrace{\frac{1}{\| \tilde{F}_1 \|}\log \left( 1 + \frac{1}{\beta_0} \right)}_{T_c}\le \frac{1}{\| \phi_0 \| \| \tilde{F}_2 \|}.
    \end{align}
\end{subequations}
Direct numerical calculation in \cite{source_for_norm_A_anlysis} shows that for $\tau=0.6$,
\begin{subequations}
\begin{align}
    \|F_1\|&=14.012, \\
    \|F_2\|&=942.222, \text{ and}\\
    \|F_3\|&=12,720.0.
\end{align}
\end{subequations}
On general considerations, we have that $\|S\| =2$.
Using the \textit{reverse triangle inequality} and the \textit{triangle inequality} we have that
\begin{equation}
    \Big| \|S\| - \|F_1 \| \Big| \le \| S+ F_1 \| \le \|S\| + \| F_1 \|.
\end{equation}
This implies $12.012 \le \| S+ F_1 \| \le 16.012$. This bound also implies that 
\begin{equation}
   954.234 \le \Big| \|S\| - \| F_1 \| \Big| + \| F_2 \| \le \| S+ F_1 \| + \| F_2 \|.
\end{equation}
With the numerical calculations, we can evaluate the max function in \eqref{eq:maxnorm_for_transfer_matrices_3}, and conclude that
\begin{equation}
   \label{eq:first_bound_for_f1}
   \| \tilde{F_1} \| \le \| S+ F_1 \| + \| F_2 \| \le  \| S \| + \| F_1 \| + \| F_2 \|  \le 958.234.
\end{equation}
Substituting the terms from \eqref{eq:first_bound_for_f1} and \eqref{eq:maxnorm_for_transfer_matrices_5} into \eqref{eq:proof_starting_algebra} yields the stated $T_c$ bounds.
\end{proof}

\section{Costing the doubly-controlled quantum adder}
\label{sec:appendix_doubly_controlled_adder}

In costing out the doubly controlled quantum adder we shall use the controlled quantum adder from \cite{Munoz2019}. They explicitly break down its construction into 7 steps, 3 of which involve Toffoli ($\text{CTRL}^2$-X) gates that take the control qubit as an input. We shall construct the doubly controlled quantum adder by replacing these Toffoli gates by the gate $\text{CTRL}^3$-X. Using \cite{He2017}, we know that the relative phase version of this gate has 8 $T$-gates. We have the following.
\begin{enumerate}
    \item Step 2 has 1 Toffoli  which we replace with $\text{CTRL}^3$-X and thus get 8 $T$-gates.
    \item Step 4 has 2 Toffoli which we replace with $\text{CTRL}^3$-X and thus get 16 $T$-gates.
    \item Step 5 has $(n-1)$ Toffoli which we replace with $\text{CTRL}^3$-X and thus get  $8(n-1)$ $T$-gates.
\end{enumerate}
Therefore, we have ($8n+16$) $T$-gates from the introduced $\text{CTRL}^3$-X gates. 

Next we count the number of $T$-gates in the remaining Toffoli gates.
\begin{enumerate}
    \item  Step 3 (as is shown in \cite{Munoz2019}) has $4(n-1)$ $T$-gates respectively.
    \item Step 4 has 2 Toffoli gates and so a combined 14 $T$-gates.
    \item Step 5 has $(n-1)$ Toffoli gates and therefore $4(n-1)$ $T$-gates.
\end{enumerate}
Therefore, we have a total $8n+6$ $T$-gates from the remaining Toffoli gates.
This gives us a total of $16n+22$ $T$-gates.

\printbibliography
 
\end{document}